\newcommand{\bP}{\mathbb{P}}
\newcommand{\bR}{\mathbb{R}}
\newcommand{\bE}{\mathbb{E}}
\newcommand{\cN}{\mathcal{N}}
\newcommand{\cU}{\mathcal{U}}
\newcommand{\cE}{\mathcal{E}}
\newcommand{\cT}{\mathcal{T}}
\newcommand{\RNum}[1]{\uppercase\expandafter{\romannumeral #1\relax}}
\newcommand{\vertiii}[1]{{\left\vert\kern-0.25ex\left\vert\kern-0.25ex\left\vert #1 
		\right\vert\kern-0.25ex\right\vert\kern-0.25ex\right\vert}}
\DeclareMathOperator*{\argmax}{arg\,max}
\DeclareMathOperator*{\argmin}{arg\,min}
\newtheorem{definition}{Definition}
\newtheorem{thm}{Theorem}
\newtheorem{assump}{Assumption}
\newtheorem{prop}{Proposition}
\newtheorem{cor}{Corollary}
\newtheorem{lemma}{Lemma}
\newtheorem{remark}{Remark}
\newcommand{\mymax}{\max\limits}
\newcommand{\mymin}{\min\limits}
\newcommand{\ones}{{{\mathbbm{1}}}}
\newcommand{\ind}[1]{\ones_{\{#1\}}}
\begin{document}

\def\spacingset#1{\renewcommand{\baselinestretch}%
{#1}\small\normalsize} \spacingset{1}


\title{\bf Graphical Model Inference with Erosely Measured Data}
  \author{Lili Zheng\hspace{.2cm}\\
    Department of Electrical and Computer Engineering, Rice University\\
    and \\
    Genevera I. Allen \\
    Department of Electrical and Computer Engineering, Rice University,\\
    Department of Computer Science, Rice University,\\
Department of Statistics, Rice University,\\
Department of Pediatrics-Neurology, Baylor College of Medicine,\\
Jan and Dan Duncan Neurological Research Institute, Texas Children’s Hospital}
  \date{}
  \maketitle
  
\begin{abstract}
In this paper, we investigate the Gaussian graphical model inference problem in a novel setting that we call \emph{erose} measurements, referring to irregularly measured or observed data. For graphs, this results in different node pairs having vastly different sample sizes which frequently arises in data integration, genomics, neuroscience, and sensor networks. 
Existing works characterize the graph selection performance using the minimum pairwise sample size, which provides little insights for erosely measured data, and no existing inference method is applicable. 
We aim to fill in this gap by proposing the first inference method that characterizes the different uncertainty levels over the graph caused by the erose measurements, named GI-JOE (\textbf{G}raph \textbf{I}nference when \textbf{J}oint \textbf{O}bservations are \textbf{E}rose).
Specifically, we develop an edge-wise inference method and an affiliated FDR control procedure, where the variance of each edge depends on the sample sizes associated with corresponding neighbors. We prove statistical validity under erose measurements, thanks to careful localized edge-wise analysis and disentangling the dependencies across the graph. Finally, through simulation studies and a real neuroscience data example, we demonstrate the advantages of our inference methods for graph selection from erosely measured data. 
\end{abstract}

\noindent
{\it Keywords:}  Uneven measurements, missing data, graph structure inference, FDR control, graph selection
\vfill

\newpage
\spacingset{1.9} 
\section{Introduction}
Graphical models have been powerful and ubiquitous tools for understanding connection and interaction patterns hidden in large-scale data \citep{koller2009probabilistic}, by exploiting the conditional dependence relationships among a large number of variables. For instance, graphical models have been applied to learn the connectivity among tens of thousands of neurons \citep{vinci2018adjusted}, gene expression networks \citep{allen2012log,dobra2004sparse}, sensor networks \citep{dasarathy2016active,dasarathy2019gaussian}, among many others. The last decade has witnessed a plethora of new statistical methods and theory proposed for various types of models in this area, including the Gaussian graphical models \citep{yuan2007model,ravikumar2011high,friedman2008sparse,meinshausen2006high,cai2011constrained}, graphical models for exponential families and mixed variables \citep{yang2015graphical,yang2014mixed,chen2015selection}, Gaussian copula models \citep{liu2012high,liu2009nonparanormal,dobra2011copula}, etc.  

Despite the abundant literature in this area, most existing methods and theory for graphical models assume even measurements over the graph, where either all variables are measured simultaneously, or they are missing with similar probabilities. However, many real large-scale data sets usually take the form of \emph{erose measurements}, which are irregular over the graph, and different pairs of variables may have \emph{drastically different sample sizes}. Such data sets frequently arise in genetics, neuroscience, sensor networks, among many others, due to various technological limits. 
\subsection{Problem Setting and Motivating Applications}\label{sec:setting}
Consider the following sparse Gaussian graphical model: $x\sim\mathcal{N}(0,\Sigma^*),\quad \Theta^*=(\Sigma^{*})^{-1},$
where $\Theta^*\in \mathbb{R}^{p\times p}$ is the sparse precision matrix. The graph structure is dictated by the nonzero patterns in $\Theta^*$: $\mathcal{G}=(V,E),\quad V=\{1,\dots,p\},\quad E=\{(i,j):\Theta^*_{ij}\neq 0\},$ where the unknown edge set $E$ is of primary interest.
Suppose that we only have access to 
the following observations: $\{x_{i,V_i}: V_i\subseteq [p]\}_{i=1}^n,$
where $V_i$ is the observed index set of data point $i$. Then the joint observation set for node pair $(j,k)$ is $O_{jk}=\{i:j,k\in V_i\}$ of size $n_{jk}=|O_{jk}|$. 
There are a number of applications where $n_{jk}$ can be drastically different. 

\noindent \textbf{Heterogeneous missingness}:
In a variety of biological experiments, some variables could be missing or have erroneous zero reads (dropouts) much more than others, e.g., the expression levels of certain genes \citep{gan2020correlation,huang2018saver,gong2018drimpute}, or the abundance of some microbes \citep{williams2019microbiomedasim}. Figure \ref{fig:scRNA_seq_samplesizes} shows the observational patterns and pairwise sample sizes of two real single-cell RNA sequencing (scRNA-seq) data sets, which is far from uniform.
\begin{figure}[!htbp]
    \centering
    \subfigure[Real observational patterns]{
    \includegraphics[height=3.3cm]{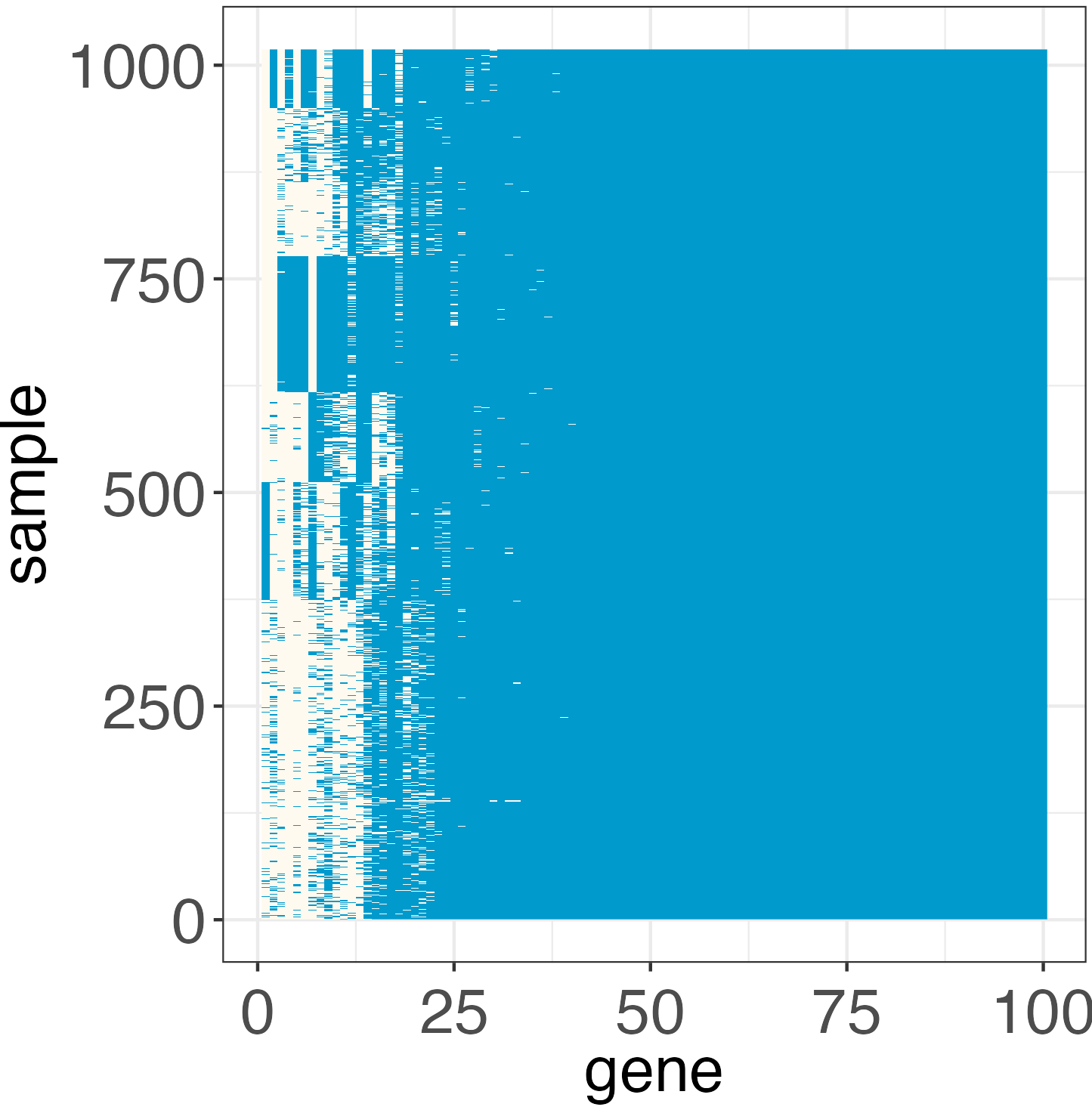}
    \includegraphics[height=3.3cm]{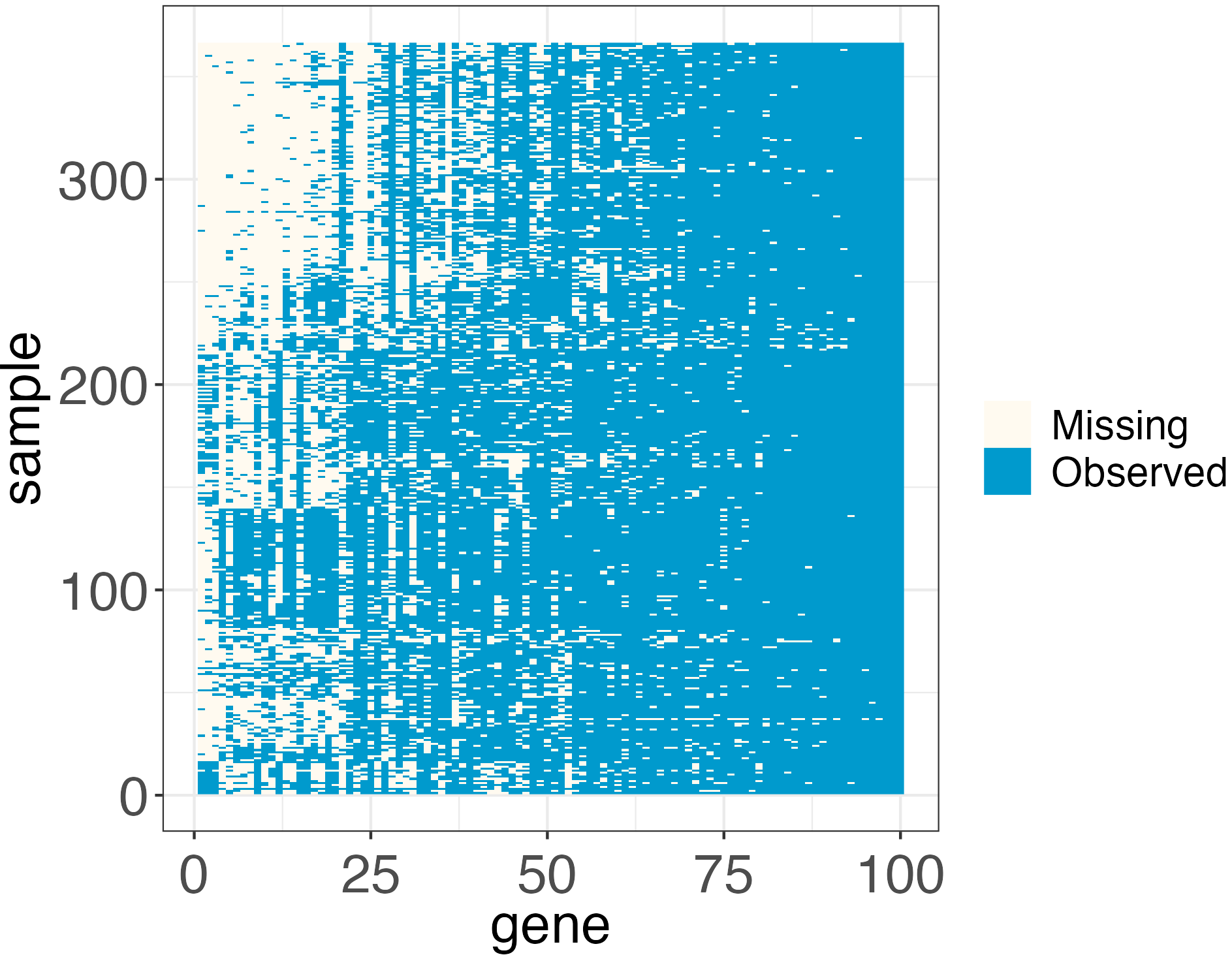}}
    \subfigure[Real pairwise sample sizes]{
    \includegraphics[height=3.3cm]{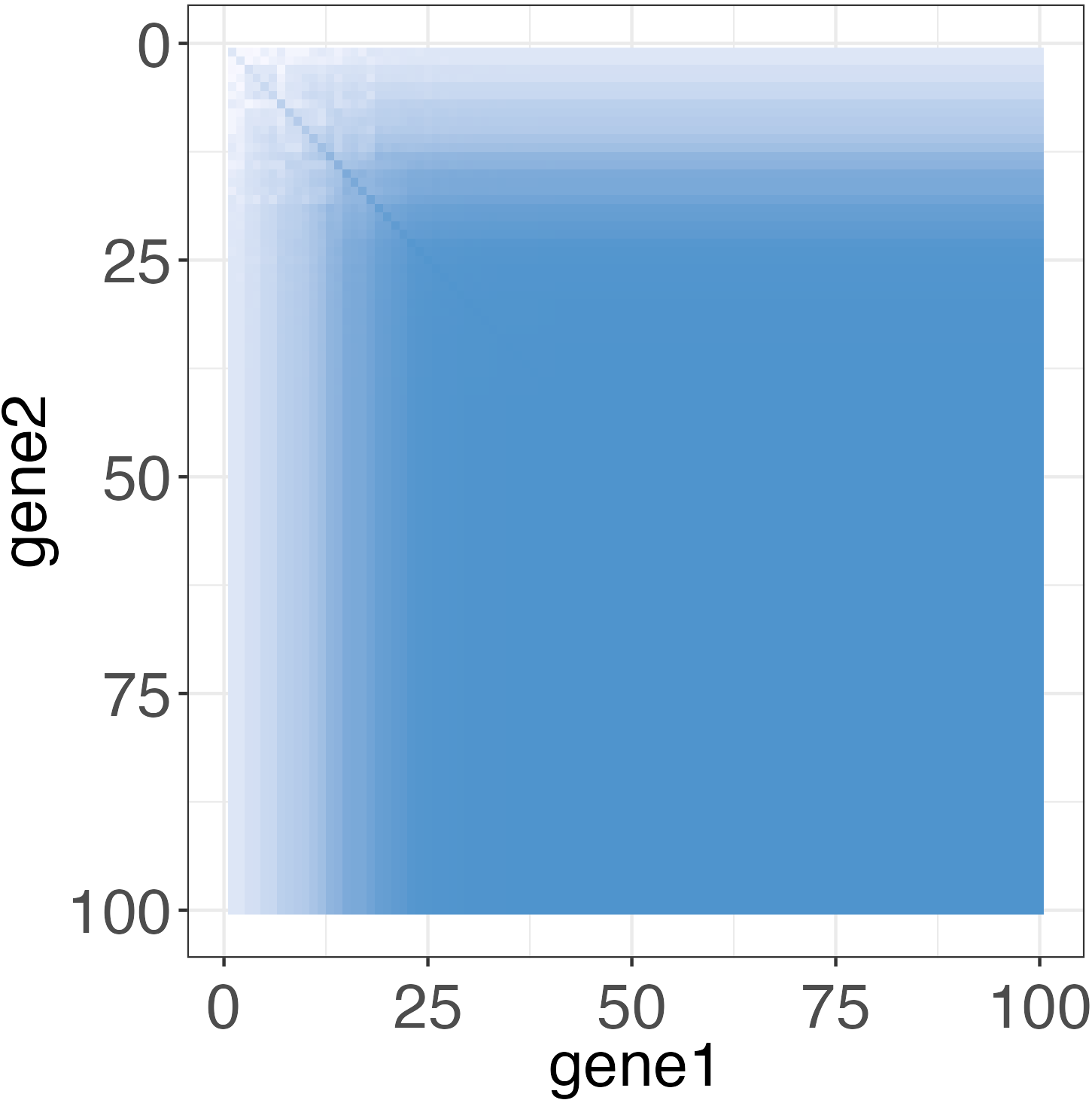}
    \includegraphics[height=3.3cm]{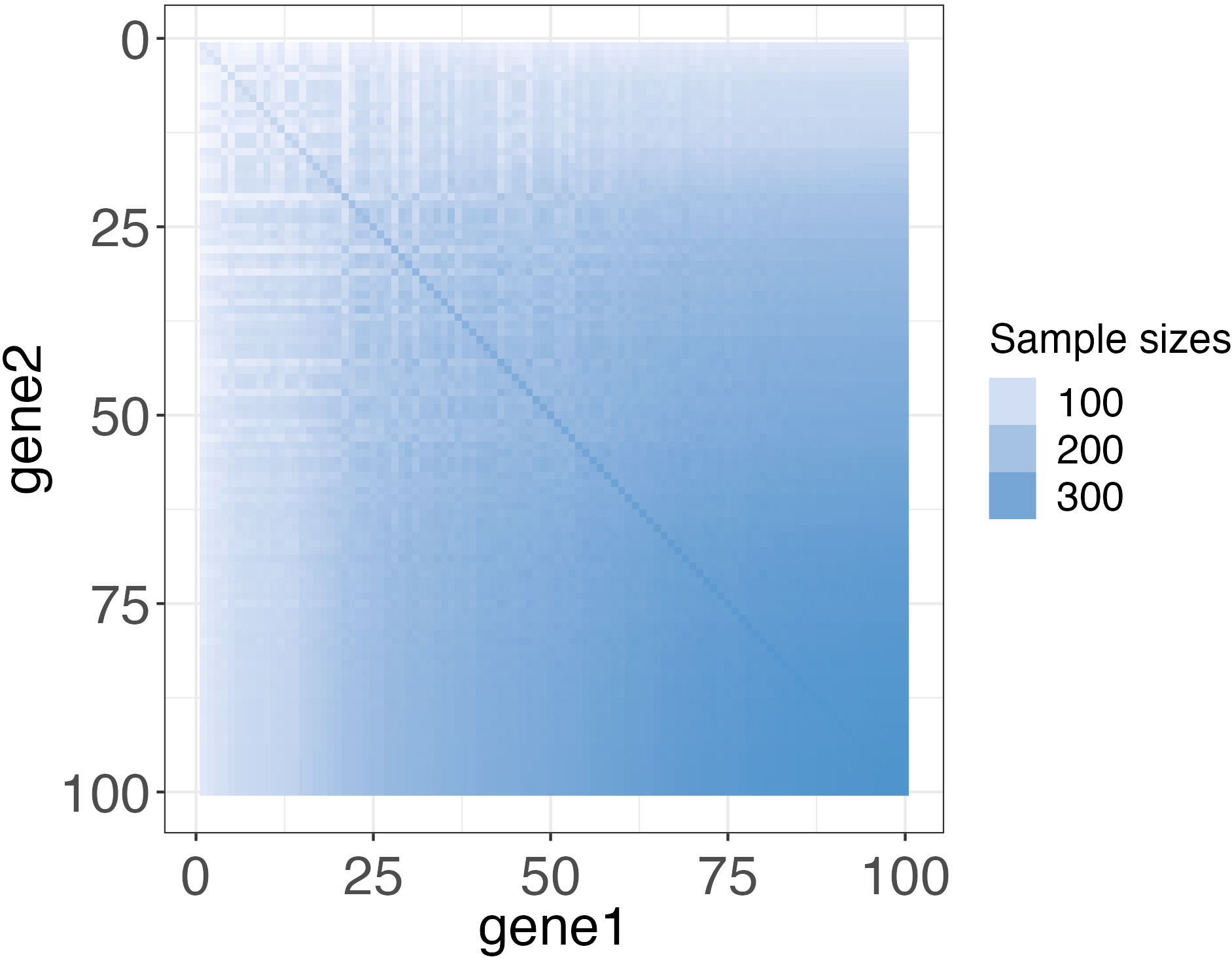}}
    \subfigure[GI-JOE (FDR)]{\includegraphics[height=3.5cm]{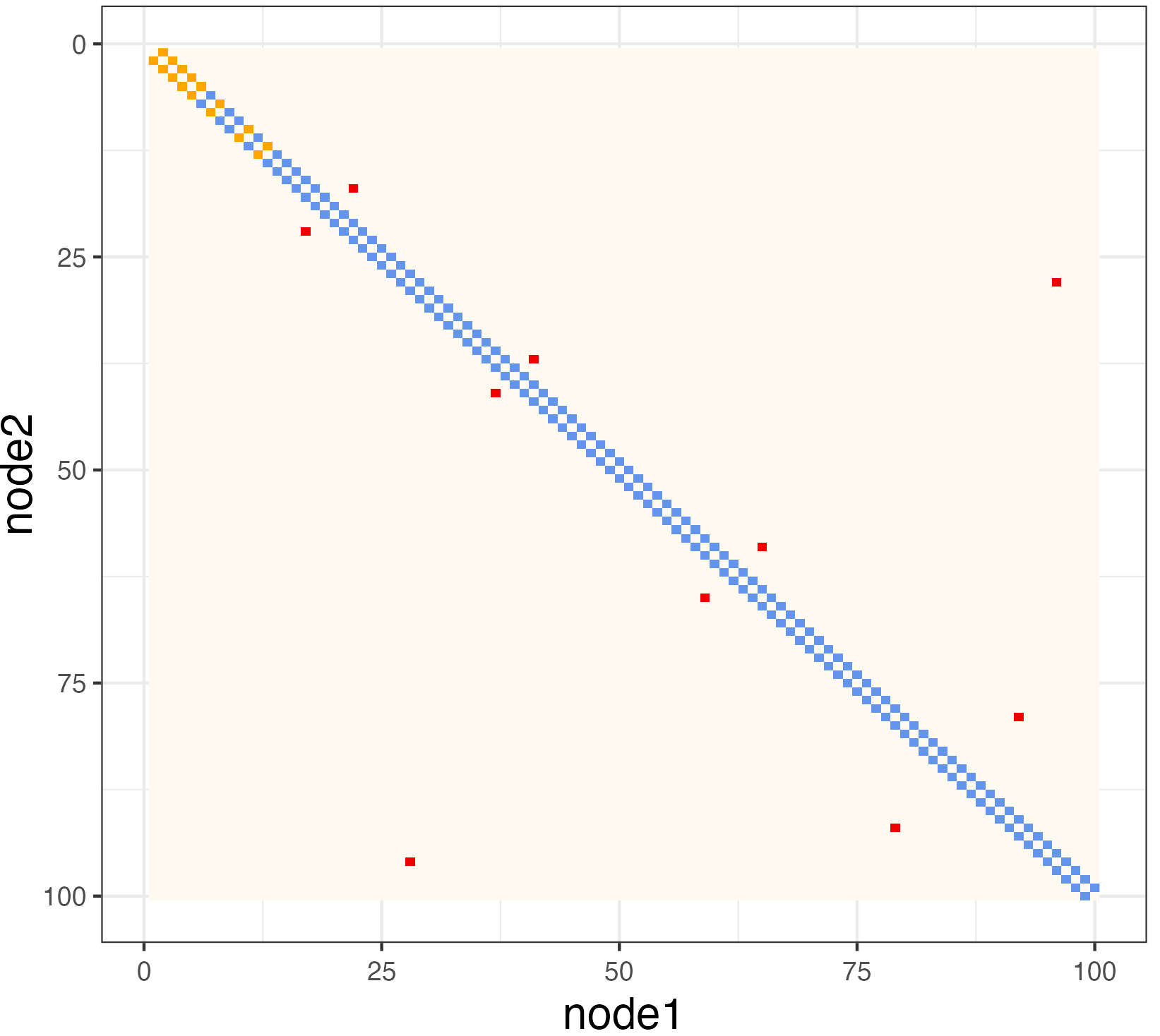}}
    \subfigure[Naive FDR control with minimum sample size]{
    \includegraphics[height = 3.5cm]{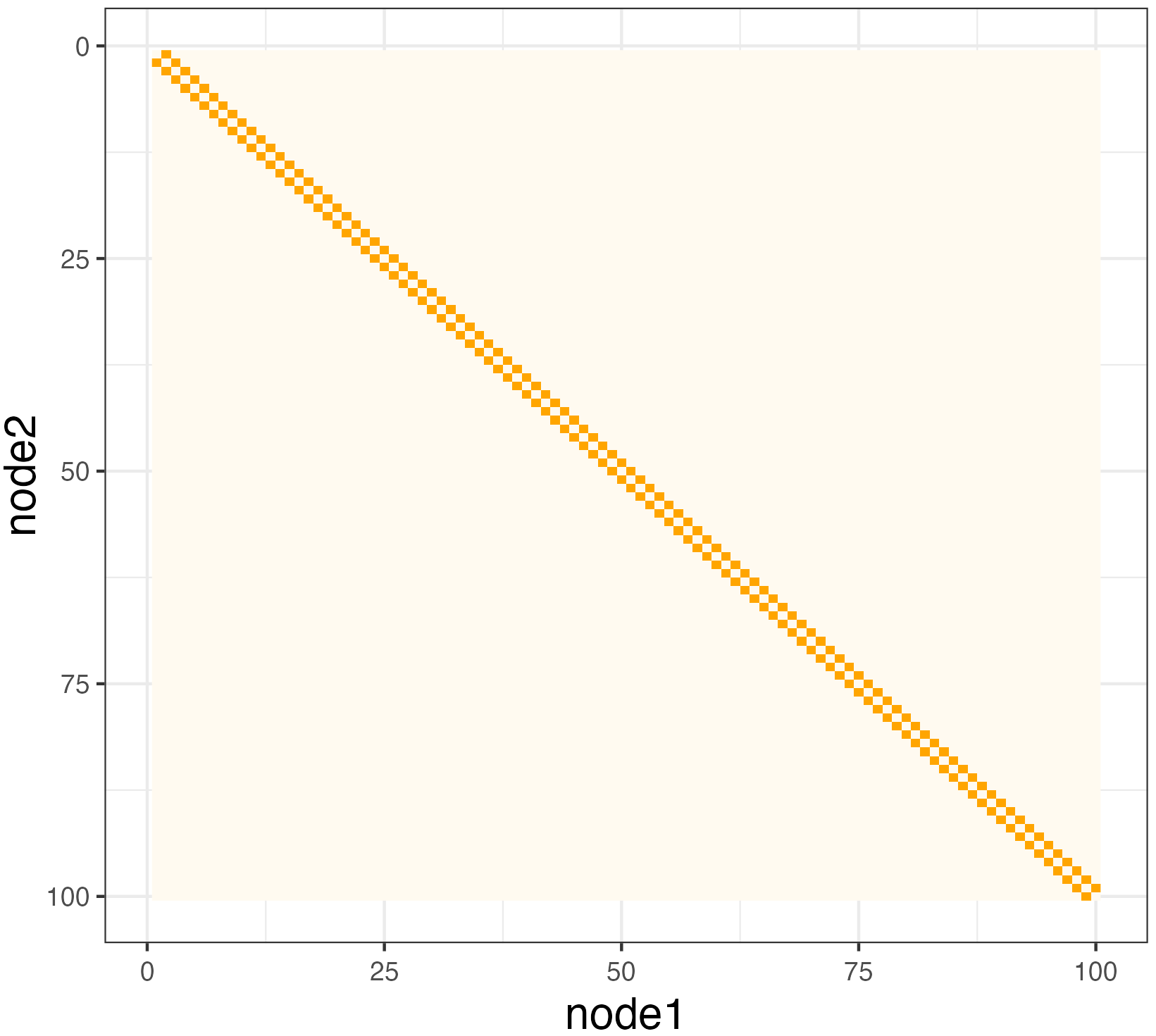}}
    \subfigure[Baseline estimator]{
    \includegraphics[height = 3.5cm]{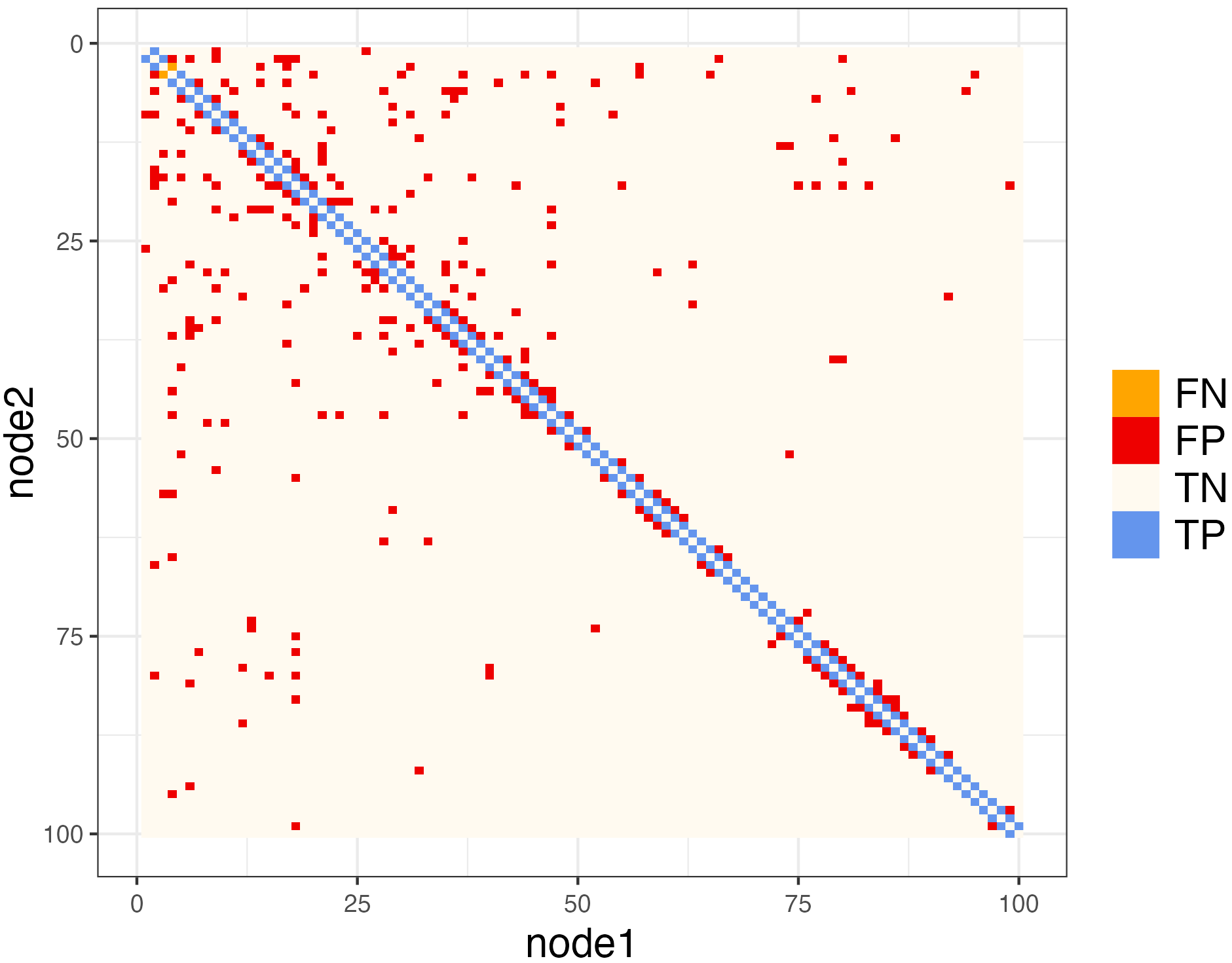}}
    \caption{\small Two erose measurement patterns in real scRNA-seq data sets \citep{chu2016single, darmanis2015survey} are presented in (a), (b), including the top 100 genes with the highest variances. The pairwise sample sizes range from 0 to 1018 (\emph{chu} data, left) and from 12 to 366 (\emph{darmanis}, right). (c)-(e) present the graph selection and inference results for a chain graph, when the data has the \emph{darmanis} measurement pattern. (c) is selected by our GI-JOE (FDR) approach and is the most accurate; (d) is obtained by an ad hoc implementation of the debiased graphical lasso \citep{jankova2015confidence} that plugs in the minimum pairwise sample size, which is too conservative and identifies no edge at all; (e) is the estimated graph by a baseline approach \citep{kolar2012estimating}, which plugs in a covariance estimate into the graphical lasso, and the many false positives suggest that the graph selection problem with such data set is non-trivial.}
    \label{fig:scRNA_seq_samplesizes}
\end{figure}

\vspace{-2mm}
\noindent \textbf{Data integration / size-constrained measurements}:
Non-simultaneous and uneven measurements also frequently arise from data integration and size-constrained measurements. For instance, to better understand the neuronal circuits from neuronal functional activities, one promising strategy is to estimate a large neuronal network \citep{vinci2018adjusted,chang2021extreme} from in vivo calcium imaging data sets. However, to ensure a sufficient temporal resolution of the recording, the spatial resolution is limited, putting a constraint on the number of neurons simultaneouly measured \citep{bae2021functional,zheng2022low},
and neuron pairs that are further from each other are less likely to be measured together. In genome-wide association studies (GWAS), it is also desirable to integrate genomic data across multiple sources due to the limited sample sizes of each data set, while these different sources might have different genomic coverage \citep{cai2016structured}. Similar measurement constraints also arise in sensor networks where it is extremely expensive to synchronize a large number of sensors \citep{dasarathy2016active,dasarathy2019gaussian}. 

\subsection{Limitations of Existing Works for Erose Measurements}
To learn graphical models from erosely measured data, one might want to leverage the current literature on graphical models with missing data \citep{stadler2012missing,kolar2012estimating,wang2014gaussian,park2021estimating}. However, most of these works assume the variables are missing independently with the same missing probability. While \cite{park2021estimating} allows for arbitrary missing probabilities and dependency in their problem formulation, their theoretical guarantees still hinge on the minimum observational probability.
Using the minimum pairwise sample size over the whole graph to characterize the performance of the graph learning result can be too coarse and provides little insights to erosely measured data sets. Interestingly, one recent work \citep{zheng2022learning} provides a localized theoretical guarantee for neighborhood selection consistency, requiring only sample size conditions imposed upon the corresponding neighbors instead of all node pairs. Such theoretical results suggest that the estimation accuracy should vary over the graph when measurements are erose, and a coarse characterization based on the minimum sample size would only provide insights for the worst part of the graph estimate. 

Inspired by this intuition, here arises one natural question: \emph{can we develop a statistical inference method that quantifies the different uncertainty levels over the graph arising from the erose measurements?} 
Over the last decade, significant efforts have been devoted to the statistical inference in high-dimensional settings, including techniques such as the debiased Lasso \citep{van2014asymptotically,zhang2014confidence,javanmard2014confidence}, post-selection inference approaches \citep{lee2016exact,tibshirani2016exact}, knockoff methods \citep{barber2015controlling,candes2018panning}, and various other FDR control methods \citep{javanmard2019false,liu2013gaussian}. These techniques have been applied in regression or classification problems, as well as in graphical models. However, these prior works mainly consider simultaneous measurements across all variables \citep{jankova2015confidence,ren2015asymptotic,gu2015local,yu2020simultaneous,liu2013gaussian,jankova2017honest}, which, in the context of graphical models, would result in the same sample size across the entire graph; or they consider the missing data setting where all variables are missing independently with the same missing probability \citep{belloni2017confidence}, still leading to approximately the same sample sizes. To the best of our knowledge, there is no applicable statistical inference method for the general observational patterns and erose measurements that we are considering. 
If practitioners want to apply these existing inference methods with erosely measured data, they have to come up with one single sample size quantity $n$ to determine the uncertainty levels for each edge. To ensure the validity of the test, one ad hoc way might be to plug in the minimum pairwise sample size, which can be extremely conservative and has no power (see Figure \ref{fig:scRNA_seq_samplesizes}(d)). 

The rest of the paper is organized as follows. We first review the set-ups and neighborhood selection results from \cite{zheng2022learning} in Section \ref{sec:estimation_selection}, which serves as an inspiration and basis of our graph inference method under erose measurements; Our key contribution, the GI-JOE approach, is introduced in Section \ref{sec:debiased_nb_lasso} and \ref{sec:FDR}. In particular, Section \ref{sec:debiased_nb_lasso} is devoted to the edge-wise inference method, and for any node pair, we characterize its type I error and power based on the sample sizes involving the node pair's neighbors. Section \ref{sec:FDR} focuses on the FDR control procedure, also shown to be theoretically valid under appropriate conditions. The synthetic and real data experiments are included in Sections \ref{sec:numeric}. We conclude with discussion of some open questions in Section \ref{sec:discussion}.

\noindent \textbf{Notations:} For any matrix $A\in \mathbb{R}^{p_1\times p_2}$, let $\|A\|_{\infty}=\max_{j,k}|A_{j,k}|$, $\|A\|=\sup_{\|u\|_2=1}\|Au\|_2$ be its spectral norm, and $\vertiii{A}_{\infty}=\max_{j=1,\dots,p_1}\sum_{k=1}^p|A_{j,k}|$ be the matrix-operator $\ell_{\infty}$ to $\ell_{\infty}$ norm.  
For any tensor $\mathcal{T}\in \mathbb{R}^{p_1\times p_2\times p_3\times p_4}$ and matrix $A\in \mathbb{R}^{p_1,q_1}$ define the tensor-matrix/vector product $\mathcal{T}\times_1 A\in \mathbb{R}^{q_1\times p_2\times p_3\times p_4}$ as follows: $(\mathcal{T}\times_1 A)_{i_1,i_2,i_3,i_4}=\sum_{j_1=1}^{p_1}A_{j_1,i_1}\mathcal{T}_{j_1,i_2,i_3,i_4}$. Similarly we extend this definition of tensor-matrix product to other modes.
\section{Graph Selection with Erose Measurements}\label{sec:estimation_selection}
In this section, we review the set-up and neighborhood selection theory in \cite{zheng2022learning}, as it underpins our own inference procedure and theory in Section \ref{sec:debiased_nb_lasso}. In particular, we follow \cite{zheng2022learning} and study a variant of the neighborhood lasso method instead of other graph estimation methods \citep{yuan2007model,cai2011constrained}, since its form makes it easier to disentangle the effects of different parts of the graph on each other. 

The neighborhood lasso algorithm proposed in \cite{zheng2022learning} consists of two steps: estimating the true covariance $\Sigma^*$ and plugging the estimate into a neighborhood lasso estimator. An unbiased estimate $\widehat{\Sigma}$ is defined as follows: given observations $\{x_{i,V_i}\}_{i=1}^n$, for each entry $(j,k)$, $\widehat{\Sigma}_{j,k}=\frac{1}{n_{j,k}}\sum_{i: j,k\in V_i}x_{i,j}x_{i,k}$. However, $\widehat{\Sigma}$ is not guaranteed to be positive semi-definite, resulting in both optimization and statistical issues in neighborhood lasso. To ensure convexity and preserve the entry-wise error bounds for $\widehat{\Sigma}_{j,k}-\Sigma^*_{j,k}$, an additional projection step upon the positive semi-definite cone is considered: 
\begin{equation}\label{eq:proj_Sigma}
\widetilde{\Sigma}=\argmin_{\Sigma\succeq 0}\max_{j,k}\sqrt{n_{j,k}}|\Sigma_{j,k}-\widehat{\Sigma}_{j,k}|,
\end{equation} 
where $n_{j,k}$ is the pairwise sample size associated with node pair $(j,k)$, defined as in Section \ref{sec:setting}.
The projection problem \eqref{eq:proj_Sigma} can be solved by the ADMM, and we include the detailed optimization steps in Appendix \ref{sec:proj_alg}.

Given the covariance estimate $\widetilde{\Sigma}$, for any target node $a$ of which we want to estimate the neighborhood, consider the following neighborhood regression problem: 
\begin{equation}\label{eq:nb_lasso}
\widehat{\theta}^{(a)}=\argmin_{\theta\in \mathbb{R}^p,\theta_a=0}\frac{1}{2}\theta^\top \widetilde{\Sigma}\theta-\widetilde{\Sigma}_{a,:}\theta +\sum_{j=1}^p\lambda^{(a)}_j|\theta_j|,
\end{equation}
where $\lambda^{(a)}=(\lambda^{(a)}_1,\dots,\lambda^{(a)}_p)^\top\in \mathbb{R}^p$ is a vector of tuning parameters, with each entry $\lambda^{(a)}_j$ corresponding to a potential edge connecting node $j$ and $a$. The solution $\widehat{\theta}^{(a)}$ serves as an estimate for
$\theta^{(a)*}=\argmin_{\theta\in \mathbb{R}^p,\theta_a=0}\frac{1}{2}\theta^\top \Sigma^*\theta-\Sigma^*_{a,:}\theta,$
which satisfies 
$\theta^{(a)*}_{\backslash a}=(\Sigma^*_{\backslash a,\backslash a})^{-1}\Sigma^*_{\backslash a,a}=\frac{1}{\Theta^*_{a,a}}\Theta^*_{\backslash a, a},$
and hence the support set of $\theta^{(a)*}$ equals the true neighborhood of node $a$: $\mathcal{N}_a=\{j\neq a:\Theta^*_{a,j}\neq 0\}$. Then one can estimate $\mathcal{N}_a$ by the support of $\widehat{\theta}^{(a)}$: $\widehat{\mathcal{N}}_a=\{j\neq a:\widehat{\theta}^{(a)}\neq 0\}$. It was shown in \cite{zheng2022learning} that the neighborhood selection consistency is guaranteed with sample size conditions involving the neighbors of node $a$. Here, we present a similar theoretical result, with only a slight modification on the tuning parameter choice. 
Let $\gamma_{a}=\frac{\mymax_{j\in \overline{\cN}_a^c}\mymin_k n_{j,k}}{\mymin_{j\in \cN_a}\mymin_{k}n_{j,k}}$ be the sample size ratio between $a$'s non-neighbors and neighbors.
\begin{thm}[Neighborhood Selection Consistency, Similar to \cite{zheng2022learning}]\label{thm:nblasso_support}
	Consider the Gaussian graphical model with erose measurement setting described in Section~\ref{sec:setting} and the estimator $\widehat{\theta}^{(a)}$ defined in~\eqref{eq:nb_lasso}.
	Suppose Assumption \ref{assump:incoh} in Appendix \ref{append:nbconsistency_theory} (the mutual incoherence condition) holds, and the tuning parameters $\lambda^{(a)}_j$'s in \eqref{eq:nb_lasso} satisfy $\lambda^{(a)}_j \asymp$ $\|\Sigma^*\|_{\infty}\frac{\|\Theta^*_{:,a}\|_1}{\Theta^*_{a,a}}$ $\sqrt{\frac{\log p}{\mymin_{k} n_{j,k}}}$. If $\gamma_a\leq C$ for some $C>0$ depending on the incoherence parameter,
	\begin{equation}\label{eq:nblasso_samplesize}
	\min_{j\in \mathcal{N}_a}\min_k n_{j,k}\geq C(\Sigma^*)\|\Sigma^*\|^2_{\infty}\left[d_a^2+(\theta^{(a)}_{\min})^{-2}\right]\log p,
	\end{equation}
	where the constant $C(\Sigma^*)$ depends on $\Sigma^*$, then $\widehat{\mathcal{N}}_a=\{j:\widehat{\theta}^{(a)}_j\neq 0\}=\mathcal{N}_a$ with probability at least $1-p^{-c}$ for some absolute constants $c>0$. 
\end{thm}
\noindent The complete version of Theorem \ref{thm:nblasso_support}, additional $\ell_1$ and $\ell_2$ error bounds for $\widehat{\theta}^{(a)}-\theta^{(a)*}$, a pictorial illustration of the sample size condition \eqref{eq:nblasso_samplesize}, and the proofs can be found in Appendix \ref{append:proofs}. The localized characterization of the graph estimation performance in Theorem \ref{thm:nblasso_support} inspires us to develop an inference method that quantifies the uneven uncertainty levels over the graph.

\section{Edge-wise Inference: Quantifying Uncertainties from Erose Measurements}\label{sec:debiased_nb_lasso}
In this section, we propose our GI-JOE method for edge-wise inference with erose data. The key idea follows the debiased lasso \citep{van2014asymptotically}, while the main challenge and innovation is characterizing the uncertainty level associated with each edge-wise statistic. We first introduce our edge-wise debiased statistic $\widetilde{\theta}^{(a)}_b$ in Section \ref{sec:debiased_nb_lasso}, and characterize its asymptotic distribution in Section \ref{sec:normal_approx}; We further propose a consistent estimator of its variance and establish statistical validity of edge-wise inference in Section \ref{sec:var_est}.
\subsection{Debiased Neighborhood Lasso}\label{sec:debiasing}
First we introduce the key idea and intuition behind how we construct our debiased test statistic. Recall that our neighborhood regression estimator $\widehat{\theta}^{(a)}$ was defined as in \eqref{eq:nb_lasso}, and by the Karush–Kuhn–Tucker (KKT) condition, we know that it satisfies
\begin{equation*}
    \widetilde{\Sigma}_{\backslash a,\backslash a}\widehat{\theta}^{(a)}_{\backslash a} - \widetilde{\Sigma}_{\backslash a,a}+(\lambda^{(a)}\circ \widehat{Z})_{\backslash a}=0,
\end{equation*}
where $\circ$ represents element-wise multiplication, and $\widehat{Z}\in \mathbb{R}^p$ satisfies that $\|\widehat{Z}\|_{\infty}\leq 1$ and $\widehat{Z}_j=\mathrm{sgn}(\widehat{\theta}^{(a)}_j)$ if $\widehat{\theta}^{(a)}_j\neq 0$. Noting the fact that $\Sigma^*\theta^{(a)*}-\Sigma^*_{:,a}=0$, and the relationship between $\theta^{(a)*}$ and $\Theta^*_{:,a}$, we can use some rearrangements to obtain the following:
\begin{equation*}
\begin{split}
    \widetilde{\Sigma}_{\backslash a,\backslash a}(\widehat{\theta}^{(a)}-\theta^{(a)*})_{\backslash a} +(\lambda^{(a)}\circ \widehat{Z})_{\backslash a}&=(\Theta^*_{a,a})^{-1}(\widetilde{\Sigma}-\Sigma^*)_{\backslash a,:}\Theta^*_{:,a},\\
    \widetilde{\Sigma}_{\backslash a,\backslash a}(\widehat{\theta}^{(a)}-\theta^{(a)*})_{\backslash a} +\widetilde{\Sigma}_{\backslash a,a}-\widetilde{\Sigma}_{\backslash a,\backslash a}\widehat{\theta}^{(a)}_{\backslash a} &= (\Theta^*_{a,a})^{-1}(\widetilde{\Sigma}-\Sigma^*)_{\backslash a,:}\Theta^*_{:,a}.
\end{split}
\end{equation*}
The derivation above follows similar arguments for the debiased lasso in \cite{van2014asymptotically}, and ideally, we would hope the RHS of the equation above has (asymptotically) normal distribution and can serve as a basis for our inference. However, since $\widetilde{\Sigma}$ is the solution of a weighted $\ell_{\infty}$ projection onto the positive semi-definite cone, the CLT is not directly applicable. Instead, we want to change it to a function or $\widehat{\Sigma}$,
whose entries can be written as independent sums. As will be shown in our proofs, substituting $\widetilde{\Sigma}$ by $\widehat{\Sigma}$ in the debiasing terms above can help us achieve this goal: $\widetilde{\Sigma}_{\backslash a,\backslash a}(\widehat{\theta}^{(a)}-\theta^{(a)*})_{\backslash a} +\widehat{\Sigma}_{\backslash a,a}-\widehat{\Sigma}_{\backslash a,\backslash a}\widehat{\theta}^{(a)}_{\backslash a} \approx (\Theta^*_{a,a})^{-1}(\widehat{\Sigma}-\Sigma^*)_{\backslash a,:}\Theta^*_{:,a}.$
Furthermore, to invert the factor $\widetilde{\Sigma}_{\backslash a,\backslash a}$, we need a good approximation of $(\Sigma^*_{\backslash a,\backslash a})^{-1}\in \mathbb{R}^{(p-1)\times (p-1)}$. Define the debiasing matrix $\Theta^{(a)*} \in \bR^{p\times p}$, which satisfies $\Theta^{(a)*}_{a,:}=0$, $\Theta^{(a)*}_{:,a}=0$, and $\Theta^{(a)*}_{\backslash a, \backslash a}= (\Sigma^*_{\backslash a,\backslash a})^{-1}$. Then suppose we have a good estimate $\Theta^{(a)}$, one would be able to show 
\begin{equation}\label{eq:debias_approx}
\begin{split}
    \widehat{\theta}^{(a)}-\theta^{(a)*} +\Theta^{(a)}(\widehat{\Sigma}_{:,a}-\widehat{\Sigma}_{:,\backslash a}\widehat{\theta}^{(a)}_{\backslash a}) \approx (\Theta^*_{a,a})^{-1}\Theta^{(a)*}(\widehat{\Sigma}-\Sigma^*)\Theta^*_{:,a}.
\end{split}
\end{equation}
Given a node pair $(a,b)$ for $a\neq b$, this motivates us to consider an edge-wise test statistic of the form $\widehat{\theta}^{(a)}_b+\Theta^{(a)}_{b,:}(\widehat{\Sigma}_{:,a}-\widehat{\Sigma}_{:,\backslash a}\widehat{\theta}^{(a)}_{\backslash a})$, where $\Theta^{(a)}_{b,:}$ is an appropriate estimate for $\Theta^{(a)*}_{b,:}$.

Throughout the rest of this section, suppose that we are interested in testing whether there is an edge between node $a\neq b$. 
Now we introduce our estimates for $\Theta^{(a)*}_{b,:}$. Denote by $\cN_b^{(a)}$ the support set of $\Theta^{(a)*}_{b,:}$ and $\overline{\cN}_b^{(a)}=\cN_b^{(a)}\cup j$. By block matrix inverse formula, $\Theta^{(a)*}_{b,:}=\Theta^*_{b,:}-(\Theta^*_{a,a})^{-1}\Theta^*_{b, a}\Theta^*_{a,:}$ and hence is also sparse with $d_b^{(a)}:=|\cN_b^{(a)}|\leq d_a+d_b$. Therefore, we can estimate $\Theta^{(a)*}_{b,:}$ by performing another neighborhood regression. Let
\begin{equation}\label{eq:debias_term}
\begin{split}
\widehat{\theta}^{(a,b)} = &\argmin_{\theta\in \bR^{p},\theta_a=\theta_b=0}\frac{1}{2}\theta^\top \widetilde{\Sigma}\theta - \widetilde{\Sigma}_{b,:}\theta +\sum_{k=1}^{p}\lambda^{(a,b)}_k|\theta_k|,\\ \widehat{\overline{\theta}}^{(a,b)}_b=&1,\, \widehat{\overline{\theta}}^{(a,b)}_{\backslash b}=-\widehat{\theta}^{(a,b)}_{\backslash b},
\end{split}
\end{equation}
where $\lambda^{(a,b)}_k$'s are tuning parameters depending on the pairwise sample sizes $\mymin_{i\in [p]}n_{i,b}$. 
Then $\widehat{\overline{\theta}}^{(a,b)}$ serves as an estimate of $(\Theta^{(a)*}_{b,b})^{-1}\Theta^{(a)*}_{b,:}$. To estimate $\Theta^{(a)*}_{b,b}$, we note the fact that $\Theta^{(a)*}_{b,b}=[\Sigma^*_{b,:}(\Theta^{(a)*}_{b,b})^{-1}\Theta^{(a)*}_{:,b}]^{-1}=[(\Theta^{(a)*}_{b,b})^{-2}\Theta^{(a)*}_{b,:}\Sigma^*\Theta^{(a)*}_{:,b}]^{-1}$. Hence either of the following two estimators can serve appropriately for estimating $\Theta^{(a)*}_{b,:}$:
\begin{equation}\label{eq:debias_Theta_est}
\begin{split}
\widehat{\Theta}^{(a)}_{b,b} = &(\widetilde{\Sigma}_{b,:}\widehat{\overline{\theta}}^{(a,b)})^{-1},\, \widehat{\Theta}^{(a)}_{b,:} = \widehat{\Theta}^{(a)}_{b,b}\widehat{\overline{\theta}}^{(a,b)},\\
\widetilde{\Theta}^{(a)}_{b,b} = &(\widehat{\overline{\theta}}^{(a,b)\top}\widetilde{\Sigma}\widehat{\overline{\theta}}^{(a,b)})^{-1},\, \widetilde{\Theta}^{(a)}_{b,:} = \widetilde{\Theta}^{(a)}_{b,b}\widehat{\overline{\theta}}^{(a,b)}.
\end{split}
\end{equation}
As we will show in Lemma \ref{lem:debias_nb_lasso_err}, both estimators are consistent and lead to sufficiently good statistical error bounds. Based on some empirical investigation (details presented in Section F), we propose to use $\widehat{\Theta}^{(a)}_{b,:}$ for the debiasing step, but would revisit $\widetilde{\Theta}^{(a)}_{b,:}$ for variance estimation in Section \ref{sec:var_est}.
Then the debiased neighborhood lasso estimator for node pair $(a,b)$ is 
\begin{equation}\label{eq:debiased_nb_lasso}
\widetilde{\theta}^{(a)}_b=\widehat{\theta}^{(a)}_b-\widehat{\Theta}^{(a)}_{b,:}(\widehat{\Sigma}\widehat{\theta}^{(a)}-\widehat{\Sigma}_{:,a}). 
\end{equation}

\subsection{Normal Approximation of Debiased Edge-wise Statistic}\label{sec:normal_approx}
Although the edge-wise statistic $\widetilde{\theta}^{(a)}_b$ defined in \eqref{eq:debiased_nb_lasso} is similar to the debiased lasso in the literature, its asymptotical normality is not readily present due to the erose measurement setting we are concerned with. In the following, we present a novel characterization of  $\widetilde{\theta}^{(a)}_b$ that consists of a bias term and an asymptotically normal error term, each term depending on one pairwise sample size quantity, respectively. Before presenting the main theorem, we first define and discuss these two key sample size quantities.

Given the target node pair $(a,b)$, define two sets of node pairs involving $a,\,b$'s neighbors: $S_1(a,b)=\{(j,k): j\text{ or }k\in \cN_a\cup \overline{\cN}_b^{(a)}\}$, $S_2(a,b)=\{(j,k): \Theta^{(a)*}_{j,b}\Theta^*_{k,a}+\Theta^{(a)*}_{k,b}\Theta^*_{j,a}\neq 0\}$, where $\overline{\cN}_b^{(a)}$ and matrix $\Theta^{(a)*}$ are defined in the beginning of Section \ref{sec:debiased_nb_lasso}. Here the order of $a$ and $b$ matters since we first apply neighborhood lasso for node $a$ and then debias its entry $\widehat{\theta}^{(a)}_b$. Proposition \ref{prop:indexsets} characterizes the index set $S_2(a, b)$ and $\overline{\cN}_b^{(a)}$ through their relationships with $\overline{\cN}_a$ and $\overline{\cN}_b$. Figure \ref{fig:inference_n_example} also gives a pictorial illustration of $S_1(a,b)$ and $S_2(a,b)$ for a chain graph. The two key sample size quantities are then defined as the minimum pairwise sample sizes within these two sets:
$n_1^{(a,b)}=\mymin_{(j,k)\in S_1(a,b)}n_{j,k}$, $n_2^{(a,b)}=\mymin_{(j,k)\in S_2(a,b)}n_{j,k},$ which will be shown to determine the bias and variance of the edge-wise statistic. The intuition behind these two node pair sets can be traced back to our main idea for constructing the debiased edge-wise statistic in Section \ref{sec:debiased_nb_lasso}. As shown in \eqref{eq:debias_approx}, our debiased test statistic for $(a,b)$ can be well approximated by $\langle\widehat{\Sigma}-\Sigma^*,\frac{\Theta^*_{:,a}\Theta^{(a)*}_{b,:}}{\Theta^*_{a,a}}\rangle=\langle\widehat{\Sigma}-\Sigma^*,\frac{\Theta^*_{:,a}\Theta^{(a)*}_{b,:}+\Theta^{(a)*}_{:,b}\Theta^*_{a,:}}{2\Theta^*_{a,a}}\rangle$, which can be intuitively understood as a first order Taylor's expansion of the estimation error around $\Sigma^*$. The approximation errors constitute our bias term, which mainly depends on how well we estimate $\theta^{(a)*}$ and $\Theta^{(a)*}_{b,:}$ using neighborhood regression. Similar to the neighborhood selection theory presented in Section \ref{sec:estimation_selection}, we can show that the estimation error for $\theta^{(a)*}$ and $\Theta^{(a)*}_{b,:}$ depend on the pairwise sample sizes $n_{j,k}$ for $j$ or $k$ in the neighborhood sets $\mathcal{N}_a$ and $\overline{\mathcal{N}}_b^{(a)}$, respectively, and hence this leads to our definition of set $S_1(a,b)$. On the other hand, since the matrix $\Theta^*_{:,a}\Theta^{(a)*}_{b,:}+\Theta^{(a)*}_{:,b}\Theta^*_{a,:}$ has support set $S_2(a,b)$, the variance of $\langle\widehat{\Sigma}-\Sigma^*,\frac{\Theta^*_{:,a}\Theta^{(a)*}_{b,:}+\Theta^{(a)*}_{:,b}\Theta^*_{a,:}}{2\Theta^*_{a,a}}\rangle$ is then dominated by the minimum sample sizes in $S_2(a,b).$
\begin{prop}\label{prop:indexsets}
    For any given support set $\overline{E}\subseteq [p]\times [p]$, the following holds except when $\Theta^*_{\overline{E}}\in \mathbb{R}^{|\overline{E}|}$ falls in a measure zero set:
    (i) $S_2(a, b)=(\overline{\cN}_a\times \overline{\cN}_b^{(a)})\cup (\overline{\cN}_b^{(a)}\times \overline{\cN}_a)$. (ii) If $b\in \cN_a$, $\overline{\cN}_b^{(a)}=\overline{\cN}_a\cup\overline{\cN}_b$; otherwise, $\overline{\cN}_b^{(a)}=\overline{\cN}_b$.
\end{prop}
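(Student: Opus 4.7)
The plan is to leverage the block-matrix inverse formula to get a closed form for $\Theta^{(a)*}$ and then read the supports entry by entry, delegating all algebraic cancellations to a measure-zero exceptional set. Specifically, from $\Theta^{(a)*}_{\backslash a,\backslash a}=(\Sigma^*_{\backslash a,\backslash a})^{-1}$ and the block-inverse identity applied to $\Theta^*=(\Sigma^*)^{-1}$, one obtains
\[
\Theta^{(a)*}_{j,k} \;=\; \Theta^*_{j,k}-\frac{\Theta^*_{j,a}\Theta^*_{a,k}}{\Theta^*_{a,a}}\qquad (j,k\neq a),
\]
together with $\Theta^{(a)*}_{a,:}=\Theta^{(a)*}_{:,a}=0$. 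This single identity drives both parts, and in particular $\Theta^{(a)*}$ is symmetric.

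For part (ii), I would split on whether $b\in\cN_a$. In the easy case $b\notin \cN_a$, we have $\Theta^*_{b,a}=0$, so the corrective term vanishes and $\Theta^{(a)*}_{b,\cdot}$ agrees with $\Theta^*_{b,\cdot}$ off the $a$-coordinate; the support equality $\overline{\cN}_b^{(a)}=\overline{\cN}_b$ is then immediate. In the case $b\in\cN_a$, for each $j\notin \cN_a\cup\cN_b\cup\{b\}$ the formula shows $\Theta^{(a)*}_{b,j}$ is identically zero, while for each $j\in\cN_a\cup\cN_b\cup\{b\}$ it is a nontrivial affine functional of $\Theta^*_{\overline E}$ whose zero set is therefore a proper algebraic subvariety of $\mathbb{R}^{|\overline E|}$ of Lebesgue measure zero. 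Taking the union over the finitely many such $j$ gives the generic support statement.

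For part (i), I would follow the same template, now exploiting the symmetry $\Theta^{(a)*}_{j,b}=\Theta^{(a)*}_{b,j}$ and $\Theta^*_{k,a}=\Theta^*_{a,k}$. Outside a measure-zero set, each of the two summands $\Theta^{(a)*}_{j,b}\Theta^*_{k,a}$ and $\Theta^{(a)*}_{k,b}\Theta^*_{j,a}$ is nonzero exactly when the corresponding memberships $j\in\overline{\cN}_b^{(a)},\, k\in\overline{\cN}_a$ (resp.\ $k\in\overline{\cN}_b^{(a)},\, j\in\overline{\cN}_a$) hold. The only remaining failure mode for identifying $S_2(a,b)$ with $(\overline{\cN}_a\times\overline{\cN}_b^{(a)})\cup(\overline{\cN}_b^{(a)}\times\overline{\cN}_a)$ is that both summands are nonzero but exactly cancel, i.e.\ $\Theta^{(a)*}_{j,b}\Theta^*_{k,a}=-\Theta^{(a)*}_{k,b}\Theta^*_{j,a}$, which is again a polynomial equation in $\Theta^*_{\overline E}$ and so cuts out another measure-zero set.

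The only nontrivial technical point, and where I would spend the most care, is checking that each cancellation equation really defines a \emph{nontrivial} polynomial in $\Theta^*_{\overline E}$ — otherwise the ``exceptional'' locus would be all of $\mathbb{R}^{|\overline E|}$ rather than measure zero. This reduces to exhibiting, in each equation, a monomial whose coefficient is a nonzero rational expression in $\Theta^*_{a,a}$ (typically $\pm 1/\Theta^*_{a,a}$ or $\pm 1/(\Theta^*_{a,a})^2$), which is transparent from the closed form above. Once this is verified case by case, the overall exceptional set is a finite union of proper algebraic subvarieties of $\mathbb{R}^{|\overline E|}$, hence has Lebesgue measure zero, completing both parts of the proposition.
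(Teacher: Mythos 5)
Your proposal is correct and follows essentially the same route as the paper's proof: use the block-inverse identity $\Theta^{(a)*}_{j,k}=\Theta^*_{j,k}-\Theta^*_{j,a}\Theta^*_{a,k}/\Theta^*_{a,a}$ to see that the relevant quantities vanish identically outside the claimed index sets, and relegate every possible cancellation inside them to the zero set of a polynomial in $\Theta^*_{\overline{E}}$. Your added emphasis on checking that each cancellation polynomial is not identically zero is a point the paper's proof glosses over, but it is a refinement of the same argument rather than a genuinely different approach.
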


\begin{figure}[!htb]
    \centering
    \includegraphics[width=0.38\textwidth,height = 2cm]{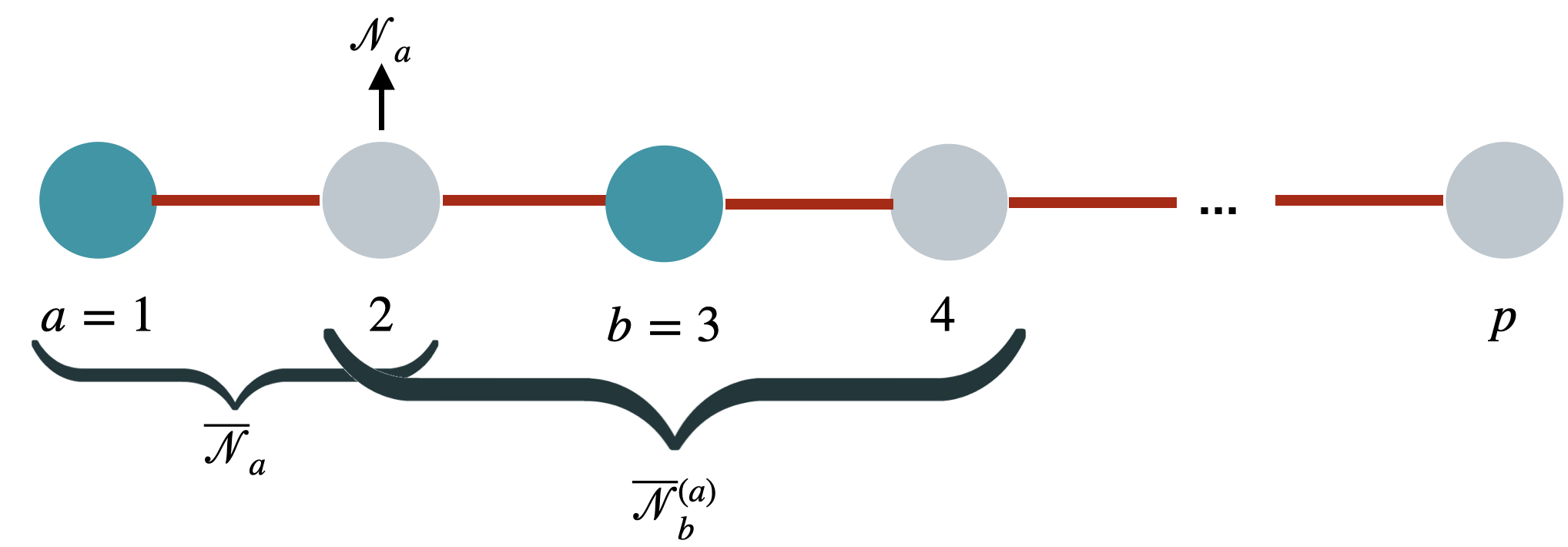}
    \includegraphics[width=0.3\textwidth]{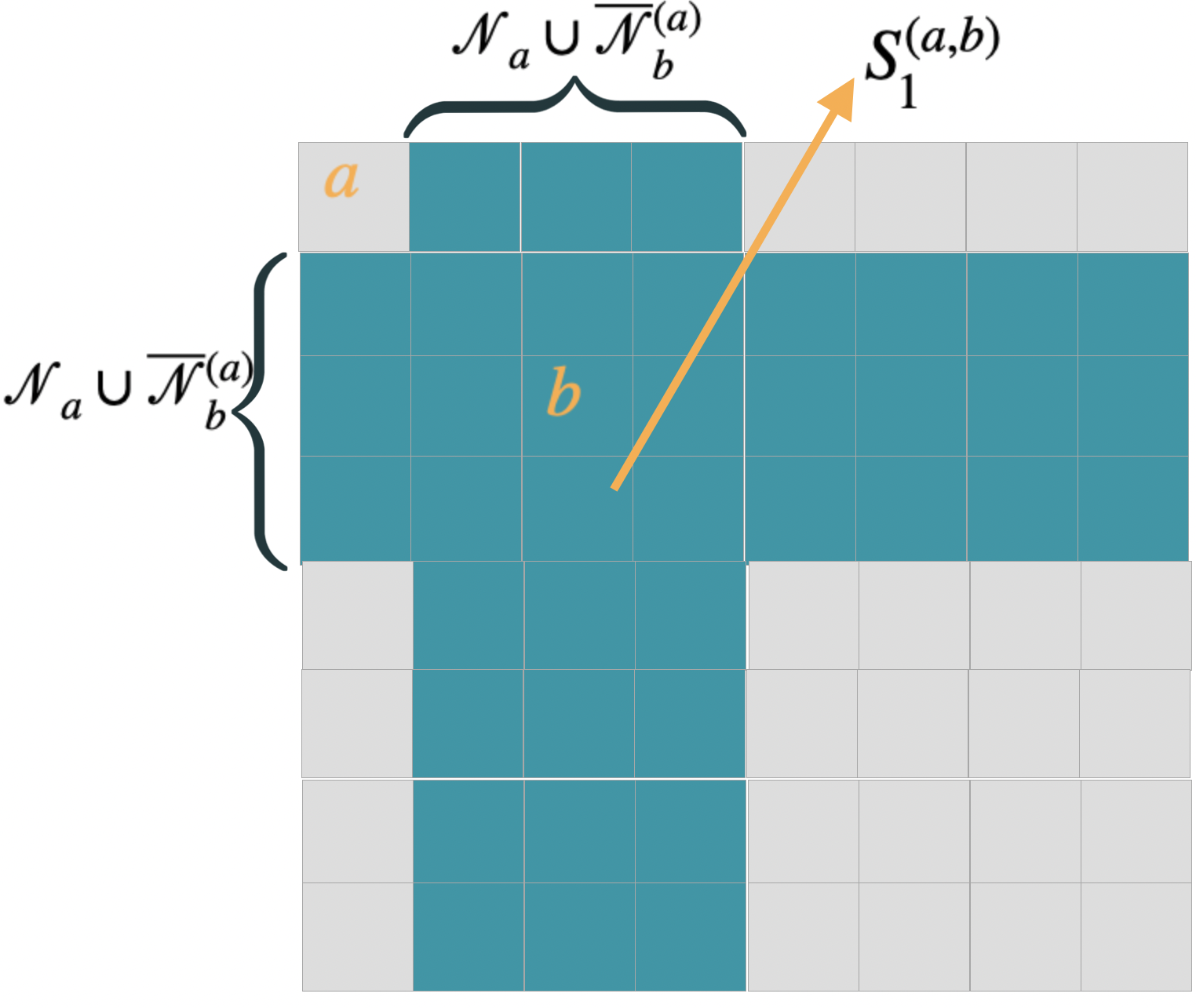}
    \includegraphics[width=0.3\textwidth]{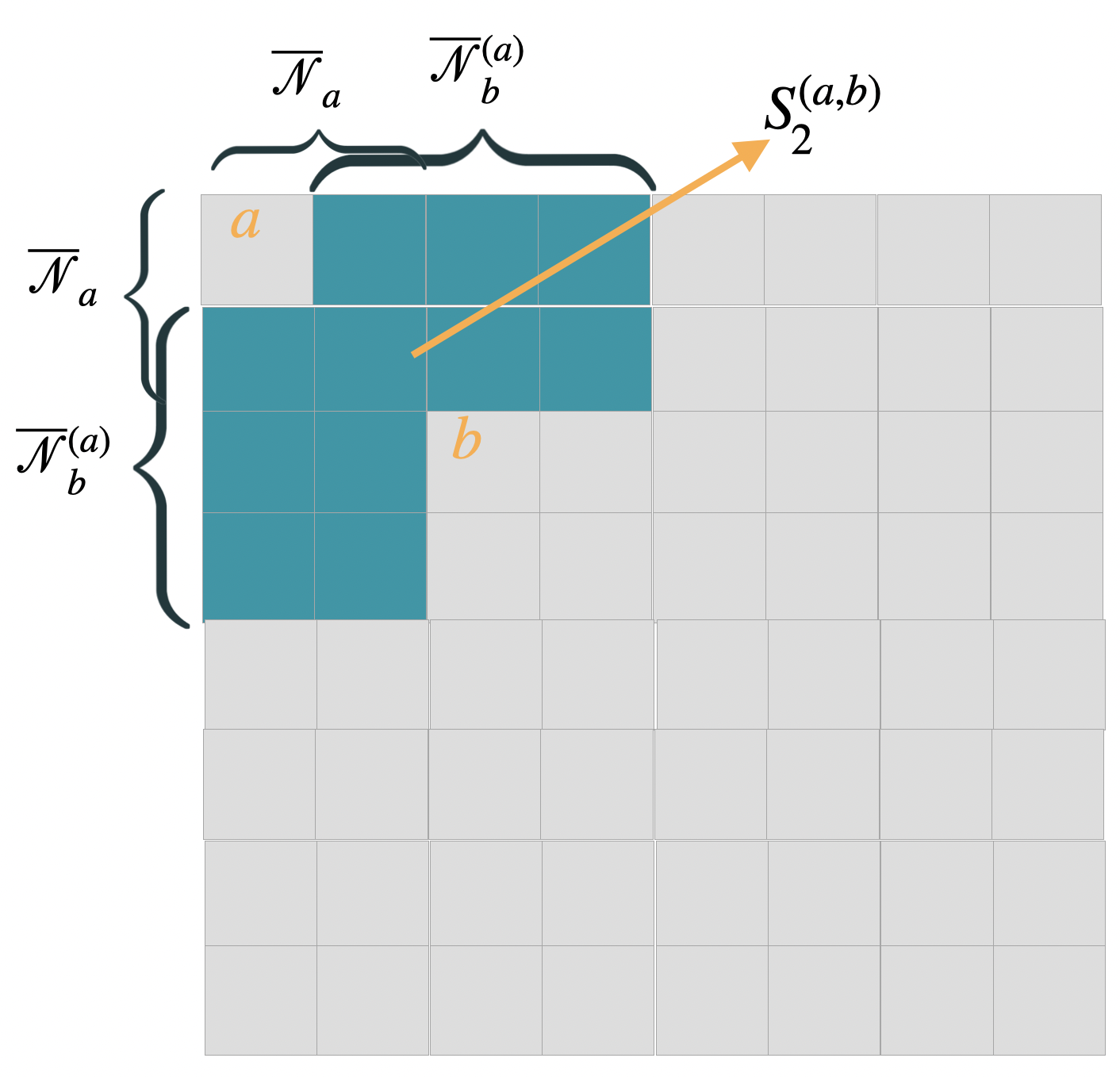}
    \caption{\small An illustration of the set $S_1(a,b)$ and $S_2(a,b)$ 
    in a chain graph, when $a=1$ and $b=3$. The minimum sample size in $S_1(a,b)$ determines the bias for estimating edge $(a,b)$, while the minimum sample size in $S_2(a,b)$ determines the variance for estimating edge $(a,b)$.
    }
    \label{fig:inference_n_example}
\end{figure}

Similar to the the support recovery guarantee in Theorem \ref{thm:nblasso_support}, here we also define the sample size ratio for node $b$ here by $\gamma^{(a)}_b=\frac{\max_{j\in \overline{\cN}_b^{(a)c}}\min_k n_{j,k}}{\min_{j\in \cN_b^{(a)}}\min_k n_{j,k}}$. The following covariance parameters are also useful: let $\mathcal{T}^*,\mathcal{T}^{(n)*}\in \mathbb{R}^{p\times p\times p\times p}$ satisfy $$\mathcal{T}^*_{j,k,j',k'}=\mathrm{Cov}(X_jX_k,X_{j'}X_{k'})=\Sigma^*_{j,j'}\Sigma^*_{k,k'} + \Sigma^*_{j,k'}\Sigma^*_{k,j'}$$ for $1\leq j,k,j',k'\leq p$, and $(\mathcal{T}^{(n)*})_{j,k,j',k'} = \mathcal{T}^*_{j,k,j',k'}\frac{n_{j,k,j',k'}}{n_{j,k}n_{j',k'}}$, where $n_{j,k,j',k'}=|\{i:j,k,j',k'\in V_i\}|$ is the number of joint measurements for $j,k,j',k'$. 
\vspace{-3mm}
\begin{assump}[Sample size condition for accurate estimation]\label{assump:inference_n_B}
    \begin{align*}
	    n_1^{(a,b)}\geq C\frac{\|\Sigma^*\|_{\infty}^2}{\lambda_{\mymin}^2(\Sigma^*)}(\kappa_{\Sigma^*}^2+\gamma_a+\gamma^{(a)}_b)(d_a+d_b+1)^2\log p,
	\end{align*} 
\end{assump}
Assumption \ref{assump:inference_n_B} is similar to the sample size condition in Theorem \ref{thm:nblasso_support}, while the only difference lies that here $\min_kn_{j,k}$ needs to be large as long as $j\in \cN_a\cup\overline{\cN}_b^{(a)}$ instead of $\cN_a$ only, so that both $\widehat{\theta}^{(a)}$ and $\widehat{\Theta}^{(a)}_{b,:}$ are accurate estimators for $\theta^{(a)*}$ and $\Theta^{(a)*}_{b,:}$. 
\vspace{-3mm}
\begin{assump}[Sample size condition for normal approximation]\label{assump:inference_n_E}
    $C_{\varepsilon}(\Sigma^*)(d_a+d_b+1)^{2+\varepsilon}=o(n_2^{(a,b)})$ for some constant $\varepsilon>0$, where $C_{\varepsilon}(\Sigma^*)=\left(\frac{C(1+2/\epsilon)\|\Sigma^*\|_{\infty}}{\lambda_{\mymin}(\Sigma^*)}\right)^{2+\epsilon}$.
\end{assump}
Due to the erose measurements, establishing the Lyapunov condition is much more complicated than the same sample size setting. Assumption \ref{assump:inference_n_E} is a technical assumption we need in this step so that the CLT can be applied to derive asymptotic normality results.
\vspace{-3mm}
\begin{assump}[Sample size condition for controlling bias]\label{assump:inference_n_BE}
    \begin{equation}\label{eq:inference_n_BE}
        n_1^{(a,b)} \gg C^2(\Theta^*;a,b)(\kappa_{\Theta^*}^4+\gamma_a+\gamma^{(a)}_b)[(d_a+d_b+1)\log p]^2\frac{n_2^{(a,b)}}{n_1^{(a,b)}},
    \end{equation}
    where $C(\Theta^*;a,b)=\frac{C\kappa_{\Theta^*}^3\|\Theta^*_{:,a}\|_1\|\Theta^{(a)*}_{:,b}\|_1}{\mymin_{(j,k)\in S_2(a, b)}\left|\Theta^{(a)*}_{b,j}\Theta^{*}_{a,k}+\Theta^{(a)*}_{b,k}\Theta^{*}_{a,j}\right|}$.
\end{assump}
\begin{remark}
Rearranging \eqref{eq:inference_n_BE}, we can also write the this sample size condition as
$$
n_1^{(a,b)} \gg C(\Theta^*;a,b)(\kappa_{\Theta^*}^2+\sqrt{\gamma_a}+\sqrt{\gamma^{(a)}_b})[(d_a+d_b+1)\log p]\sqrt{n_2^{(a,b)}}.
$$
\end{remark}
\begin{remark}
One may be confused when seeing $n_1^{(a,b)}$ both on the left and right hand side of \eqref{eq:inference_n_BE}. In fact, we present it this way in order to connect it to the same sample size setting where $n_2^{(a,b)}=n_1^{(a,b)}=n$, and thus \eqref{eq:inference_n_BE} becomes
$n \gg C^2(\Theta^*;a,b)\kappa_{\Theta^*}^4[(d_a+d_b+1)\log p]^2$.
This is similar to the requirement in prior results on debiased lasso and debiased graphical lasso \citep{van2014asymptotically,zhang2014confidence,jankova2015confidence} with the same sample sizes, which requires $n\gg d^2\log^2 p$. The additional price we paid for uneven sample sizes is reflected in the sample size ratios $\gamma_a$, $\gamma^{(a)}_b$ and $\frac{n_2^{(a,b)}}{n_1^{(a,b)}}$. 
\end{remark}
\vspace{-4mm}
\begin{remark}[Effect of $\gamma_a$, $\gamma^{(a)}_b$ and $\frac{n_2^{(a,b)}}{n_1^{(a,b)}}$]
$\gamma_a$, $\gamma^{(a)}_b$ are the sample size ratios between the most well measured non-neighbor and the worst measured neighbor of $a$ and $b$. These two quantities have a negative effect on our theory, as when the sample sizes for the non-neighbors are all much larger than the neighbors, the neighbors would suffer from much stronger regularization than non-neighbors. While for the sample size ratio $\frac{n_2^{(a,b)}}{n_1^{(a,b)}}$, note that when $n_2^{(a,b)}$ grows too much more quickly than $n_1^{(a,b)}$, the bias term would dominate the variance term and then the normal approximation of $\widetilde{\theta}^{(a)}_b$ would not hold. 
\end{remark}
\vspace{-3mm}
The following theorem establishes the asymptotic normality of $\widetilde{\theta}^{(a)}_{b}+\frac{\Theta^*_{a,b}}{\Theta^*_{a,a}}$ under these three sample size assumptions, and Corollary \ref{cor:nb_lasso_debias_decomp} presents its direct consequence when all pairwise sample sizes are equal ($n_{i,j}=n$), with simplified sample size assumptions that is comparable to prior literature \citep{van2014asymptotically,jankova2015confidence}. 
\vspace{-3mm}
\begin{thm}[Asymptotic Normality]\label{thm:nb_lasso_debias_decomp}
	Consider the Gaussian graphical model with erose measurement setting described in Section \ref{sec:setting} and the debiased edge-wise statistic $\widetilde{\theta}^{(a)}_b$ defined in \eqref{eq:debiased_nb_lasso}. Suppose that $\lambda^{(a)}$ in \eqref{eq:nb_lasso} is chosen as in Theorem~\ref{thm:nblasso_support}, and $\lambda^{(a,b)}$ in \eqref{eq:debias_term} satisfies
	$\lambda^{(a,b)}_k \asymp \|\Sigma^*\|_{\infty}\frac{\|\Theta^{(a)*}_{b,:}\|_1}{\Theta^{(a)*}_{b,b}}\sqrt{\frac{\log p}{\mymin_{j\in [p]}n_{j,k}}}$. Then we have the following decomposition:
	\begin{equation}\label{eq:nb_lasso_debias_decomp}
	    \widetilde{\theta}^{(a)}_b=-\frac{\Theta^{*}_{a,b}}{\Theta^*_{a,a}}+B+E.
	\end{equation}
	If Assumption~\ref{assump:inference_n_B} holds,
	then with probability at least $1-Cp^{-c}$, $|B|\leq C(\Theta^*,\gamma_a,\gamma^{(a)}_b)\frac{(d_a+d_b+1)\log p}{n_1^{(a,b)}},$
	where $C(\Theta^*,\gamma_a,\gamma^{(a)}_b)=C\kappa_{\Sigma^*}(\kappa_{\Sigma^*}^2+\sqrt{\gamma_a}+\sqrt{\gamma_b^{(a)}})\|\Sigma^*\|_{\infty}^2\|\Theta^*_{:,a}\|_1\|\Theta^{(a)*}_{:,b}\|_1$. If Assumption~\ref{assump:inference_n_E} holds, $\sigma_n^{-1}(a,b)E\overset{\text{d}}{\rightarrow}\cN(0,1)$ with $\sigma_n^2(a,b)= \frac{1}{\Theta_{aa}^{*2}}\mathcal{T}^{(n)*}\times_1\Theta^{(a)*}_{:,b}\times_2\Theta^*_{:,a}\times_3\Theta^{(a)*}_{:,b}\times_4\Theta^*_{:,a}$.
	 Furthermore, if Assumptions~\ref{assump:inference_n_B}-\ref{assump:inference_n_BE} hold,
	$$
	\sigma^{-1}_n(a,b)\left(\widetilde{\theta}^{(a)}_b+\frac{\Theta^{*}_{a,b}}{\Theta^*_{aa}}\right)\overset{\text{d}}{\rightarrow}\cN(0,1).
	$$
\end{thm}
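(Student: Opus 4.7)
My plan is to derive the decomposition \eqref{eq:nb_lasso_debias_decomp} algebraically, identify $E$ as the only stochastic linear term, control $B$ via the $\ell_1$ estimation rates inherited from Theorem~\ref{thm:nblasso_support} (and its analogue for $\widehat{\Theta}^{(a)}_{b,:}$ from Appendix~B), and run a Lyapunov CLT on $E$, then combine via Slutsky.

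\textbf{Step 1 (decomposition).} Let $\delta = \widehat\theta^{(a)} - \theta^{(a)*}$ and note $\delta_a = 0$. Since $\theta^{(a)*}_{\backslash a} = -(1/\Theta^*_{a,a})\Theta^*_{\backslash a,a}$ solves $\Sigma^*\theta^{(a)*} = \Sigma^*_{:,a}$ on the $\backslash a$ coordinates, one checks directly that $\theta^{(a)*} - e_a = -\Theta^*_{:,a}/\Theta^*_{a,a}=:v$, and that $\Sigma^* v = -e_a/\Theta^*_{a,a}$. Write $\widehat\Sigma\widehat\theta^{(a)} - \widehat\Sigma_{:,a} = \widehat\Sigma\delta + \widehat\Sigma v$, plug into \eqref{eq:debiased_nb_lasso}, and subtract $\theta^{(a)*}_b$ to obtain
\begin{equation*}
\widetilde\theta^{(a)}_b - \theta^{(a)*}_b
= \underbrace{-\Theta^{(a)*}_{b,:}(\widehat\Sigma-\Sigma^*) v}_{E}
+ \underbrace{\bigl(\Theta^{(a)*}_{b,:}\Sigma^*-\widehat\Theta^{(a)}_{b,:}\widehat\Sigma\bigr)\delta - \bigl(\widehat\Theta^{(a)}_{b,:}-\Theta^{(a)*}_{b,:}\bigr)(\widehat\Sigma-\Sigma^*)v}_{B},
\end{equation*}
where I have used two exact cancellations: $(\Theta^{(a)*}\Sigma^*\delta)_b = \delta_b$ (because $\Theta^{(a)*}_{\backslash a,\backslash a}=(\Sigma^*_{\backslash a,\backslash a})^{-1}$ and $\delta_a = 0$) and $(\widehat\Theta^{(a)}_{b,:}-\Theta^{(a)*}_{b,:})e_a = 0$ (since both matrices are $0$ in the $a$-th column). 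Finally $\theta^{(a)*}_b = -\Theta^*_{a,b}/\Theta^*_{a,a}$, giving \eqref{eq:nb_lasso_debias_decomp}.

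\textbf{Step 2 (bound on $B$).} I will bound the two pieces of $B$ by H\"older: $|B|\leq \|\widehat\Theta^{(a)}_{b,:}-\Theta^{(a)*}_{b,:}\|_1\|(\widehat\Sigma-\Sigma^*)v\|_\infty + \|\widehat\Theta^{(a)}_{b,:}\widehat\Sigma-\Theta^{(a)*}_{b,:}\Sigma^*\|_\infty\|\delta\|_1$. For the $\ell_1$ norms of $\delta$ and of $\widehat\Theta^{(a)}_{b,:}-\Theta^{(a)*}_{b,:}$, I invoke the $\ell_1$-rate extension of Theorem~\ref{thm:nblasso_support} announced in Appendix~B, where $n_1^{(a,b)}$ enters as the minimum pairwise sample size over the active supports $\mathcal{N}_a\cup\overline{\mathcal{N}}_b^{(a)}$; Assumption~\ref{assump:inference_n_B} is exactly what makes those rates valid. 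The entry-wise concentration $\|(\widehat\Sigma-\Sigma^*)v\|_\infty$ and $\|\widehat\Theta^{(a)}_{b,:}(\widehat\Sigma-\Sigma^*)\|_\infty$ are controlled by standard sub-exponential tail bounds on $\widehat\Sigma_{jk}-\Sigma^*_{jk}$ (the variance scales as $1/n_{j,k}$), together with the sparsity of $\Theta^*_{:,a}$ and $\Theta^{(a)*}_{b,:}$. The remaining $\|(\widehat\Theta^{(a)}_{b,:}-\Theta^{(a)*}_{b,:})\Sigma^*\|_\infty$ is handled by the KKT stationarity of $\widehat\theta^{(a,b)}$ in \eqref{eq:debias_term}, which yields a bound of order $\lambda^{(a,b)}$. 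Combining these pieces gives the stated $(d_a+d_b+1)\log p/n_1^{(a,b)}$ rate with the constant $C(\Theta^*,\gamma_a,\gamma^{(a)}_b)$.

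\textbf{Step 3 (CLT for $E$).} Writing $\widehat\Sigma_{j,k}-\Sigma^*_{j,k}=n_{j,k}^{-1}\sum_{i:j,k\in V_i}(x_{i,j}x_{i,k}-\Sigma^*_{j,k})$ and exchanging sums, $E = \sum_{i=1}^n Z_i$ with $Z_i$ independent across samples and supported on pairs $(j,k)$ with $j,k\in V_i$ and $\Theta^{(a)*}_{b,j}\Theta^*_{k,a}\neq 0$, i.e.\ on $S_2(a,b)$. A direct variance computation using the covariance formula $\mathrm{Cov}(\widehat\Sigma_{j,k},\widehat\Sigma_{j',k'}) = \mathcal T^*_{j,k,j',k'}\,n_{j,k,j',k'}/(n_{j,k}n_{j',k'})$ recovers $\mathrm{Var}(E) = \sigma_n^2(a,b)$. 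For asymptotic normality I will verify the Lyapunov $(2+\varepsilon)$ condition: because each $Z_i$ is a polynomial in Gaussians of degree $\leq 2$ restricted to $|S_2(a,b)|\lesssim (d_a+d_b+1)^2$ active indices, Gaussian hypercontractivity yields $\mathbb{E}|Z_i|^{2+\varepsilon}\lesssim (\mathbb{E}Z_i^2)^{1+\varepsilon/2}\cdot C_\varepsilon(\Sigma^*)(d_a+d_b+1)^\varepsilon$. Summing and dividing by $\sigma_n(a,b)^{2+\varepsilon}$, the Lyapunov ratio is bounded by $C_\varepsilon(\Sigma^*)(d_a+d_b+1)^{2+\varepsilon}/n_2^{(a,b)}$, which vanishes by Assumption~\ref{assump:inference_n_E}. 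Thus $E/\sigma_n(a,b)\overset{\mathrm d}{\to}\mathcal N(0,1)$.

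\textbf{Step 4 (combination).} The last assertion follows from Slutsky once $B/\sigma_n(a,b)=o_p(1)$. A lower bound of the form $\sigma_n^2(a,b)\gtrsim (\min_{(j,k)\in S_2}|\Theta^{(a)*}_{b,j}\Theta^*_{a,k}+\Theta^{(a)*}_{b,k}\Theta^*_{a,j}|)^2/n_2^{(a,b)}$ (from retaining only the leading diagonal terms of the tensor quadratic form and using $\lambda_{\min}(\Sigma^*)$) combined with the bound from Step~2 makes this ratio precisely the quantity controlled by Assumption~\ref{assump:inference_n_BE}. The main obstacle I anticipate is Step~3: the i.i.d.-across-$i$ decomposition holds only after conditioning on the observation pattern, and the per-sample summands mix pairs with heterogeneous denominators $n_{j,k}$, so translating a crude moment bound for $Z_i$ into the clean dependence on $n_2^{(a,b)}$ required by the theorem demands carefully tracking how the support of $S_2(a,b)$ interacts with $\{V_i\}$ rather than naively bounding by $n_1^{(a,b)}$.
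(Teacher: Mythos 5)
Your decomposition in Step 1 is exactly the paper's: with $v=-\Theta^*_{:,a}/\Theta^*_{a,a}$ your $E$ and $B$ coincide term-by-term with the paper's $E=\Theta^{*-1}_{a,a}\Theta^{(a)*}_{b,:}(\widehat{\Sigma}-\Sigma^*)\Theta^*_{:,a}$ and its two-part bias, and Steps 2--4 follow the same skeleton (bias bound, Lyapunov CLT, Slutsky). Two sub-steps are executed differently, and both are worth comparing. First, for the cross term $(\Theta^{(a)*}_{b,:}\Sigma^*-\widehat{\Theta}^{(a)}_{b,:}\widehat{\Sigma})\delta$ you invoke the KKT stationarity of $\widehat{\theta}^{(a,b)}$ to get an $\ell_\infty$ bound of order $\lambda^{(a,b)}$, in the spirit of the classical debiased lasso; the paper instead splits this into three pieces and controls the middle one by Cauchy--Schwarz as $\|\Sigma^*\|\,\|\widehat{\Theta}^{(a)}_{b,:}-\Theta^{(a)*}_{b,:}\|_2\,\|\delta\|_2$, avoiding KKT entirely. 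Your route gives the same rate, but note that the stationarity condition of \eqref{eq:debias_term} is stated in terms of the projected matrix $\widetilde{\Sigma}$ while the debiasing term in \eqref{eq:debiased_nb_lasso} uses $\widehat{\Sigma}$, so you must separately absorb $\widehat{\Theta}^{(a)}_{b,:}(\widehat{\Sigma}-\widetilde{\Sigma})\delta$ using the entrywise guarantee $|\widetilde{\Sigma}_{j,k}-\widehat{\Sigma}_{j,k}|\lesssim\|\Sigma^*\|_\infty\sqrt{\log p/n_{j,k}}$ and the weighted-$\ell_1$ structure of $\delta$; this is the one place your sketch is silent and where the erose sample sizes could bite if handled with an unweighted $\ell_\infty\times\ell_1$ H\"older over all coordinates. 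Second, for the Lyapunov condition you use Gaussian hypercontractivity, $\mathbb{E}|Z_i|^{2+\varepsilon}\lesssim(\mathbb{E}Z_i^2)^{1+\varepsilon/2}$, whereas the paper bounds the $\psi_1$-norm of each summand and then applies Jensen's inequality to $\sum_i(\sum_{j,k}\delta^{(i)}_{j,k}n^{-1}_{j,k})^{1+\varepsilon/2}$; your approach is cleaner precisely at the point you flag as the main obstacle, since $\mathbb{E}Z_i^2=2\|\Sigma^{*1/2}U^{(\delta,i)}\Sigma^{*1/2}\|_F^2$ is exact and the heterogeneous denominators enter only through the ratio $\max_i\mathbb{E}Z_i^2/\sigma_n^2(a,b)\lesssim\kappa_{\Sigma^*}^4/n_2^{(a,b)}$, so the $(d_a+d_b+1)^\varepsilon$ factor you insert is not actually needed and Assumption~\ref{assump:inference_n_E} is more than sufficient. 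Everything else, including the variance computation via $\mathcal{T}^{(n)*}$ and the lower bound on $\sigma_n(a,b)$ used to convert the bias bound into $B/\sigma_n(a,b)=o_p(1)$ under Assumption~\ref{assump:inference_n_BE}, matches the paper.
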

\noindent{The proof of Theorem \ref{thm:nb_lasso_debias_decomp} is deferred to Appendix \ref{append:proofs}.}
\vspace{-2mm}
\begin{remark}[Bias-Variance decomposition]
As suggested by \eqref{eq:nb_lasso_debias_decomp}, the error of the debiased lasso estimator can be decomposed into a bias term ($B$) and a variance term ($E$), where $B$ depends on the minimum pairwise sample size $n_1^{(a,b)}$ between any nodes and the neighbors of nodes $a,b$, while $E$ depends only on the sample size $n_2^{(a,b)}$ for nodes within the neighborhoods of $a,b$ (See Figure \ref{fig:inference_n_example}). When $C(\Theta^*,\gamma_a,\gamma_b^{(a)})$ is viewed as a constant, then $|B|\asymp \frac{(d_a+d_b+1)\log p}{n_1^{(a,b)}}$, and the term $E$ scales as the asymptotic standard deviation $\sigma_n(a,b)$, which is further characterized by Proposition \ref{prop:var_bnds}.
\end{remark}
\begin{prop}[Variance characterization]\label{prop:var_bnds}
    The variance term $\sigma_n^2(a,b)$ satisfies
    \begin{align*}
       \sigma_n(a,b)\leq& \frac{\sqrt{2}\lambda_{\max}(\Sigma^*)\|\Theta^{(a)*}_{:,b}\|_2\|\Theta^{*}_{:,a}\|_2}{\Theta^*_{a,a}}(n_2^{(a,b)})^{-\frac{1}{2}}\leq \sqrt{2}\kappa_{\Sigma^*}^2(n_2^{(a,b)})^{-\frac{1}{2}},\\
    \sigma_n(a,b)\geq& \frac{\sqrt{2}\lambda_{\min}(\Sigma^*)\mymin_{(j,k)\in S_2(a, b)}\left|\Theta^{(a)*}_{b,j}\Theta^{*}_{a,k}+\Theta^{(a)*}_{b,k}\Theta^{*}_{a,j}\right|}{2\Theta^*_{a,a}}(n_2^{(a,b)})^{-\frac{1}{2}}.
    \end{align*}
\end{prop}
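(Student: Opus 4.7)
The plan is to recognize $\sigma_n^2(a,b)$ as the variance of a single scalar statistic built from $\widehat{\Sigma}$ and then sandwich it via per-sample Gaussian quadratic form bounds. A direct computation (independence across $i$ combined with Isserlis' formula $\mathrm{Cov}(X_jX_k,X_{j'}X_{k'})=\Sigma^*_{j,j'}\Sigma^*_{k,k'}+\Sigma^*_{j,k'}\Sigma^*_{k,j'}$) gives $\mathrm{Cov}(\widehat{\Sigma}_{j,k},\widehat{\Sigma}_{j',k'})=\mathcal{T}^{(n)*}_{j,k,j',k'}$, so that
$$\sigma_n^2(a,b)=\frac{1}{\Theta^{*2}_{aa}}\mathrm{Var}(Z),\qquad Z:=\sum_{j,k}\Theta^{(a)*}_{j,b}\Theta^*_{k,a}\widehat{\Sigma}_{j,k}.$$
Writing $Z=\sum_{i=1}^n Z_i$ with $Z_i:=\sum_{j,k\in V_i}(\Theta^{(a)*}_{j,b}\Theta^*_{k,a}/n_{j,k})X_{i,j}X_{i,k}$ and using independence of the $X_i$ yields $\mathrm{Var}(Z)=\sum_i\mathrm{Var}(Z_i)$. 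After symmetrizing in $(j,k)$, each $Z_i=X_{i,V_i}^{\top}A^{(i)}X_{i,V_i}$ is a centered Gaussian quadratic form whose symmetric matrix $A^{(i)}$ has entries $\tilde c_{j,k}/(2n_{j,k})$ for $j,k\in V_i$, where $\tilde c_{j,k}:=\Theta^{(a)*}_{b,j}\Theta^*_{a,k}+\Theta^{(a)*}_{b,k}\Theta^*_{a,j}$ is exactly the quantity appearing in the definition of $S_2(a,b)$.

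Next I invoke the identity $\mathrm{Var}(X^{\top}AX)=2\,\mathrm{tr}((A\Sigma)^2)$ for $X\sim\cN(0,\Sigma)$ with symmetric $A$, combined with the sandwich $\lambda_{\min}^2(\Sigma)\|A\|_F^2\leq \mathrm{tr}((A\Sigma)^2)\leq \lambda_{\max}^2(\Sigma)\|A\|_F^2$, proven by simultaneous diagonalization in the eigenbasis of $\Sigma$. Cauchy interlacing gives $\lambda_{\min}(\Sigma^*)\leq \lambda_{\min}(\Sigma^*_{V_i,V_i})$ and $\lambda_{\max}(\Sigma^*_{V_i,V_i})\leq \lambda_{\max}(\Sigma^*)$; summing over $i$ and using $\sum_i\ind{j,k\in V_i}=n_{j,k}$ to evaluate $\sum_i\|A^{(i)}\|_F^2=\sum_{j,k}\tilde c_{j,k}^2/(4n_{j,k})$, I obtain
$$\frac{\lambda_{\min}^2(\Sigma^*)}{2}\sum_{j,k}\frac{\tilde c_{j,k}^2}{n_{j,k}}\ \leq\ \mathrm{Var}(Z)\ \leq\ \frac{\lambda_{\max}^2(\Sigma^*)}{2}\sum_{j,k}\frac{\tilde c_{j,k}^2}{n_{j,k}}.$$

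The remaining step is to reduce the weighted sum to a multiple of $1/n_2^{(a,b)}$. Since $\tilde c_{j,k}\neq 0$ forces $(j,k)\in S_2(a,b)$ by definition, one has $n_{j,k}\geq n_2^{(a,b)}$ on the support of $\tilde c$. For the upper bound this gives $\sum_{j,k}\tilde c_{j,k}^2/n_{j,k}\leq \|\tilde c\|_F^2/n_2^{(a,b)}\leq 4\|\Theta^{(a)*}_{:,b}\|_2^2\|\Theta^*_{:,a}\|_2^2/n_2^{(a,b)}$, by $(\alpha+\beta)^2\leq 2(\alpha^2+\beta^2)$ applied to the rank-one factorization $c_{j,k}=\Theta^{(a)*}_{j,b}\Theta^*_{k,a}$; dividing by $\Theta^{*2}_{aa}$ and taking square roots produces the first upper bound, and the refinement to $\sqrt{2}\kappa_{\Sigma^*}^2/\sqrt{n_2^{(a,b)}}$ then follows from $\|\Theta^{(a)*}_{:,b}\|_2,\|\Theta^*_{:,a}\|_2\leq 1/\lambda_{\min}(\Sigma^*)$ (using $\|\Theta^{(a)*}\|=1/\lambda_{\min}(\Sigma^*_{\backslash a,\backslash a})\leq 1/\lambda_{\min}(\Sigma^*)$ by interlacing) and $\Theta^*_{aa}\geq 1/\lambda_{\max}(\Sigma^*)$. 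For the lower bound, pick any $(j^*,k^*)\in\argmin_{(j,k)\in S_2(a,b)}n_{j,k}$ and keep only that single term: $\sum_{j,k}\tilde c_{j,k}^2/n_{j,k}\geq \tilde c_{j^*,k^*}^2/n_2^{(a,b)}\geq m^2/n_2^{(a,b)}$ with $m:=\min_{(j,k)\in S_2(a,b)}|\tilde c_{j,k}|$, which is precisely the quantity in the stated lower bound.

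The work is essentially bookkeeping rather than a genuine obstacle: what needs care is lining up the factor of $2$ introduced by symmetrizing $c$ into $\tilde c$ with the factor $2$ from the Gaussian variance identity and the factor $4$ from bounding $\|\tilde c\|_F^2$ in terms of $\|\Theta^{(a)*}_{:,b}\|_2\|\Theta^*_{:,a}\|_2$, together with the key observation that the support of $\tilde c$ is contained in $S_2(a,b)$, which is precisely what legitimizes the passage from the coefficient-weighted $\sum 1/n_{j,k}$ to the global quantity $1/n_2^{(a,b)}$ in both directions.
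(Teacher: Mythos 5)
Your argument is correct and follows essentially the same route as the paper: the paper writes $\sigma_n^2(a,b)=2\sum_{i}\|\Sigma^{*\frac12}U^{(\delta,i)}\Sigma^{*\frac12}\|_F^2$ with $U_{j,k}=\tilde c_{j,k}/(2\Theta^*_{a,a})$, which is exactly your $2\,\tr((A^{(i)}\Sigma^*)^2)$ identity, and then sandwiches by the extreme eigenvalues of $\Sigma^*$ and reduces $\sum_{j,k}U_{j,k}^2n_{j,k}^{-1}$ to $n_2^{(a,b)-1}$ via the support of $\tilde c$ and the rank-one bound, just as you do. Your restriction to $X_{i,V_i}$ with Cauchy interlacing is only a cosmetic variant of the paper's use of the full $p\times p$ matrices, and your constant bookkeeping checks out.
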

\noindent When $C_1\leq \frac{\lambda_{\min}(\Sigma^*)\mymin_{(j,k)\in S_2(a, b)}\left|\Theta^{(a)*}_{b,j}\Theta^{*}_{a,k}+\Theta^{(a)*}_{b,k}\Theta^{*}_{a,j}\right|}{\Theta^*_{a,a}}\leq\kappa_{\Sigma^*}^2\leq C_2$, Proposition \ref{prop:var_bnds} suggests that $\sigma_n(a,b)\asymp (n_2^{(a,b)})^{-1/2}$. 

\begin{cor}[Normal Approximation with the Same Sample Size]\label{cor:nb_lasso_debias_decomp}
    Consider the same model, edge-wise statistic and tuning parameters as in Theorem \ref{thm:nb_lasso_debias_decomp}. When the pairwise sample sizes are all equal: $n_{j,k}=n$, then if for some $\epsilon>0$,$n \gg C^2(\Theta^*;a,b)\kappa_{\Theta^*}^4(d_a+d_b+1)^2\log^2 p+C_{\epsilon}(\Sigma^*)(d_a+d_b+1)^{2+\epsilon},$
  we have
    $\sigma_n^{-1}(a,b)\left(\widetilde{\theta}^{(a)}_b+\frac{\Theta^{*}_{a,b}}{\Theta^*_{aa}}\right)\overset{\text{d}}{\rightarrow}\cN(0,1)$, where $C(\Theta^*;a,b)$ and $C_{\epsilon}(\Sigma^*)$ are as defined in Assumption \ref{assump:inference_n_E} and \ref{assump:inference_n_BE}, depending only on $\Sigma^*$ and $\Theta^*$. In addition, if the sample size of all quadrupples $n_{j,k,j',k'}=n$, $\sigma_n^2(a,b) =\frac{1}{n} \frac{\Theta^*_{a,a}\Theta^*_{b,b}-(\Theta^*_{a,b})^2}{(\Theta^*_{a,a})^2}.$ 
\end{cor}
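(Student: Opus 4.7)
The plan is to derive the corollary as a direct specialization of Theorem~\ref{thm:nb_lasso_debias_decomp} to the case $n_{j,k}=n$, so the work is to (i)~verify that the three sample size assumptions collapse to the single displayed condition, and (ii)~simplify the tensor expression for $\sigma_n^2(a,b)$.

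First I would substitute $n_{j,k}=n$ everywhere. With a constant sample size, $\gamma_a=\gamma^{(a)}_b=1$, both sets $S_1(a,b)$ and $S_2(a,b)$ give $n_1^{(a,b)}=n_2^{(a,b)}=n$, and the ratio $n_2^{(a,b)}/n_1^{(a,b)}=1$. Assumption~\ref{assump:inference_n_B} then reduces to $n\gtrsim \kappa_{\Sigma^*}^2\tfrac{\|\Sigma^*\|_\infty^2}{\lambda_{\min}^2(\Sigma^*)}(d_a+d_b+1)^2\log p$, which is subsumed by the first summand $C^2(\Theta^*;a,b)\kappa_{\Theta^*}^4(d_a+d_b+1)^2\log^2 p$ (up to constants, since $C(\Theta^*;a,b)$ already contains $\|\Sigma^*\|_\infty/\lambda_{\min}(\Sigma^*)$-type factors). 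Assumption~\ref{assump:inference_n_E} is literally the second summand. Assumption~\ref{assump:inference_n_BE} with $n_2^{(a,b)}/n_1^{(a,b)}=1$ and $\gamma_a,\gamma^{(a)}_b=1$ collapses to $n\gg C^2(\Theta^*;a,b)\kappa_{\Theta^*}^4[(d_a+d_b+1)\log p]^2$, matching the first summand. So the combined hypothesis of the corollary implies all three assumptions of Theorem~\ref{thm:nb_lasso_debias_decomp}, which gives $\sigma_n^{-1}(a,b)(\widetilde{\theta}^{(a)}_b+\Theta^*_{a,b}/\Theta^*_{a,a})\xrightarrow{d}\mathcal{N}(0,1)$.

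The main computational step is to evaluate $\sigma_n^2(a,b)$ explicitly. With $n_{j,k,j',k'}=n$ and $n_{j,k}n_{j',k'}=n^2$, we have $\mathcal{T}^{(n)*}=\tfrac{1}{n}\mathcal{T}^*$. Setting $u=\Theta^{(a)*}_{:,b}$ and $v=\Theta^*_{:,a}$, the identity $\mathcal{T}^*_{j,k,j',k'}=\Sigma^*_{j,j'}\Sigma^*_{k,k'}+\Sigma^*_{j,k'}\Sigma^*_{k,j'}$ gives
\[
\mathcal{T}^*\times_1 u\times_2 v\times_3 u\times_4 v=(u^\top\Sigma^* u)(v^\top\Sigma^* v)+(u^\top\Sigma^* v)^2.
\]
Since $\Theta^*_{:,a}=\Theta^* e_a$, we get $v^\top\Sigma^* v=e_a^\top\Theta^*\Sigma^*\Theta^* e_a=\Theta^*_{a,a}$, and $u^\top\Sigma^* v=\Theta^{(a)*}_{:,b}{}^\top e_a=\Theta^{(a)*}_{a,b}=0$ because column $a$ of $\Theta^{(a)*}$ is zero. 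For $u^\top\Sigma^* u$, use that $u_a=0$ together with $u_{\backslash a}=(\Sigma^*_{\backslash a,\backslash a})^{-1}e_b$ to obtain $u^\top\Sigma^* u=e_b^\top(\Sigma^*_{\backslash a,\backslash a})^{-1}e_b=\Theta^{(a)*}_{b,b}$. Finally, the identity $\Theta^{(a)*}_{b,b}=\Theta^*_{b,b}-(\Theta^*_{a,b})^2/\Theta^*_{a,a}$ (already noted in the paper from the block inverse formula) yields
\[
\sigma_n^2(a,b)=\frac{1}{n(\Theta^*_{a,a})^2}\bigl[\Theta^{(a)*}_{b,b}\,\Theta^*_{a,a}\bigr]=\frac{1}{n}\cdot\frac{\Theta^*_{a,a}\Theta^*_{b,b}-(\Theta^*_{a,b})^2}{(\Theta^*_{a,a})^2}.
\]
Multiplying the conclusion of Theorem~\ref{thm:nb_lasso_debias_decomp} through by $\sqrt{n}$ and applying Slutsky delivers the stated limit.

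There is no real obstacle, since the corollary is a specialization; the only care-requiring step is the variance computation, where it is essential to exploit the orthogonality $u^\top\Sigma^* v=0$ (so the second term in the Isserlis-type expansion vanishes) and the block-inverse identity for $\Theta^{(a)*}_{b,b}$ to get a clean formula in terms of $\Theta^*$ alone. Once that algebra is done, cross-checking that each of Assumptions~\ref{assump:inference_n_B}–\ref{assump:inference_n_BE} is implied by the single displayed condition is immediate bookkeeping.
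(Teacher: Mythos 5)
Your proposal is correct and follows exactly the route the paper intends (the paper states the corollary as a ``direct consequence'' of Theorem~\ref{thm:nb_lasso_debias_decomp} without writing out the details): specialize $n_{j,k}=n$ so that $\gamma_a=\gamma_b^{(a)}=1$ and $n_1^{(a,b)}=n_2^{(a,b)}=n$, check the three assumptions reduce to the displayed condition, and evaluate $\sigma_n^2(a,b)=\tfrac{1}{n}\bigl[(u^\top\Sigma^*u)(v^\top\Sigma^*v)+(u^\top\Sigma^*v)^2\bigr]/(\Theta^*_{a,a})^2$ using $v^\top\Sigma^*v=\Theta^*_{a,a}$, $u^\top\Sigma^*v=0$, and $u^\top\Sigma^*u=\Theta^{(a)*}_{b,b}=\Theta^*_{b,b}-(\Theta^*_{a,b})^2/\Theta^*_{a,a}$. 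The variance algebra and the bookkeeping of Assumptions~\ref{assump:inference_n_B}--\ref{assump:inference_n_BE} are both accurate.
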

\begin{remark}
Corollary \ref{cor:nb_lasso_debias_decomp} is a direct consequence of Theorem \ref{thm:nb_lasso_debias_decomp}. If $d_a+d_b+1\leq (\log p)^c$ for some $c>0$, the sample size condition is the same as the prior literature on debiased lasso and debiased graphical lasso \citep{van2014asymptotically,jankova2015confidence}. Note that Corollary \ref{cor:nb_lasso_debias_decomp} does not require all variables are measured simultaneously, and hence we can also apply it to the settings where only pairwise measurements or general size-constrained measurements are available \citep{dasarathy2019gaussian}. 
\end{remark}

\subsection{Variance Estimation and Edge-wise Inference}\label{sec:var_est}
In this section, we propose our GI-JOE method for edge-wise statistical inference. That is, we test the null hypothesis: $H_0: \Theta^*_{a,b}=0$ against $H_1:\Theta^*_{a,b}\neq 0$. With the aid of Theorem \ref{thm:nb_lasso_debias_decomp}, we still need to estimate the unknown variance $\sigma_n^2(a,b)$ so that we can construct a test statistic with known distribution under $H_0$. 
\begin{algorithm}[ht!]
\noindent{\textbf{Input}}: Data set $\{x_{i,V_i}:V_i\subset [p]\}_{i=1}^n$, pairwise sample sizes $\{n_{j,k}\}_{j,k=1}^p$, node pair $(a,b)$ for testing with $a\neq b$, significant level $\alpha$
\vspace{-1mm}
 \begin{enumerate}[leftmargin=*]
    \item Compute the entry-wise estimate of the covariance matrix $\widehat{\Sigma}\in \mathbb{R}^{p\times p}$: $\widehat{\Sigma}_{j,k}=\frac{1}{n_{j,k}}\sum_{j,k\in V_i}x_{i,j}x_{i,k}$
    \vspace{-2mm}
    \item Project $\widehat{\Sigma}$ onto the positive semi-definite cone: compute $\widetilde{\Sigma}$ as in \eqref{eq:proj_Sigma}.
    \vspace{-2mm}
    \item Perform neighborhood regression for node $a$: compute $\widehat{\theta}^{(a)}$ as in \eqref{eq:nb_lasso}
    \vspace{-2mm}
    \item Estimate the debiasing matrix by performing neighborhood regression for node $b$ upon nodes $[p]\backslash \{a,b\}$: compute $\widehat{\Theta}^{(a)}_{b,:}$ as in \eqref{eq:debias_term} and \eqref{eq:debias_Theta_est}.
    \vspace{-2mm}
    \item Debias the neighborhood lasso estimate: $\widetilde{\theta}^{(a)}_b=\widehat{\theta}^{(a)}_b-\widehat{\Theta}^{(a)}_{b,:}(\widehat{\Sigma}\widehat{\theta}^{(a)}-\widehat{\Sigma}_{:,a})$.
    \vspace{-2mm}
    \item Estimate the variance: $\widehat{\sigma}_n^2 = \widehat{\mathcal{T}}^{(n)}\times_1\widetilde{\Theta}^{(a)}_{b,:}\times_2\widehat{\overline{\theta}}^{(a)}\times_3\widetilde{\Theta}^{(a)}_{b,:}\times_4\widehat{\overline{\theta}}^{(a)}$, where 
    $$
    (\widehat{\mathcal{T}}^{(n)})_{j,k,j',k'}=(\widetilde{\Sigma}_{j,j'}\widetilde{\Sigma}_{k,k'}+\widetilde{\Sigma}_{j,k'}\widetilde{\Sigma}_{k,j'})\frac{n_{j,k,j',k'}}{n_{j,k}n_{j',k'}},
    $$
$\widehat{\overline{\theta}}^{(a)}$ is defined as $\widehat{\overline{\theta}}^{(a)}_a = 1$ and $\widehat{\overline{\theta}}^{(a)}_{\backslash a}=-\widehat{\theta}^{(a)}_{\backslash a}$, and $\widetilde{\Theta}^{(a)}_{b,:}$ is computed as in \eqref{eq:debias_term} and \eqref{eq:debias_Theta_est}.
\vspace{-2mm}
    \item Compute $p$-value $p_{a,b}=2(1-\Phi(\frac{\widetilde{\theta}^{(a)}_b}{\widehat{\sigma}_n(a,b)}))$ where $\Phi(\cdot)$ is the distribution function of standard Gaussian; confidence interval $\widehat{\mathbb{C}}_{\alpha}^{a,b}=[\widetilde{\theta}^{(a)}_b-z_{\alpha/2}\widehat{\sigma}_n(a,b),\widetilde{\theta}^{(a)}_b+z_{\alpha/2}\widehat{\sigma}_n(a,b)]$
\end{enumerate}
\vspace{-2mm}
 \textbf{Output}: $p$-value $p_{a,b}$, confidence interval $\widehat{\mathbb{C}}_{\alpha}^{a,b}$ for $\theta^{(a)*}_{a,b}-\frac{\Theta^*_{a,b}}{\Theta^*_{a,a}}$.
\caption{GI-JOE: edge-wise inference}
 \label{alg:GI_JOE_edgewise}
\end{algorithm}
Recall the definition of $\sigma_n^2(a,b)$ in Theorem \ref{thm:nb_lasso_debias_decomp}, and the fact that $\mathcal{T}^*_{j,k,j',k'}=\Sigma^*_{j,j'}\Sigma^*_{k,k'} + \Sigma^*_{j,k'}\Sigma^*_{k,j'}$, $(\mathcal{T}^{(n)*})_{j,k,j',k'}=\mathcal{T}^*_{j,k,j',k'}\frac{n_{j,k,j',k'}}{n_{j,k}n_{j',k'}}$, 
here we define an estimator $\widehat{\sigma}_n^2(a,b)$ as follows:
\begin{equation}\label{eq:var_est}
    \widehat{\sigma}_n^2(a,b)=\widehat{\mathcal{T}}^{(n)}\times_1\widetilde{\Theta}^{(a)}_{b,:}\times_2\widehat{\overline{\theta}}^{(a)}\times_3\widetilde{\Theta}^{(a)}_{b,:}\times_4\widehat{\overline{\theta}}^{(a)},
\end{equation}
where 
$\widehat{\mathcal{T}}^{(n)}$ is an estimator for $\mathcal{T}^{(n)*}$:
$(\widehat{\mathcal{T}}^{(n)})_{j,k,j',k'}=(\widetilde{\Sigma}_{j,j'}\widetilde{\Sigma}_{k,k'}+\widetilde{\Sigma}_{j,k'}\widetilde{\Sigma}_{k,j'})\frac{n_{j,k,j',k'}}{n_{j,k}n_{j',k'}}$;
$\widehat{\overline{\theta}}^{(a)}\in \bR^p$ serves as an estimate for $\frac{\Theta^*_{:,a}}{\Theta^*_{a,a}}$ and it satisfies $\widehat{\overline{\theta}}^{(a)}_a = 1$ and $\widehat{\overline{\theta}}^{(a)}_{\backslash a}=-\widehat{\theta}^{(a)}_{\backslash a}$; $\widetilde{\Theta}^{(a)}_{b,:}$ is defined in Section \ref{sec:debiasing} and serves an estimate for $\Theta^{(a)*}_{b,:}$.
\vspace{-2mm}
\begin{assump}[Sample size condition for variance estimation]\label{assump:var_est}
    $$n_1^{(a,b)} \gg \frac{C^4(\Theta^*;a,b)}{\kappa_{\Theta^*}^4}(d_a+d_b+1)^2\log p\left(\frac{n_2^{(a,b)}}{n_1^{(a,b)}}\right)^2,$$
    where $C(\Theta^*;a,b)$ is defined as in Assumption \ref{assump:inference_n_BE}.
\end{assump}
\vspace{-2mm}
\begin{prop}[Estimation consistency of variance]\label{prop:var_est}
    Under Assumptions~\ref{assump:inference_n_B}, \ref{assump:inference_n_BE} and \ref{assump:var_est}, if the tuning parameters are as specified in Theorem \ref{thm:nb_lasso_debias_decomp}, 
   then \eqref{eq:var_est} satisfies
    $\frac{\widehat{\sigma}^{-1}_n(a,b)}{\sigma^{-1}_n(a,b)}\overset{p}{\rightarrow} 1$.
\end{prop}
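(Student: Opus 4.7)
The plan is to establish $\widehat{\sigma}_n^2(a,b)/\sigma_n^2(a,b) \overset{p}{\to} 1$, from which the claim follows by the continuous mapping theorem. With the shorthand $v_1 = \Theta^{(a)*}_{b,:}$, $v_2 = \Theta^*_{:,a}/\Theta^*_{a,a}$, $\widehat{v}_1 = \widehat{\Theta}^{(a)}_{b,:}$, and $\widehat{v}_2 = \widehat{\overline{\theta}}^{(a)}$, Proposition~\ref{prop:var_bnds} gives $\sigma_n^2 \geq c_0/n_2^{(a,b)}$ for a constant $c_0 > 0$ that scales like $(\mymin_{(j,k)\in S_2(a,b)}|\Theta^{(a)*}_{b,j}\Theta^*_{a,k}+\Theta^{(a)*}_{b,k}\Theta^*_{a,j}|)^2$, so it suffices to show $n_2^{(a,b)}\,|\widehat{\sigma}_n^2 - \sigma_n^2| = o_P(1)$. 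I would proceed by the multilinear telescoping
\begin{align*}
\widehat{\sigma}_n^2 - \sigma_n^2 = (\widehat{\mathcal{T}}^{(n)} - \mathcal{T}^{(n)*}) \times_1 \widehat{v}_1 \times_2 \widehat{v}_2 \times_3 \widehat{v}_1 \times_4 \widehat{v}_2 + \mathcal{T}^{(n)*} \times_1 \widehat{v}_1 \times_2 \widehat{v}_2 \times_3 \widehat{v}_1 \times_4 \widehat{v}_2 - \sigma_n^2
\end{align*}
and then expanding $\widehat{v}_i = v_i + (\widehat{v}_i - v_i)$ in each mode to obtain $15$ replacement pieces classified by the number $k \in \{1,2,3,4\}$ of perturbation factors $\Delta_i = \widehat{v}_i - v_i$.

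Each piece is bounded by a multilinear H\"older inequality using three inputs: (i) the standard entrywise concentration $|\widehat{\Sigma}_{j,k} - \Sigma^*_{j,k}| \lesssim \|\Sigma^*\|_\infty \sqrt{\log p/n_{j,k}}$, propagated to an entrywise bound on $\widehat{\mathcal{T}}^{(n)} - \mathcal{T}^{(n)*}$; (ii) the neighborhood-lasso $\ell_1$ and $\ell_2$ estimation errors for $\widehat{v}_1 - v_1$ and $\widehat{v}_2 - v_2$ referenced in Appendix~B, scaling like $\kappa_{\Sigma^*}(d_a+d_b+1)\sqrt{\log p/n_1^{(a,b)}}$ and $\kappa_{\Sigma^*}\sqrt{(d_a+d_b+1)\log p/n_1^{(a,b)}}$ respectively under Assumption~\ref{assump:inference_n_B}; and (iii) the consequent bound $\|\widehat{v}_i\|_1 \lesssim \|v_i\|_1$ on the same high-probability event. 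The essential structural observation is that whenever both $v_1, v_2$ are retained in modes $3$ and $4$ (or both in modes $1$ and $2$), the retained index pair lies in $\overline{\cN}_b^{(a)}\times\overline{\cN}_a \subseteq S_2(a,b)$ by Proposition~\ref{prop:indexsets}, forcing the tight entrywise weight $|\mathcal{T}^{(n)*}_{j,k,j',k'}| \leq 2\|\Sigma^*\|_\infty^2/n_2^{(a,b)}$; otherwise only an $S_1(a,b)$-restriction applies and the weight degrades to at most $2\|\Sigma^*\|_\infty^2/n_1^{(a,b)}$. The single-$\Delta$ terms, which retain the $1/n_2^{(a,b)}$ weight together with one $\|\Delta_i\|_1$ factor, are the dominant contribution; after multiplying by $n_2^{(a,b)}/c_0$ one obtains precisely an expression of the form $C(\Theta^*;a,b)(d_a+d_b+1)\sqrt{\log p/n_1^{(a,b)}}$, which is $o(1)$ under Assumption~\ref{assump:var_est}. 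The $k\geq 2$ cross terms combine the weaker $1/n_1^{(a,b)}$ weight with higher powers of $\|\Delta_i\|_1$, which when renormalized by $n_2^{(a,b)}/c_0$ generate the squared ratio $(n_2^{(a,b)}/n_1^{(a,b)})^2$ appearing in the assumption. The $(\widehat{\mathcal{T}}^{(n)}-\mathcal{T}^{(n)*})$ plug-in term is handled analogously from the $\widehat{\Sigma}$ concentration and is of smaller order.

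The main obstacle will be the careful bookkeeping for contributions in which $\widehat{v}_i$ is nonzero outside the true support of $v_i$: there the tensor weight is only controlled by $1/n_1^{(a,b)}$ rather than the tighter $1/n_2^{(a,b)}$, so one must pay by using the $\ell_1$ estimation error itself (with its $1/\sqrt{n_1^{(a,b)}}$ decay) to absorb the $n_2^{(a,b)}/n_1^{(a,b)}$ mismatch, rather than a naive size bound on $\widehat{v}_i$. This off-support asymmetry is precisely what drives the quadratic factor $(n_2^{(a,b)}/n_1^{(a,b)})^2$ in Assumption~\ref{assump:var_est}, and also explains why the condition degenerates to the familiar $n \gg d^2 \log p$ when sample sizes are uniform. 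Once all $16$ pieces are shown to be $o_P(1/n_2^{(a,b)})$ on the high-probability event, the ratio $\widehat{\sigma}_n^2/\sigma_n^2 \overset{p}{\to} 1$ follows and continuous mapping concludes the proof.
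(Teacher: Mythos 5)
Your proposal is correct and follows the same architecture as the paper's proof: the same decomposition of $\widehat{\sigma}_n^2-\sigma_n^2$ into a plug-in piece for $\widehat{\mathcal{T}}^{(n)}-\mathcal{T}^{(n)*}$ and a telescoped rank-one perturbation of $\mathcal{U}^*=v_1\otimes v_2\otimes v_1\otimes v_2$ (the paper packages the telescoping as Lemma~\ref{lem:kron_err} and splits off the cross term $\langle\cE_1\circ\cN,\cE_2\rangle$ separately, but this is only a regrouping of your first term), the same inputs (Lemma~\ref{lem:SampleCov_entry_err}, the $\ell_1$/weighted-$\ell_1$ bounds of Theorem~\ref{thm:nb_lasso_err} and Lemma~\ref{lem:debias_nb_lasso_err}, and the lower bound on $\sigma_n^2$ from Proposition~\ref{prop:var_bnds}), and the same final conversion to the ratio statement. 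The one place you genuinely diverge is the H\"older step: you assign the entire weight $\tfrac{n_{j,k,j',k'}}{n_{j,k}n_{j',k'}}\leq \tfrac{1}{n_{j',k'}}\leq \tfrac{1}{n_2^{(a,b)}}$ to the clean half of the tensor, whereas the paper splits $\tfrac{1}{n_{j,k}}$ multiplicatively between the error vector (absorbed by the weighted-$\ell_1$ bound $\sum_j|\epsilon_j|/\sqrt{\mymin_k n_{j,k}}$) and the retained support (contributing $1/\sqrt{n_1^{(a,b)}}$); both are valid, and yours gives a slightly sharper bound for the single-perturbation terms ($d\sqrt{\log p/n_1^{(a,b)}}$ after renormalization, versus the paper's $d\,n_2^{(a,b)}\sqrt{\log p}/(n_1^{(a,b)})^{3/2}$). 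One consequence is that your narrative attribution is slightly off: in the paper it is precisely the first-order terms, bounded via the square-root splitting, that force the factor $(n_2^{(a,b)}/n_1^{(a,b)})^2$ in Assumption~\ref{assump:var_est}, while under your tighter bounds the second-order terms only require the first power of that ratio; this does not affect correctness since the assumption is granted, but it means your estimates would actually establish the proposition under a weaker sample size condition than the one stated.
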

\vspace{-2mm}
\begin{thm}[Normal approximation with unknown variance]\label{thm:normal_approx_var_est}
With appropriately chosen tuning parameters as in Theorem \ref{thm:nb_lasso_debias_decomp}, if Assumptions \ref{assump:inference_n_B}-\ref{assump:var_est} hold,
    $\widehat{\sigma}^{-1}_n(a,b)(\widetilde{\theta}^{(a)}_b-\theta^{(a)*}_b)\overset{\text{d}}{\rightarrow}\cN(0,1)$ for $\widehat{\sigma}^{2}_n(a,b)$ defined in \eqref{eq:var_est} and $\widetilde{\theta}^{(a)}_b$ defined in \eqref{eq:debiased_nb_lasso}.
\end{thm}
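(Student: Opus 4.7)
The plan is to obtain this theorem as a direct consequence of Theorem~\ref{thm:nb_lasso_debias_decomp} and Proposition~\ref{prop:var_est} via Slutsky's theorem; essentially all the technical work has already been carried out upstream, and the present statement is the natural packaging of those two results.

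First I would reconcile the centering in the statement with the decomposition in Theorem~\ref{thm:nb_lasso_debias_decomp}. By the block-matrix inversion identity used in Section~\ref{sec:estimation_selection}, $\theta^{(a)*}_b=-\Theta^*_{a,b}/\Theta^*_{a,a}$ for $b\neq a$, so the centering $\widetilde{\theta}^{(a)}_b-\theta^{(a)*}_b$ appearing in the theorem coincides with $\widetilde{\theta}^{(a)}_b+\Theta^*_{a,b}/\Theta^*_{a,a}$, which is exactly the quantity whose asymptotic normality (after rescaling by $\sigma_n(a,b)$) is delivered by Theorem~\ref{thm:nb_lasso_debias_decomp} under Assumptions~\ref{assump:inference_n_B}--\ref{assump:inference_n_BE}. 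Thus from the upstream result I have
$\sigma^{-1}_n(a,b)(\widetilde{\theta}^{(a)}_b-\theta^{(a)*}_b)\overset{\text{d}}{\rightarrow}\cN(0,1)$.

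Next, the additional Assumption~\ref{assump:var_est} is precisely what Proposition~\ref{prop:var_est} requires (on top of Assumptions~\ref{assump:inference_n_B} and \ref{assump:inference_n_BE}) to guarantee $\widehat{\sigma}^{-1}_n(a,b)/\sigma^{-1}_n(a,b)\overset{p}{\rightarrow}1$, which is equivalent to $\sigma_n(a,b)/\widehat{\sigma}_n(a,b)\overset{p}{\rightarrow}1$. I would then write the identity
\[
\widehat{\sigma}^{-1}_n(a,b)\bigl(\widetilde{\theta}^{(a)}_b-\theta^{(a)*}_b\bigr)=\frac{\sigma_n(a,b)}{\widehat{\sigma}_n(a,b)}\cdot \sigma^{-1}_n(a,b)\bigl(\widetilde{\theta}^{(a)}_b-\theta^{(a)*}_b\bigr),
\]
and invoke Slutsky's theorem: the first factor converges in probability to $1$ and the second converges in distribution to $\cN(0,1)$, so the product converges in distribution to $\cN(0,1)$.

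Since the heavy lifting, namely the bias/variance decomposition in Theorem~\ref{thm:nb_lasso_debias_decomp} and the consistency of $\widehat{\sigma}_n^2(a,b)$ in Proposition~\ref{prop:var_est}, is already accomplished, there is no genuine analytic obstacle at this step. The only care required is to verify that Assumptions~\ref{assump:inference_n_B}--\ref{assump:var_est} simultaneously trigger both upstream results, which follows by inspection: the variance-estimation condition in Assumption~\ref{assump:var_est} only strengthens the sample-size regime already needed to control the bias term and to justify the normal approximation, so no compatibility issue arises and the Slutsky combination closes the argument.
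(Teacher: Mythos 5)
Your proposal is correct and follows exactly the paper's own argument: the paper's proof of this theorem is a one-line combination of Theorem~\ref{thm:nb_lasso_debias_decomp}, Proposition~\ref{prop:var_est}, and Slutsky's theorem, which is precisely the factorization you write down. Your additional remarks reconciling the centering $\theta^{(a)*}_b=-\Theta^*_{a,b}/\Theta^*_{a,a}$ and checking that the assumptions trigger both upstream results are sound and, if anything, make the step more explicit than the paper does.
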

\vspace{-2mm}
\noindent{The proof of Theorem \ref{thm:normal_approx_var_est} can be found in Appendix \ref{append:proofs}.}
Theorem \ref{thm:normal_approx_var_est} suggests us to construct the test statistic $\widehat{z}(a,b)=\widehat{\sigma}_n^{-1}(a,b)\widetilde{\theta}^{(a)}_b$, and for a desired type I error $\alpha$, wereject $H_0: \Theta^*_{a,b}=0$ if $|\widehat{z}(a,b)|\geq z_{\alpha/2}$, where $z_{\alpha/2}$ is the $1-\frac{\alpha}{2}$ quantile of standard Gaussian distribution. 
Our full GI-JOE (edge-wise inference) procedure is summarized in Algorithm \ref{alg:GI_JOE_edgewise}, with its type I error and power characterized by the following theorem.
\vspace{-2mm}
\begin{thm}[Type I error and Power Analysis]\label{thm:typeI_power}
    Consider the Gaussian graphical model with erose measurement setting described in Section \ref{sec:setting} and let $p_{a,b}$ be the $p$-value given by Algorithm \ref{alg:GI_JOE_edgewise} for node pair $(a,b)$. If all conditions in Theorem~\ref{thm:normal_approx_var_est} hold so that as $n,\,p\rightarrow \infty$, $p,\, d_a,\,d_b$, $n_1^{(a,b)}$, $n_2^{(a,b)}$ scale as in Assumption~\ref{assump:inference_n_BE} and Assumption~\ref{assump:var_est}, then the following holds:
    \vspace{-10mm}
    \begin{enumerate}[leftmargin=*]
        \item Under the null hypothesis $H_0: \Theta^*_{a,b}=0$,
        $\lim_{n,p\rightarrow\infty}\bP(p_{a,b}\leq \alpha)=\alpha$;
    \vspace{-2mm}
    \item Under the alternative hypothesis $H_1: \frac{\Theta^*_{a,b}}{\Theta^*_{a,a}}=\delta_n$, 
    \vspace{-2mm}
    \begin{enumerate}[leftmargin=*]
    \item if $\lim_{n,p\rightarrow\infty}\frac{\delta_n}{\sigma_n(a,b)}=0$, $\lim_{n,p\rightarrow\infty}\bP(p_{a,b}\leq \alpha)=\alpha$;
        \item if $\lim_{n,p\rightarrow\infty}\frac{\delta_n}{\sigma_n(a,b)}=\delta$ for $\delta\neq 0$, $\lim_{n,p\rightarrow\infty}\bP(p_{a,b}\leq \alpha)\geq \Phi(|\delta|-z_{\alpha/2})$, where $\Phi(\cdot)$ is the distribution function of standard Gaussian $\cN(0,1)$;
        \vspace{-2mm}
        \item if $\lim_{n,p\rightarrow\infty}\frac{\delta_n}{\sigma_n(a,b)}=+\infty$, $\lim_{n,p\rightarrow\infty}\bP(p_{a,b}\leq \alpha)=1$.
    \end{enumerate}
    \end{enumerate}
\end{thm}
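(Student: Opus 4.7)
The plan is to reduce everything to the pivotal normal approximation established in Theorem \ref{thm:normal_approx_var_est} and then apply Slutsky's lemma together with direct Gaussian tail calculations. Since $\theta^{(a)*}_b = -\Theta^*_{a,b}/\Theta^*_{a,a}$, the studentized statistic used in Algorithm~\ref{alg:GI_JOE_edgewise} can be decomposed as
\[
\widehat{z}(a,b) \;=\; \widehat{\sigma}_n^{-1}(a,b)\,\widetilde{\theta}^{(a)}_b \;=\; \underbrace{\widehat{\sigma}_n^{-1}(a,b)\bigl(\widetilde{\theta}^{(a)}_b - \theta^{(a)*}_b\bigr)}_{=:W_n} \;+\; \widehat{\sigma}_n^{-1}(a,b)\,\theta^{(a)*}_b.
\]
By Theorem~\ref{thm:normal_approx_var_est} we have $W_n \overset{d}{\to} \cN(0,1)$, and by Proposition~\ref{prop:var_est} we have $\widehat{\sigma}_n^{-1}(a,b)/\sigma_n^{-1}(a,b) \overset{p}{\to} 1$. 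The $p$-value equals $p_{a,b} = 2(1 - \Phi(|\widehat{z}(a,b)|))$, so $\{p_{a,b} \leq \alpha\} = \{|\widehat{z}(a,b)| \geq z_{\alpha/2}\}$; all four assertions then follow from computing the limit of this latter probability.

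For part~1 (null), $\theta^{(a)*}_b = 0$, so $\widehat{z}(a,b) = W_n \overset{d}{\to} \cN(0,1)$ and by the continuous mapping theorem applied to $x \mapsto \mathbf{1}\{|x| \geq z_{\alpha/2}\}$ (at a continuity set), $\bP(|\widehat{z}(a,b)| \geq z_{\alpha/2}) \to \bP(|Z| \geq z_{\alpha/2}) = \alpha$. For part~2, note that $\theta^{(a)*}_b = -\delta_n$, hence $\widehat{z}(a,b) = W_n - \widehat{\sigma}_n^{-1}(a,b)\,\delta_n$, and
\[
\widehat{\sigma}_n^{-1}(a,b)\,\delta_n \;=\; \frac{\widehat{\sigma}_n^{-1}(a,b)}{\sigma_n^{-1}(a,b)} \cdot \frac{\delta_n}{\sigma_n(a,b)} \;=\; (1+o_p(1))\cdot \frac{\delta_n}{\sigma_n(a,b)}.
\]
In case~(a), this quantity is $o_p(1)$, so $\widehat{z}(a,b) \overset{d}{\to} \cN(0,1)$ by Slutsky and the probability tends to $\alpha$ as in part~1. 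In case~(c), $|\widehat{z}(a,b)| \overset{p}{\to} \infty$ since $W_n = O_p(1)$ while the shift diverges, giving $\bP(p_{a,b}\leq\alpha)\to 1$.

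The main computation is case~(b). Here Slutsky gives $\widehat{z}(a,b) \overset{d}{\to} Z - \delta$ with $Z \sim \cN(0,1)$, and
\[
\bP(|\widehat{z}(a,b)| \geq z_{\alpha/2}) \;\longrightarrow\; \bP(Z - \delta \geq z_{\alpha/2}) + \bP(Z - \delta \leq -z_{\alpha/2}) \;=\; \Phi(-z_{\alpha/2}-\delta) + \Phi(\delta - z_{\alpha/2}).
\]
Each summand is nonnegative, and depending on the sign of $\delta$ one of the two equals $\Phi(|\delta| - z_{\alpha/2})$, yielding the stated lower bound $\Phi(|\delta| - z_{\alpha/2})$. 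This step is the only one that needs care: one must be careful with the sign conventions (since $\theta^{(a)*}_b = -\delta_n$) and invoke continuity of $\Phi$ to pass from convergence in distribution of $\widehat{z}(a,b)$ to convergence of probabilities of the two-sided rejection event.

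There is no real technical obstacle in this argument — all the heavy lifting (bias control, CLT via Lyapunov, variance estimation consistency) has already been done in Theorems~\ref{thm:nb_lasso_debias_decomp} and~\ref{thm:normal_approx_var_est} and Proposition~\ref{prop:var_est}. The one thing to verify carefully is that all invocations of Slutsky's lemma are legitimate under the stated scaling regimes, i.e.\ that Assumptions~\ref{assump:inference_n_B}--\ref{assump:var_est} remain in force throughout parts~(a)--(c), which is guaranteed by the hypothesis that the sample size scalings are unchanged from those specified in Assumptions~\ref{assump:inference_n_BE} and~\ref{assump:var_est}.
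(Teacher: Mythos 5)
Your proposal is correct and follows essentially the same route as the paper's own proof: decompose $\widehat{z}(a,b)$ into the pivotal term handled by Theorem~\ref{thm:normal_approx_var_est} plus the shift $\widehat{\sigma}_n^{-1}(a,b)\,\theta^{(a)*}_b$, apply Proposition~\ref{prop:var_est} and Slutsky's theorem, and finish with the Gaussian tail computation $\Phi(-z_{\alpha/2}-\delta)+\Phi(\delta-z_{\alpha/2})\geq \Phi(|\delta|-z_{\alpha/2})$. The only cosmetic difference is in case~2(c), where the paper writes out an explicit $\epsilon$--$N$ argument with the event $\{|\sigma_n/\widehat{\sigma}_n-1|>\delta\}$ split off, whereas you compress this into ``$W_n=O_p(1)$ while the shift diverges in probability''; both are valid.
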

\vspace{-2mm}
\noindent{The proof of Theorem \ref{thm:typeI_power} is deferred to Appendix \ref{append:proofs}.}
Theorem \ref{thm:typeI_power} suggests that when all conditions of Theorem \ref{thm:normal_approx_var_est} hold, the type I error of this test is asymptotically $\alpha$. Furthermore, as long as the signal strength $\frac{\Theta^*_{a,b}}{\Theta^*_{a,a}}$ shrinks no faster than $\sigma_n(a,b)\asymp (n_2^{(a,b)})^{-1/2}$, we can still reject the null with constant or high probability. Other than hypothesis testing, we can also construct asymptotically valid confidence intervals for each entry of the precision matrix ($\Theta^*_{a,b}$), under similar assumptions to Theorem \ref{thm:normal_approx_var_est}. More details can be found in Appendix \ref{append:CI}.
\section{FDR Control for Graph Inference with Erose Measurements}\label{sec:FDR}
In many application scenarios, the inference of the full graph may be of more interest than the inference of one particular edge. Confronted with a multiple testing problem, we can simply apply Holm's correction upon the $p$-values of all $\frac{p(p-1)}{2}$ node pairs $(a,b)$ for $a<b$, and hence control the family-wise error rate. However, as this approach can be too conservative, here we also propose an FDR control procedure. We leverage the ideas from \cite{javanmard2019false, liu2013gaussian} which consider the FDR control for the debiased lasso and Gaussian graphical models. In particular, for any $0\leq \rho\leq 1$, let $R(\rho)=\sum_{i<j}\ind{p_{i,j}\leq \rho}$ be the number of significant edges when $\rho$ is the threshold for $p$-values. Also define $t_p=(2\log(p(p-1)/2)-2\log\log(p(p-1)/2))^{\frac{1}{2}}$, and if there exists $2(1-\Phi(t_p))\leq \rho\leq 1$ such that $\frac{p(p-1)\rho}{2R(\rho)\vee 1}\leq \alpha$, the nominal level,
 then we would define $\rho_0 = \sup_{2(1-\Phi(t_p))\leq \rho\leq 1}\left\{\rho: \frac{p(p-1)\rho}{2(R(\rho)\vee 1)}\leq \alpha\right\}$;
otherwise, $\rho_0=2(1-\Phi(\sqrt{2\log(p(p-1)/2)}))$. The significant edge set is then defined as $\widetilde{E}=\{(j,k): p_{j,k}\leq \rho_0\}$. This is similar to the Benjamini-Hochberg procedure with only an extra truncation step. The full procedure is summarized in Algorithm 2 in Appendix \ref{sec:FDR_Alg}.

In the following, we provide theoretical guarantees for our GI-JOE (FDR) approach. Define $\epsilon_i(a,b)=\sum_{j,k}(x_{i,j}x_{i,k}-\Sigma^*_{j,k})\frac{\delta^{(i)}_{j,k}}{n_{j,k}}\frac{\Theta^{(a)*}_{j,b}\Theta^*_{k,a}+\Theta^{(a)*}_{k,b}\Theta^*_{j,a}}{2\Theta^*_{a,a}}$ as the error for estimating edge $(a,b)$, contributed by the $i$th sample; $\xi_i(a,b)$ is the normalized error: $\xi_i(a,b)=\frac{\epsilon_i(a,b)}{\sigma_n(a,b)}$. One technical quantity that is useful in our proofs is $\alpha(\Theta^*,\{V_i\}_{i=1}^n) = \sup_{(a,b),(a',b')}\frac{\|\xi_i(a,b)\|_{\psi_1}^2\vee \|\xi_i(a',b')\|_{\psi_1}^2}{\lambda_{\min}(\mathrm{Cov}((\xi_i(a,b),\xi_i(a',b')))}$. We want $\alpha(\Theta^*,\{V_i\}_{i=1}^n)$ to be not too large, similar to assuming $(\xi_i(a,b),\xi_i(a',b'))$ to be far from degenerate. Also define the covariance between the test statistics of different edges: $\sigma_n^2(a,b,a',b')=\frac{1}{\Theta^*_{a,a}\Theta^*_{a',a'}}\Gamma^{(n)*}\times_1\Theta^{(a)*}_{:,b}\times_2\Theta^{*}_{:,a}\times_3\Theta^{(a' )*}_{:,b'}\times_4\Theta^*_{:,a'}$, and the correlation $\rho_n(a,b,a',b')=\frac{\sigma_n^2(a,b,a',b')}{\sigma_n(a,b)\sigma_n(a',b')}$.
\vspace{-2mm}
\begin{assump}\label{assump:FDR_n}
    For any edge $(a,b)\in [p]\times [p]$, $n_2^{(a,b)}\geq \frac{C\|\Sigma^*\|_{\infty}^6\alpha^2(\Theta^*  ,\{V_i\}_{i=1}^n)}{\lambda_{\min}^6(\Sigma^*)}(d+1)^6(\log p)^6$, and $n_1^{(a,b)}\gg Cd^2(\log p)^5\log\log p\left(\frac{n_2^{(a,b)}}{n_1^{(a,b)}}\right)^2$, where $d=\max_{a\in[p]}d_a$. 
\end{assump}
\vspace{-4mm}
\noindent Assumption \ref{assump:FDR_n} is stronger than the sample size requirements in Assumptions \ref{assump:inference_n_BE} and \ref{assump:var_est} for edge-wise inference, since the proof for asymptotically valid FDR control needs stronger normal approximation guarantees, especially at the tail.
\begin{remark}
When all variables are simultaneously measured with sample size $n$, Assumption \ref{assump:FDR_n} reduces to $n\geq C(d+1)^6(\log p)^6$. This is much weaker than the condition in the literature of the graphical model FDR control  \citep{liu2013gaussian}: $p\leq n^r$ for some constant $r>0$. They used this assumption to show the tail probability of the test statistics can be well approximated by the Gaussian tail, while we use a different proof that exploits the sub-exponential properties of $\widehat{\Sigma}_{j,k}$ (second moments of Gaussian variables are sub-exponential). 
\end{remark}
\vspace{-4mm}
\begin{assump}\label{assump:FDR_correlation}
    For any $0<\rho<1$, $\gamma>0$, let $\mathcal{A}_1(\rho)=\{(a,b,a',b')\in [p]\times [p]: \Theta^*_{a,b}=\Theta^*_{a',b'}=0, |\rho_n(a, b, a', b')|>\rho\}$, and $\mathcal{A}_2(\rho,\gamma)=\{(a, b, a', b')\in [p]\times [p]:  \Theta^*_{a,b}=\Theta^*_{a',b'}=0, (\log p)^{-2-\gamma}<|\rho_n(a,b,a',b')|\leq\rho\}$. There exist $0<\rho_0<1$ and $\gamma>0$ such that, 
    $|\mathcal{A}_1(\rho_0)|\leq Cp^2,\,
    |\mathcal{A}_2(\rho_0,\gamma)|\ll p^{\frac{4}{1+\rho_0}}(\log p)^{\frac{2\rho_0}{1+\rho_0}-\frac{1}{2}}(\log\log p)^{-\frac{1}{2}}$.
\end{assump}
\vspace{-3mm}
Assumption \ref{assump:FDR_correlation} enforces that most edge-wise test statistics are only weakly correlated, and similar assumptions have also appeared in \cite{liu2013gaussian} and \cite{javanmard2019false}. To better understand this, recall the definition of the edge-pair covariance $\sigma_n^2(a, b, a', b')$ before Assumption \ref{assump:FDR_n} and $\mathcal{T}^{(n)}$ in Section \ref{sec:normal_approx}, and note that the covariance $\sigma_n^2(a,b,a',b')$ between edge pair $(a,b)$ and $(a',b')$ sums over the covariance $\Sigma^*_{j,j'}\Sigma^*_{k,k'}+\Sigma^*_{j,k'}\Sigma^*_{k,j'}$ across $j\in \overline{\mathcal{N}}_b$, $k\in \mathcal{N}_a$, $j'\in \mathcal{N}_b'$, $k'\in \mathcal{N}_a'$. In other words, Assumption \ref{assump:FDR_correlation} requires that for most edge pairs, the correlations passed from each node's neighbors are weak. To make this intuition more concrete, consider the simultaneous measurement and same sample size setting as a special example, then this assumption would hold if each node has a constant number of strongly connected neighbors (see Proposition \ref{prop:assump7_simultaneous} in Appendix \ref{sec:FDR_Alg} for a more rigorous statement).
The following theorem suggests that our GI-JOE (FDR) procedure can successfully control the false discovery proportion both in expectation and in probability when Assumptions \ref{assump:FDR_n} and \ref{assump:FDR_correlation} hold.
\vspace{-3mm}
\begin{thm}[Validity of FDR control]\label{thm:FDR_valid}
Consider the GI-JOE (FDR) procedure described in Section \ref{sec:FDR} and suppose the significant edge set under a given nominal level $\alpha$ is given by $\widetilde{E}$. Let $\mathrm{FDP} = \frac{\sum_{(a,b)\in \mathcal{H}_0}\ind{(a,b)\in \widetilde{E}}}{|\widetilde{E}|\vee 1}$, where $\mathcal{H}_0=\{(i,j)\in [p]\times [p]:\Theta^*_{i,j}=0\}$; Also let $\mathrm{FDR} = \mathbb{E}\mathrm{FDP}$. If Assumptions \ref{assump:FDR_n} and \ref{assump:FDR_correlation} hold, then we have $\limsup_{n,p\rightarrow \infty}\mathrm{FDR}\leq \alpha$,
and for any $\epsilon>0$,
$\lim_{n,p\rightarrow \infty}\mathbb{P}(\mathrm{FDP}>\alpha + \epsilon)=0$.
\end{thm}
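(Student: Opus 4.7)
The plan is to follow the standard BH-control argument adapted to dependent test statistics, as in \cite{liu2013gaussian} and \cite{javanmard2019false}, but with two complications new to the erose setting: each null $p$-value arises from a test statistic whose variance $\sigma_n^2(a,b)$ is edge-dependent rather than a common $1/n$, and the bias term $B$ from Theorem~\ref{thm:nb_lasso_debias_decomp} must be controlled past the moderate-deviation scale $\sqrt{\log p}\cdot \sigma_n(a,b)$. The central quantity is
$$G(\rho) \;:=\; \frac{\sum_{(a,b)\in \mathcal{H}_0} \ind{p_{a,b}\le \rho}}{|\mathcal{H}_0|\,\rho}, \qquad \rho_*\le \rho\le 1,$$
with $\rho_*=2(1-\Phi(t_p))$. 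Once I establish $\sup_{\rho\in[\rho_*,1]}|G(\rho)-1|\overset{p}{\rightarrow}0$, the definition of $\rho_0$ combined with $|\mathcal{H}_0|\le p(p-1)/2$ and $R(\rho_0)\ge \frac{p(p-1)\rho_0}{2\alpha}$ yields $\mathrm{FDP}\le \alpha\cdot G(\rho_0)$, which delivers the in-probability statement directly and the FDR bound by dominated convergence since $\mathrm{FDP}\le 1$.

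\textbf{Step 1: moderate-deviation expansion for null statistics.} For each $(a,b)\in \mathcal{H}_0$, Theorem~\ref{thm:nb_lasso_debias_decomp} writes $\widetilde{\theta}^{(a)}_b = B + E$, where $E=\sum_i \epsilon_i(a,b)$ is a sum of independent mean-zero sub-exponential summands with variance $\sigma_n^2(a,b)$. I will sharpen the plain CLT to a Cram\'er-type moderate deviation: for $0\le t\le C\sqrt{\log p}$,
$$\bP\bigl(|E|/\sigma_n(a,b)\ge t\bigr)=2(1-\Phi(t))(1+o(1))$$
uniformly over edges, via Bernstein-style truncation of the sub-exponential summands, with the non-degeneracy and sub-exponential norm both captured by $\alpha(\Theta^*,\{V_i\}_{i=1}^n)$. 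The lower bound $n_2^{(a,b)}\gtrsim \alpha^2(d+1)^6(\log p)^6$ in Assumption~\ref{assump:FDR_n} is exactly the scale Cram\'er's regime requires. The bias is handled by noting the second half of Assumption~\ref{assump:FDR_n} forces $|B|/\sigma_n(a,b)\ll 1/\sqrt{\log p}$, so $B$ is negligible at $t\asymp \sqrt{\log p}$; and Proposition~\ref{prop:var_est} combined with a union bound over $O(p^2)$ edges lets me replace $\sigma_n(a,b)$ by $\widehat{\sigma}_n(a,b)$ at no cost. Summing gives $\bE[G(\rho)]=1+o(1)$ uniformly in $\rho\in[\rho_*,1]$.

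\textbf{Step 2: variance control via the correlation structure.} Writing
$$\mathrm{Var}\!\left(\sum_{(a,b)\in \mathcal{H}_0}\!\ind{p_{a,b}\le \rho}\right)=\sum_{(a,b),(a',b')\in \mathcal{H}_0}\!\!\mathrm{Cov}\bigl(\ind{p_{a,b}\le \rho},\ind{p_{a',b'}\le \rho}\bigr),$$
I split the index set into three regimes based on $\rho_n(a,b,a',b')$. Pairs in $\mathcal{A}_1(\rho_0)$ number at most $Cp^2$ by Assumption~\ref{assump:FDR_correlation} and contribute $O(p^2\rho)$. Pairs in $\mathcal{A}_2(\rho_0,\gamma)$ obey a bivariate bound $\mathrm{Cov}\le C\rho^{2/(1+\rho_0)}$, obtained by running the moderate-deviation argument of Step~1 on the two-dimensional vector $(E(a,b),E(a',b'))$ normalized by $(\sigma_n(a,b),\sigma_n(a',b'))$, whose asymptotic covariance is $\rho_n(a,b,a',b')$; the cardinality bound in Assumption~\ref{assump:FDR_correlation} then makes their total contribution $o((|\mathcal{H}_0|\rho)^2)$. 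Weakly correlated pairs give $O(|\mathcal{H}_0|^2\rho^2(\log p)^{-2-\gamma})$. Chebyshev followed by a polynomial-grid discretization of $\rho\in[\rho_*,1]$ yields the desired uniform convergence $\sup_\rho|G(\rho)-1|\overset{p}{\rightarrow}0$, completing the proof.

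\textbf{Main obstacle.} The hard part is the Cram\'er-type tail refinement of the CLT in Step~1 and its bivariate extension in Step~2. Unlike the sub-Gaussian setting of \cite{javanmard2019false}, here the summands $\epsilon_i(a,b)$ are quadratic in Gaussians premultiplied by heterogeneous observation indicators $\delta^{(i)}_{j,k}/n_{j,k}$, so the truncation must be performed carefully to convert sub-exponential concentration into a valid tail expansion uniformly at $t\asymp \sqrt{\log p}$ and uniformly over $O(p^2)$ edges; this is exactly where the $d^6(\log p)^6$ scaling, rather than the $d^2(\log p)^2$ sufficient for the plain CLT of Theorem~\ref{thm:nb_lasso_debias_decomp}, enters. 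Pushing the bias past the same finer scale forces the stronger sample-size requirement $n_1^{(a,b)}\gg d^2(\log p)^5\log\log p\,(n_2^{(a,b)}/n_1^{(a,b)})^2$ in Assumption~\ref{assump:FDR_n}, and Assumption~\ref{assump:FDR_correlation} plays the role of the analogous sparsity-of-correlations hypotheses in \cite{liu2013gaussian,javanmard2019false}, adapted to correlations that depend on the four-way joint sample sizes $n_{j,k,j',k'}$ and cannot be simplified away.
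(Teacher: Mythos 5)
Your overall architecture matches the paper's: reduce FDP control to the uniform ratio convergence $\sup_{0\le t\le t_p}\bigl|\sum_{(a,b)\in\mathcal{H}_0}\ind{|\widehat z(a,b)|\ge t}/(m_0G(t))-1\bigr|\overset{p}{\to}0$, prove it by a second-moment/Chebyshev bound over a discretized grid, and split the edge-quadruples into the three correlation regimes of Assumption~\ref{assump:FDR_correlation}, handling $\mathcal{A}_1$ by its $O(p^2)$ cardinality, $\mathcal{A}_2$ by the bivariate tail bound $\exp\{-t^2/(1+\rho_0)\}$, and the weakly correlated bulk by a vanishing-correlation expansion (the paper's Lemma~\ref{lem:bivariate_normal_tail}). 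Where you genuinely diverge is the device for the Gaussian tail approximation: you propose a Cram\'er-type moderate deviation expansion with multiplicative error $2(1-\Phi(t))(1+o(1))$ obtained by Bernstein-style truncation, essentially the route of \cite{liu2013gaussian}. The paper deliberately avoids this and instead invokes Zaitsev's strong Gaussian approximation \citep{zaitsev1987gaussian} (Lemmas~\ref{lem:tail_approx_one_pair} and \ref{lem:tail_approx_two_pairs}), which gives an additive, $\varepsilon$-smoothed comparison $\bP(|E|/\sigma_n>t)\le G(t-\varepsilon)+C\exp\{-c\varepsilon\sqrt{n_2^{(a,b)}}/((d+1)^3\dots)\}$ with $\varepsilon=(\log p)^{-2}$, the additive error then being absorbed because $G(t_p)\gtrsim p^{-2}(\log p)^{-1/2}$. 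The Zaitsev route handles the heterogeneous summands $\epsilon_i(a,b)$ (whose distributions vary with $\delta^{(i)}$) through a single uniform Bernstein moment parameter $\tau_i$, and extends to the bivariate case with no extra machinery; your Cram\'er route would have to establish uniformity of the relative-error expansion over $O(p^2)$ edges and over non-identically distributed sub-exponential summands, plus a bivariate analogue, which is considerably more delicate even though the scaling $t^3/\sqrt{n_2^{(a,b)}}=o(1)$ under Assumption~\ref{assump:FDR_n} makes it plausible in principle.

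There is one concrete gap: you never treat the fallback threshold. The procedure sets $\rho_0=2(1-\Phi(\sqrt{2\log m}))$ when no $\rho\in[2(1-\Phi(t_p)),1]$ satisfies $m\rho/(R(\rho)\vee1)\le\alpha$, and in that event your inequality $\mathrm{FDP}\le\alpha\cdot G(\rho_0)$ is unavailable because the defining inequality $m\rho_0/(R(\rho_0)\vee1)\le\alpha$ need not hold and $\rho_0$ lies outside the interval $[\rho_*,1]$ over which you prove uniform convergence. The paper's Case~I closes this by a direct union bound: at $t_0=\sqrt{2\log m}$ each null edge satisfies $\bP(|\widehat z(a,b)|\ge t_0)\le G(t_0-x)+Cp^{-c}\lesssim p^{-2}(\log p)^{-1/2}$ after absorbing the bias and variance-estimation perturbations, so $\bP(\mathrm{FDP}>0)\le C/\sqrt{\log p}\to0$ and $\bE[\mathrm{FDP}]\to0$ in that regime. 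Without this case the theorem is not fully proved; with it added, your argument would be complete modulo the substantial work of the moderate-deviation lemma discussed above.
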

\vspace{-3mm}
\noindent{The proof of Theorem \ref{thm:FDR_valid} can be found in Appendix \ref{append:proofs}.}
This is the first theoretical guarantee for FDR control with erosely measured data. Although we still require sufficient pairwise sample sizes for all pairs of nodes, we allow $n_{j,k}$ to be of different order. As shown in Appendix \ref{append:empiricalResults}, our GI-JOE FDR approach indeed controls the FPR empirically in a wide range of erose measurement settings and graph structures, even when the sample size condition might be violated. 
\section{Empirical Studies}\label{sec:numeric}
In this section, we present empirical studies to validate our GI-JOE approach for both edge-wise inference and full graph inference. We first verify the validity of our edge-wise inference procedure in Section \ref{sec:sim_edge}; then we compare the graph selection performance using both our GI-JOE approaches and various baseline estimation and inference methods in Section \ref{sec:sim_graph}. A real data example is included in Section \ref{sec:real_data}. Software and code for reproducing the empirical results can be found at \href{https://github.com/Lili-Zheng-stat/GI-JOE}{https://github.com/Lili-Zheng-stat/GI-JOE}.

\subsection{Simulations for Edge-wise Inference: Validating Theory}\label{sec:sim_edge}
Here, we investigate the type I error and power of GI-JOE for testing one node pair. 
To study the effect of different pairwise sample sizes, here we consider the pairwise measurement scenario where each sample only consists of two variables. 
When the given node pair for inference is $(a,b)$, we set the pairwise sample size as follows:
$n_{j,k}=\sum_{i=1}^n \ind{j,k\in V_i}=n_1$ if $(j,k)\in S_1(a,b)\backslash S_2(a,b)$, $n_{j,k}=n_2$ if $(j,k)\in S_2(a,b)$, and $n_{j,k}=50$ otherwise.
The precision matrix $\Theta^*=\Sigma^{*-1}$ is generated with three graph structures: a chain graph, a three-star graph, and an Erd\H{o}s–R\'{e}nyi graph with connection probability $\frac{3}{p-1}$. Then we study the type I error for testing an unconnected node pair, and the power for testing an edge with different signal strengths. More details on the set-ups and the implementation of our GI-JOE approach can be found in Appendix \ref{append:NumericDetails}.

Figure \ref{fig:type1_err_power} summarizes the type I error rate and power averaged over 200 replicates when the confidence level is set as $0.95$, under three graph structures. In the Type I error plot (a), we consider three network sizes $p=50,\, 100,\, 200$, and a range of $n_1^{(a,b)},\, n_2^{(a,b)}$. We can see that the type I error rates are close to $0.05$ with moderately large, although differing, pairwise sample sizes. In the power plot, The dimension $p$ is fixed as $200$, sample sizes $n_2^{(a,b)}=n2\in\{125, 250, 500\}$, $n_1^{(a,b)}=n_1\in n_2/\{1, 1.2, 1.5\}$. The dependence of power on the SNR $\frac{\Theta^*_{a,b}}{\Theta^*_{a,a}\sigma_n(a,b)}$ is similar across different sample sizes, supporting Theorem \ref{thm:typeI_power}.  
\begin{figure}[!ht]
    \centering
    \subfigure[Type I Error]{
    \includegraphics[height = 5.4cm]{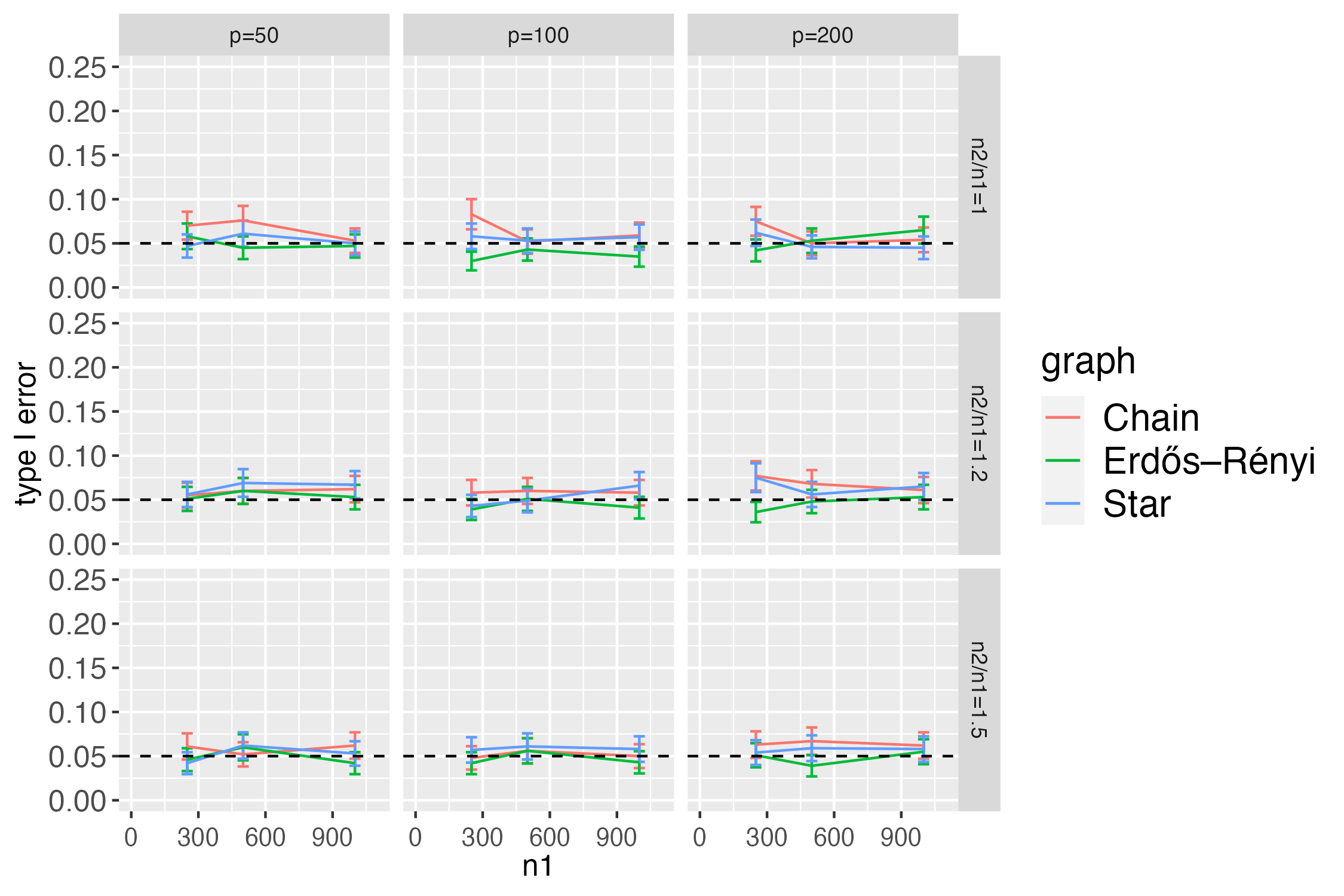}}
    \subfigure[Power]{
    \includegraphics[height = 5.9cm]{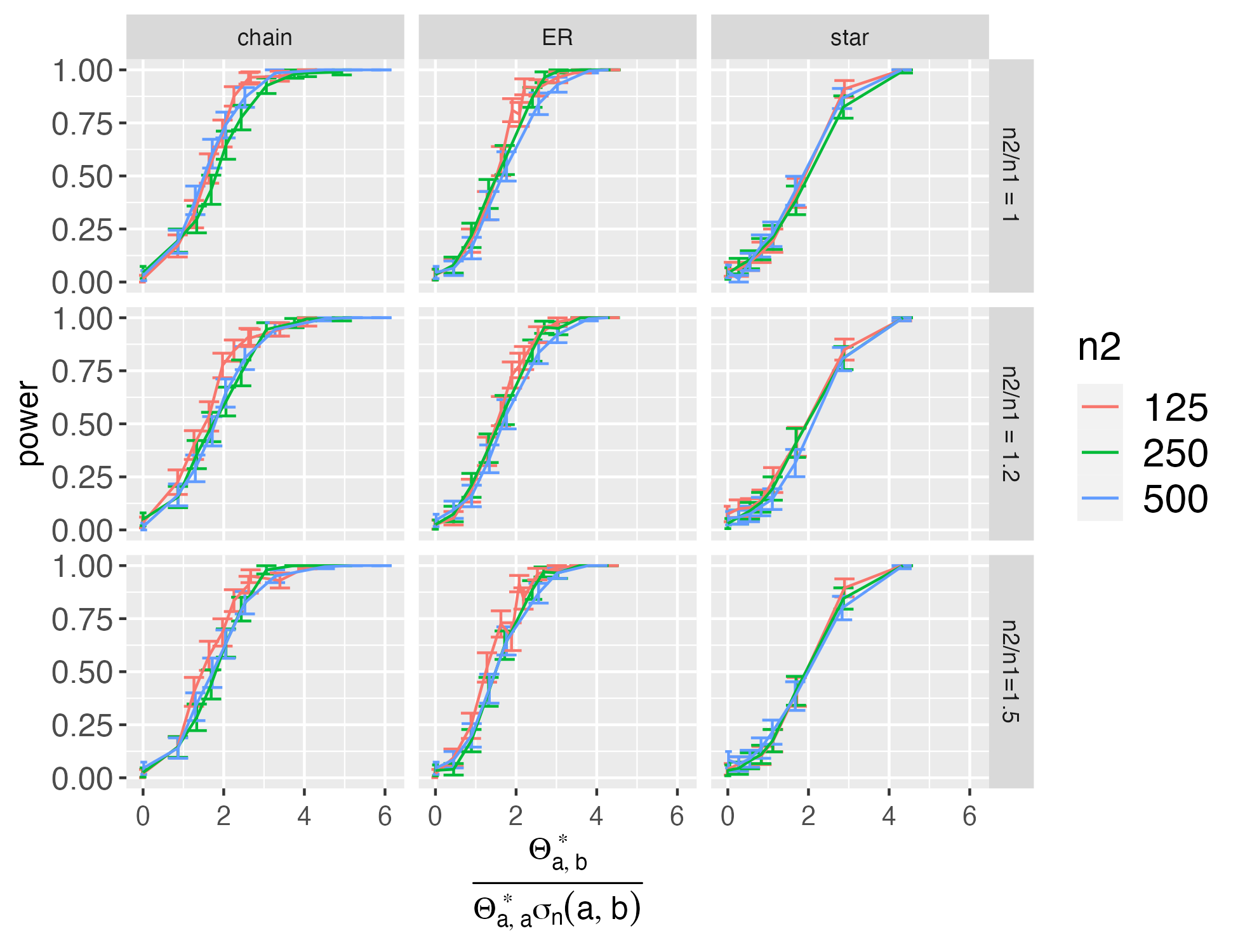}}
    \vspace{-2mm}
    \caption{\small Type I error rates vs. sample size, averaged over 1000 replicates; Power vs. signal to noise ratio, averaged over 200 replicates. Target type I error rate is set as $\alpha=0.05$, and the error bars represent the 95\% confidence interval. The $x$-axis in (b) is the signal-to-noise ratio $\frac{\Theta^*_{a,b}}{\Theta^*_{a,a}\sigma_n(a,b)}$, which determines the asymptotic power of our test (see Theorem \ref{thm:typeI_power}).}
    \label{fig:type1_err_power}
\end{figure}

\subsection{Graph Selection Study and Comparison with Baselines}\label{sec:sim_graph}
Now we study the graph selection performance of our GI-JOE (Holm) and GI-JOE (FDR) approaches under different erose measurements, compared with some baselines.
In particular, we consider four estimation methods and four inference methods. The estimation methods include standard plug-in methods \citep{kolar2012estimating,park2021estimating} with neighborhood lasso (Nlasso), graphical lasso (Glasso), CLIME, and the variant plug-in method described in Section \ref{sec:estimation_selection} (Nlasso-JOE). The only difference between Nlasso-JOE and Nlasso is the former uses a different tuning parameter for each node that depends on its own sample sizes, as explained in Section \ref{sec:estimation_selection}. The inference methods include GI-JOE (Holm), GI-JOE (FDR), and also ad hoc implementations of the debiased graphical lasso \citep{jankova2015confidence}. Specifically, since there are no applicable inference methods designed for erose measurements, the only baseline we can implement is applying existing inference methods for simultaneous measurement settings and plug in the minimum pairwise sample size for computing the variance of each edge. This is not a method one would ever use in practice, since it is too conservative and has no theoretical guarantees, but we still present the results of such baselines, just to prove the concept that considering the different sample sizes over the graph is important.
Although \cite{jankova2015confidence} is only concerned with edge-wise inference, we still add a Holm's correction and FDR control procedure on top of its edge-wise $p$-values for a fair comparison, and we denote these two procedures by DB-Glasso (Holm) and DB-Glasso (FDR). Some additional implementation details of these methods can be found in Appendix \ref{append:NumericDetails}.

The comparative studies here include both synthetic and real data-inspired simulations, and we first present the synthetic set-up. For graph structures, we consider the chain graph, 10-star graph, and Erd\H{o}s–R\'{e}nyi graph, with dimension $p = 200$. We experiment with three synthetic erose measurement patterns, but due to space limit, here we only present the results for two measurement patterns and defer more complete results to Appendix \ref{append:empiricalResults}. In measurement 1, each node is independently missing with low, moderate, or high probabilities; measurement 2 is the size-constrained measurements scenario, where each sample consists of randomly selected $20$ nodes, and each node is sampled with probability weight positively depending on its degree. 
For each measurement scenario, we consider two different total sample sizes. Table \ref{tab:sim2_F1_summary_chain_star} summarizes the F1 scores of our GI-JOE method and some other baseline estimation and inference methods, averaged over 20 independent runs (standard deviations included in parentheses). For both Nlasso and Nlasso-JOE, we present the results based on the AND rule; The results for OR rule and more detailed results on true positive rate, true negative rate, and true discovery rate can be found in Appendix \ref{append:empiricalResults}. In summary, among all different measurement scenarios, sample sizes, and graph structures, GI-JOE (FDR) is either the best or comparable to the best among all inference and estimation methods, in terms of the F1 score. The Nlasso-JOE is usually the best among the estimation methods, suggesting that using different tuning parameters that accommodate for the different pairwise sample sizes may be a simple yet effective trick.

\begin{table}[!ht]
     \centering
     \scalebox{0.7}{
     \begin{tabular}{|c|c|c|c|c|c|c|c|c|}
     \hline
     &\multicolumn{4}{|c|}{Chain graph}&\multicolumn{4}{|c|}{10-Star graph}\\
     \hline
        \multirow{2}{*}{Method}  
        &\multicolumn{2}{c|}{Measurement 1} & \multicolumn{2}{c|}{Measurement 2}&\multicolumn{2}{c|}{Measurement 1} & \multicolumn{2}{c|}{Measurement 2} \\
        \cline{2-9}
        &n=600&n=800&n=20000&n=30000&n=600&n=800&n=20000&n=30000\\
        \hline
       Nlasso &0.62(0.18)&0.70(0.02)&0.41(0.01)&0.34(0.01)&0.35(0.11)& 0.40(0.10)&0.28(0.01)&0.29(0.01)\\
       \hline
       Glasso &0.37(0.01)&0.35(0.01)&0.31(0.01)&0.47(0.21)&0.49(0.02)&0.47(0.01)&0.35(0.01)&0.39(0.25)\\
       \hline
       CLIME &0.40(0.04)&0.32(0.01)&0.27(0.00)&0.44(0.01)&0.25(0.01)&0.39(0.03)&0.28(0.03)&0.43(0.02)\\
       \hline
       Nlasso-JOE &\textbf{0.64}(0.12)&\textbf{0.91}(0.05)&\textbf{0.68}(0.20)&\textbf{0.88}(0.01)&\textbf{0.82}(0.07)&\textbf{0.86}(0.02)&\textbf{0.80}(0.02)&\textbf{0.90}(0.01)\\
        \hline
       \hline
       DB-Glasso (Holm) &0.01(0.01)& 0.05(0.02)&0.07(0.02)&0.42(0.03)&0.00(0.01)&   0.00(0.01)&0.00(0.01)&0.04(0.01)\\
       \hline
       DB-Glasso (FDR)&0.00(0.01)&0.01(0.01)&0.01(0.01) &0.06(0.02)&0.00(0.01)&0.01(0.01)&0.01(0.01) &0.06(0.02)\\
       \hline
       \textbf{GI-JOE (Holm)}&0.78(0.01)&0.81(0.01)& 0.37(0.12)& 0.67(0.02)& 0.32(0.04)& 0.50(0.01)& 0.96(0.01)&0.97(0.01)\\
        \hline
        \textbf{GI-JOE (FDR)}&\textbf{0.81}(0.01)&\textbf{0.86}(0.01)& \textbf{0.74}(0.08)&\textbf{0.92}(0.01)&\textbf{0.51}(0.03)&\textbf{0.61}(0.01)&\textbf{0.99}(0.01)&\textbf{0.98}(0.01)\\
        \hline
     \end{tabular}}
     \caption{\small F1 scores of the graphs selected by estimation methods (first three) and inference methods (last four) under two measurement scenarios. The average sample size $\frac{1}{p^2}\sum_{j,k}n_{j,k}$ ranges from $50$ to $350$.}
     \label{tab:sim2_F1_summary_chain_star}
 \end{table}

While for the real data-inspired simulations, either the graph structure is adopted from real neuroscience data or the measurement patterns are adopted from real gene expression data sets. Due to space limit, here we only present the latter (real erose measurements) but leave the real graph simulation results Appendix \ref{append:empiricalResults}.

For the real measurement patterns, we take two publicly available single-cell RNA sequencing data sets \citep{darmanis2015survey,chu2016single}, and focus on the top $200$ genes with highest variances. The \emph{chu} and \emph{darmanis} measurement patterns have pairwise sample sizes ranging from 0 to 1018 and from 5 to 366. The graphs for data generation are scale-free graphs and small-world graphs with $200$ nodes. The F1-scores summarized in Table \ref{tab:scRNA_sim_F1_summary} also suggest the efficacy of the GI-JOE (FDR) approach. Visualizations of the real graph and measurement patterns, and specifics of the simulated graphs can be found in Appendix \ref{append:RealSim_Setting}.
\begin{table}[!ht]
     \centering
     \scalebox{0.7}{
     \begin{tabular}{|c|c|c|c|c|}\hline
        Method  & \multicolumn{2}{c|}{\emph{chu} measurement} & \multicolumn{2}{c|}{\emph{darmanis} measurement} \\
        \cline{2-5}
        &scale-free graph& small-world graph & scale-free graph & small-world graph\\
        \hline
        Nlasso&0.54(0.25)&0.57(0.28)&0.31(0.02)&0.34(0.11)\\
       \hline
       Glasso&\textbf{0.61}(0.19)&0.70(0.17)&\textbf{0.59}(0.03)&0.49(0.10)\\
       \hline
       CLIME&0.41(0.07)&0.40(0.02)&0.26(0.04)&0.35(0.12)\\
       \hline
       Nlasso-JOE&\textbf{0.61}(0.30)&\textbf{0.92}(0.01)&0.51(0.03)&\textbf{0.81}(0.01)\\
        \hline
       \hline
       DB-Glasso (Holm) &0.00(0.00)&0.00(0.00)&0.00(0.00) &0.00(0.00)\\
       \hline
       DB-Glasso (FDR)&0.00(0.00)& 0.00(0.00)& 0.00(0.00) &0.00(0.00)\\
       \hline
       \textbf{GI-JOE (Holm)}&\textbf{0.96}(0.01)&\textbf{0.93}(0.01)&0.44(0.04)&0.55(0.03)\\
        \hline
        \textbf{GI-JOE (FDR)} &0.94(0.03)&0.93(0.01)&\textbf{0.75}(0.03)&\textbf{0.76}(0.02)\\
        \hline
     \end{tabular}}
     \caption{\small F1 scores of estimation methods (first three) and inference methods (last four) with the ground truth graphs being a scale-free graph and a small-world graph with 200 nodes, under two real measurement patterns from single-cell RNA sequencing data sets (the chu data \citep{chu2016single} and darmanis data \citep{darmanis2015survey}). Our GI-JOE methods always have the highest F1-scores, and GI-JOE (FDR) is better for the the darmanis measurement (average sample size $250$) while GI-JOE (Holm) is better for the chu measurement (average sample size $850$).}
     \label{tab:scRNA_sim_F1_summary}
 \end{table}

 \subsection{Real Data Example: Application to Calcium Imaging Data}\label{sec:real_data}
 The two-photon calcium imaging technology can record in vivo functional activities of thousands of neurons \citep{stringer2019spontaneous}, and such data sets can be used to understand the neuronal circuits with the help of graphical model techniques \citep{wang2021thresholded,vinci2019graph}. In this section, we investigate the potential of our GI-JOE approach on a real calcium imaging data set from the Allen Institute \citep{lein2007genome}, which contains the functional recordings of $p=227$ neurons in a mouse's visual cortex, when different visual stimuli or no stimulus were presented to the mouse. Here, we focus on the raw fluorescence traces in one spontaneous session where there is no external stimulus. This session includes $n=8931$ samples of the trace data associated with all $227$ neurons. We manually mask the data for some neurons to create erose measurements, and then validate our methods by comparing the tested graph based on masked data with the tested graph based on the full data set. In particular, the recorded neurons all lie on the same vertical plane in a mouse's visual cortex (see Figure \ref{fig:ABA_graphs} for the physical locations of the neurons in $x$ and $y$ axis). To manually create erose measurements, we divide the neurons into three subsets based on their location on the $x$-axis (marked in different colors in Figure \ref{fig:ABA_graphs}), and neurons in each subset are randomly observed with high ($\sqrt{0.9}$), moderate ($\sqrt{0.5}$), and low ($\sqrt{0.1}$) probabilities. 
 
 However, when inference methods are applied on the full data set, we find that the tested graph is always dense. This might be due to the huge amount of latent neurons in the mouse's brain since latent variables are known to lead to dense graph structures in graphical models \citep{wang2021thresholded}. Since most of these edges have small edge weights, here we consider testing if the edge weights are stronger than a threshold instead of testing if they are zero. Specifically, for any node pair $(a,b)$, $H_{0,(a,b)}: |\frac{\Theta^*_{a,b}}{\Theta^*_{a,a}}|\leq 0.12$. Our GI-JOE approach can be directly extended to test such hypothesis and the detailed procedures are included in Appendix \ref{append:NumericDetails}.
 For validation purposes, we also apply a special version of our GI-JOE (FDR) approach on the full data set, where all pairwise sample sizes are equal. As suggested by Figure \ref{fig:ABA_graphs}, our GI-JOE (FDR) approach works well, especially for the neuron set 1 (red) as they have larger sample sizes; it identifies the same hub neuron in set 1 as the tested graph (a) with the full data set. The specific F1-scores of each method for each neuron set can be found in Appendix \ref{append:empiricalResults}.

 \begin{figure}
     \centering
     \subfigure[FDR-selected graph with full data]{
    \includegraphics[height = 4cm]{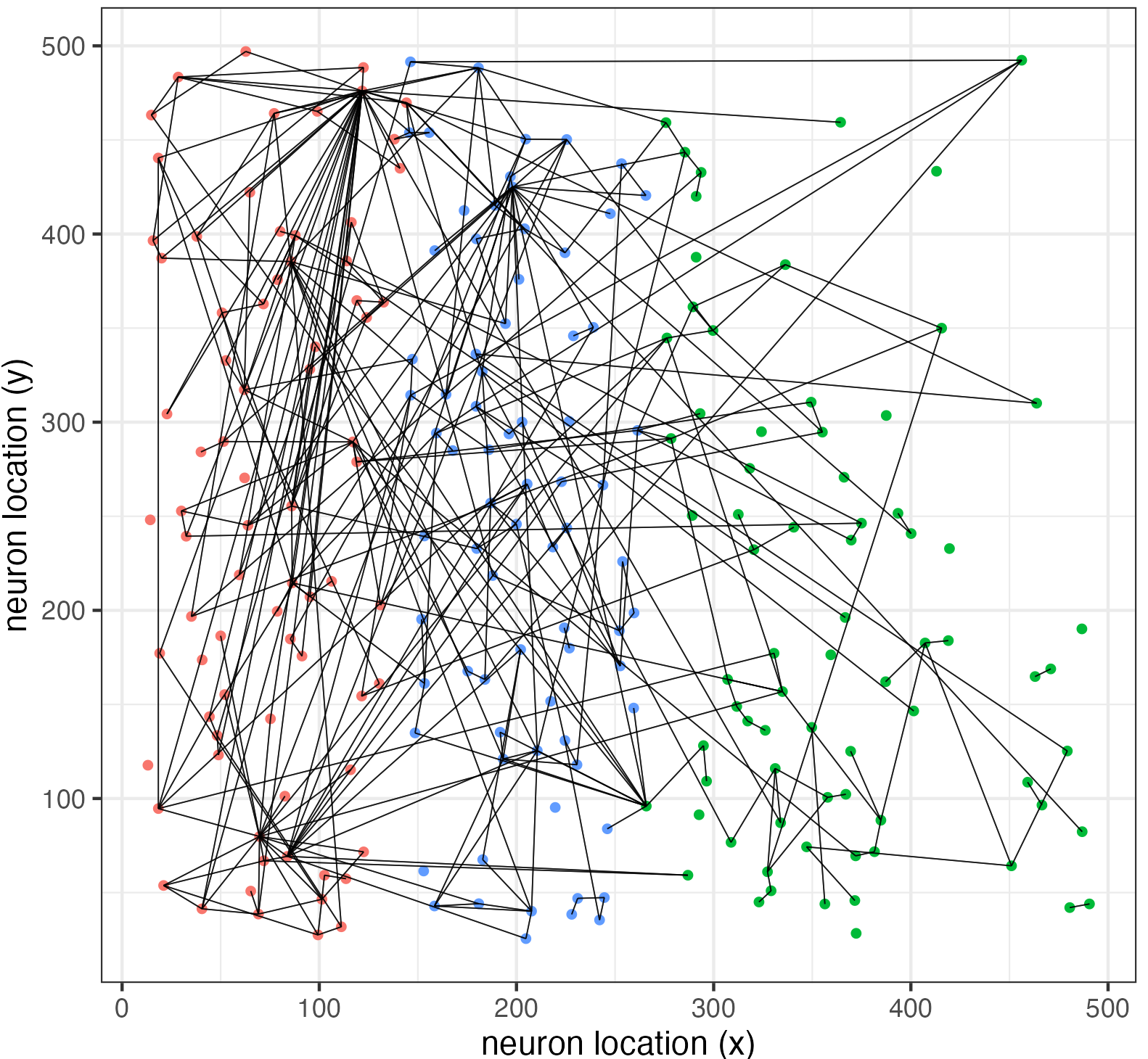}}
    \subfigure[GI-JOE (FDR), applied to data with erose measurements]{
    \includegraphics[height = 4cm]{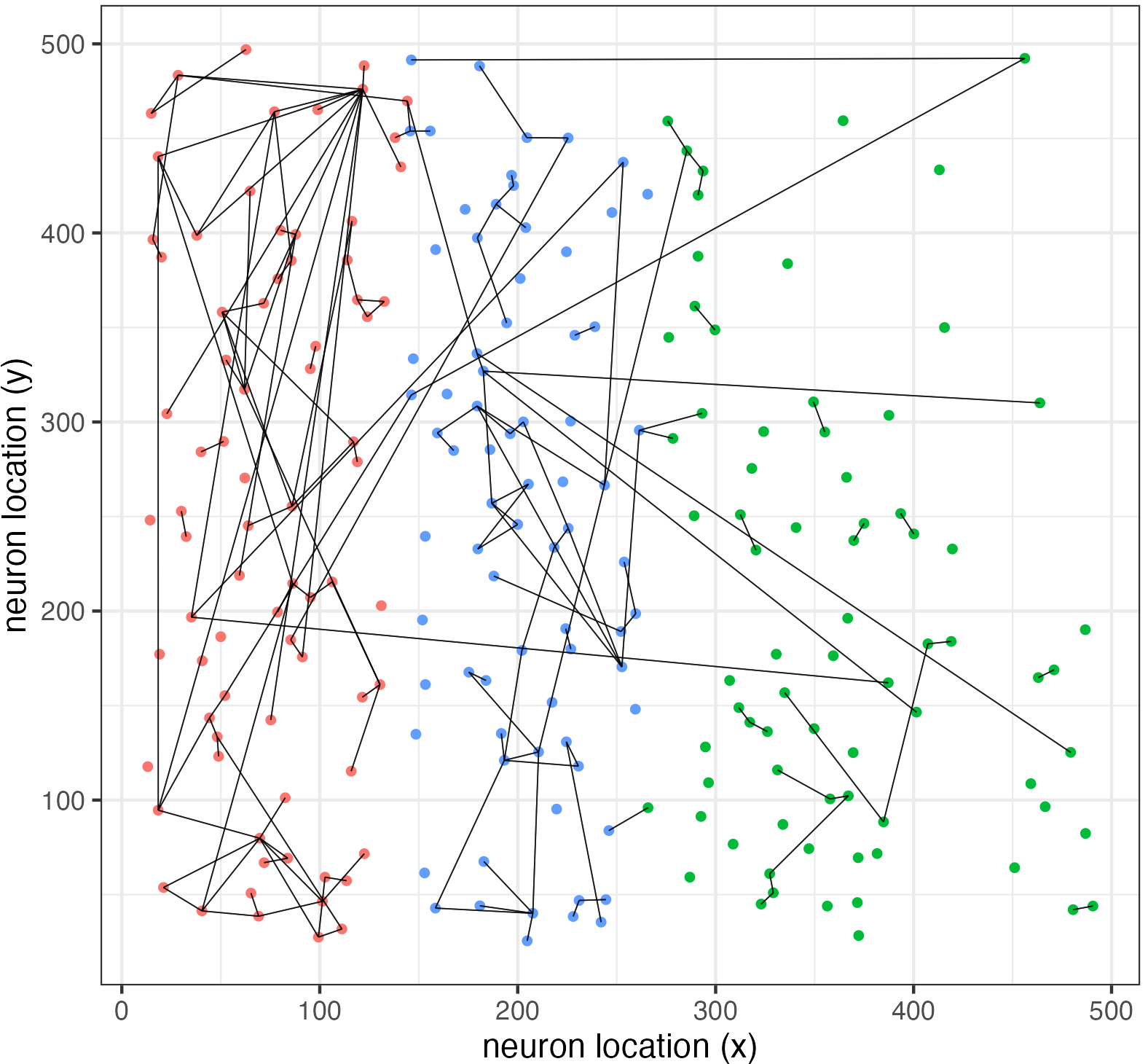}}
    \subfigure[Debiased graphical lasso with minimum sample size, applied to data with erose measurements]{
    \includegraphics[height = 4cm]{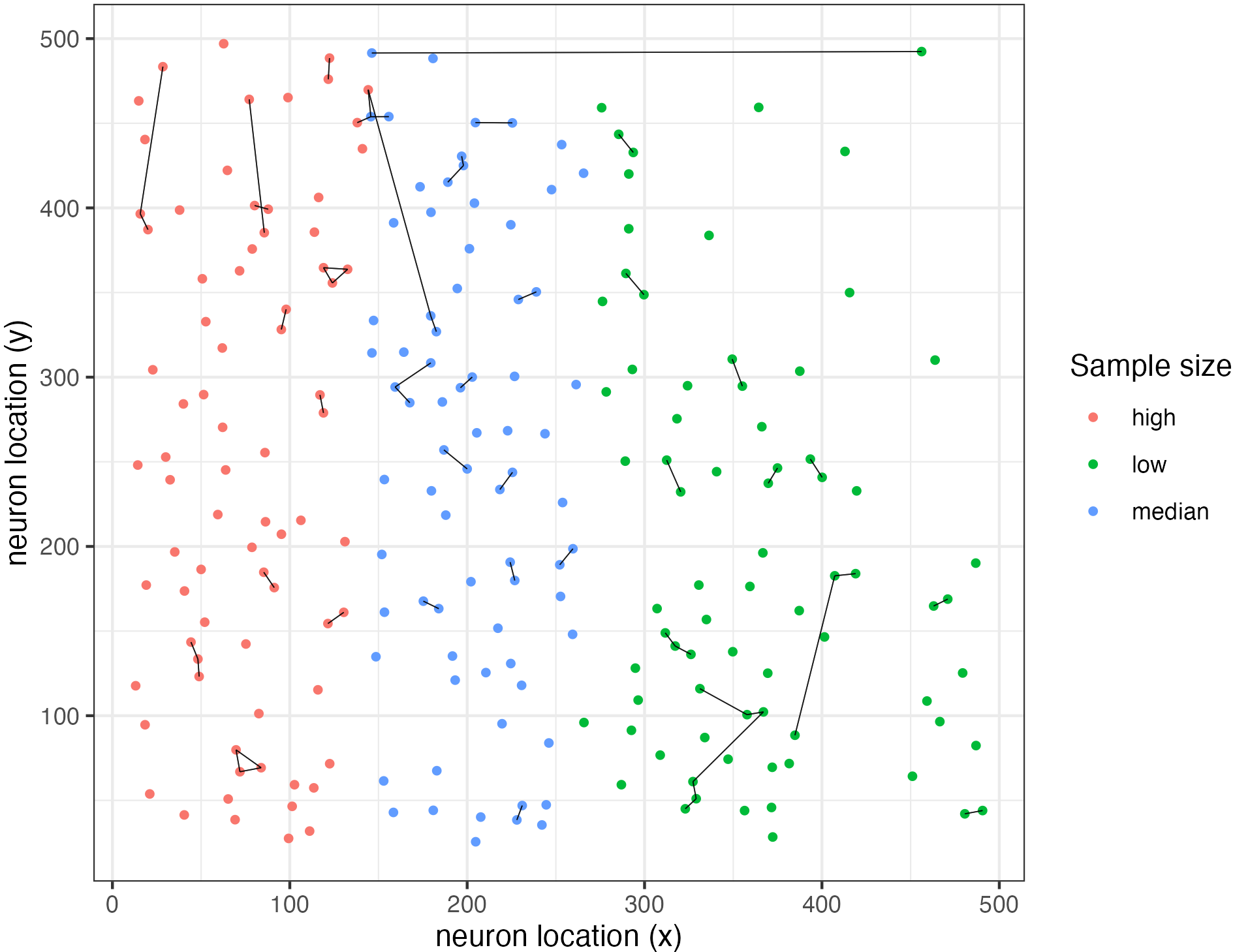}}
     \caption{\small Tested functional connectivity graphs among neurons, using the Allen Brain Atlas data set. The neurons marked in red, blue, and green have high, moderate and low sample sizes. Our GI-JOE (FDR) approach works reasonably well, especially for red neurons, while the debiased graphical lasso with minimum sample size is too conservative to find any edge.}
     \label{fig:ABA_graphs}
 \end{figure}

\section{Discussion}\label{sec:discussion}
In this paper, we propose the \emph{GI-JOE} (\textbf{G}raph \textbf{I}nference when \textbf{J}oint \textbf{O}bservations are \textbf{E}rose) approach to address the graphical model inference problem under the \emph{erose} measurement setting, where irregular observational patterns lead to vastly different sample sizes for node pairs. Our GI-JOE approach quantifies the different uncertainty levels over the graph induced by the erose measurements, including both an edge-wise inference method and an FDR control procedure. We characterize the Type I error and power for testing any node pair $(a,b)$ by considering the sample sizes involving $a,b$'s neighbors; We also guarantee the valid FDR control of GI-JOE (FDR) under appropriate conditions. Finally, our experiments with synthetic and real data demonstrate the efficacy of our approach for different graphs and measurement patterns.

Graph learning with erose measurements is a common but understudied problem, and there are still many open questions for future investigation. For instance, it is possible to extend our theory and methods to general sub-Gaussian data or semiparametric graphical models. Our problem setting is closely related to the latent variable graphical models when some pairwise sample sizes are extremely low, and some ideas in this line of work might be useful to further improve our method with relaxed sample size conditions. In addition, our current results are based on a variant of the neighborhood lasso, while one may also consider a variant of the graphical lasso or CLIME and investigate their potential in the erose measurement setting. Another practical but challenging setting not considered in this paper is data-dependent erose measurements, which requires novel methods and theory since the sample covariance matrix would be biased in this setting and the plug-in type approach no longer works. Furthermore, erosely measured data sometimes take the form of a time series, and the temporal dependence could have an effect on the edge-wise variance computation, calling for new inference methodologies.
\section*{Acknowledgments}
The authors gratefully acknowledge support by NSF NeuroNex-1707400, NIH 1R01GM140468, and NSF DMS-2210837. The authors also thank Gautam Dasarathy for helpful discussion during the preparation of this manuscript.
\clearpage
\begin{appendices}
\section{Optimization Algorithm for the Weighted \texorpdfstring{$\ell_{\infty}$}{Lg} Norm Projection}
\label{sec:proj_alg}
We provide an ADMM algorithm to solve the following problem:
\begin{equation}
    \widetilde{\Sigma}=\argmin_{\Sigma\succeq 0}\max_{i,j}\sqrt{n_{ij}}|\Sigma_{ij}-\widehat{\Sigma}_{ij}|.
\end{equation} 
Similar to~\cite{datta2017cocolasso}, we introduce an additional variable $B\in \mathbb{R}^{p\times p}$, choose a small $\varepsilon>0$ and solve \begin{equation*}
    (\widetilde{\Sigma},\widehat{B})=\argmin_{\Sigma\succeq \varepsilon I, B=\Sigma-\widehat{\Sigma}}\max_{i,j}\sqrt{n_{ij}}|B_{ij}|.
\end{equation*}
Consider the augmented Lagrangian function of the objective above:
\begin{equation*}
\begin{split}
    &f(\Sigma,B,\Lambda)\\
    = &\frac{1}{2}\max_{i,j}\sqrt{n_{ij}}|B_{ij}|-\langle \Lambda,\Sigma-B-\widehat{\Sigma}\rangle+\frac{1}{2\mu}\|\Sigma-B-\widehat{\Sigma}\|_{F}^2,
\end{split}
\end{equation*}
where $\Lambda\in \mathbb{R}^{p\times p}$ is the Lagrangian variable, and $\mu$ is a penalty parameter. Then the ADMM algorithm~\citep{boyd2011distributed} takes the following steps at the $i$th iteration:
\begin{align}
        \Sigma^{(i+1)}=&\argmin_{\Sigma\succeq \varepsilon I}f(\Sigma,B^{(i)},\Lambda^{(i)}),\label{eq:ADMM_Sigma_update}\\
        B^{(i+1)}=&\argmin_{B}f(\Sigma^{(i+1)},B,\Lambda^{(i)}),\label{eq:ADMM_B_update}\\
        \Lambda^{(i+1)}=&\Lambda^{(i)}-\frac{\Sigma^{(i+1)}-B^{(i+1)}-\widehat{\Sigma}}{\mu}.\label{eq:ADMM_Lambda_update}
\end{align}
Now we discuss how to solve \eqref{eq:ADMM_Sigma_update} and \eqref{eq:ADMM_B_update}. One can show that
\begin{equation*}
    \begin{split}
        \Sigma^{(i+1)}=&\argmin_{\Sigma\succeq \varepsilon I}f(\Sigma,B^{(i)},\Lambda^{(i)})\\
        =&\argmin_{\Sigma\succeq \varepsilon I}\|\Sigma-B^{(i)}-\mu\Lambda^{(i)}-\widehat{\Sigma}\|_F^2\\
        =&\mathcal{P}_{\mathcal{S}^{++}_{\varepsilon}}(B^{(i)}+\mu\Lambda^{(i)}+\widehat{\Sigma}),
    \end{split}
\end{equation*}
where $\mathcal{P}_{\mathcal{S}^{++}_{\varepsilon}}(\cdot)$ is the projection operator upon $\mathcal{S}^{++}_{\varepsilon}=\{A\in \mathbb{R}^{p\times p}:A=A^\top, \lambda_{\min}(A)\geq \varepsilon\}$, w.r.t. the Frobenius norm. We can perform an eigenvalue thresholding to find this projection.

While for $B^{(i+1)}$, the following lemma shows how to solve \eqref{eq:ADMM_B_update}:
\begin{lemma}\label{lem:ADMM_B_update}
    Let $\omega\in \mathbb{R}^{p\times p}$ with $\omega_{ij}=\sqrt{n_{ij}}$. The solution for ~\eqref{eq:ADMM_B_update} is:
    $$B^{(i+1)}=\Sigma^{(i+1)}-\mu\Lambda^{(i)}-\widehat{\Sigma}-\mathcal{P}_{\mathbb{B}_{\omega^{-1},1}(\frac{\mu}{2})}(\Sigma^{(i+1)}-\mu\Lambda^{(i)}-\widehat{\Sigma}),$$
    where $\mathcal{P}_{\mathbb{B}_{\omega^{-1},1}(\frac{\mu}{2})}(\cdot)$ is the projection operator upon the weighted $\ell_1$ ball $$\mathbb{B}_{\omega^{-1},1}(\frac{\mu}{2})=\{A\in \mathbb{R}^{p\times p}: \sum_{i,j}\omega^{-1}_{ij}|A_{ij}|\leq \frac{\mu}{2}\},$$ w.r.t. the Frobenius norm.
\end{lemma}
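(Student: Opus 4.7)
The plan is to recognize the $B$-update \eqref{eq:ADMM_B_update} as the proximal operator of a weighted $\ell_\infty$ norm at an explicit point, and then apply the Moreau decomposition to rewrite this proximal operator in terms of a projection onto the dual norm ball.

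First I would complete the square in the augmented Lagrangian to isolate $B$. Writing $\omega_{ij}=\sqrt{n_{ij}}$ so that $\tfrac12\max_{i,j}\sqrt{n_{ij}}|B_{ij}|=\tfrac12\|\omega\odot B\|_\infty$, and noting that $\|\Sigma-B-\widehat\Sigma\|_F^2=\|B-(\Sigma-\widehat\Sigma)\|_F^2$, the $B$-subproblem becomes
\begin{equation*}
\min_{B}\;\frac{1}{2}\|\omega\odot B\|_\infty+\frac{1}{2\mu}\bigl\|B-C\bigr\|_F^2,\qquad C:=\Sigma^{(i+1)}-\mu\Lambda^{(i)}-\widehat\Sigma,
\end{equation*}
after absorbing the linear term $-\langle\Lambda^{(i)},-B\rangle$ into the quadratic via the identity $\tfrac{1}{2\mu}\|B-(\Sigma-\widehat\Sigma)\|_F^2+\langle\Lambda^{(i)},B\rangle=\tfrac{1}{2\mu}\|B-C\|_F^2+\text{const}$. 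Thus $B^{(i+1)}=\operatorname{prox}_{(\mu/2)\|\omega\odot\cdot\|_\infty}(C)$.

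Next I would invoke Moreau's decomposition, $\operatorname{prox}_f(C)+\operatorname{prox}_{f^*}(C)=C$, with $f(B)=\tfrac{\mu}{2}\|\omega\odot B\|_\infty$. The key computation is the Fenchel conjugate: since the dual of the weighted $\ell_\infty$ norm $N(B)=\|\omega\odot B\|_\infty$ is the weighted $\ell_1$ norm $N_*(Y)=\|\omega^{-1}\odot Y\|_1$ (by the usual sup-pairing argument on $|B_{ij}|\le 1/\omega_{ij}$), we have $N^*(Y)=I_{\{N_*(Y)\le 1\}}(Y)$, and scaling yields
\begin{equation*}
f^*(Y)\;=\;\frac{\mu}{2}\,N^*\!\bigl(2Y/\mu\bigr)\;=\;I_{\{\|\omega^{-1}\odot Y\|_1\le \mu/2\}}(Y)\;=\;I_{\mathbb{B}_{\omega^{-1},1}(\mu/2)}(Y).
\end{equation*}
Hence $\operatorname{prox}_{f^*}(C)$ is precisely the Euclidean projection $\mathcal{P}_{\mathbb{B}_{\omega^{-1},1}(\mu/2)}(C)$, and Moreau gives the claimed formula
\begin{equation*}
B^{(i+1)}\;=\;C-\mathcal{P}_{\mathbb{B}_{\omega^{-1},1}(\mu/2)}(C).
\end{equation*}

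The only subtle step is the scaling in passing from $f=\tfrac{\mu}{2}N$ to its conjugate and thence to a ball of radius exactly $\mu/2$ under the $\omega^{-1}$-weighted $\ell_1$ norm; I would verify this carefully using $(tg)^*(y)=t\,g^*(y/t)$ and the fact that an indicator of a set scaled by $t>0$ in the argument is an indicator of the set scaled in radius. Everything else is a direct application of well-known convex analysis. No dimensionality or positivity constraints on $B$ intervene, so the minimizer is unique and given in closed form modulo the $\ell_1$-ball projection, which itself is standard and can be implemented efficiently.
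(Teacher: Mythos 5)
Your proof is correct, and it takes a genuinely different route from the paper's. You identify the $B$-subproblem as $\operatorname{prox}_{(\mu/2)\|\omega\circ\cdot\|_\infty}(C)$ with $C=\Sigma^{(i+1)}-\mu\Lambda^{(i)}-\widehat\Sigma$ and then invoke Moreau's decomposition together with the duality between the $\omega$-weighted $\ell_\infty$ norm and the $\omega^{-1}$-weighted $\ell_1$ norm; the scaling computation $f^*=I_{\mathbb{B}_{\omega^{-1},1}(\mu/2)}$ is exactly right, and the conclusion $B^{(i+1)}=C-\mathcal{P}_{\mathbb{B}_{\omega^{-1},1}(\mu/2)}(C)$ drops out in one line. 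The paper instead verifies optimality by hand: it sets $\delta=\mathcal{P}_{\mathbb{B}_{\omega^{-1},1}(\mu/2)}(A)$ and $z=\tfrac{2}{\mu}\delta$, checks that the gradient of the quadratic at $\widehat B=A-\delta$ cancels $\mu z$, and then confirms that $z$ is a subgradient of $\|\omega\circ B\|_\infty$ at $\widehat B$ by establishing $\|\omega^{-1}\circ z\|_1\le 1$ and $\langle\widehat B,z\rangle=\|\omega\circ\widehat B\|_\infty$, using the explicit soft-thresholding form $\delta_{ij}=\max\{|A_{ij}|-c\,\omega_{ij}^{-1},0\}\,\mathrm{sgn}(A_{ij})$ of the weighted $\ell_1$-ball projection. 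Your argument is shorter, does not require knowing the closed form of the projection, and generalizes immediately to any norm/dual-norm pair; the paper's is more elementary and self-contained in that it only uses first-order optimality conditions, though it leans on the structural characterization of the weighted $\ell_1$ projection from the cited reference. Both implicitly require $n_{ij}>0$ so that $\omega^{-1}$ is well defined, which is worth noting either way.
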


\begin{proof}[Proof of Lemma~\ref{lem:ADMM_B_update}]
    Let $A=\Sigma^{(i+1)}-\mu\Lambda^{(i)}-\widehat{\Sigma}$. First we can write
    \begin{equation*}
        B^{(i+1)}=\argmin_{B}\|B-A\|_F^2+\mu\|\omega\circ B\|_{\infty}.
    \end{equation*}
    Due to the convexity of $g(B)=\|B-A\|_F^2+\mu\|\omega\circ B\|_{\infty}$, it suffices to show that $0_{p\times p}$ is one sub-gradient of $g(B)$ at $\widehat{B}:=A-\mathcal{P}_{\mathbb{B}_{\omega^{-1},1}(\frac{\mu}{2})}(A)$. 
    
    Let $\delta=\mathcal{P}_{\mathbb{B}_{\omega^{-1},1}(\mu/2)}(A)$, and $z=\frac{2}{\mu}\delta$. It is straightforward to verify that
    $(\nabla_B\|B-A\|_F^2)_{B=\widehat{B}}+\mu z=0$, thus we only need to show that $z$ is a sub-gradient of $\|\omega\circ B\|_{\infty}$ at $B=\widehat{B}$, that is, (i) $\|\omega^{-1}\circ z\|_1\leq 1$ and (ii) $\langle \widehat{B},z\rangle =\|\omega\circ \widehat{B}\|_{\infty}$. 
    
    First note that $\|\omega^{-1}\circ \delta\|_1<\frac{\mu}{2}$ if and only if $\|\omega^{-1}\circ A\|_1<\frac{\mu}{2}$, $\delta=A$ and $\widehat{B}=0$, which implies $\|\omega^{-1}\circ z\|_1< 1$ and $\langle\widehat{B},z\rangle=\|\omega\circ \widehat{B}\|_{\infty}=0$. Now we focus on the case where $\|\omega^{-1}\circ \delta\|_1=\frac{\mu}{2}$. In this case, we can still easily verify (i) since $\|\omega^{-1}\circ z\|_1=\frac{2}{\mu}\|\delta\|_{\omega^{-1},1}= 1$. While for (ii), note that by established results for projection on weighted $\ell_1$ ball~\citep{slavakis2010adaptive}, there exists $c>0$ such that    $\delta_{ij}=\max\{|A_{ij}|-c\omega_{ij}^{-1},0\}\mathrm{sgn}(A_{ij}).$ Then one can show that $\|\omega\circ \widehat{B}\|_{\infty}=\max_{i,j}\omega_{ij}|A_{ij}-\delta_{ij}|= c$. Furthermore, 
    $z_{ij}>0$ implies $A_{ij}>c\omega_{ij}^{-1}$, and $z_{ij}<0$ implies $A_{ij}<-c\omega_{ij}^{-1}$. Hence we have
    \begin{equation*}
        \begin{split}
            \langle \widehat{B},z\rangle=&\sum_{i,j}z_{ij}c\omega_{ij}^{-1}\mathrm{sgn}(A_{ij})\\
            =&\sum_{i,j}c\omega_{ij}^{-1}|z_{ij}|\\
            = &c\\
            = &\|\omega\circ \widehat{B}\|_{\infty}.
        \end{split}
    \end{equation*}
\end{proof}
Hence solving for $B^{(i+1)}$ only requires the projection upon a weighted $\ell_1$ ball, and this projection can be done by applying the algorithm proposed in~\cite{slavakis2010adaptive}. We summarize the full ADMM algorithm for solving \eqref{eq:proj_Sigma} in Algorithm~\ref{alg:nb_lasso}.
\begin{algorithm}[!h]
\SetAlgoLined
 Input: $\widehat{\Sigma}$, $n_{ij}, 1\leq i,j\leq p$, $\varepsilon>0$, penalty parameter $\mu>0$, initial values $B_0$, $\Lambda_0$\\
 $\omega_{ij}=\sqrt{n_{ij}}$\;
 \For{$i=0,1,\dots,K$}{
 $\Sigma^{(i+1)}=\mathcal{P}_{\mathcal{S}^{++}_{\varepsilon}}(B^{(i)}+\widehat{\Sigma}+\mu\Lambda^{(i)})$\;
 $A=\Sigma^{(i+1)}-\mu\Lambda^{(i)}-\widehat{\Sigma}$\;
$B^{(i+1)}=A-\mathcal{P}_{\mathbb{B}_{\omega,1}(\mu/2)}(A)$\;
 $\Lambda^{(i+1)}=\Lambda^{(i)}-\frac{\Sigma^{(i+1)}-B^{(i+1)}-\widehat{\Sigma}}{\mu}$\;
 }
 Output: $\Sigma^{(K+1)}$.
 \caption{ADMM for weighted $\ell_{\infty}$ norm projection on positive semi-definite cone}
 \label{alg:nb_lasso}
\end{algorithm}\vspace{-2mm}

\section{Additional Theoretical Results and Discussion for Neighborhood Regression}\label{append:nbconsistency_theory}
Here, we first restate our neighborhood recovery guarantee (Theorem 1 in the main paper), and discuss the consequences of this result. 
The neighborhood lasso estimator defined in \eqref{eq:nb_lasso} for each node $a$ is:
\begin{equation}
\widehat{\theta}^{(a)}=\argmin_{\theta\in \mathbb{R}^p}\frac{1}{2}\theta^\top \widetilde{\Sigma}\theta-\widetilde{\Sigma}_{a,:}\theta +\sum_{j=1}^p\lambda^{(a)}_j|\theta_j|,
\end{equation}
\begin{assump}[Mutual incoherence condition]\label{assump:incoh}
	The population covariance $\Sigma^*$ satisfies $\vertiii{\Sigma^*_{(\overline{\mathcal{N}}_a)^c,\mathcal{N}_a}(\Sigma^*_{\mathcal{N}_a,\mathcal{N}_a})^{-1}}_{\infty}\leq 1-\gamma$ for some $0<\gamma\leq 1$.
\end{assump}
\begin{thm}[Neighborhood Selection Consistency]\label{thm:nblasso_support_full}
	Consider the model setting described in Section 1.1 and the estimator $\widehat{\theta}^{(a)}$ defined in~\eqref{eq:nb_lasso}.
	Suppose Assumption \ref{assump:incoh} holds, and the tuning parameters $\lambda^{(a)}_j$'s in \eqref{eq:nb_lasso} satisfy $\lambda^{(a)}_j \asymp \|\Sigma^*\|_{\infty}\frac{\|\Theta^*_{:,a}\|_1}{\Theta^*_{a,a}}\sqrt{\frac{\log p}{\mymin_{k} n_{j,k}}}$. If $\gamma_a\leq (\frac{2-c\gamma}{2-\gamma})^2$ for some constant $c>0$, and
	\begin{equation}\label{eq:nblasso_samplesize_full}
	\min_{j\in \mathcal{N}_a}\min_k n_{j,k}\geq \frac{C_2\|\Sigma^*\|^2_{\infty}\kappa_2^2}{\gamma^{2}}\left[(\kappa_3^2+1)d_a^2+\frac{(\gamma+4)^2(\kappa_1+1)^2}{(\theta^{(a)}_{\min})^2}\right]\log p,
	\end{equation}
	then $\widehat{\mathcal{N}}_a=\{j:\widehat{\theta}^{(a)}_j\neq 0\}=\mathcal{N}_a$ with probability at least $1-p^{-c}$ for some absolute constants $c,C_1,C_2>0$.
\end{thm}
\begin{remark}
	Theorem~\ref{thm:nblasso_support_full} suggests that, even when many pairs of nodes are only measured few times together, as long as the tuning parameters are chosen carefully w.r.t. the pairwise sample sizes, we are still able to recover the neighborhood of node $a$. In particular, we only need to collect sufficient samples for a pair of nodes if at least one node in this pair is a neighbor of $a$. This is not a trivial result, since the estimator~\eqref{eq:nb_lasso} has to perform a selection from its neighbors and a large number of non-neighbors which are not measured well. 
\end{remark}
For ease of presentation, here we have assumed that $n_{j,k}>0, \forall j,k$ so that $\lambda^{(a)}_k$ is well defined. However, with a slight modification of our proof, the $\ell_2$ and $\ell_1$ error bounds in Theorem~\ref{thm:nblasso_support_full} would still hold even if $n_{j,k}=0$ for some $j,k\in \overline{\cN}_a^c$, and if we simply define $\lambda^{(a)}_j \asymp \|\Sigma^*\|_{\infty}\frac{\|\Theta^*_{:,a}\|_1}{\Theta^*_{a,a}}\sqrt{\frac{\log p}{\mymax\{1,\mymin_k n_{j,k}\}}}$.

\begin{remark}[Effect of differing sample sizes]
Theorem \ref{thm:nblasso_support_full} also imposes a constraint on $\gamma_a$, the sample size ratio between the most well measured non-neighbor and the worst measured neighbor of $a$. The is due to that when the sample sizes for the non-neighbors are all much larger than the neighbors (large $\gamma_a$), the neighbors would suffer from much stronger regularization than non-neighbors.
\end{remark}

\begin{remark}[Comparison with the literature]
 When all pairwise sample sizes are all equal, $\gamma_a=1$ and this requirement becomes $\mymin_{j\in \cN_a}\mymin_{k\in [p]}n_{j,k}\geq C(\Sigma^*)d_a^2\log p$, which is similar to the standard sample size requirement for neighborhood lasso in \cite{meinshausen2006high,wainwright2009sharp}, except for an additional factor $d_a$. This additional factor is due to technical challenges brought by non-simultaneous measurements of all variables.
\end{remark}
\begin{thm}\label{thm:nb_lasso_err}
	If the tuning parameters $\lambda_j$'s in the neighborhood regression estimator satisfy $$\lambda^{(a)}_j \asymp \|\Sigma^*\|_{\infty}\frac{\|\Theta^*_{:,a}\|_1}{\Theta^*_{a,a}}\sqrt{\frac{\log p}{\mymin_{k} n_{j,k}}},$$ $$
\mymin_{j\in \cN_a}\mymin_k n_{j,k}\geq \frac{C\|\Sigma^*\|_{\infty}^2(\gamma_a+1)}{\lambda_{\mymin}^2(\Sigma^*)}d_a^2\log p,
$$
then with probability at least $1-Cp^{-c}$,
\begin{align*}
    \|\widehat{\theta}^{(a)}-\theta^{(a)*}\|_2\leq& \frac{C\|\Sigma^*\|_{\infty}}{\lambda_{\mymin}(\Sigma^*)}\frac{\|\Theta^*_{:,a}\|_1}{\Theta^*_{a,a}}\sqrt{\frac{d_a\log p}{\mymin_{j\in \cN_a}\mymin_k n_{j,k}}},\\
\|\widehat{\theta}^{(a)}-\theta^{(a)*}\|_1\leq& \frac{C\|\Sigma^*\|_{\infty}(\sqrt{\gamma_a}+1)}{\lambda_{\mymin}(\Sigma^*)}\frac{\|\Theta^*_{:,a}\|_1}{\Theta^*_{a,a}}d_a\sqrt{\frac{\log p}{\mymin_{j\in \cN_a}\mymin_k n_{j,k}}},\\
\|\widehat{\theta}^{(a)}-\theta^{(a)*}\|_{\lambda^{(a)}_j,1}\leq& \frac{C\|\Sigma^*\|^2_{\infty}}{\lambda_{\mymin}(\Sigma^*)}\frac{\|\Theta^*_{:,a}\|^2_1}{(\Theta^*_{a,a})^2}\frac{d_a\log p}{\mymin_{j\in \cN_a}\mymin_k n_{j,k}}.
\end{align*}
\end{thm}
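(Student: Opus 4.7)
The plan is to follow the standard neighborhood-lasso analysis, adapted to the erose-measurement setting where the regularization parameters $\lambda_j^{(a)}$ vary with $j$. The proof has four ingredients: concentration of $\widetilde{\Sigma}$ around $\Sigma^*$, a basic inequality yielding a weighted cone condition, a restricted strong convexity bound on $\widetilde{\Sigma}$, and combining these to obtain $\ell_2$ and then $\ell_1$ error bounds. First I would establish that, under the Gaussian model with $n_{j,k}$ jointly observed samples for pair $(j,k)$, sub-exponential concentration gives $|\widehat{\Sigma}_{j,k}-\Sigma^*_{j,k}|\le C\|\Sigma^*\|_{\infty}\sqrt{\log p/n_{j,k}}$ simultaneously for all pairs, with probability at least $1-Cp^{-c}$. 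Since $\widetilde{\Sigma}$ minimizes the $\sqrt{n_{j,k}}$-weighted maximum deviation from $\widehat{\Sigma}$ over the PSD cone, and $\Sigma^*$ itself is feasible for~\eqref{eq:proj_Sigma}, the same entrywise bound (up to a constant factor) transfers to $\widetilde{\Sigma}-\Sigma^*$.

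Next I would derive the basic inequality. Let $\Delta=\widehat{\theta}^{(a)}-\theta^{(a)*}$; since $\theta^{(a)*}$ is the population minimizer restricted to $\theta_a=0$, $(\Sigma^*\theta^{(a)*}-\Sigma^*_{a,:})_j=0$ for $j\neq a$. Comparing objective values and rearranging yields
\[
\tfrac{1}{2}\Delta^{\top}\widetilde{\Sigma}\Delta + \sum_{j\neq a} W_j\Delta_j \le \sum_{j}\lambda_j^{(a)}\bigl(|\theta_j^{(a)*}|-|\widehat{\theta}_j^{(a)}|\bigr),
\]
where $W_j=((\widetilde{\Sigma}-\Sigma^*)\theta^{(a)*})_j-(\widetilde{\Sigma}-\Sigma^*)_{a,j}$. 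Using the concentration bound together with $\|\theta^{(a)*}\|_1\le\|\Theta^*_{:,a}\|_1/\Theta^*_{a,a}$ and the support of $\theta^{(a)*}$ being $\mathcal{N}_a$, one checks $|W_j|\le\tfrac{1}{2}\lambda_j^{(a)}$ for every $j$ with the chosen $\lambda_j^{(a)}$. Splitting the right-hand side over $\mathcal{N}_a$ and $\mathcal{N}_a^c$ (on which $\theta^{(a)*}=0$) then produces the weighted cone condition $\sum_{j\notin\mathcal{N}_a}\lambda_j^{(a)}|\Delta_j|\le 3\sum_{j\in\mathcal{N}_a}\lambda_j^{(a)}|\Delta_j|$. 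Because $\lambda_j^{(a)}\propto(\min_k n_{j,k})^{-1/2}$, converting this into an unweighted cone introduces precisely the factor $\sqrt{\gamma_a}$: $\|\Delta_{\mathcal{N}_a^c}\|_1\le 3\sqrt{\gamma_a}\|\Delta_{\mathcal{N}_a}\|_1$, and hence $\|\Delta\|_1\le(1+3\sqrt{\gamma_a})\sqrt{d_a}\|\Delta\|_2$.

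To pass from $\Delta^{\top}\widetilde{\Sigma}\Delta$ to $\|\Delta\|_2^2$, I would split $\Delta^{\top}(\widetilde{\Sigma}-\Sigma^*)\Delta$ into the three blocks indexed by $(\mathcal{N}_a,\mathcal{N}_a)$, $(\mathcal{N}_a,\mathcal{N}_a^c)$, and $(\mathcal{N}_a^c,\mathcal{N}_a^c)$. The first two blocks involve entries with at least one index in $\mathcal{N}_a$, whose sample sizes are lower bounded by $\min_{j\in\mathcal{N}_a}\min_k n_{j,k}$, so their deviations are controlled by step one together with $\|\Delta_{\mathcal{N}_a}\|_1\le\sqrt{d_a}\|\Delta\|_2$ and the cone bound on $\|\Delta_{\mathcal{N}_a^c}\|_1$. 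The third block is the main obstacle because entries of $(\widetilde{\Sigma}-\Sigma^*)_{\mathcal{N}_a^c,\mathcal{N}_a^c}$ may involve very small $n_{j,k}$; here I would use the pointwise bound $\sqrt{\log p/n_{j,k}}\le\sqrt{\log p/\min_l n_{j,l}}$, which is proportional to $\lambda_j^{(a)}$, to factor $\sum_{j,k\in\mathcal{N}_a^c}|(\widetilde{\Sigma}-\Sigma^*)_{j,k}|\,|\Delta_j|\,|\Delta_k|$ as a product of $\sum_{j\in\mathcal{N}_a^c}\lambda_j^{(a)}|\Delta_j|$ and $\|\Delta_{\mathcal{N}_a^c}\|_1$, and then bound both factors via the (weighted and unweighted) cone conditions. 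The upshot is $|\Delta^{\top}(\widetilde{\Sigma}-\Sigma^*)\Delta|\le C\|\Sigma^*\|_{\infty}(1+\sqrt{\gamma_a})\,d_a\,\sqrt{\log p/\min_{j\in\mathcal{N}_a}\min_k n_{j,k}}\,\|\Delta\|_2^2$, which under the stated sample-size hypothesis is at most $\tfrac{1}{2}\lambda_{\min}(\Sigma^*)\|\Delta\|_2^2$, giving restricted strong convexity $\Delta^{\top}\widetilde{\Sigma}\Delta\ge\tfrac{1}{2}\lambda_{\min}(\Sigma^*)\|\Delta\|_2^2$.

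Finally, combining the basic inequality $\Delta^{\top}\widetilde{\Sigma}\Delta\le 3\sum_{j\in\mathcal{N}_a}\lambda_j^{(a)}|\Delta_j|\le 3\max_{j\in\mathcal{N}_a}\lambda_j^{(a)}\sqrt{d_a}\|\Delta\|_2$ with the restricted strong convexity, and plugging in $\max_{j\in\mathcal{N}_a}\lambda_j^{(a)}\asymp\|\Sigma^*\|_{\infty}(\|\Theta^*_{:,a}\|_1/\Theta^*_{a,a})\sqrt{\log p/\min_{j\in\mathcal{N}_a}\min_k n_{j,k}}$, yields the claimed $\ell_2$ bound; the $\ell_1$ bound then follows by substituting this into $\|\Delta\|_1\le(1+3\sqrt{\gamma_a})\sqrt{d_a}\|\Delta\|_2$. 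As flagged, the main technical obstacle is controlling the $(\mathcal{N}_a^c,\mathcal{N}_a^c)$ block in the restricted-eigenvalue step, which is where the erose measurements most seriously depart from the standard (uniform sample size) neighborhood-lasso proof and force the weighted cone condition to be used in a nontrivial way.
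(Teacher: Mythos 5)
Your proposal is correct and follows essentially the same route as the paper's proof: entrywise concentration of $\widehat{\Sigma}$ transferred to $\widetilde{\Sigma}$ via feasibility of $\Sigma^*$ in the projection, the basic inequality giving the weighted cone condition $\sum_{j\in\overline{\cN}_a^c}\lambda_j^{(a)}|\Delta_j|\leq 3\sum_{j\in\cN_a}\lambda_j^{(a)}|\Delta_j|$ (with $\sqrt{\gamma_a}$ appearing when converting to the unweighted cone), restricted strong convexity obtained by bounding $|\Delta^\top(\widetilde{\Sigma}-\Sigma^*)\Delta|$ through the ratio $|(\widetilde{\Sigma}-\Sigma^*)_{j,k}|/\lambda_j^{(a)}$, and the final combination. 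The only cosmetic difference is that you split the perturbation term into three index blocks, whereas the paper bounds it in one shot as $\bigl(\sum_j\lambda_j^{(a)}|\Delta_j|\bigr)\|\Delta\|_1\max_{j,k}|(\widetilde{\Sigma}-\Sigma^*)_{j,k}|/\lambda_j^{(a)}$; both hinge on the same observation that $|(\widetilde{\Sigma}-\Sigma^*)_{j,k}|$ is uniformly controlled by a quantity proportional to $\lambda_j^{(a)}$.
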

One key sample size quantity in Theorem \ref{thm:nb_lasso_err} is $\mymin_{j\in \cN_a}\mymin_kn_{j,k}$, which is illustrated in Figure~\ref{fig:nb_lasso_err_n}. With appropriately chosen regularization parameters $\lambda_{j}^{(a)}$ that reflect the corresponding sample sizes for each node, Theorem \ref{thm:nb_lasso_err} shows that when $\mymin_{j\in \cN_a}\mymin_kn_{j,k}$ is sufficiently large, we can establish the both $\ell_2$ and $\ell_1$ consistency of the neighborhood lasso estimator $\widehat{\theta}^{(a)}$. \emph{The number of node pairs involved is $d_a p$, much smaller than the total number of node pairs $p^2$ when the degree of node $a$ is small.}
\begin{figure}[!htb]
    \centering
    \includegraphics[width=0.3\textwidth]{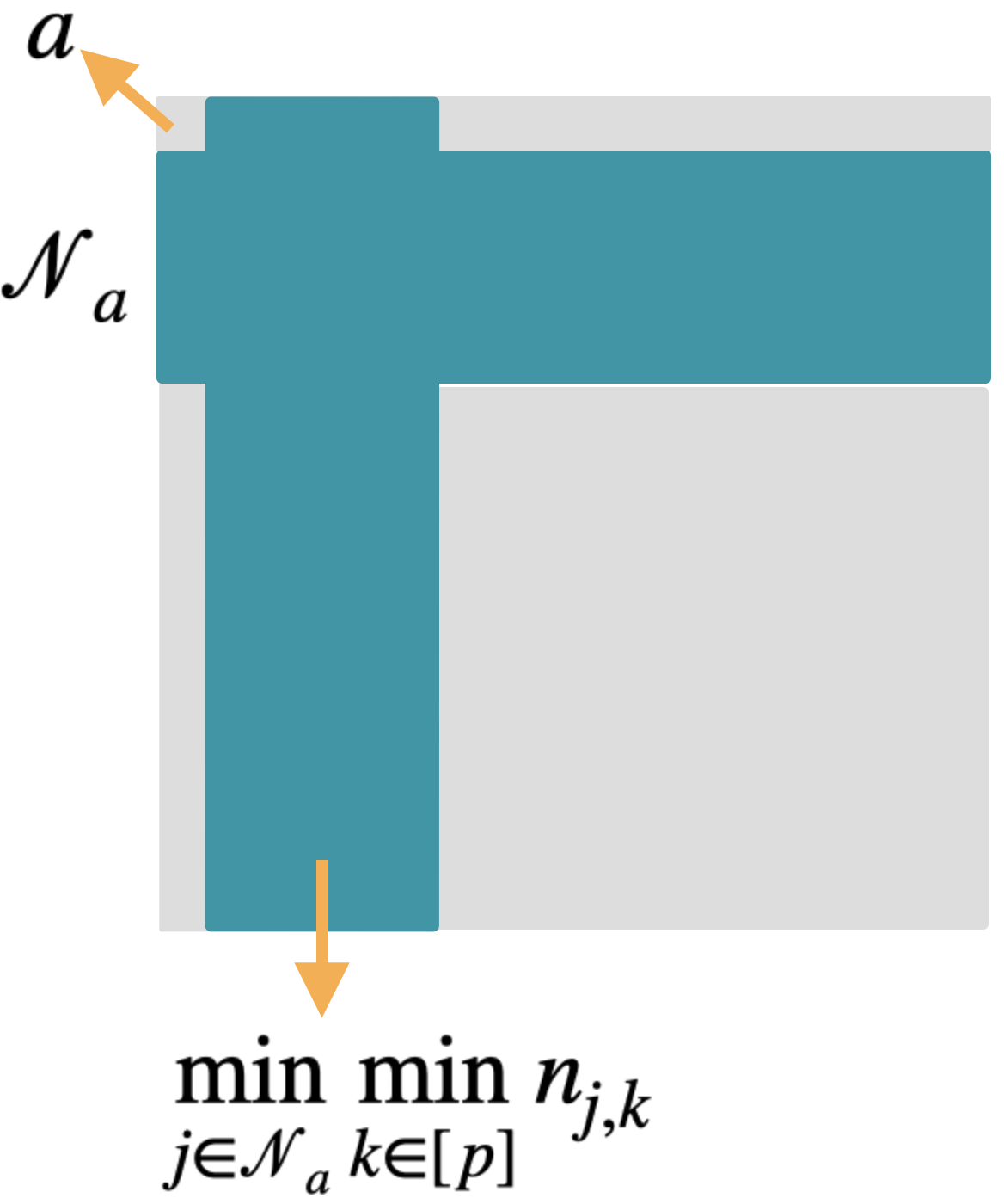}
    \caption{Illustration of node pairs whose pairwise sample sizes play key roles in the estimation error rate $\|\widehat{\theta}^{(a)}-\theta^{(a)*}\|_2$, $\|\widehat{\theta}^{(a)}-\theta^{(a)*}\|_2$.}
    \label{fig:nb_lasso_err_n}
\end{figure}
\section{GI-JOE (FDR) Algorithm}\label{sec:FDR_Alg}
We summarize our full GI-JOE (FDR) procedure in Algorithm \ref{alg:GI_JOE_FDR}. It calls the edge-wise inference algorithm for each node pair $(a,b)$ for $1\leq a<b\leq p$. 
\begin{algorithm}[ht!]
\noindent{\textbf{Input}}: Data set $\{x_{i,V_i}:V_i\subset [p]\}_{i=1}^n$, pairwise sample sizes $\{n_{j,k}\}_{j,k=1}^p$, significance level $\alpha$\\
 \For{$1\leq a<b\leq p$}{Run Algorithm 1 (edge-wise inference) in the main paper for node pair $(a,b)$ and obtain $p$-value $p_{a,b}$}
   Sort the $p$-values of $m:=\frac{p(p-1)}{2}$ pairs as $\boldsymbol{p}_{i_1} \leq \dots\leq \boldsymbol{p}_{i_m}$. \\
  Let $k=m$, $t_p=(2\log m-2\log\log m)^{\frac{1}{2}}$.\\
    \While{$\boldsymbol{p}_{i_k}\geq 2(1-\Phi(t_p))$ and $\boldsymbol{p}_{i_k}> \frac{\alpha k}{m}$}{
        $k=k-1$}
    \If{$\boldsymbol{p}_{i_k}< 2(1-\Phi(t_p))$}{
        $k=\argmax_l \boldsymbol{p}_{i_l} \leq 2(1-\Phi(\sqrt{2\log m})$.
    }
 \textbf{Output}: Set $\widetilde{E}$ of node pairs $(a,b)$ associated with the $k$ smallest $p$-values.
\caption{GI-JOE: FDR control}
 \label{alg:GI_JOE_FDR}
\end{algorithm}
Recall that in the main paper, we have presented a theoretical guarantee for our GI-JOE (FDR) approach, under Assumptions \ref{assump:FDR_n} and \ref{assump:FDR_correlation}.
Here, we specifically give an example for when Assumption \ref{assump:FDR_correlation} holds in the simultaneous measurement setting.
\begin{prop}\label{prop:assump7_simultaneous}
    Consider the special case where all $p$ variables are observed simultaneously with $n$ samples, and let $\Omega^* = \mathrm{diag}(\Theta^*)^{-\frac{1}{2}}\Theta^*\mathrm{diag}(\Theta^*)^{-\frac{1}{2}}$ be the normalized precision matrix. Then Assumption \ref{assump:FDR_correlation} can be implied by the following conditions: there exists $0<\rho_0<1$ such that
    \begin{enumerate}
        \item for $1\leq j\leq p$, $|\{k:|\Omega_{j,k}|>\rho_0\}|\leq C$ for some constant $C$;
        \item the maximum degree of the graph satisfies $d\ll p^{\frac{1-\rho_0}{1+\rho_0}}(\log p)^{\frac{\rho_0}{1+\rho_0}-\frac{1}{4}}(\log\log p)^{-\frac{1}{4}}$.
    \end{enumerate}
\end{prop}
Proposition \ref{prop:assump7_simultaneous} suggests that, for the special setting where all variables are simultaneously observed, Assumption \ref{assump:FDR_correlation} can be directly implied if  (i) each node only has constant number of strongly connected neighbors; (ii) a polynomial of the maximum degree of the graph is much smaller than the total number of variables $p$. Unfortunately, in the more general setting considered in this paper, it is extremely hard to simplify $\rho_n(a,b,a',b')$ due to the complicated structures induced by the erose measurements. However, as we will see in our empirical results, for many of the erose measurement settings and graph types where the sample size assumptions may even be violated, the FDP is still controlled, which may suggest that our Assumption \ref{assump:FDR_correlation} is not overly stringent.
\section{Confidence Intervals for Each Edge}\label{append:CI}
In this section, we extend our GI-JOE edge-wise testing method to construct confidence intervals for each edge. Specifically, we propose confidence intervals both for $\theta^{(a)*}_b = -\frac{\Theta^*_{a,b}}{\Theta^*_{a,a}}$ and $\Theta^*_{a,b}$, where the former is a simple extension of the edge-wise testing algorithm, while the latter requires new method and normal approximation theory. 

For $\theta^{(a)*}_b = -\frac{\Theta^*_{a,b}}{\Theta^*_{a,a}}$, we consider the following confidence interval: 
\begin{equation}\label{eq:CI}
\widehat{\mathbb{C}}_{\alpha}^{a,b}=[\widetilde{\theta}^{(a)}_b-z_{\alpha/2}\widehat{\sigma}_n(a,b),\widetilde{\theta}^{(a)}_b+z_{\alpha/2}\widehat{\sigma}_n(a,b)].
\end{equation} 
The following theorem shows the validity of the confidence interval \eqref{eq:CI}, a direct consequence of Theorem \ref{thm:normal_approx_var_est}.
\begin{thm}[Coverage and Width of Confidence Intervals]\label{thm:coverage}
Under the same assumptions as in Theorem \ref{thm:typeI_power}, the confidence interval $\widehat{\mathbb{C}}$ defined in \eqref{eq:CI} satisfies
$\lim_{n,p\rightarrow \infty}\bP(\theta^{(a)*}_b\in \widehat{\mathbb{C}}_{\alpha}^{a,b})=1-\alpha$, $|\widehat{\mathbb{C}}_{\alpha}^{a,b}|\overset{p}{\rightarrow}2z_{\alpha/2}\sigma_n(a,b)$, where $|\widehat{\mathbb{C}}_{\alpha}^{a,b}|$ is the width of the confidence interval.
\end{thm}

Although the inference target $\theta^{(a)*}_b$ covered by the confidence interval above shares the same non-zero pattern as $\Theta^*_{a,b}$, sometimes it might be of interest to directly construct a confidence interval for $\Theta^*_{a,b}$ instead of $-\frac{\Theta_{a,b}^*}{\Theta_{a,a}^*}$. We first consider the case where $a\neq b$, and discuss our confidence interval for each diagonal element $\Theta^*_{a,a}$ later. To achieve this purpose, here we propose the following test statistic for $\Theta^*_{a,b}$ when $a\neq b$. Noting that $\Theta^*_{a,a}=(\frac{\Theta^*_{a,:}}{\Theta^*_{a,a}}\Sigma^*\frac{\Theta^*_{:,a}}{\Theta^*_{a,a}})^{-1}$, we can estimate $\Theta^*_{a,a}$ by  $\widehat{\Theta}_{a,a}:=(\widehat{\overline{\theta}}^{(a)\top} \widehat{\Sigma}\widehat{\overline{\theta}}^{(a)})^{-1}$, where $\widehat{\Sigma}$ is the entry-wise estimate of the covariance matrix, and $\widehat{\overline{\theta}}^{(a)}\in \bR^p$ serves as an estimate for $\frac{\Theta^*_{:,a}}{\Theta^*_{a,a}}$, satisfying $\widehat{\overline{\theta}}^{(a)}_a = 1$ and $\widehat{\overline{\theta}}^{(a)}_{\backslash a}=-\widehat{\theta}^{(a)}_{\backslash a}$. Then we consider the test statistic 
\begin{equation}\label{eq:test_stat_Theta}
    \widetilde{\Theta}_{a,b}:=-\widehat{\Theta}_{a,a}\widetilde{\theta}^{(a)}_b,
\end{equation} 
with $\widetilde{\theta}^{(a)}_b$ being the debiased neighborhood lasso estimate, and construct a confidence interval centered around it. Similar to the proof of Theorem \ref{thm:nb_lasso_debias_decomp}, some computations can show that the asymptotic variance of $\widetilde{\Theta}_{a,b}$ is $\sigma_{n,\Theta}^2(a,b) = \mathcal{T}^{(n)*}\times_1\Theta^*_{b,:}\times_2 \Theta^*_{a,:}\times_3\Theta^*_{b,:}\times_4 \Theta^*_{a,:}$, and hence we estimate it by 
\begin{equation}\label{eq:var_est_Theta}
    \widehat{\sigma}_{n,\Theta}^2(a,b):=\widehat{\mathcal{T}}^{(n)}\times_1\widehat{\Theta}_{b,:}\times_2 \widehat{\Theta}_{a,:}\times_3\widehat{\Theta}_{b,:}\times_4 \widehat{\Theta}_{a,:},
\end{equation}
where $\widehat{\mathcal{T}}^{(n)}$ is an estimator for $\mathcal{T}^{(n)*}$:
$(\widehat{\mathcal{T}}^{(n)})_{j,k,j',k'}=(\widetilde{\Sigma}_{j,j'}\widetilde{\Sigma}_{k,k'}+\widetilde{\Sigma}_{j,k'}\widetilde{\Sigma}_{k,j'})\frac{n_{j,k,j',k'}}{n_{j,k}n_{j',k'}}$; and $\widehat{\Theta}_{a,:}$ and $\widehat{\Theta}_{b,:}$ are appropriately constructed estimators with several options. 

\paragraph{Choices of $\widehat{\Theta}_{:,a}$ and $\widehat{\Theta}_{:,b}$} For $\widehat{\Theta}_{a,:}$, note that $\widehat{\overline{\theta}}^{(a)}$ defined by $\widehat{\overline{\theta}}^{(a)}_a=1$ and $\widehat{\overline{\theta}}^{(a)}_{\backslash a} = -\widehat{\theta}^{(a)}_{\backslash a}$ can serve as an estimate for $\frac{1}{\Theta^*_{a,a}}\Theta^*_{:,a}$. Hence we can consider $\widehat{\Theta}_{:,a}=\widehat{\overline{\theta}}^{(a)}\widehat{\Theta}_{a,a}$, where $\widehat{\Theta}_{a,a}=(\widehat{\overline{\theta}}^{(a)\top}\widehat{\Sigma}\widehat{\overline{\theta}}^{(a)})^{-1}$ was defined in the beginning of this section. One can also substitute the entry-wise estimate $\widehat{\Sigma}$ in $\widehat{\Theta}_{a,a}$ by $\widetilde{\Sigma}$, the projected positive semi-definite estimate for $\Sigma^*$, and obtain $\widetilde{\Theta}_{a,a}=(\widehat{\overline{\theta}}^{(a)\top}\widetilde{\Sigma}\widehat{\overline{\theta}}^{(a)})^{-1}$. As we will show in Lemma \ref{lem:SampleCov_entry_err}, both $\widehat{\Sigma}$ and $\widetilde{\Sigma}$ satisfy the same entry-wise error bound, and hence using either of the following two estimates
\begin{equation}\label{eq:Theta_a_est}
\widehat{\Theta}_{:,a} = \widehat{\overline{\theta}}^{(a)}\widehat{\Theta}_{a,a},\text{ or }
\widehat{\Theta}_{:,a} = \widehat{\overline{\theta}}^{(a)}\widetilde{\Theta}_{a,a}
\end{equation}
would lead to the same estimation error bound for the variance. While for $\widehat{\Theta}_{:,b}$, one can either construct 
it by another node-wise regression for node $b$ upon the other $p-1$
 nodes, or one can make use of the estimators for $\Theta^{(a)*}_{:,b}$ previously constructed in the debiasing step. For computational efficiency, here we focus on the latter approach. In particular, note that $\Theta^*_{:,b} = \Theta^{(a)*}_{:,b} + \frac{\Theta^*_{a,b}}{\Theta^*_{a,a}}\Theta^*_{:,a}$, and hence one can estimate $\Theta^*_{:,b}$ by
\begin{equation}\label{eq:Theta_b_est}
    \widehat{\Theta}_{:,b} = \widehat{\Theta}_{:,b}^{(1)}+\widehat{\Theta}_{:,b}^{(2)},
\end{equation}
where $\widehat{\Theta}_{:,b}^{(1)}$ is an estimate for $\Theta^{(a)*}_{:,b}$ and $\widehat{\Theta}_{:,b}^{(2)}$ is an estimate for $\frac{\Theta^*_{a,b}}{\Theta^*_{a,a}}\Theta^*_{:,a}$. $\widehat{\Theta}_{:,b}^{(1)}$ can be set as $\widehat{\Theta}^{(a)}_{b,:}$ or $\widetilde{\Theta}^{(a)}_{b,:}$ previously defined in Section 3.1 of the main paper:
\begin{equation*}
\begin{split}
\widehat{\Theta}^{(a)}_{b,b} = &(\widetilde{\Sigma}_{b,:}\widehat{\overline{\theta}}^{(a,b)})^{-1},\, \widehat{\Theta}^{(a)}_{b,:} = \widehat{\Theta}^{(a)}_{b,b}\widehat{\overline{\theta}}^{(a,b)},\\
\widetilde{\Theta}^{(a)}_{b,b} = &(\widehat{\overline{\theta}}^{(a,b)\top}\widetilde{\Sigma}\widehat{\overline{\theta}}^{(a,b)})^{-1},\, \widetilde{\Theta}^{(a)}_{b,:} = \widetilde{\Theta}^{(a)}_{b,b}\widehat{\overline{\theta}}^{(a,b)},
\end{split}
\end{equation*}
where $\widehat{\overline{\theta}}^{(a,b)}$ was based on the node-wise regression of node $b$ upon $[p]\backslash \{a, b\}$, defined in the main paper. One can also substitute $\widetilde{\Sigma}$ by $\widehat{\Sigma}$.
While for estimating $\frac{\Theta^*_{a,b}}{\Theta^*_{a,a}}\Theta^*_{:,a}$, since $\widehat{\overline{\theta}}^{(a)}$ is an estimate for $\frac{1}{\Theta^*_{a,a}}\Theta_{:,a}$, $\widehat{\theta}^{(a)}_b$ is an estimate for $-\frac{\Theta^*_{a,b}}{\Theta^*_{a,a}}$, $\widehat{\Theta}_{:,b}^{(2)}$ can be set as
\begin{equation}
    \widehat{\Theta}_{:,b}^{(2)} =  -\widehat{\Theta}_{a,a}\widehat{\theta}^{(a)}_b\widehat{\overline{\theta}}^{(a)},\text{ or }\widehat{\Theta}_{:,b}^{(2)} =  -\widetilde{\Theta}_{a,a}\widehat{\theta}^{(a)}_b\widehat{\overline{\theta}}^{(a)}.
\end{equation}
Similar to estimating $\Theta^*_{:,a}$, these options above for estimating $\Theta^*_{:,b}$ use either $\widehat{\Sigma}$ or $\widetilde{\Sigma}$, and can satisfy the same statistical error bounds due to the same entry-wise error guarantee of $\widehat{\Sigma}$ and $\widetilde{\Sigma}$ in Lemma \ref{lem:SampleCov_entry_err}.

Then when $a\neq b$, we construct our confidence interval for $\Theta^*_{a,b}$ by \begin{equation}\label{eq:CI_Theta}
    \widehat{\mathbb{C}}_{\Theta,\alpha}^{a,b}=[\widetilde{\Theta}_{a,b}-z_{\alpha/2}\widehat{\sigma}_{n,\Theta}(a,b),\widetilde{\Theta}_{a,b}+z_{\alpha/2}\widehat{\sigma}_{n,\Theta}(a,b)].
\end{equation}
We can also provide inference for each diagonal element $\Theta^*_{a,a}$ based on test statistic $\widehat{\Theta}_{a,a}$. Let $\widehat{\sigma}_{n,\Theta}(a,a) = \widehat{\mathcal{T}}^{(n)}\times_1\widehat{\Theta}_{:,a}\times_2\widehat{\Theta}_{:,a}\times_3\widehat{\Theta}_{:,a}\times_4\widehat{\Theta}_{:,a}$. Then we can construct the confidence interval for $\widehat{\Theta}_{a,a}$ as follows:
\begin{equation}\label{eq:CI_Theta_aa}
    \widehat{\mathbb{C}}_{\Theta,\alpha}^{a,a} = [\widehat{\Theta}_{a,a}-z_{\alpha/2}\widehat{\sigma}_{n,\Theta}(a,a), \widehat{\Theta}_{a,a}+z_{\alpha/2}\widehat{\sigma}_{n,\Theta}(a,a)].
\end{equation}
Note here that all extra computations needed for the inference of $\Theta^*_{a,b}$ are elementary and basically free. 

In the following, we present theoretical guarantees to show the asymptotic validity of these confidence intervals under similar conditions to Assumptions \ref{assump:inference_n_B}-\ref{assump:var_est}. Define $S_1'(a,b) = S_1(a,b)\cup \{(a,a)\}$ for $a\neq b$ and $S_1'(a,a)=\{(j,k): j\text{ or }k\in \overline{\mathcal{N}}_a\}$;  $S_2'(a,b)= (\overline{\mathcal{N}}_a\times \overline{\mathcal{N}}_b)\cup (\overline{\mathcal{N}}_b\times \overline{\mathcal{N}}_a)$ for $a\neq b$ and $S_2'(a,a)=\overline{\mathcal{N}}_a\times \overline{\mathcal{N}}_a$. Let $n_1'(a,b) = \min_{(j,k)\in S_1'(a,b)}n_{j,k}$ and $n_2'^{(a,b)} = \min_{(j,k)\in S_2'(a,b)}n_{j,k}$ for arbitrary $a,\,b\in [p]$. We also define $\gamma_a^{(a)}=\gamma_a$, and key constant terms
\begin{equation}\label{eq:constant_Theta_inference}
\begin{split}
    C'(\Theta^*;a,b)=\begin{cases}\frac{C\kappa_{\Theta^*}^2\|\Theta^*_{:,a}\|_1^2(\|\Theta^{(a)*}_{:,b}\|_1+|\Theta^*_{a,b}|)}{\lambda_{\min}(\Theta^*)\min_{(j,k)\in S_2'(a,b)}|\Theta^*_{a,j}\Theta^*_{b,k}+\Theta^*_{a,k}\Theta^*_{b,j}|},&a\neq b;\\
    \frac{C\kappa_{\Theta^*}^2\|\Theta^*_{:,a}\|_1^2}{\min_{j\in \overline{\mathcal{N}}_a}|\Theta^*_{j,a}|^2}, & a=b.
    \end{cases}
\end{split}
\end{equation}
\begin{assump}\label{assump:inference_n_B2}
        \begin{align*}
	    n_1'^{(a,b)} = \min_{(j,k)\in S_1'(a,b)}n_{j,k}\geq C\frac{\|\Sigma^*\|_{\infty}^2}{\lambda_{\mymin}^2(\Sigma^*)}(\kappa_{\Sigma^*}^4+\gamma_a+\gamma_b^{(a)})(d_a+d_b+1)^2\log p.
	\end{align*} 
\end{assump}
\begin{remark}[$n_1^{(a,b)} = n_1'^{(a,b)}$]
    Note that since $n_{a,a} = \sum_i \ind{a\in V_i} \geq \sum_i\ind{a,j\in V_i}=n_{a,j}$ for all $j$, we immediately have that $n_{a,a}\geq \min_{j\in \mathcal{N}_a\cup \mathcal{N}_b^{(a)}}n_{a,j}=\min_{(j,k)\in S_1(a,b)}n_{j,k}$. Therefore, $n_1'^{(a,b)} = n_1^{(a,b)}$ always holds, and it is equivalent to change $n_1'^{(a,b)}$ in Assumption \ref{assump:inference_n_B2}, \ref{assump:inference_n_BE2}, and \ref{assump:var_est2} to $n_1^{(a,b)}$.
\end{remark}
\begin{assump}\label{assump:inference_n_E2}
    $C_{\varepsilon}(\Sigma^*)(d_a+d_b+1)^{2+\varepsilon}=o(n_2'^{(a,b)})$ for some constant $\varepsilon>0$, where $C_{\varepsilon}(\Sigma^*)=\left(\frac{C(1+2/\epsilon)\|\Sigma^*\|_{\infty}}{\lambda_{\mymin}(\Sigma^*)}\right)^{2+\epsilon}$.
\end{assump}
\begin{assump}\label{assump:inference_n_BE2}
    \begin{equation}\label{eq:inference_n_BE2}
        n_1'^{(a,b)} \gg C'^2(\Theta^*;a,b)(\kappa_{\Theta^*}^4+\gamma_a+\gamma^{(a)}_b)[(d_a+d_b+1)\log p]^2\frac{n_2'^{(a,b)}}{n_1'^{(a,b)}},
    \end{equation}
    where $C'(\Theta^*;a,b)$ is defined in \eqref{eq:constant_Theta_inference}. 
\end{assump}
\begin{assump}\label{assump:var_est2}
    $$n_1'^{(a,b)} \gg C'^4(\Theta^*;a,b)(d_a+d_b+1)^2\log p\left(\frac{n_2'^{(a,b)}}{n_1'^{(a,b)}}\right)^2,$$
    where $C'(\Theta^*;a,b)$ is defined in \eqref{eq:constant_Theta_inference}. 
\end{assump}
\begin{thm}\label{thm:coverage_Theta}
Consider the inference for $\Theta^*_{a,b}$ for any nodes $a,\,b\in [p]$. Suppose we construct the confidence interval $\widehat{\mathbb{C}}_{\Theta,\alpha}^{a,b}$ as in \eqref{eq:CI_Theta} if $a\neq b$, and as in \eqref{eq:CI_Theta_aa} if $a=b$. Then with appropriately chosen tuning parameters as specified in Theorem \ref{thm:nb_lasso_debias_decomp}, if Assumptions \ref{assump:inference_n_B2}-\ref{assump:var_est2} hold, the confidence interval $\widehat{\mathbb{C}}_{\Theta,\alpha}^{a,b}$ satisfies $\lim_{n,p\rightarrow \infty}\mathbb{P}(\Theta^*_{a,b}\in \widehat{\mathbb{C}}_{\Theta,\alpha}^{a,b})=1-\alpha$.
\end{thm}
\noindent The detailed assumptions and proof of Theorem \ref{thm:coverage_Theta} can be found in Section \ref{sec:proof_CLT_Theta}.
Assumptions \ref{assump:inference_n_B2}-\ref{assump:var_est2} are very similar to Assumptions \ref{assump:inference_n_B}-\ref{assump:var_est}, the sample size conditions required for the valid inference of $\frac{\Theta^*_{a,b}}{\Theta^*_{a,a}}$. One main difference is that Assumptions \ref{assump:inference_n_E2}-\ref{assump:var_est2} consider a slightly different sample size quantity $n_2'^{(a,b)}$, which by Proposition \ref{prop:indexsets} is larger than $n_2^{(a,b)}$. Therefore, under similar assumptions and without additional computations, our procedure can also lead to an asymptotically valid confidence interval for $\Theta^*_{a,b}$.

Now we present some simulation results that validate our confidence intervals for the entries of the precision matrix. In particular, for $\widehat{\sigma}_{n,\Theta}^2(a,b)$ defined in \eqref{eq:var_est_Theta}, we choose $\widehat{\Theta}_{:,a} = \widehat{\overline{\theta}}^{(a)}\widehat{\Theta}_{a,a}$, $\widehat{\Theta}_{:,b}$ as in \eqref{eq:Theta_b_est} with $\widehat{\Theta}_{:,b}^{(1)} = (\widehat{\overline{\theta}}^{(a,b)\top}\widehat{\Sigma}\widehat{\overline{\theta}}^{(a,b)})^{-1}\widehat{\overline{\theta}}^{(a,b)}$, and $\widehat{\Theta}_{:,b}^{(2)} = -\widehat{\Theta}_{a,a}\widehat{\theta}^{(a)}_b\widehat{\overline{\theta}}^{(a)}$. That is, we use $\widehat{\Sigma}$ instead of $\widetilde{\Sigma}$ throughout for our variance estimation.
We follow the same pairwise measurement set-up as in Section 5.1 of the main paper, and chain graphs, multi-star graphs with three stars, and Erd\H{o}s–R\'{e}nyi graphs are considered. Since the minimum pairwise sample size conditions in Assumptions \ref{assump:inference_n_B2}-\ref{assump:var_est2} imposed on edge sets $S_1'(a,b)$ and $S_2'(a,b)$ can be implied by or equivalent to the same sample size conditions on $S_1(a,b)$ and $S_2(a,b)$, here we still set $n_{j,k}=n1$ if $(j,k)\in S_1(a,b)\backslash S_2(a,b)$, and set $n_{j,k}=n2$ if $(j,k)\in S_2(a,b)$. We provide confidence intervals for $\Theta^*_{2,4}=0$ and $\Theta^*_{2,3}\neq 0$ with different signal strength. The number of nodes is fixed as $p=200$, and $n2/n1=1.5$. Figure \ref{fig:coverage} presents the coverage result of our confidence interval over $1000$ replicates, where the error bars represent $95\%$ confidence interval for the coverage rate. The dashed line is the target coverage, and we can see that the coverage of our confidence interval is close to the target rate $95\%$.
\begin{figure}
    \centering
\includegraphics[height = 5.5cm]{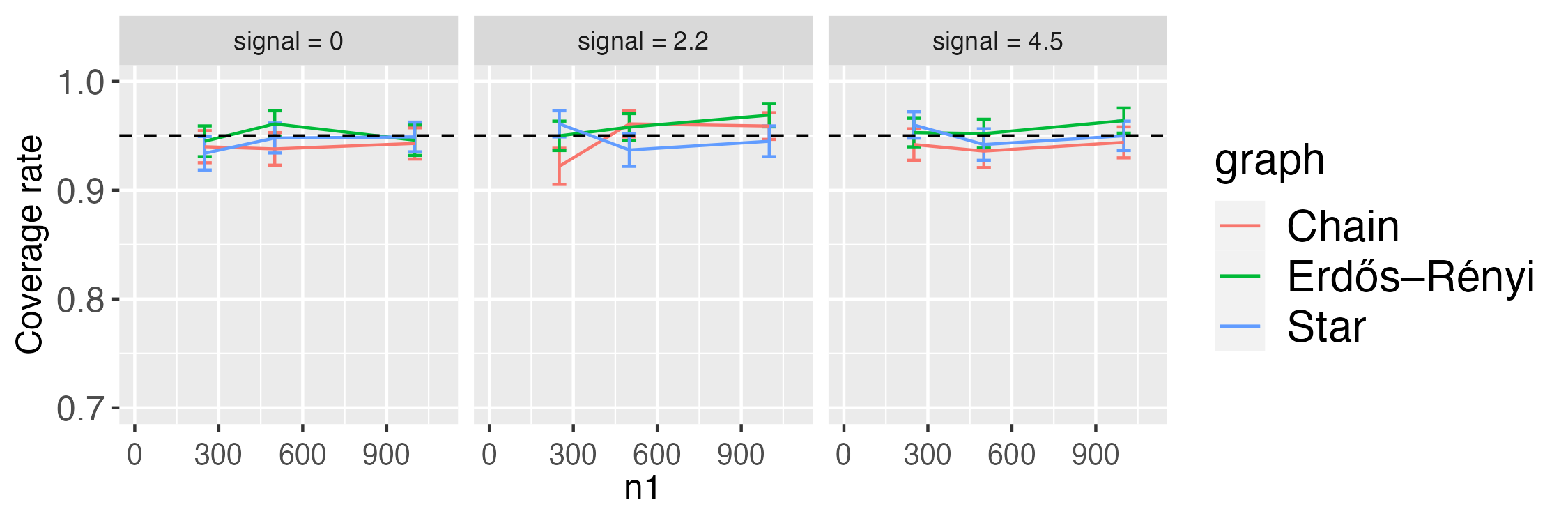}
\caption{Average coverage of the proposed confidence intervals for $\Theta^*_{a,b}$ over $1000$ replicates. Error bars represent $95\%$ confidence interval for the coverage rate, and the dashed line is the target coverage. Each panel considers one signal strength $\sqrt{n_2}\Theta^*_{a,b}=0, 2.2, 4.5$.}
    \label{fig:coverage}
\end{figure}
\section{Numerical Details}\label{append:NumericDetails}
\subsection{Tuning Parameter Selection in the Simulations on Edge-wise Inference}
Our estimators $\widehat{\theta}^{(a)}$ and $\widehat{\Theta}_{b,:}^{(a)}$ depend on tuning parameters $\lambda^{(a)}_j$ and $\lambda^{(a,b)}_j$ for $1\leq j\leq p$. Based on the theoretical scaling, we set $\lambda^{(a)}_j=C_1\sqrt{\frac{\log p}{\min_kn_{j,k}}}$, and $\lambda^{(a,b)}_j=C_2\sqrt{\frac{\log p}{\min_kn_{j,k}}}$, where $C_1,\,C_2$ are chosen based on neighborhood selection stability \citep{liu2010stability} over $20$ random subsamples for each corresponding neighborhood, with the stability threshold set as $0.05$, as suggested by \cite{liu2010stability}. For each random subsample, each sample is included with probability $0.8$ independently. The candidates for the tuning constant $C_1$ and $C_2$ are set as a geometric sequence of length $20$, spaced between $[C_{1,\max}/10, C_{1,\max}]$ and $[C_{2,\max}/10, C_{2,\max}]$. $C_{1,\max},\,C_{2,\max}$ are chosen as the regularization scaling that leads to a zero neighborhood regression estimate. For each experimental setting, 200 or 1000 independent replicates are run, but we only perform stability selection for one replicate and use the same tuning parameter for all replicates. 
\subsection{Implementation Details in Graph Selection Comparison}
\paragraph{Estimation Methods:} For baseline plug-in type methods, we directly plug in the covariance matrix estimate $\widetilde{\Sigma}$ (the positive semi-definite matrix solved by the ADMM algorithm) into the graphical lasso, neighborhood lasso, and CLIME algorithms to estimate the graph structure. This is slightly different from the original plug-in type methods where unbiased estimate $\widehat{\Sigma}$ is in use. We choose $\widetilde{\Sigma}$ to ensure convexity or algorithmic stability. We choose the tuning parameters for all methods using stability selection. $20$ random subsamplings are used for Nlasso, Glasso, Nlasso-JOE, and $10$ random subsamplings are used for CLIME to save computational time. For each random subsampling, each sample is included with probability $0.8$, and the stability threshold is $0.05$. 

\paragraph{Implementation Details of Inference Methods:}
The inference methods include GI-JOE (Holm), GI-JOE (FDR), DB-Glasso (Holm), and DB-Glasso (FDR). For GI-JOE (Holm) and GI-JOE (FDR), the tuning parameters in the neighborhood regression problems are set as $\lambda_j^{(a)}, \lambda_j^{(a,b)} = C\sqrt{\frac{\log p}{\min_{k}n_{j,k}}}$, and the constant $C$ is the same over the whole graph. We choose $C$ by stability selection on the graph, with $20$ random subsampling and stability threshold $0.05$. Same as the tuning procedure for edge-wise simulations, for each random subsampling, each sample is included with probability $0.8$ independently. The candidate set for $C$ is also a geometric sequence spaced between $[C_{\max}/10,C_{\max}]$, where $C_{\max}$ is the smallest constant that leads to empty graph estimate. this data-drive tuning procedure was done for each random replicate in the experiments. For DB-Glasso (Holm) and DB-Glasso (FDR), we first plug in $\widetilde{\Sigma}$ into graphical lasso to obtain an estimate $\widehat{\Theta}$ of the precision matrix; then we compute the debiased graphical lasso statistic:
$\widehat{T}=2\widehat{\Theta} -\widehat{\Theta}\widehat{\Sigma}\widehat{\Theta}$.
The variance of each edge $(j,k)$ is estimated by $\frac{1}{\min_{j,k}n_{j,k}}\widehat{\sigma}_{j,k}^2=\widehat{\Theta}_{j,j}\widehat{\Theta}_{k,k}+\widehat{\Theta}_{j,k}^2$, where we plug in the minimum pairwise sample size. Then we normalize the edge statistic to $\frac{\sqrt{\min_{j,k}n_{j,k}}\widehat{T}_{j,k}}{\widehat{\sigma}_{j,k}}$ and compute its $p$-value $p_{j,k}=2(1-\Phi(\frac{\sqrt{\min_{j,k}n_{j,k}}|\widehat{T}_{j,k}|}{\widehat{\sigma}_{j,k}}))$. Then we add Holm's correction and FDR control procedure on top of these edge-wise $p$-values, similar to what GI-JOE (Holm) and GI-JOE (FDR). 
\subsection{Testing against a Threshold}
Suppose we would like to test the hypothesis $H_{0,(a,b)}: |\frac{\Theta^{*}_{a,b}}{\Theta^*_{a,a}}|\leq \epsilon$ for some positive value $\epsilon$. We propose to find the $p$-value for $H_{0,(a,b)}$ as follows: 
$p_{a,b} = \min\{1,2(1-\Phi(\frac{|\widetilde{\theta}^{(a)}_b|-\epsilon}{\widehat{\sigma}_n(a,b)}))\}$. Now we show that the validity of this $p$-value is directly implied by our current theoretical results. Note that under $H_{0,(a,b)}$, for any $t>0$, 
\begin{align*}
    &\mathbb{P}(\frac{|\widetilde{\theta}^{(a)}_b|-\epsilon}{\widehat{\sigma}_n(a,b)}>t)\\
    \leq&\mathbb{P}(\frac{|\widetilde{\theta}^{(a)}_b|-|\frac{\Theta^{*}_{a,b}}{\Theta^*_{a,a}}|}{\widehat{\sigma}_n(a,b)}>t)\\
    \leq &\mathbb{P}(\frac{|\widetilde{\theta}^{(a)}_b+\frac{\Theta^{*}_{a,b}}{\Theta^*_{a,a}}|}{\widehat{\sigma}_n(a,b)}>t)\rightarrow 2(1-\Phi(t)),
\end{align*}
where the last line is due to Theorem 3 in the main paper. After getting the $p$-values for all node pairs, we can further apply the Holm's correction and the FDR control procedure upon these $\frac{p(p-1)}{2}$ $p$-values.

\section{Comparison between Different Variance Estimates for Edge-wise Inference}\label{append:varest_comp}
As mentioned in Section 3.1 of the main paper, there are different potential estimates for $\Theta^{(a)*}_{b,:}$, including $\widehat{\Theta}^{(a)}_{b,:}$ and $\widetilde{\Theta}^{(a)}_{b,:}$ defined as follows:
\begin{equation*}
\begin{split}
\widehat{\Theta}^{(a)}_{b,b} = &(\widetilde{\Sigma}_{b,:}\widehat{\overline{\theta}}^{(a,b)})^{-1},\, \widehat{\Theta}^{(a)}_{b,:} = \widehat{\Theta}^{(a)}_{b,b}\widehat{\overline{\theta}}^{(a,b)},\\
\widetilde{\Theta}^{(a)}_{b,b} = &(\widehat{\overline{\theta}}^{(a,b)\top}\widetilde{\Sigma}\widehat{\overline{\theta}}^{(a,b)})^{-1},\, \widetilde{\Theta}^{(a)}_{b,:} = \widetilde{\Theta}^{(a)}_{b,b}\widehat{\overline{\theta}}^{(a,b)}.
\end{split}
\end{equation*}
One may estimate the variance of the edge-wise statistic by
\begin{equation}\label{eq:var_est1}
    \widehat{\sigma}_n^2(a,b)=\widehat{\mathcal{T}}_n\times_1\widehat{\Theta}^{(a)}_{b,:}\times_2\widehat{\overline{\theta}}^{(a)}\times_3\widehat{\Theta}^{(a)}_{b,:}\times_4\widehat{\overline{\theta}}^{(a)},
\end{equation}
or 
\begin{equation}\label{eq:var_est2}
    \widehat{\sigma}_n^2(a,b)=\widehat{\mathcal{T}}_n\times_1\widetilde{\Theta}^{(a)}_{b,:}\times_2\widehat{\overline{\theta}}^{(a)}\times_3\widetilde{\Theta}^{(a)}_{b,:}\times_4\widehat{\overline{\theta}}^{(a)}.
\end{equation}
Both estimators satisfy the same theoretical error bounds, but we specifically recommend practitioners to use \eqref{eq:var_est2}, since it controls the Type I error rate better. In particular, we report the variance estimation results of both \eqref{eq:var_est1} and \eqref{eq:var_est2} in our edge-wise inference simulations in Figure \ref{fig:var_est_comparison}. We observe that both estimates tend to underestimate the true variance, while \eqref{eq:var_est2} is slightly better than \eqref{eq:var_est1}. As a consequence, constructing the test statistic with variance estimate in \eqref{eq:var_est2} leads to better Type I error control, as demonstrated in Figure \ref{fig:type1_err_comparison}. We have also empirically investigated using $\widetilde{\Theta}^{(a)}_{b,:}$ for the debiasing step, while it turns out that the combination that uses $\widehat{\Theta}^{(a)}_{b,:}$ for debiasing and $\widetilde{\Theta}^{(a)}_{b,:}$ for variance estimate results in the best type I error control.
\begin{figure}
    \centering
    \includegraphics[height = 5.5cm]{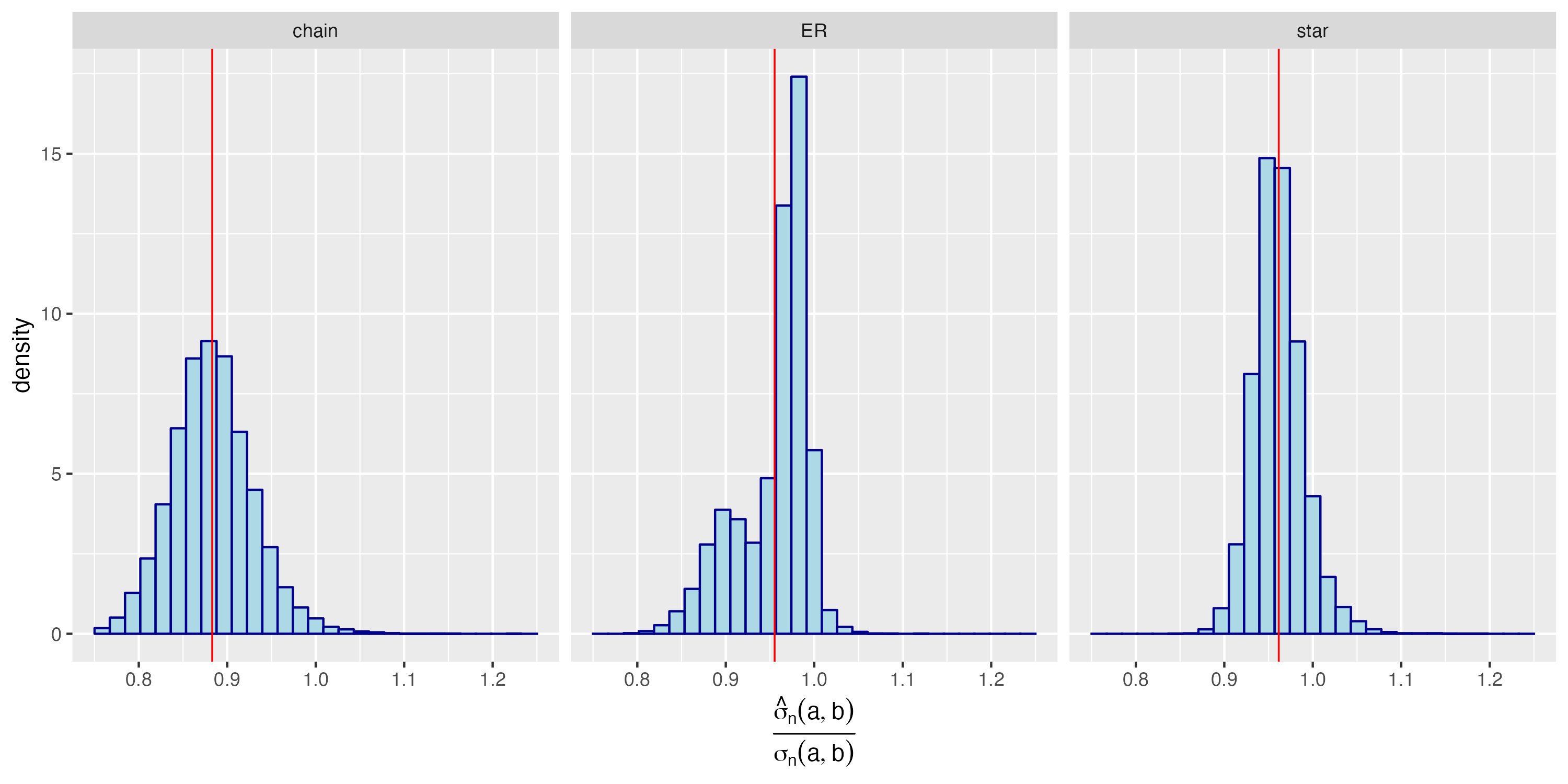}
    \includegraphics[height = 5.5cm]{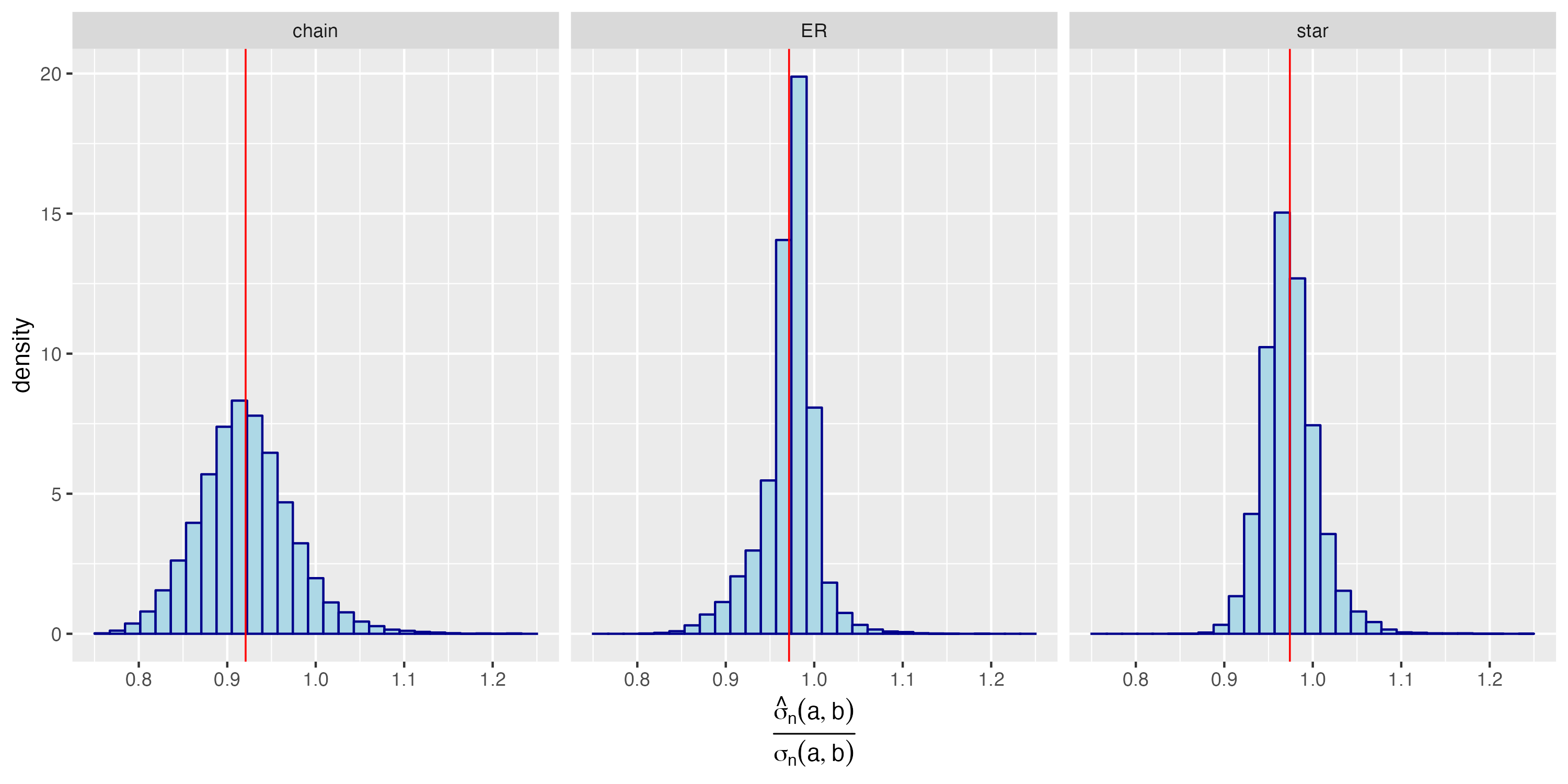}
\caption{Variance estimation results of $1000$ replicates in the edge-wise inference simulations when $\Theta^*_{a,b}=0$. The top figure presents $\widehat{\sigma}_n(a,b)/\sigma_n(a,b)$ when $\widehat{\sigma}_n(a,b)$ is chosen as in \eqref{eq:var_est1}, while the bottom figure presents the ratio when $\widehat{\sigma}_n(a,b)$ is as in \eqref{eq:var_est2}. The red line represents the mean of $\widehat{\sigma}_n(a,b)/\sigma_n(a,b)$ over $1000$ replicates. We can see that both variance estimates tend to be smaller than the true variance ($\widehat{\sigma}_n(a,b)/\sigma_n(a,b)<1$ most of the time), choosing $\widehat{\sigma}_n(a,b)$ as in \eqref{eq:var_est2} leads to slightly better estimate, especially for chain graphs.}
    \label{fig:var_est_comparison}
\end{figure}
\begin{figure}
    \centering
    \includegraphics[height = 5.5cm]{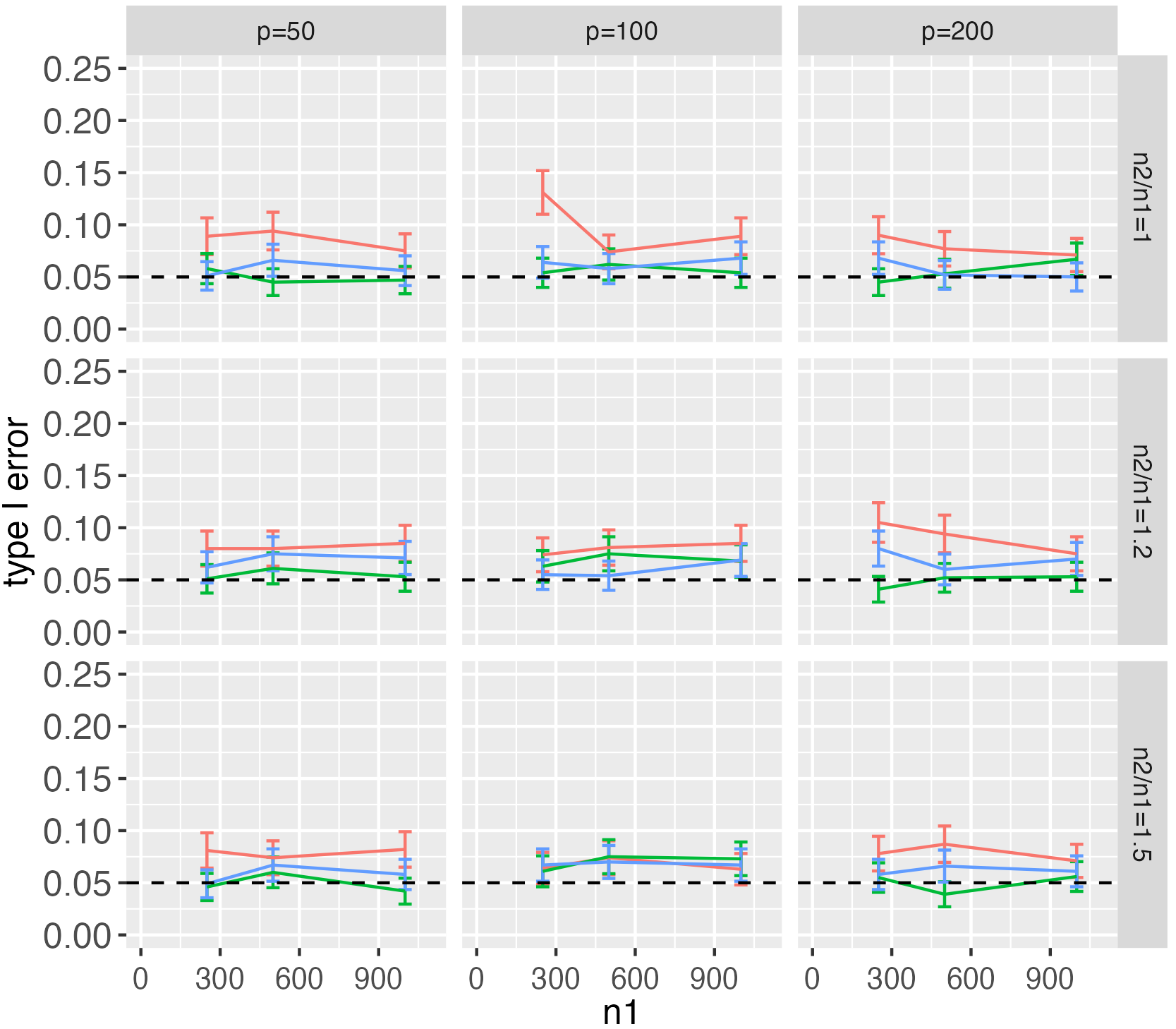}
    \includegraphics[height = 5.5cm]{fig/type1_err_varnewpsd.png}
    \caption{Type I error rates in edge-wise inference simulations. The left figure is when the test statistics is constructed with the variance estimate in \eqref{eq:var_est1}, and the right figure is using \eqref{eq:var_est2}. Hence we recommend the practitioners to use \eqref{eq:var_est2} for better type I error control.}
    \label{fig:type1_err_comparison}
\end{figure}

\section{Additional Empirical Results}\label{append:empiricalResults}
In this section, we present some additional empirical results, including the detailed true positive rates (TPR), true negative rates (TNR), true discovery rate (TDR) and F1 scores in the graph selection studies (Table \ref{tab:sim2_summary_chain_mssc1}-\ref{tab:scRNA_sim_summary_small-world}). Specifically, TPR is the ratio of selected true edges and total number of true edges; TNR is the ratio of unselected nonedges and total number of nonedges; TDR is the ratio of selected true edges and selected edges. In summary, estimation methods tend to be much more liberal (higher TPR but lower TDR) while inference methods are more conservative (higher TDR but lower TPR). The debiased graphical lasso methods with minimum sample size are often too conservative and select no edge at all, hence their TDR are sometimes NA.
\begin{table}[!ht]
     \centering
     \scalebox{0.6}{
     \begin{tabular}{|c|c|c|c|c|c|c|c|c|}
     \hline
     \multirow{2}{*}{Method}&\multicolumn{4}{|c|}{$n=600$}&\multicolumn{4}{|c|}{$n=800$}\\
     \cline{2-9}
        & TPR & TNR & TDR & F1 score  & TPR & TNR & TDR & F1 score  \\
        \hline
       Nlasso (AND) &0.977(0.012)& 0.984(0.016) & 0.475(0.172) & 0.619(0.181) & 0.991(0.007) & 0.991(0.001) & 0.537(0.019) & 0.697(0.016) \\ 
       \hline
       Nlasso (OR) &0.986(0.009) & 0.973(0.011) & 0.289(0.064) & 0.443(0.082) & 0.996(0.004) & 0.975(0.001) & 0.285(0.009) & 0.443(0.011) \\ 
       \hline
       Glasso &0.993(0.005) & 0.966(0.001) & 0.228(0.008) & 0.371(0.010) & 0.999(0.002) & 0.963(0.001) & 0.214(0.005) & 0.352(0.006)\\
       \hline
       CLIME &0.974(0.010) & 0.971(0.003) & 0.256(0.029) & 0.405(0.035) & 0.991(0.005) & 0.957(0.002) & 0.189(0.008) & 0.317(0.011)\\
       \hline
       Nlasso-JOE (AND) &0.961(0.021) & 0.989(0.005) & 0.499(0.174) & \textbf{0.641}(0.118) & 0.966(0.015) & 0.998(0.002) & 0.867(0.082) & \textbf{0.911}(0.053)\\ 
        \hline
        Nlasso-JOE (OR) &0.979(0.013) & 0.929(0.0270 & 0.170(0.152) & 0.267(0.177)& 0.982(0.012) & 0.986(0.012) & 0.453(0.077) & 0.615(0.091)\\
       \hline
       \hline
       DB-Glasso (Holm) &0.007(0.004) & 1.000(0.000) & NA & 0.013(0.008) & 0.028(0.011) & 1.000(0.000) & 1.000(0.000) & 0.054(0.021)\\
       \hline
       DB-Glasso (FDR)& 0.009(0.005) & 1.000(0.000) & NA & 0.017(0.010) & 0.040(0.011) & 1.000(0.000) & 1.000(0.000) & 0.078(0.021)\\
       \hline
       \textbf{GI-JOE (Holm)}&0.638(0.018) & 1.000(1.13e-05) & 1.000(0.002) & 0.779(0.013) & 0.680(0.009) & 1.000(0.000) & 1.000(0.000) & 0.810(0.006) \\ 
        \hline
        \textbf{GI-JOE (FDR)} &0.687(0.022) & 1.000(3.47e-05) & 0.997(0.005) & \textbf{0.813}(0.015) & 0.765(0.020) & 1.000(4.66e-05) & 0.996(0.006) & \textbf{0.865}(0.012)\\ 
        \hline
     \end{tabular}}
     \caption{Graph selection results on chain graph, measurement scenario 1. The highest F1 scores among estimation methods and among inference methods are both in bold.}
     \label{tab:sim2_summary_chain_mssc1}
 \end{table}

 \begin{table}[!ht]
     \centering
     \scalebox{0.6}{
     \begin{tabular}{|c|c|c|c|c|c|c|c|c|}
     \hline
     \multirow{2}{*}{Method}&\multicolumn{4}{|c|}{$n=20000$}&\multicolumn{4}{|c|}{$n=30000$}\\
     \cline{2-9}
        & TPR & TNR & TDR & F1 score  & TPR & TNR & TDR & F1 score  \\
        \hline
       Nlasso (AND) &0.997(0.004) & 0.971(0.001) & 0.258(0.010) & 0.410(0.012) & 0.999(0.002) & 0.961(0.002) & 0.205(0.007) & 0.340(0.010) \\ 
       \hline
       Nlasso (OR) &0.997(0.004) & 0.949(0.002) & 0.164(0.004) & 0.281(0.006) & 1.000(0.001) & 0.956(0.002) & 0.185(0.006) & 0.312(0.008) \\ 
       \hline
       Glasso & 0.999(0.002) & 0.954(0.002) & 0.180(0.005) & 0.305(0.007) & 0.999(0.002) & 0.969(0.019) & 0.328(0.194) & 0.466(0.207)\\
       \hline
       CLIME &0.996(0.004) & 0.947(0.001) & 0.158(0.003) & 0.273(0.005) & 0.999(0.002) & 0.975(0.001) & 0.283(0.008) & 0.441(0.010)\\
       \hline
       Nlasso-JOE (AND) &0.987(0.010) & 0.988(0.009) & 0.562(0.255) & \textbf{0.683}(0.204) & 0.996(0.004) & 0.997(3.39e-4) & 0.790(0.022) & \textbf{0.881}(0.014) \\ 
        \hline
        Nlasso-JOE (OR) &0.993(0.007) & 0.966(0.027) & 0.368(0.248) & 0.491(0.263) & 0.998(0.002) & 0.992(0.001) & 0.551(0.021) & 0.710(0.018)\\
       \hline
       \hline
       DB-Glasso (Holm) &0.036(0.013) & 1.000(0.000) & 1.000(0.000) & 0.070(0.023) & 0.263(0.027) & 1.000(0.000) & 1.000(0.000) & 0.416(0.034) \\ 
       \hline
       DB-Glasso (FDR)& 0.137(0.082) & 1.000(0.000) & 1.000(0.000) & 0.232(0.125) & 0.706(0.045) & 1.000(0.000) & 0.999(0.003) & 0.827(0.031)\\ 
       \hline
       \textbf{GI-JOE (Holm)}&0.237(0.090) & 1.000(0.000) & 1.000(0.000) & 0.375(0.116) & 0.504(0.024) & 1.000(0.000) & 1.000(0.000) & 0.670(0.021)\\ 
        \hline
        \textbf{GI-JOE (FDR)} &0.593(0.099) & 1.000(4.42e-05) & 0.993(0.008) & \textbf{0.738}(0.078) & 0.863(0.018) & 1.000(7.35e-05) & 0.987(0.008) & \textbf{0.921}(0.011)\\ 
        \hline
     \end{tabular}}
     \caption{Graph selection results on chain graph, measurement scenario 2. The highest F1 scores among estimation methods and among inference methods are both in bold.}
     \label{tab:sim2_summary_chain_mssc2}
 \end{table}
\begin{table}[!ht]
     \centering
     \scalebox{0.6}{
     \begin{tabular}{|c|c|c|c|c|c|c|c|c|}
     \hline
     \multirow{2}{*}{Method}&\multicolumn{4}{|c|}{$n=1500$}&\multicolumn{4}{|c|}{$n=3000$}\\
     \cline{2-9}
        & TPR & TNR & TDR & F1 score  & TPR & TNR & TDR & F1 score  \\
        \hline
       Nlasso (AND) &0.997(0.003) & 0.955(0.002) & 0.183(0.005) & 0.310(0.007) & 1.000(0.000) & 0.978(0.022) & 0.514(0.313) & 0.621(0.297) \\ 
       \hline
       Nlasso (OR) &0.997(0.003) & 0.965(0.001) & 0.221(0.005) & 0.362(0.007) & 1.000(0.000) & 0.978(0.013) & 0.376(0.145) & 0.530(0.158) \\ 
       \hline
       Glasso & 0.998(0.003) & 0.964(0.011) & 0.245(0.113) & 0.382(0.125) & 1.000(0.000) & 0.986(0.001) & 0.417(0.014) & 0.588(0.014)\\ 
       \hline
       CLIME &0.996(0.005) & 0.974(0.001) & 0.282(0.011) & 0.439(0.013) & 1.000(0.000) & 0.973(0.001) & 0.272(0.008) & 0.427(0.010)\\ 
       \hline
       Nlasso-JOE (AND) &0.991(0.008) & 0.986(0.007) & 0.462(0.203) & \textbf{0.608}(0.158) & 1.000(0.000) & 0.998(0.000) & 0.806(0.015) & \textbf{0.893}(0.009) \\ 
        \hline
        Nlasso-JOE (OR) &0.996(0.005) & 0.954(0.021) & 0.250(0.200) & 0.368(0.210) & 1.000(0.000) & 0.990(0.001) & 0.501(0.015) & 0.668(0.013) \\ 
       \hline
       \hline
       DB-Glasso (Holm) &0.009(0.007) & 1.000(0.000) & NA & 0.017(0.013) & 0.539(0.029) & 1.000(0.000) & 1.000(0.000) & 0.700(0.024) \\ 
       \hline
       DB-Glasso (FDR)& 0.020(0.011) & 1.000(0.000) & 1.000(0.000) & 0.038(0.020) & 0.925(0.018) & 1.000(0.000) & 0.999(0.002) & 0.960(0.010)\\ 
       \hline
       \textbf{GI-JOE (Holm)}&0.211(0.069) & 1.000(0.000) & 1.000(0.000) & 0.344(0.090) & 0.766(0.027) & 1.000(0.000) & 1.000(0.000) & 0.867(0.017)\\ 
        \hline
        \textbf{GI-JOE (FDR)} &0.591(0.096) & 1.000(0.000) & 0.997(0.004) & \textbf{0.738}(0.072) & 0.966(0.011) & 1.000(0.000) & 0.994(0.004) & \textbf{0.980}(0.005)\\ 
        \hline
     \end{tabular}}
     \caption{Graph selection results on chain graph, measurement scenario 3. The highest F1 scores among estimation methods and among inference methods are both in bold.}
     \label{tab:sim2_summary_chain_mssc3}
 \end{table}

\begin{table}[!ht]
     \centering
     \scalebox{0.6}{
     \begin{tabular}{|c|c|c|c|c|c|c|c|c|}
     \hline
     \multirow{2}{*}{Method}&\multicolumn{4}{|c|}{$n=600$}&\multicolumn{4}{|c|}{$n=800$}\\
     \cline{2-9}
        & TPR & TNR & TDR & F1 score  & TPR & TNR & TDR & F1 score  \\
        \hline
       Nlasso (AND) &0.778(0.041) & 0.969(0.016) & 0.235(0.099) & 0.348(0.111) & 0.833(0.027) & 0.974(0.013) & 0.266(0.078) & 0.396(0.095) \\ 
       \hline
       Nlasso (OR) &0.920(0.017) & 0.960(0.013) & 0.193(0.049) & 0.317(0.066) & 0.949(0.015) & 0.963(0.006) & 0.201(0.024) & 0.332(0.033) \\ 
       \hline
       Glasso & 0.993(0.006) & 0.981(0.001) & 0.329(0.014) & 0.494(0.016) & 0.997(0.004) & 0.979(0.001) & 0.312(0.012) & 0.475(0.014) \\
       \hline
       CLIME & 0.833(0.019) & 0.952(0.003) & 0.144(0.008) & 0.246(0.012) & 0.836(0.025) & 0.976(0.003) & 0.251(0.024) & 0.385(0.030)\\
       \hline
       Nlasso-JOE (AND) &0.759(0.031) & 0.999(0.003) & 0.897(0.126) & \textbf{0.815}(0.073) & 0.821(0.017) & 0.999(0.000) & 0.913(0.035) & \textbf{0.864}(0.021) \\ 
        \hline
        Nlasso-JOE (OR) &0.972(0.017) & 0.991(0.014) & 0.603(0.114) & 0.736(0.122) & 0.989(0.007) & 0.993(0.001) & 0.584(0.026) & 0.734(0.021) \\ 
       \hline
       \hline
       DB-Glasso (Holm) &0.001(0.002) & 1.000(0.000) & NA & 0.003(0.005) & 0.002(0.003) & 1.000(0.000) & NA & 0.004(0.005) \\ 
       \hline
       DB-Glasso (FDR)& 0.001(0.002) & 1.000(0.000) & NA & 0.003(0.005) & 0.004(0.003) & 1.000(0.000) & NA & 0.007(0.007)\\
       \hline
       \textbf{GI-JOE (Holm)}&0.193(0.027) & 1.000(0.000) & 1.000(0.000) & 0.323(0.041) & 0.330(0.013) & 1.000(0.000) & 1.000(0.000) & 0.496(0.014) \\ 
        \hline
        \textbf{GI-JOE (FDR)} &0.341(0.028) & 1.000(3.76e-05) & 0.987(0.011) & \textbf{0.506}(0.032) & 0.443(0.017) & 1.000(6.40e-05) & 0.985(0.014) & \textbf{0.61}1(0.015) \\ 
        \hline
     \end{tabular}}
     \caption{Graph selection results on star graph, measurement scenario 1. The highest F1 scores among estimation methods and among inference methods are both in bold.}
     \label{tab:sim2_summary_star_mssc1}
 \end{table}
  \begin{table}[!ht]
     \centering
     \scalebox{0.6}{
     \begin{tabular}{|c|c|c|c|c|c|c|c|c|}
     \hline
     \multirow{2}{*}{Method}&\multicolumn{4}{|c|}{$n=20000$}&\multicolumn{4}{|c|}{$n=30000$}\\
     \cline{2-9}
        & TPR & TNR & TDR & F1 score  & TPR & TNR & TDR & F1 score  \\
        \hline
       Nlasso (AND) &0.923(0.020) & 0.954(0.001) & 0.162(0.005) & 0.275(0.007) & 0.954(0.012) & 0.956(0.001) & 0.174(0.005) & 0.294(0.007) \\ 
       \hline
       Nlasso (OR) &0.999(0.002) & 0.952(0.001) & 0.166(0.003) & 0.284(0.004) & 1.000(0.000) & 0.957(0.001) & 0.184(0.004) & 0.311(0.006) \\ 
       \hline
       Glasso &1.000(0.000) & 0.965(0.001) & 0.213(0.004) & 0.351(0.006) & 1.000(0.000) & 0.960(0.017) & 0.285(0.281) & 0.392(0.248)\\
       \hline
       CLIME & 0.898(0.026) & 0.956(0.007) & 0.168(0.021) & 0.282(0.030) & 0.942(0.014) & 0.976(0.002) & 0.275(0.022) & 0.425(0.024)\\
       \hline
       Nlasso-JOE (AND) &0.659(0.020) & 1.000(0.000) & 1.000(0.002) & 0.794(0.015) & 0.821(0.023) & 1.000(0.000) & 1.000(0.001) & 0.901(0.014) \\ 
        \hline
        Nlasso-JOE (OR) &1.000(0.000) & 1.000(0.000) & 0.962(0.008) & \textbf{0.981}(0.004) & 1.000(0.000) & 0.999(0.000) & 0.946(0.018) & \textbf{0.972}(0.010) \\ 
       \hline
       \hline
       DB-Glasso (Holm) &0.001(0.004) & 1.000(1.13e-05) & NA & 0.003(0.007) & 0.018(0.007) & 1.000(1.13e-05) & 0.988(0.056) & 0.035(0.014)\\
       \hline
       DB-Glasso (FDR)& 0.003(0.005) & 1.000(1.13e-05) & NA & 0.006(0.010) & 0.032(0.011) & 1.000(1.13e-05) & 0.983(0.056) & 0.061(0.021)\\
       \hline
       \textbf{GI-JOE (Holm)}&0.917(0.018) & 1.000(1.11e-04) & 1.000(0.001) & 0.957(0.010) & 0.938(0.017) & 1.000(0.000) & 1.000(0.000) & 0.968(0.009) \\ 
        \hline
        \textbf{GI-JOE (FDR)} &0.990(0.008) & 1.000(0.000) & 0.981(0.011) & \textbf{0.986}(0.007) & 0.993(0.005) & 1.000(1.10e-04) & 0.977(0.011) & \textbf{0.985}(0.006)\\
        \hline
     \end{tabular}}
     \caption{Graph selection results on star graph, measurement scenario 2. The highest F1 scores among estimation methods and among inference methods are both in bold.}
     \label{tab:sim2_summary_star_mssc2}
 \end{table}
 \begin{table}[!ht]
     \centering
     \scalebox{0.6}{
     \begin{tabular}{|c|c|c|c|c|c|c|c|c|}
     \hline
     \multirow{2}{*}{Method}&\multicolumn{4}{|c|}{$n=1500$}&\multicolumn{4}{|c|}{$n=3000$}\\
     \cline{2-9}
        & TPR & TNR & TDR & F1 score  & TPR & TNR & TDR & F1 score  \\
        \hline
       Nlasso (AND) &0.867(0.020) & 0.951(0.001) & 0.144(0.005) & 0.247(0.008) & 0.948(0.017) & 0.950(0.001) & 0.153(0.004) & 0.263(0.006) \\ 
       \hline
       Nlasso (OR) &0.975(0.012) & 0.954(0.001) & 0.170(0.004) & 0.289(0.006) & 1.000(0.001) & 0.957(0.009) & 0.187(0.027) & 0.314(0.040) \\ 
       \hline
       Glasso & 1.000(0.001) & 0.967(0.017) & 0.274(0.131) & 0.414(0.156) & 1.000(0.000)& 0.980(0.001) & 0.327(0.013) & 0.493(0.015)\\
       \hline
       CLIME & 0.290(0.025) & 0.983(0.001) & 0.140(0.010) & 0.189(0.014) & 0.888(0.023) & 0.951(0.001) & 0.147(0.004) & 0.253(0.006) \\
       \hline
       Nlasso-JOE (AND) &0.594(0.152) & 1.000(0.000) & 0.973(0.017) & \textbf{0.723}(0.157) & 0.535(0.031) & 1.000(0.000) & 1.000(0.000) & 0.697(0.026) \\ 
        \hline
        Nlasso-JOE (OR) &0.952(0.003) & 0.988(0.004) & 0.465(0.170) & 0.611(0.118) & 1.000(0.000) & 0.999(1.56e-4) & 0.942(0.015) & \textbf{0.970}(0.008)\\ 
       \hline
       \hline
       DB-Glasso (Holm) &0.000(0.000) & 1.000(0.000) & NA & 0.000(0.000) & 0.000(0.000) & 1.000(0.000) & NA & 0.000(0.000) \\ 
       \hline
       DB-Glasso (FDR)& 0.000(0.000) & 1.000(0.000) & NA & 0.000(0.000) & 0.000(0.000) & 1.000(0.000) & NA & 0.000(0.000) \\ 
       \hline
       \textbf{GI-JOE (Holm)}&0.460(0.131) & 1.000(0.000) & 1.000(0.000) & 0.622(0.102) & 0.994(0.005) & 1.000(0.000) & 1.000(0.000) & \textbf{0.997}(0.003) \\ 
        \hline
        \textbf{GI-JOE (FDR)} &0.774(0.064) & 1.000(3.76e-05) & 0.994(0.005)& \textbf{0.869}(0.039) & 1.000(0.001) & 1.000(7.94e-05) & 0.987(0.008) & 0.993(0.004) \\ 
        \hline
     \end{tabular}}
     \caption{Graph selection results on star graph, measurement scenario 3. The highest F1 scores among estimation methods and among inference methods are both in bold.}
     \label{tab:sim2_summary_star_mssc3}
 \end{table}
\begin{table}[!ht]
\centering
\centering
     \scalebox{0.6}{
     \begin{tabular}{|c|c|c|c|c|c|c|c|c|}
     \hline
     \multirow{2}{*}{Method}&\multicolumn{4}{|c|}{$n=1500$}&\multicolumn{4}{|c|}{$n=3000$}\\
     \cline{2-9}
       & TPR & TNR & TDR & F1 score  & TPR & TNR & TDR & F1 score  \\
        \hline
    Nlasso (AND) & 0.905(0.019) & 0.960(0.004) & 0.268(0.017) & 0.413(0.022) & 0.943(0.016) & 0.960(0.011) & 0.289(0.084) & 0.436(0.085) \\ 
    \hline
    Nlasso (OR) & 0.924(0.016) & 0.960(0.006) & 0.271(0.026) & 0.418(0.032) & 0.960(0.010) & 0.958(0.005) & 0.268(0.024) & 0.419(0.028) \\ 
    \hline
    Glasso & 0.895(0.029) & 0.968(0.007) & 0.316(0.032) & 0.466(0.037) & 0.951(0.010) & 0.967(0.001) & 0.316(0.007) & 0.474(0.009) \\ 
    \hline
    CLIME & 0.869(0.034) & 0.968(0.019) & 0.347(0.115) & 0.483(0.120) & 0.932(0.011) & 0.967(0.004) & 0.314(0.030) & 0.469(0.032) \\ 
    \hline
    Nlasso-JOE (AND) & 0.385(0.095) & 0.997(0.005) & 0.846(0.121) & \textbf{0.514}(0.038) & 0.431(0.014) & 0.999(2.85e-4) & 0.886(0.020) & \textbf{0.580}(0.014) \\ 
    \hline
    Nlasso-JOE (OR) & 0.526(0.086) & 0.982(0.019) & 0.503(0.102) & 0.499(0.063) & 0.548(0.008) & 0.987(0.001) & 0.521(0.014) & 0.534(0.009) \\ 
    \hline
    \hline
    DB-Glasso (Holm) & 0.000(0.000) & 1.000(0.000) & NA & 0.000(0.000) & 4.75e-4(0.001) & 1.000(0.000) & NA & 0.001(0.002) \\ 
    \hline
    DB-Glasso (FDR) & 0.000(0.000) & 1.000(0.000) & NA & 0.000(0.000) & 0.001(0.001) & 1.000(0.000) & NA & 0.002(0.003) \\ 
    \hline
    \textbf{GI-JOE (Holm)} & 0.180(0.025) & 1.000(0.000) & 1.000(0.000) & 0.305(0.038) & 0.249(0.010) & 1.000(0.000) & 1.000(0.002) & 0.399(0.012) \\ 
    \hline
    \textbf{GI-JOE (FDR)} & 0.316(0.017) & 1.000(1.04e-4) & 0.975(0.012) & \textbf{0.477}(0.020) & 0.378(0.013) & 1.000(9.31e-5) & 0.976(0.009) & \textbf{0.545}(0.013) \\ 
   \hline
\end{tabular}}
     \caption{Graph selection results on Erd\H{o}s–R\'{e}nyi graph, measurement scenario 1. The highest F1 scores among estimation methods and among inference methods are both in bold.}
     \label{tab:sim2_summary_ER_mssc1}
 \end{table}

\begin{table}[!ht]
\centering
\centering
     \scalebox{0.6}{
     \begin{tabular}{|c|c|c|c|c|c|c|c|c|}
     \hline
     \multirow{2}{*}{Method}&\multicolumn{4}{|c|}{$n=1500$}&\multicolumn{4}{|c|}{$n=3000$}\\
     \cline{2-9}
       & TPR & TNR & TDR & F1 score  & TPR & TNR & TDR & F1 score  \\
        \hline
    Nlasso (AND) & 0.924(0.013) & 0.954(0.003) & 0.244(0.014) & 0.386(0.017) & 0.969(0.008) & 0.955(0.001) & 0.257(0.007) & 0.407(0.009) \\ 
    \hline
    Nlasso (OR) & 0.931(0.013) & 0.960(0.002) & 0.271(0.009) & 0.420(0.011) & 0.974(0.007) & 0.963(0.001) & 0.298(0.005) & 0.456(0.006) \\ 
    \hline
    Glasso & 0.942(0.009) & 0.954(0.001) & 0.246(0.006) & 0.391(0.008) & 0.952(0.039) & 0.971(0.015) & 0.407(0.190) & 0.543(0.159) \\ 
    \hline
    CLIME & 0.902(0.014) & 0.959(0.001) & 0.261(0.007) & 0.405(0.008) & 0.940(0.011) & 0.979(0.001) & 0.420(0.013) & 0.580(0.013) \\ 
    \hline
    Nlasso-JOE (AND) & 0.405(0.092) & 0.992(0.010) & 0.755(0.266) & \textbf{0.485}(0.026) & 0.428(0.010) & 0.999(0.000) & 0.946(0.011) & 0.589(0.011) \\ 
    \hline
    Nlasso-JOE (OR) & 0.507(0.064) & 0.975(0.030) & 0.568(0.279) & 0.479(0.138) & 0.538(0.006) & 0.995(0.000) & 0.734(0.016) & \textbf{0.621}(0.008) \\ 
    \hline
    \hline
    DB-Glasso (Holm) & 0.000(0.000) & 1.000(0.000) & NA & 0.000(0.000) & 0.000(0.000) & 1.000(0.000) & NA & 0.000(0.000) \\ 
    \hline
    DB-Glasso (FDR) & 0.000(0.000) & 1.000(0.000) & NA & 0.000(0.000) & 0.000(0.000) & 1.000(0.000) & NA & 0.000(0.000)  \\ 
    \hline
    \textbf{GI-JOE (Holm)} & 0.091(0.044) & 1.000(0.000) & 1.000(0.000) & 0.165(0.076) & 0.212(0.011) & 1.000(0.000) & 1.000(0.002) & 0.349(0.015) \\ 
    \hline
    \textbf{GI-JOE (FDR)} & 0.269(0.095) & 1.000(0.000) & 0.974(0.012) & \textbf{0.412}(0.123) & 0.438(0.012) & 1.000(0.000) & 0.971(0.011) & \textbf{0.604}(0.012) \\ 
   \hline
\end{tabular}}
     \caption{Graph selection results on Erd\H{o}s–R\'{e}nyi graph, measurement scenario 2. The highest F1 scores among estimation methods and among inference methods are both in bold.}
     \label{tab:sim2_summary_ER_mssc2}
 \end{table}
 
\begin{table}[!ht]
\centering
\centering
     \scalebox{0.6}{
     \begin{tabular}{|c|c|c|c|c|c|c|c|c|}
     \hline
     \multirow{2}{*}{Method}&\multicolumn{4}{|c|}{$n=1500$}&\multicolumn{4}{|c|}{$n=3000$}\\
     \cline{2-9}
       & TPR & TNR & TDR & F1 score  & TPR & TNR & TDR & F1 score  \\
        \hline
    Nlasso (AND) & 0.980(0.008) & 0.954(0.001) & 0.256(0.005) & 0.406(0.007) & 0.998(0.002) & 0.954(0.001) & 0.259(0.005) & 0.411(0.006) \\ 
    \hline
    Nlasso (OR) & 0.980(0.007) & 0.960(0.001) & 0.282(0.005) & 0.438(0.006) & 0.998(0.003) & 0.962(0.001) & 0.298(0.006) & 0.459(0.007) \\ 
    \hline
    Glasso & 0.975(0.028) & 0.956(0.010) & 0.279(0.113) & 0.423(0.091) & 0.982(0.005) & 0.994(0.000) & 0.725(0.016) & \textbf{0.834}(0.010) \\  
    \hline
    CLIME & 0.972(0.009) & 0.960(0.001) & 0.279(0.005) & 0.434(0.007) & 0.997(0.004) & 0.974(0.003) & 0.381(0.027) & 0.551(0.030)\\ 
    \hline
    Nlasso-JOE (AND) & 0.456(0.012) & 1.000(0.000) & 0.969(0.010) & 0.620(0.011) & 0.570(0.005) & 1.000(0.000) & 0.971(0.007) & 0.718(0.004) \\ 
    \hline
    Nlasso-JOE (OR) & 0.551(0.007) & 0.995(0.000) & 0.736(0.016) & \textbf{0.630}(0.008) & 0.606(0.005) & 0.993(0.001) & 0.692(0.016) & 0.646(0.007) \\ 
    \hline
    \hline
    DB-Glasso (Holm) & 0.000(0.000) & 1.000(0.000) & NA & 0.000(0.000) & 0.000(0.000) & 1.000(0.000) & NA & 0.000(0.000) \\
    \hline
    DB-Glasso (FDR) & 0.000(0.001) & 1.000(0.000) & NA & 0.000(0.001) & 0.000(0.001) & 1.000(0.000) & NA & 0.000(0.001)  \\ 
    \hline
    \textbf{GI-JOE (Holm)} & 0.334(0.012) & 1.000(0.000) & 1.000(0.000) & 0.500(0.013) & 0.516(0.005) & 1.000(0.000) & 1.000(0.001) & 0.681(0.005) \\ 
    \hline
    \textbf{GI-JOE (FDR)} & 0.509(0.008) & 1.000(0.000) & 0.976(0.008) & \textbf{0.669}(0.008) & 0.588(0.006) & 1.000(0.000) & 0.977(0.007) & \textbf{0.734}(0.005) \\ 
   \hline
\end{tabular}}
     \caption{Graph selection results on Erd\H{o}s–R\'{e}nyi graph, measurement scenario 3. The highest F1 scores among estimation methods and among inference methods are both in bold.}
     \label{tab:sim2_summary_ER_mssc3}
 \end{table}
 
\begin{table}[!ht]
     \centering
     \scalebox{0.6}{
     \begin{tabular}{|c|c|c|c|c|c|c|c|c|}
     \hline
     \multirow{2}{*}{Method}&\multicolumn{4}{|c|}{$n=8000$}&\multicolumn{4}{|c|}{$n=12000$}\\
     \cline{2-9}
        & TPR & TNR & TDR & F1 score  & TPR & TNR & TDR & F1 score  \\
        \hline
       Nlasso (AND) &0.550(0.028) & 0.998(0.000) & 0.611(0.033) & 0.579(0.027) & 0.713(0.028) & 0.998(0.000) & 0.662(0.026) & 0.686(0.021)\\
       \hline
       Nlasso (OR) &0.623(0.030) & 0.998(0.000) & 0.601(0.027) & 0.612(0.025) & 0.782(0.024) & 0.998(0.000) & 0.648(0.026) & 0.709(0.021)\\
       \hline
       Glasso &0.736(0.024) & 0.994(0.000) & 0.427(0.019)&0.540(0.020) & 0.864(0.015) & 0.994(0.000) & 0.461(0.019) & 0.601(0.016)\\
       \hline
       CLIME & 0.931(0.020) & 0.954(0.001) & 0.104(0.003) & 0.187(0.006) & 0.684(0.034) & 0.999(0.000) & 0.757(0.035) & 0.718(0.029)\\
       \hline
       Nlasso-JOE (AND) &0.482(0.029) & 1.000(0.000) & 0.959(0.017) & 0.641(0.026) & 0.568(0.166) & 1.000(0.000) & 0.974(0.017) & 0.700(0.157) \\
        \hline
        Nlasso-JOE (OR) &0.693(0.028) & 0.999(0.000) & 0.859(0.030) & \textbf{0.767}(0.023) & 0.746(0.187) & 0.999(0.000) & 0.894(0.059) & \textbf{0.792}(0.123)\\
       \hline
       \hline
       DB-Glasso (Holm) &0.126(0.010) & 1.000(0.000) & 1.000(0.000) & 0.224(0.016) & 0.178(0.012) & 1.000(0.000) & 1.000(0.000) & 0.302(0.017)\\
       \hline
       DB-Glasso (FDR)& 0.142(0.009) & 1.000(0.000) & 0.998(0.010) & 0.249(0.014) & 0.247(0.046) & 1.000(0.000) & 0.999(0.006) & 0.394(0.059)\\
       \hline
       \textbf{GI-JOE (Holm)}&0.550(0.023) & 1.000(0.000) & 1.000(0.000) & 0.709(0.020) & 0.693(0.027) & 1.000(0.000) & 0.998(0.004) & 0.818(0.018)\\
        \hline
        \textbf{GI-JOE (FDR)} &0.729(0.030) & 1.000(0.000) & 0.950(0.023) & \textbf{0.825}(0.023) & 0.847(0.022) & 1.000(0.000) & 0.952(0.022) & \textbf{0.896}(0.017)\\
        \hline
     \end{tabular}}
     \caption{Graph selection results with the ground truth graph being the neuronal functional network estimated from a real calcium imaging data \cite{lein2007genome}, measurement scenario 1.}
     \label{tab:ABA_sim_summary_mssc1}
 \end{table}
\begin{table}[!ht]
     \centering
     \scalebox{0.6}{
     \begin{tabular}{|c|c|c|c|c|c|c|c|c|}
     \hline
     \multirow{2}{*}{Method}&\multicolumn{4}{|c|}{$n=80000$}&\multicolumn{4}{|c|}{$n=120000$}\\
     \cline{2-9}
        & TPR & TNR & TDR & F1 score  & TPR & TNR & TDR & F1 score  \\
        \hline
       Nlasso (AND) &0.440(0.018) & 1.000(0.000) & 0.846(0.034) & 0.579(0.016) & 0.573(0.025) & 1.000(0.000) & 0.879(0.029) & 0.694(0.022) \\
       \hline
       Nlasso (OR) &0.508(0.017) & 0.999(0.000) & 0.850(0.035) & 0.635(0.015) & 0.654(0.026) & 1.000(0.000) & 0.884(0.027) & 0.751(0.021)\\
       \hline
       Glasso &0.620(0.035) & 0.999(0.000) & 0.714(0.034) & 0.663(0.029) & 0.765(0.034) & 0.999(0.000) & 0.759(0.041) & 0.762(0.034)\\
       \hline
       CLIME &0.891(0.031) & 0.987(0.000) & 0.284(0.011) & 0.430(0.015) & 0.959(0.015) & 0.986(0.000) & 0.288(0.007) & 0.443(0.009)\\
       \hline
       Nlasso-JOE (AND) &0.513(0.026) & 1.000(0.000) & 0.974(0.015) & 0.672(0.022) & 0.639(0.030) & 1.000(0.000) & 0.982(0.009) & 0.774(0.022)\\
        \hline
        Nlasso-JOE (OR) &0.735(0.035) & 1.000(0.000) & 0.924(0.015) & \textbf{0.818}(0.023) & 0.851(0.027) & 1.000(0.000) & 0.920(0.022) & \textbf{0.884}(0.022)\\
       \hline
       \hline
       DB-Glasso (Holm) &0.127(0.012) & 1.000(0.000) & 1.000(0.000) & 0.225(0.019) & 0.175(0.010) & 1.000(0.000) & 1.000(0.000) & 0.298(0.014)\\
       \hline
       DB-Glasso (FDR)& 0.144(0.010) & 1.000(0.000) & 1.000(0.000) & 0.251(0.016) & 0.260(0.043) & 1.000(0.000) & 0.999(0.005) & 0.411(0.055)\\
       \hline
       \textbf{GI-JOE (Holm)}&0.590(0.030) & 1.000(0.000) & 0.998(0.003) & 0.741(0.024) & 0.708(0.023) & 1.000(0.000) & 0.997(0.002) & 0.828(0.016)\\
        \hline
        \textbf{GI-JOE (FDR)} &0.767(0.037) & 1.000(0.000) & 0.953(0.018) & \textbf{0.849}(0.025) & 0.853(0.025) & 1.000(0.000) & 0.949(0.017) & \textbf{0.898}(0.016)\\
        \hline
     \end{tabular}}
     \caption{Graph selection results with the ground truth graph being the neuronal functional network estimated from a real calcium imaging data \cite{lein2007genome}, measurement scenario 2.}
     \label{tab:ABA_sim_summary_mssc2}
 \end{table}
 \begin{table}[!ht]
     \centering
     \scalebox{0.6}{
     \begin{tabular}{|c|c|c|c|c|c|c|c|c|}
     \hline
     \multirow{2}{*}{Method}&\multicolumn{4}{|c|}{$n=5000$}&\multicolumn{4}{|c|}{$n=8000$}\\
     \cline{2-9}
        & TPR & TNR & TDR & F1 score  & TPR & TNR & TDR & F1 score  \\
        \hline
       Nlasso (AND) &0.468(0.039) & 0.988(0.001) & 0.182(0.016) & 0.262(0.022) & 0.586(0.033) & 0.988(0.001) & 0.219(0.019) & 0.318 (0.021) \\
       \hline
       Nlasso (OR) &0.466(0.034) & 0.990(0.001) & 0.202(0.017) & 0.282(0.023) & 0.591(0.029) & 0.990(0.001) & 0.244(0.019) & 0.346(0.023)\\
       \hline
       Glasso &0.492(0.036) & 0.989(0.001) & 0.210(0.018) & 0.295(0.023) & 0.613(0.031) & 0.989(0.001) & 0.250(0.020) & 0.355(0.024)\\
       \hline
       CLIME &0.370(0.026) & 0.993(0.000) & 0.242(0.021) & 0.292(0.022) & 0.537(0.024) & 0.992(0.001) & 0.284(0.022) & 0.371(0.022)\\
       \hline
       Nlasso-JOE (AND) &0.523(0.043) & 0.994(0.001) & 0.353(0.042) & \textbf{0.419}(0.029) & 0.315(0.069) & 1.000(0.001) & 0.935(0.099) & 0.461(0.027)\\
        \hline
        Nlasso-JOE (OR) &0.719(0.047) & 0.982(0.002) & 0.183(0.020) & 0.291(0.024) & 0.465(0.073) & 0.999(0.003) & 0.855(0.146) & \textbf{0.585}(0.051)\\
       \hline
       \hline
       DB-Glasso (Holm) &0.027(0.004) & 1.000(0.000) & 1.000(0.000) & 0.053(0.007) & 0.049(0.005) & 1.000(0.000) & 1.000(0.000) & 0.094(0.009)\\
       \hline
       DB-Glasso (FDR)& 0.029(0.003) & 1.000(0.000) & 1.000(0.000) & 0.056(0.005) & 0.056(0.008) & 1.000(0.000) & 1.000(0.000) & 0.107(0.014)\\
       \hline
       \textbf{GI-JOE (Holm)}&0.273(0.020) & 1.000(0.000) & 1.000(0.000) & 0.428(0.024) & 0.420(0.022) & 1.000(0.000) & 0.999(0.004) & 0.591(0.022)\\
        \hline
        \textbf{GI-JOE (FDR)} &0.396(0.029) & 1.000(0.000) & 0.965(0.028) & \textbf{0.561}(0.028) & 0.571(0.032) & 1.000(0.000) & 0.957(0.018) & \textbf{0.715}(0.026)\\
        \hline
     \end{tabular}}
     \caption{Graph selection results with the ground truth graph being the neuronal functional network estimated from a real calcium imaging data \cite{lein2007genome}, measurement scenario 3.}
     \label{tab:ABA_sim_summary_mssc3}
 \end{table}

 \begin{table}[!ht]
     \centering
     \scalebox{0.6}{
     \begin{tabular}{|c|c|c|c|c|c|c|c|c|}
     \hline
     \multirow{2}{*}{Method}&\multicolumn{4}{|c|}{\emph{chu} measurement}&\multicolumn{4}{|c|}{\emph{darmanis} measurement}\\
     \cline{2-9}
        & TPR & TNR & TDR & F1 score  & TPR & TNR & TDR & F1 score  \\
        \hline
       Nlasso (AND) &0.996(0.004) & 0.974(0.020) & 0.408(0.252) & 0.536(0.252) & 0.899(0.026) & 0.961(0.005) & 0.189(0.018) & 0.312(0.024)\\
       \hline
       Nlasso (OR) &0.997(0.003) & 0.970(0.021) & 0.343(0.187) & 0.483(0.206) & 0.912(0.022) & 0.959(0.002) & 0.185(0.009) & 0.307(0.014)\\
       \hline
       Glasso &0.551(0.314) & 0.999(0.001) & 0.823(0.051) & 0.606(0.189) & 0.529(0.036) & 0.997(0.001) & 0.672(0.046) & 0.591(0.031)\\
       \hline
       CLIME &0.998(0.003) & 0.969(0.011) & 0.260(0.051) & 0.410(0.067) & 0.908(0.028) & 0.947(0.008) & 0.152(0.035) & 0.259(0.043)\\
       \hline
       Nlasso-JOE (AND) &0.509(0.342) & 1.000(0.000) & 1.000(0.000) & 0.610(0.301) & 0.343(0.026) & 1.000(0.000) & 0.994(0.015) & 0.510(0.029)\\
        \hline
        Nlasso-JOE (OR) &0.593(0.327) & 0.999(0.000) & 0.915(0.018) & \textbf{0.670}(0.242) & 0.514(0.035) & 0.999(0.000) & 0.805(0.048) & \textbf{0.626}(0.033)\\
       \hline
       \hline
       DB-Glasso (Holm) &0.000(0.000) & 1.000(0.000) & NA & 0.000(0.000) & 0.000(0.000) & 1.000(0.000) & NA & 0.000(0.000)\\
       \hline
       DB-Glasso (FDR)&0.000(0.000) & 1.000(0.000) & NA & 0.000(0.000) & 0.000(0.000) & 1.000(0.000) & NA & 0.000(0.000)\\
       \hline
       \textbf{GI-JOE (Holm)}&0.927(0.011) & 1.000(0.000) & 0.997(0.005) & \textbf{0.960}(0.006) & 0.295(0.034) & 1.000(0.000) & 0.999(0.003) & 0.455(0.040)\\
        \hline
        \textbf{GI-JOE (FDR)} &0.963(0.006) & 0.999(0.001) & 0.911(0.050) & 0.936(0.025) & 0.603(0.040) & 1.000(0.000) & 0.978(0.015) & \textbf{0.745}(0.033)\\
        \hline
     \end{tabular}}
     \caption{Graph selection results with the ground truth graph being a scale-free graph with 200 nodes, under two real measurement patterns from single-cell RNA sequencing data sets (the chu data \citep{chu2016single} and darmanis data \citep{darmanis2015survey}).}
     \label{tab:scRNA_sim_summary_scale-free}
 \end{table}
  \begin{table}[!htbp]
     \centering
     \scalebox{0.6}{
     \begin{tabular}{|c|c|c|c|c|c|c|c|c|}
     \hline
     \multirow{2}{*}{Method}&\multicolumn{4}{|c|}{\emph{chu} measurement}&\multicolumn{4}{|c|}{\emph{darmanis} measurement}\\
     \cline{2-9}
        & TPR & TNR & TDR & F1 score  & TPR & TNR & TDR & F1 score  \\
        \hline
       Nlasso (AND) &0.997(0.003) & 0.975(0.022) & 0.457(0.288) & 0.574(0.281) & 0.953(0.018) & 0.960(0.008) & 0.213(0.113) & 0.339(0.114)\\
       \hline
       Nlasso (OR) &0.999(0.003) & 0.973(0.020) & 0.372(0.199) & 0.512(0.216) & 0.965(0.015) & 0.964(0.007) & 0.222(0.072) & 0.356(0.080)\\
       \hline
       Glasso &0.983(0.021) & 0.990(0.007) & 0.579(0.234) & 0.701(0.172) & 0.936(0.017) & 0.980(0.004) & 0.341(0.120) & 0.492(0.098)\\
       \hline
       CLIME &0.999(0.002) & 0.969(0.002) & 0.247(0.013) & 0.395(0.016) & 0.947(0.018) & 0.958(0.018) & 0.218(0.094) & 0.345(0.118)\\
       \hline
       Nlasso-JOE (AND) &0.912(0.043) & 0.999(0.000) & 0.939(0.043) & \textbf{0.924}(0.008) & 0.757(0.020) & 0.999(0.000) & 0.881(0.019) & \textbf{0.814}(0.014)\\
        \hline
        Nlasso-JOE (OR) &0.935(0.023) & 0.996(0.002) & 0.746(0.152) & 0.820(0.078) & 0.860(0.016) & 0.992(0.001) & 0.518(0.024) & 0.646(0.018)\\
       \hline
       \hline
       DB-Glasso (Holm) &0.000(0.000) & 1.000(0.000) & NA & 0.000(0.000) & 0.000(0.000) & 1.000(0.000) & NA & 0.000(0.000)\\
       \hline
       DB-Glasso (FDR)&0.000(0.000) & 1.000(0.000) & NA & 0.000(0.000) & 0.000(0.000) & 1.000(0.000) & NA & 0.000(0.000)\\
       \hline
       \textbf{GI-JOE (Holm)}&0.883(0.014) & 1.000(0.000) & 0.988(0.009) & \textbf{0.933}(0.009) & 0.382(0.025) & 1.000(0.000) & 1.000(0.000) & 0.553(0.027)\\
        \hline
        \textbf{GI-JOE (FDR)} &0.928(0.011) & 0.999(0.000) & 0.933(0.027) & 0.930(0.010) & 0.626(0.025) & 1.000(0.000) & 0.983(0.012) & \textbf{0.764}(0.019)\\
        \hline
     \end{tabular}}
     \caption{Graph selection results with the ground truth graph being a small-world graph with 200 nodes, under two real measurement patterns from single-cell RNA sequencing data sets (the chu data \citep{chu2016single} and darmanis data \citep{darmanis2015survey}).}
     \label{tab:scRNA_sim_summary_small-world}
 \end{table}
 \begin{table}[!ht]
     \centering
     \scalebox{0.8}{
     \begin{tabular}{|c|c|c|c|c|c|c|c|c|c|c|c|c|}
     \hline
     \multirow{2}{*}{Method} & \multicolumn{4}{|c|}{Neuron set 1}&\multicolumn{4}{|c|}{Neuron set 2}&\multicolumn{4}{|c|}{Neuron set 3}\\
     \cline{2-13}
     &TPR & TNR & TDR & F1 &TPR & TNR & TDR & F1 &TPR & TNR & TDR & F1\\ 
  \hline
    GI-JOE(FDR) & 0.608 & 0.998 & 0.936 & 0.737 & 0.557 & 0.999 & 0.917 & 0.693 & 0.396 & 1.000 & 0.950 & 0.559\\
\hline
    GI-JOE(Holm) & 0.483 & 1.000 & 0.983 & 0.648 & 0.380 & 1.000 & 1.000 & 0.550 & 0.312 & 1.000 & 1.000 & 0.476 \\ 
    \hline
    DB-Glasso(FDR) & 0.142 & 1.000 & 1.000 & 0.248 & 0.177 & 1.000 & 0.933 & 0.298 & 0.333 & 1.000 & 0.941 & 0.492 \\ 
    \hline
    DB-Glasso(Holm) & 0.133 & 1.000 & 1.000 & 0.235 & 0.139 & 1.000 & 1.000 & 0.244 & 0.250 & 1.000 & 1.000 & 0.400\\ 
   \hline
     \end{tabular}}
     \caption{Comparison of the tested sub-graph on a real calcium imaging data set, when the sub-graph only consists of neurons in one of the three sets. Neuron set 1, 2, 3 are observed with high, median and low probabilities, respectively. For neuron set 3, all methods don't work well due to the small sample sizes, while for neuron sets 1 and 2, GI-JOE approaches have much higher true positive rate and F1-scores than the debiased graphical lasso with minimum sample size.}
     \label{tab:ABA_real_subgraph}
 \end{table}
  \begin{table}[!ht]
     \centering
     \begin{tabular}{|c|c|c|c|c|}
     \hline
     Method & TPR & TNR & TDR & F1\\ 
  \hline
    GI-JOE(FDR) & 0.438 & 0.999 & 0.920 & 0.594\\ 
\hline
    GI-JOE(Holm) & 0.339 & 1.000 & 0.974 & 0.492\\ 
    \hline
    DB-Glasso(FDR)& 0.150 & 1.000 & 0.962 & 0.260\\ 
    \hline
    DB-Glasso(Holm) & 0.124 & 1.000 & 1.000 & 0.220\\ 
   \hline
     \end{tabular}
     \caption{Comparison of the tested full graph on a real calcium imaging data set.}
     \label{tab:ABA_real_full_graph}
 \end{table}

\section{The Graphs and Measurement Patterns in Real Data-inspired Simulations}\label{append:RealSim_Setting}
In Section 5.3 of the main paper, we present simulation results (i) when the underlying graph for data generation is estimated from a real calcium imaging data set; and (ii) when the measurement patterns are the same as the ones in two single-cell RNA sequencing data sets. We present here the estimated graph structure from neuroscience and the measurement patterns from gene expression data in Figure \ref{fig:real_graph} and Figure \ref{fig:real_Obs}.
\begin{figure}[!ht]
    \centering
    \includegraphics[width = 0.5\textwidth]{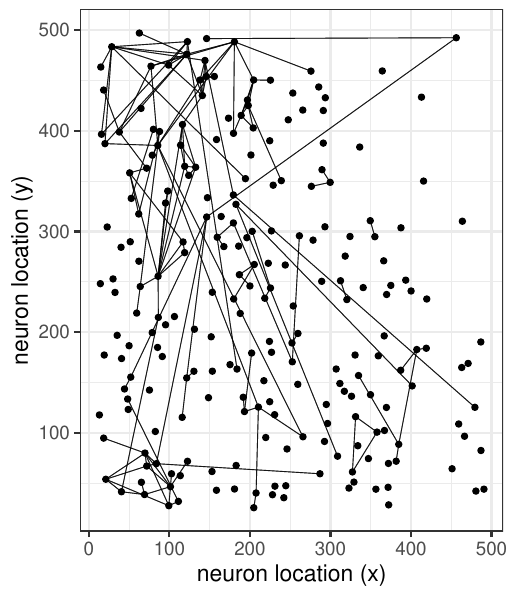}
    \caption{The estimated graph from the calcium imaging data set \citep{lein2007genome}. We generate data from this graph in our first set of simulations presented in Section 5.3.}
    \label{fig:real_graph}
\end{figure}
\begin{figure}[!ht]
    \centering
    \subfigure[Observational patterns]{
    \includegraphics[height=6cm]{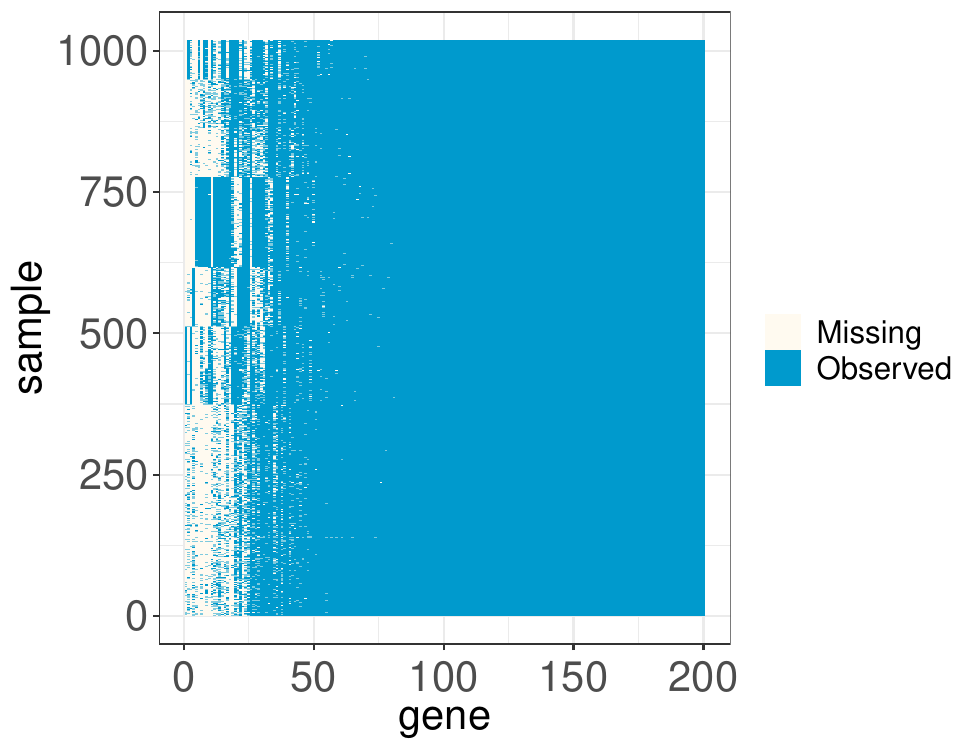}
    \includegraphics[height=6cm]{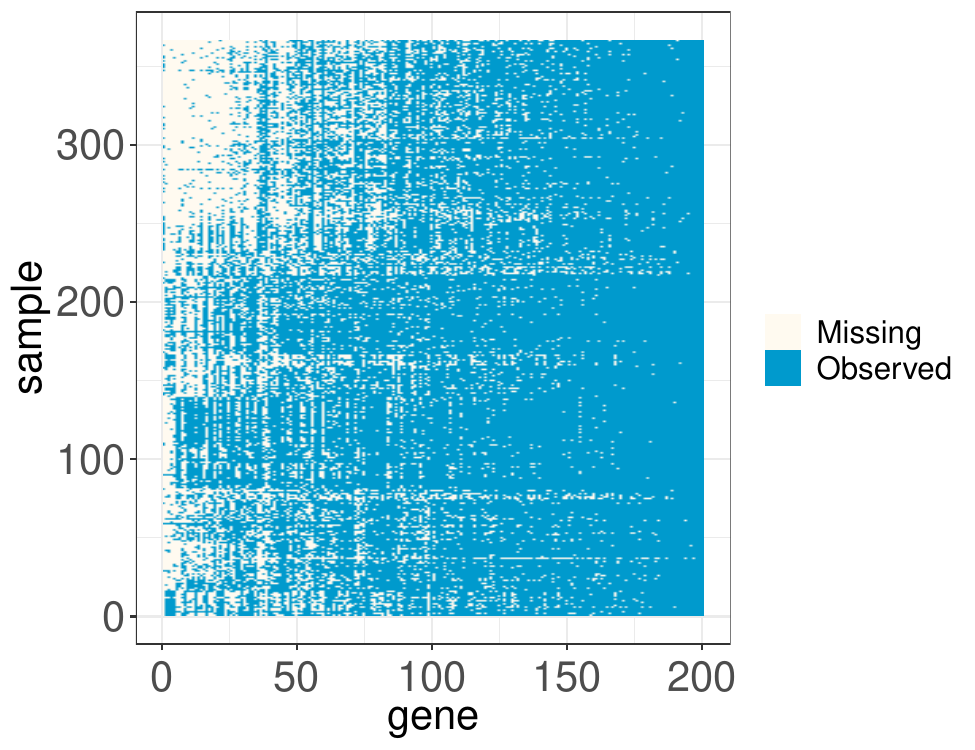}}
    \subfigure[Pairwise sample sizes]{
    \includegraphics[height=6cm]{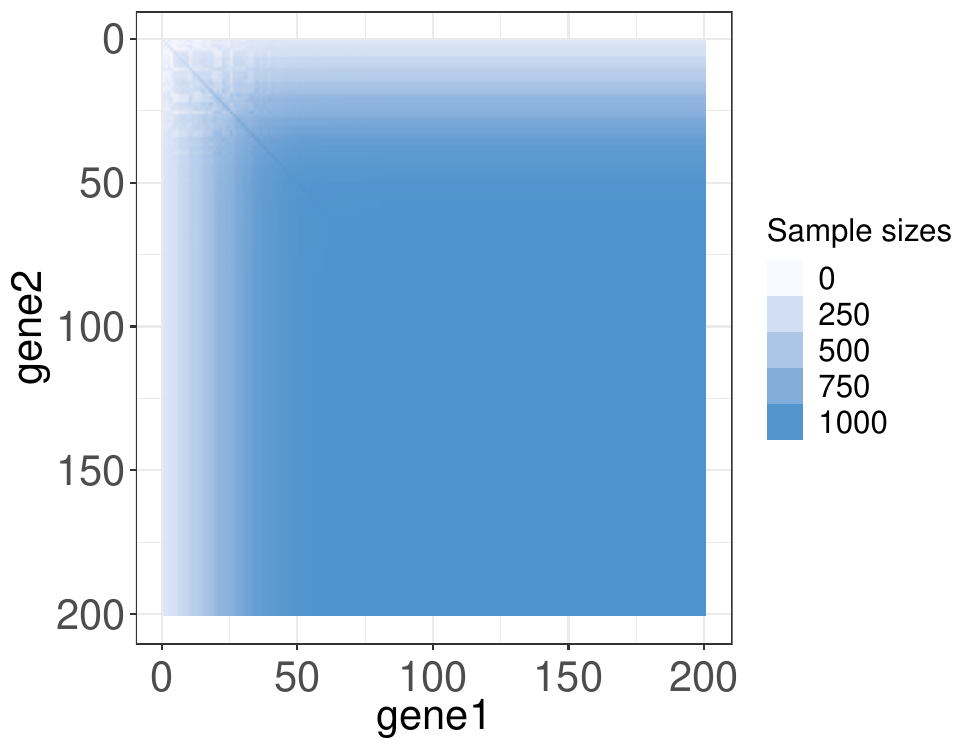}
    \includegraphics[height=6cm]{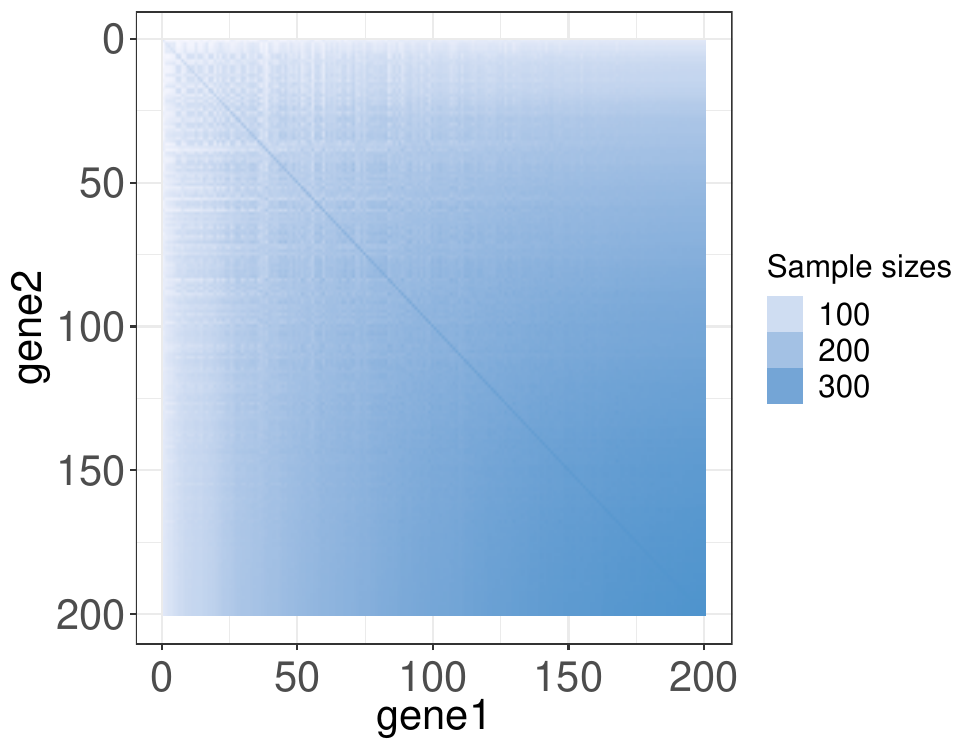}}
    \caption{\small Observational patterns and pairwise sample sizes of two real scRNA-seq data sets \citep{chu2016single, darmanis2015survey}, including top 200 genes with the highest variances. The pairwise sample sizes range from 0 to 1018 (\emph{chu} data, left) and from 5 to 366 (\emph{darmanis}, right). These are the observational patterns used in our second set of simulations presented in Section 5.3.}
    \label{fig:real_Obs}
\end{figure}

\paragraph{Generation of scale-free and small-world graphs:} The scale-free graph is generated from the Barabasi-Albert model with $200$ nodes, and for each new node sequentially added to the model, one edge between it and the existing nodes is randomly created. The small-world graph is generated from the Watts-Strogatz model with degree $2$ and rewiring probability $0.5$. 

\section{Proofs}\label{append:proofs}

For convenience, here we restate some of our key notations presented in the main paper. Recall the definition of node pair index sets $S_1(a,b)$ and $S_2(a,b)$: \begin{equation}\label{eq:index_sets}
\begin{split}
    S_1(a,b)=&\{(j,k): j\text{ or }k\in \cN_a\cup \overline{\cN}_b^{(a)}\},\\
    S_2(a,b)=&\{(j,k): \Theta^{(a)*}_{j,b}\Theta^*_{k,a}+\Theta^{(a)*}_{k,b}\Theta^*_{j,a}\neq 0\},
\end{split}
\end{equation}
and sample size quantities $n_1(a,b)=\min_{(j,k)\in S_1(a,b)}n_{j,k}$, $n_2(a,b)=\min_{(j,k)\in S_2(a,b)}n_{j,k}$.
\subsection{Proof of Theorem~\ref{thm:nb_lasso_err}}
\begin{proof}[Proof of Theorem~\ref{thm:nb_lasso_err}]
By the definition of $\widehat{\theta}^{(a)}$, we have
\begin{equation*}
\begin{split}
    &\frac{1}{2}\widehat{\theta}^{(a)\top}\widetilde{\Sigma}\widehat{\theta}^{(a)}-\widetilde{\Sigma}_{a,:}\widehat{\theta}^{(a)}+\sum_{j=1}^p\lambda_j^{(a)}|\widehat{\theta}^{(a)}_j|\\
    \leq &\frac{1}{2}\theta^{(a)*\top}\widetilde{\Sigma}\widehat{\theta}^{(a)*}-\widetilde{\Sigma}_{a,:}\theta^{(a)*}+\sum_{j=1}^p\lambda_j^{(a)}|\theta^{(a)*}_j|,
\end{split}
\end{equation*}
which implies 
\begin{align}
    &\frac{1}{2}(\widehat{\theta}^{(a)}-\theta^{(a)*})^\top\widetilde{\Sigma}(\widehat{\theta}^{(a)}-\theta^{(a)*})\notag\\
    \leq &(\widetilde{\Sigma}_{:,a}-\widetilde{\Sigma}\theta^{(a)*})^\top(\widehat{\theta}^{(a)}-\theta^{(a)*})+\sum_{j=1}^p\lambda_j^{(a)}(|\theta^{(a)*}_j|-|\widehat{\theta}^{(a)}_j|)\notag\\
    \leq&\sum_{j=1}^p\|\widetilde{\Sigma}_{j,:}-\Sigma^*_{j,:}\|_{\infty}(\|\theta^{(a)*}\|_1+1)|\widehat{\theta}^{(a)}_j-\theta^{(a)*}_j|+\sum_{j=1}^p\lambda_j^{(a)}(|\theta^{(a)*}_j|-|\widehat{\theta}^{(a)}_j|)\\
    =&\sum_{j=1}^p\|\widetilde{\Sigma}_{j,:}-\Sigma^*_{j,:}\|_{\infty}\frac{\|\Theta^{*}_{:,a}\|_1}{\Theta^*_{a,a}}|\widehat{\theta}^{(a)}_j-\theta^{(a)*}_j|+\sum_{j=1}^p\lambda_j^{(a)}(|\theta^{(a)*}_j|-|\widehat{\theta}^{(a)}_j|).\label{eq:l2_err_standard_ineq}
\end{align}
The following lemma provides an upper bound for $\|\widetilde{\Sigma}_{j,:}-\Sigma^*_{j,:}\|_{\infty}$:
\begin{lemma}[Entry-wise error bounds for $\widehat{\Sigma}-\Sigma^*$ and $\widetilde{\Sigma}$]\label{lem:SampleCov_entry_err}
	With probability at least $1-p^{-c}$, for $1\leq i,j\leq p$,
	\begin{align*}
	|\widehat{\Sigma}_{i,j}-\Sigma^*_{i,j}|\leq &C\|\Sigma^*\|_{\infty}\sqrt{\frac{\log p}{n_{i,j}}},\\
	|\widetilde{\Sigma}_{i,j}-\Sigma^*_{i,j}|
	\leq &C\|\Sigma^*\|_{\infty}\sqrt{\frac{\log p}{n_{i,j}}}.
	\end{align*}
\end{lemma}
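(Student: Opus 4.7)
The plan is to first establish the entry-wise bound for the unbiased estimator $\widehat{\Sigma}$, and then leverage the optimality of the projection defining $\widetilde{\Sigma}$ to transfer the bound, essentially for free, to $\widetilde{\Sigma}$.

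For the first inequality, fix $(j,k)$. Conditionally on the observational sets $\{V_i\}_{i=1}^n$, the quantity $\widehat{\Sigma}_{j,k}-\Sigma^*_{j,k}=\frac{1}{n_{j,k}}\sum_{i\in O_{j,k}}(x_{i,j}x_{i,k}-\Sigma^*_{j,k})$ is an average of $n_{j,k}$ independent centered sub-exponential random variables, each with sub-exponential norm controlled by $\|\Sigma^*\|_{\infty}$ since $x\sim\mathcal{N}(0,\Sigma^*)$ and products of jointly Gaussian variables are sub-exponential. A standard Bernstein-type inequality for sub-exponentials yields, for each $(j,k)$ and $t>0$,
$$\mathbb{P}\Bigl(|\widehat{\Sigma}_{j,k}-\Sigma^*_{j,k}|>t\Bigr)\le 2\exp\Bigl(-c n_{j,k}\min\Bigl\{\tfrac{t^2}{\|\Sigma^*\|_{\infty}^2},\tfrac{t}{\|\Sigma^*\|_{\infty}}\Bigr\}\Bigr).$$
Setting $t=C\|\Sigma^*\|_\infty\sqrt{\log p/n_{j,k}}$ and taking a union bound over all $p^2$ entries (together with a mild requirement that $n_{j,k}\gtrsim \log p$ so the quadratic regime of Bernstein applies; otherwise the bound is vacuous) gives the first claim with probability at least $1-p^{-c}$.

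For the second inequality, I would use the defining property of the projection \eqref{eq:proj_Sigma}. Since $\Sigma^*$ is itself positive definite, it is a feasible point for the projection problem, so
$$\max_{j,k}\sqrt{n_{j,k}}\,|\widetilde{\Sigma}_{j,k}-\widehat{\Sigma}_{j,k}|\le \max_{j,k}\sqrt{n_{j,k}}\,|\Sigma^*_{j,k}-\widehat{\Sigma}_{j,k}|.$$
Conditioning on the high-probability event established above, the right-hand side is bounded by $C\|\Sigma^*\|_\infty\sqrt{\log p}$ uniformly. Hence for each individual $(j,k)$, $|\widetilde{\Sigma}_{j,k}-\widehat{\Sigma}_{j,k}|\le C\|\Sigma^*\|_\infty\sqrt{\log p/n_{j,k}}$, and the triangle inequality
$$|\widetilde{\Sigma}_{j,k}-\Sigma^*_{j,k}|\le |\widetilde{\Sigma}_{j,k}-\widehat{\Sigma}_{j,k}|+|\widehat{\Sigma}_{j,k}-\Sigma^*_{j,k}|$$
yields the second inequality (with a constant enlarged by a factor of $2$) on the same event.

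There is no serious obstacle here: the only mildly delicate point is verifying that the sub-exponential Bernstein inequality can be applied uniformly when the $n_{j,k}$ vary across pairs, which is why the bound is stated per-entry with the $(j,k)$-specific sample size on the right-hand side. The projection step is essentially free because the weighted $\ell_\infty$ distance from $\widehat{\Sigma}$ to the PSD cone is no larger than the weighted $\ell_\infty$ distance from $\widehat{\Sigma}$ to the feasible point $\Sigma^*$.
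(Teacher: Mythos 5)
Your proposal is correct and follows essentially the same route as the paper: a sub-exponential (Bernstein-type) concentration bound for each entry of $\widehat{\Sigma}$ followed by a union bound, and then the feasibility of $\Sigma^*$ in the projection problem \eqref{eq:proj_Sigma} to control $\max_{j,k}\sqrt{n_{j,k}}|\widetilde{\Sigma}_{j,k}-\widehat{\Sigma}_{j,k}|$ and conclude via the triangle inequality. Your explicit remark that the quadratic regime of Bernstein requires $n_{j,k}\gtrsim\log p$ is a caveat the paper leaves implicit, but otherwise the two arguments coincide.
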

Lemma~\ref{lem:SampleCov_entry_err} suggests that $$\|\widetilde{\Sigma}_{j,:}-\Sigma^*_{j,:}\|_{\infty}(\|\theta^{(a)*}\|_1+1)\leq C\|\Sigma^*\|_{\infty}\frac{\|\Theta^{*}_{:,a}\|_1}{\Theta^*_{a,a}}\sqrt{\frac{\log p}{\mymin_{k}n_{j,k}}}\leq \frac{\lambda^{(a)}_j}{2}$$ with high probability. Together with \eqref{eq:l2_err_standard_ineq}, we have 
\begin{align*}
    &\frac{1}{2}(\widehat{\theta}^{(a)}-\theta^{(a)*})^\top\widetilde{\Sigma}(\widehat{\theta}^{(a)}-\theta^{(a)*})\notag\\
    \leq &\sum_{j=1}^p\frac{\lambda_j}{2}|\widehat{\theta}^{(a)}_j-\theta^{(a)*}_j|+\lambda_j^{(a)}(|\theta^{(a)*}_j|-|\widehat{\theta}^{(a)}_j|)\\
    \leq &\sum_{j\in \cN_a}\frac{3}{2}\lambda^{(a)}_j|\widehat{\theta}^{(a)}_j-\theta^{(a)*}_j|-\sum_{j\in \overline{\cN}_a^c}\frac{1}{2}\lambda^{(a)}_j|\widehat{\theta}^{(a)}_j-\theta^{(a)*}_j|,
\end{align*}
where $\cN_a=\{j\neq a:\Theta^*_{j,a}\neq 0\}=\{j:\theta^{(a)*}_{j}\neq 0\}$ and $\overline{\cN}_a=\{a\}\cup \cN_a$. By construction, $\widetilde{\Sigma}$ is positive semi-definite, which implies that 
\begin{align}\label{eq:nb_lasso_err_cone1}
    \sum_{j\in \overline{\cN}_a^c}\lambda_j^{(a)}|\widehat{\theta}_j^{(a)}-\theta^{(a)*}_j|\leq 3\sum_{j\in \cN_a}\lambda_j^{(a)}|\widehat{\theta}_j^{(a)}-\theta^{(a)*}_j|. 
\end{align}
Our next step is to show a lower bound of the L.H.S. of the inequality above. Let $\Delta = \widehat{\theta}^{(a)}-\theta^{(a)*}$. First note that 
\begin{align*}
\frac{1}{2}\Delta^\top\widetilde{\Sigma}\Delta\geq \frac{1}{2}\|\Delta\|_2^2\lambda_{\mymin}(\Sigma^*)-\frac{1}{2}|\Delta^\top(\widetilde{\Sigma}-\Sigma^*)\Delta|,
\end{align*}
where the latter term can be further bounded as follows:
\begin{align*}
    |\Delta^\top(\widetilde{\Sigma}-\Sigma^*)\Delta|\leq&\sum_{j,k}\lambda^{(a)}_j|\Delta_j|\left|\frac{\widetilde{\Sigma}_{j,k}-\Sigma^*_{j,k}}{\lambda^{(a)}_j}\right||\Delta_k|\\
    \leq&\left(\sum_{j=1}^p\lambda^{(a)}_j|\Delta_j|\right)\|\Delta\|_1\mymax_{j,k}\left|\frac{\widetilde{\Sigma}_{j,k}-\Sigma^*_{j,k}}{\lambda^{(a)}_j}\right|\\
    \leq&\frac{\Theta^*_{a,a}}{2\|\Theta^*_{:,a}\|_1}\left(\sum_{j=1}^p\lambda^{(a)}_j|\Delta_j|\right)\|\Delta\|_1,
\end{align*}
where the last line is due to Lemma~\ref{lem:SampleCov_entry_err}. By \eqref{eq:nb_lasso_err_cone1}, 
\begin{equation}\label{eq:nb_lasso_l1_l2}
\begin{split}
    \|\Delta\|_1=& \sum_{j\in \cN_a}|\Delta_j|+\sum_{j\in \overline{\cN}_a^c}|\Delta_j|\\
    \leq&\sum_{j\in \cN_a}|\Delta_j|+\frac{1}{\mymin_{j\in \overline{\cN}_a^c}\lambda_j^{(a)}}\sum_{j\in \overline{\cN}_a^c}\lambda_j^{(a)}|\Delta_j|\\
    \leq &\sum_{j\in \cN_a}|\Delta_j|+\frac{3\mymax_{j\in \cN_a}\lambda_j^{(a)}}{\mymin_{j\in \overline{\cN}_a^c}\lambda_j^{(a)}}\sum_{j\in \cN_a}|\Delta_j|\\
    \leq&\left(3\sqrt{\gamma_a}+1\right)\sqrt{d_a}\|\Delta\|_2,
\end{split}
\end{equation}
and 
\begin{align*}
    \sum_{j=1}^p\lambda_j^{(a)}|\Delta_j|\leq&4\sum_{j\in \cN_a}\lambda_j^{(a)}|\Delta_j|\\
    \leq&4\mymax_{j\in \cN_a}\lambda_j^{(a)}\sqrt{d_a}\|\Delta\|_2\\
    \leq&\frac{C\|\Sigma^*\|_{\infty}\|\Theta^*_{:,a}\|_1}{\Theta^*_{a,a}}\sqrt{\frac{d_a\log p}{\mymin_{j\in \cN_a}\mymin_{k}n_{j,k}}}\|\Delta\|_2.
\end{align*}
Hence we can further bound the term above as follows:
\begin{align*}
    &|\Delta^\top(\widetilde{\Sigma}-\Sigma^*)\Delta|\\
    \leq&C\|\Sigma^*\|_{\infty}\sqrt{\frac{d_a^2\log p}{\mymin_{j\in \cN_a}\mymin_k n_{j,k}}}\left(\sqrt{\gamma_a}+1\right)\|\Delta\|_2^2\\
    \leq&\frac{\lambda_{\mymin}(\Sigma^*)}{2}\|\Delta\|_2^2,
\end{align*}
where the last line is due to our condition that
$$
\mymin_{j\in \cN_a}\mymin_k n_{j,k}\geq C\frac{C\|\Sigma^*\|_{\infty}^2(\gamma_a+1)}{\lambda_{\mymin}^2(\Sigma^*)}d_a^2\log p.
$$
Therefore, 
\begin{align*}
    \|\Delta\|_2^2\leq &\frac{6}{\lambda_{\mymin}(\Sigma^*)}\sum_{j\in \cN_a}\lambda^{(a)}_j|\Delta_j|\\
    \leq &\frac{C\|\Sigma^*\|_{\infty}}{\lambda_{\mymin}(\Sigma^*)}\frac{\|\Theta^*_{:,a}\|_1}{\Theta^*_{a,a}}\sqrt{\frac{d_a\log p}{\mymin_j\in \cN_a\mymin_k n_{j,k}}}\|\Delta\|_2,
\end{align*}
which implies
\begin{align*}
    \|\Delta\|_2\leq \frac{C\|\Sigma^*\|_{\infty}}{\lambda_{\mymin}(\Sigma^*)}\frac{\|\Theta^*_{:,a}\|_1}{\Theta^*_{a,a}}\sqrt{\frac{d_a\log p}{\mymin_{j\in \cN_a}\mymin_k n_{j,k}}}.
\end{align*}
While for the $\ell_1$ norm error, one can apply \eqref{eq:nb_lasso_l1_l2} to obtain that
\begin{align*}
    \|\Delta\|_1\leq &\frac{C\|\Sigma^*\|_{\infty}(\sqrt{\gamma_a}+1)}{\lambda_{\mymin}(\Sigma^*)}\frac{\|\Theta^*_{:,a}\|_1}{\Theta^*_{a,a}}\sqrt{\frac{d_a^2\log p}{\mymin_{j\in \cN_a}\mymin_k n_{j,k}}},
\end{align*}
\begin{align*}
    \sum_{j=1}^p\lambda^{(a)}_j|\Delta_j|\leq &4\sum_{j\in \cN_a}\lambda_j^{(a)}|\Delta_j|\\
    \leq&\frac{C\|\Sigma^*\|_{\infty}^2}{\lambda_{\mymin}(\Sigma^*)}\frac{\|\Theta^*_{:,a}\|_1^2}{(\Theta^*_{a,a})^2}\frac{d_a\log p}{\mymin_{j\in \cN_a}\mymin_k n_{j,k}}.
\end{align*}
\end{proof}
\begin{proof}[Proof of Theorem \ref{thm:nblasso_support_full}]
    We first would like to show that there is no false positive: $\widehat{\theta}^{(a)}_{(\mathcal{N}_a)^c}=0$, and the second step is to show no false negative in $\mathcal{N}_a$.
	Let $$\widetilde{\theta}=\arg\min_{\theta_{(\mathcal{N}_a)^c=0}}\frac{1}{2}\theta^\top \widetilde{\Sigma}\theta-\widetilde{\Sigma}_{:,a}\theta+\|\lambda^{(a)}\circ \theta\|_1,$$ and $\widetilde{\Delta}=\widetilde{\theta}-\theta^*$. Since $\widetilde{\Sigma}$ is positive semi-definite, $\widetilde{\theta}$ exists and satisfies the KKT condition,
	\begin{equation}\label{eq:nblasso:KKT1}
	\widetilde{\Sigma}_{\mathcal{N}_a,\mathcal{N}_a}\Delta_{\mathcal{N}_a}+\widetilde{\Sigma}_{\mathcal{N}_a,\mathcal{N}_a}\theta^*_{\mathcal{N}_a}-\widetilde{\Sigma}_{\mathcal{N}_a,a}+\lambda^{(a)}_{\mathcal{N}_a}\circ \widetilde{Z}_{\mathcal{N}_a}=0,
	\end{equation}
	for some $\widetilde{Z}_{j}=\begin{cases}
	\mathrm{sgn}(\widetilde{\theta}_j),& j\in \mathcal{N}_a,\widetilde{\theta}_j\neq 0,\\
	\in [-1,1], &j\in \mathcal{N}_a,\widetilde{\theta}_j=0.
	\end{cases}$
	We define $\widetilde{Z}_j, j\in \overline{\mathcal{N}}_a^c$ to satisfy
	\begin{equation}\label{eq:nblasso:KKT2}
	\widetilde{\Sigma}_{\overline{\mathcal{N}}_a^c,\mathcal{N}_a}\Delta_{\mathcal{N}_a}+\widetilde{\Sigma}_{\overline{\mathcal{N}}_a^c,\mathcal{N}_a}\theta^*_{\mathcal{N}_a}-\widetilde{\Sigma}_{\overline{\mathcal{N}}_a^c,a}+\lambda^{(a)}_{\overline{\mathcal{N}}_a^c}\circ \widetilde{Z}_{\overline{\mathcal{N}}_a^c}=0,
	\end{equation}
	and we would like to show $\|\widetilde{Z}_{\overline{\mathcal{N}}_a^c}\|_{\infty}\leq 1$, so that $\widetilde{\theta}$ also satisfies the KKT condition of 
    $$
    \min_\theta \frac{1}{2}\theta^\top \widetilde{\Sigma}\theta-\widetilde{\Sigma}_{:,a}\theta+\|\lambda^{(a)}\circ \theta\|_1.
    $$
    This would complete our proof for the no false positive statement.

	By \eqref{eq:nblasso:KKT1} and \eqref{eq:nblasso:KKT2}, we have that
	\begin{equation*}
	\begin{split}
	    \widetilde{Z}_{\overline{\mathcal{N}}_a^c}=&\frac{1}{\lambda^{(a)}_{\overline{\mathcal{N}}_a^c}}\circ\Big[\widetilde{\Sigma}_{\overline{\mathcal{N}}_a^c,a}-\widetilde{\Sigma}_{\overline{\mathcal{N}}_a^c,\mathcal{N}_a}\theta^*_{\mathcal{N}_a}\\
	    &-\widetilde{\Sigma}_{\overline{\mathcal{N}}_a^c,\mathcal{N}_a}(\widetilde{\Sigma}_{\mathcal{N}_a,\mathcal{N}_a})^{-1}(\widetilde{\Sigma}_{\mathcal{N}_a,a}-\widetilde{\Sigma}_{\mathcal{N}_a,\mathcal{N}_a}\theta^*_{\mathcal{N}_a}-\lambda^{(a)}_{\mathcal{N}_a}\circ \widetilde{Z}_{\mathcal{N}_a})\Big]
	\end{split}
	\end{equation*}
	In the following, we first bound $|\widetilde{\Sigma}_{j,a}-\widetilde{\Sigma}_{j,\mathcal{N}_a}\theta^*_{\mathcal{N}_a}|$ for any $j\neq a$, then bound $\vertiii{\widetilde{\Sigma}_{\overline{\mathcal{N}}_a^c,\mathcal{N}_a}(\widetilde{\Sigma}_{\mathcal{N}_a,\mathcal{N}_a})^{-1}}_{\infty}$.
	
	Let $\widehat{W}=\widetilde{\Sigma}-\Sigma^*$, then we have 
	\begin{equation}\label{eq:nblasso_primal_witness}
	\begin{split}
	    \left|\widetilde{\Sigma}_{j,a}-\widetilde{\Sigma}_{j,\mathcal{N}_a}\theta^*_{\mathcal{N}_a}\right|=&\left|\Sigma^*_{j,a}-\Sigma^*_{j,\mathcal{N}_a}\theta^*_{\mathcal{N}_a}+\widehat{W}_{j,a}-\widehat{W}_{j,\mathcal{N}_a}\theta^*_{\mathcal{N}_a}\right|\\
	    =&\left|\widehat{W}_{j,a}-\widehat{W}_{j,\mathcal{N}_a}\theta^*_{\mathcal{N}_a}\right|,
	\end{split}
	\end{equation}
	where the last equation is due to the definition $$\theta^*_{\backslash a}=(\Sigma^*_{\backslash a,\backslash a})^{-1}\Sigma^*_{\backslash a,a}.$$
	By Proposition 1, with probability at least $1-p^{-c}$, 
	\begin{equation*}
	|\widehat{W}_{ij}|\leq C\|\Sigma^*\|_{\infty}\sqrt{\frac{\log p}{n_{ij}}}
	\end{equation*}
	holds for all $1\leq i,j\leq p$. Hence 
	\begin{equation}\label{eq:nblasso_dev_bnd}
	\left|\widetilde{\Sigma}_{j,a}-\widetilde{\Sigma}_{j,\mathcal{N}_a}\theta^*_{\mathcal{N}_a}\right|\leq \|\widehat{W}_{j,\overline{\mathcal{N}}_a}\|_{\infty}(\kappa_1+1)\leq \frac{\gamma \lambda^{(a)}_j}{4}
	\end{equation}
	for $j\neq a$. 
	
	On the other hand, for any $j\in \overline{\mathcal{N}}^c_a$, 
	\begin{equation}\label{eq:nblasso_incoherence_bnd}
	\begin{split}
	&\|(\widetilde{\Sigma}_{\mathcal{N}_a,\mathcal{N}_a})^{-1}\widetilde{\Sigma}_{\mathcal{N}_a,j}\|_1\\
	\leq& \|(\Sigma^*_{\mathcal{N}_a,\mathcal{N}_a})^{-1}\Sigma^*_{\mathcal{N}_a,j}\|_1+d_a\vertiii{(\widetilde{\Sigma}_{\mathcal{N}_a,\mathcal{N}_a})^{-1}}_{\infty}\left\|\widetilde{\Sigma}_{\mathcal{N}_a,j}-\Sigma^*_{\mathcal{N}_a,j}\right\|_{\infty}\\
	&+d_a\|(\widetilde{\Sigma}_{\mathcal{N}_a,\mathcal{N}_a})^{-1}-(\Sigma^*_{\mathcal{N}_a,\mathcal{N}_a})^{-1}\|_{\infty}\|\Sigma^*_{\mathcal{N}_a,j}\|_1\\
	\leq& 1-\gamma+\kappa_3 d_a\|(\widetilde{\Sigma}_{\mathcal{N}_a,\mathcal{N}_a})^{-1}-(\Sigma^*_{\mathcal{N}_a,\mathcal{N}_a})^{-1}\|_{\infty}\\
	&+Cd_a\left(\kappa_2+d_a\left\|(\widetilde{\Sigma}_{\mathcal{N}_a,\mathcal{N}_a})^{-1}-(\Sigma^*_{\mathcal{N}_a,\mathcal{N}_a})^{-1}\right\|_{\infty}\right)\\
	&*\max_{i\in \mathcal{N}_a}\|\Sigma^*\|_{\infty}\sqrt{\frac{\log p}{n_{ij}}}\\
	\leq&1-\frac{5\gamma}{6}+(\kappa_3+\frac{\gamma}{6\kappa_2}) d_a\|(\widetilde{\Sigma}_{\mathcal{N}_a,\mathcal{N}_a})^{-1}-(\Sigma^*_{\mathcal{N}_a,\mathcal{N}_a})^{-1}\|_{\infty}.
	\end{split}
	\end{equation}
	Since 
	\begin{equation*}
	\begin{split}
	\vertiii{(\Sigma^*_{\mathcal{N}_a,\mathcal{N}_a})^{-1}\widehat{W}_{\mathcal{N}_a,\mathcal{N}_a}}_{\infty}\leq& \vertiii{(\Sigma^*_{\mathcal{N}_a,\mathcal{N}_a})^{-1}}_{\infty}\vertiii{\widehat{W}_{\mathcal{N}_a,\mathcal{N}_a}}_{\infty}\\
	\leq&\kappa_2d_a\|\widehat{W}_{\mathcal{N}_a,\mathcal{N}_a}\|_{\infty}\\
	\leq& C\kappa_2d_a\max_{i,j\in\mathcal{N}_a}\|\Sigma^*\|_{\infty}\sqrt{\frac{\log p}{n_{ij}}}\\
	\leq &\frac{1}{2},
	\end{split}
	\end{equation*}
	where the last line is due to the sample size requirement in Theorem 1. Thus the following matrix expansion holds:
	\begin{equation}
	\begin{split}
	&(\widetilde{\Sigma}_{\mathcal{N}_a,\mathcal{N}_a})^{-1}-(\Sigma^*_{\mathcal{N}_a,\mathcal{N}_a})^{-1}\\
	=&(\Sigma^*_{\mathcal{N}_a,\mathcal{N}_a}+\widehat{W}_{\mathcal{N}_a,\mathcal{N}_a})^{-1}-(\Sigma^*_{\mathcal{N}_a,\mathcal{N}_a})^{-1}\\
	=&\sum_{k=1}^{\infty}(-1)^k[(\Sigma^*_{\mathcal{N}_a,\mathcal{N}_a})^{-1}\widehat{W}_{\mathcal{N}_a,\mathcal{N}_a}]^k(\Sigma^*_{\mathcal{N}_a,\mathcal{N}_a})^{-1}\\
	=&-(\Sigma^*_{\mathcal{N}_a,\mathcal{N}_a})^{-1}\widehat{W}_{\mathcal{N}_a,\mathcal{N}_a}J(\Sigma^*_{\mathcal{N}_a,\mathcal{N}_a})^{-1}.
	\end{split}
	\end{equation}
	Here $J=\sum_{k=0}^{\infty}(-1)^k[(\Sigma^*_{\mathcal{N}_a,\mathcal{N}_a})^{-1}\widehat{W}_{\mathcal{N}_a,\mathcal{N}_a}]^k$ and satisfies $\vertiii{J}_{\infty}\leq \sum_{k=0}^{\infty}(\frac{1}{2})^k=2$. Hence we have 
	\begin{equation}\label{eq:nblasso_inv_bnd}
	\begin{split}
	&\|(\widetilde{\Sigma}_{\mathcal{N}_a,\mathcal{N}_a})^{-1}-(\Sigma^*_{\mathcal{N}_a,\mathcal{N}_a})^{-1}\|_{\infty}\\
	\leq &2\kappa_2^2\|\widehat{W}_{\mathcal{N}_a,\mathcal{N}_a}\|_{\infty}\\
	\leq &2C\kappa_2^2\|\Sigma^*\|_{\infty}\max_{i,j\in \mathcal{N}_a}\sqrt{\frac{\log p}{n_{ij}}}\\
	\leq &\min\{\frac{\kappa_2}{d_a},\frac{\gamma}{6\kappa_3d_a}\}.
	\end{split}
	\end{equation}\
	Combining \eqref{eq:nblasso_inv_bnd} and \eqref{eq:nblasso_incoherence_bnd} implies
	\begin{equation}\label{eq:nblasso_incoherence_finalbnd}
	\|(\widetilde{\Sigma}_{\mathcal{N}_a,\mathcal{N}_a})^{-1}\widetilde{\Sigma}_{\mathcal{N}_a,j}\|_1\leq 1-\frac{\gamma}{2}.
	\end{equation}
	By \eqref{eq:nblasso_primal_witness}, \eqref{eq:nblasso_dev_bnd}, \eqref{eq:nblasso_incoherence_finalbnd}, we have
	\begin{equation*}
	    \begin{split}
	        &\|\widetilde{Z}_{\overline{\mathcal{N}}_a^c}\|_{\infty}\\
	        \leq &\frac{\gamma}{4}+(1-\frac{\gamma}{2})\frac{\max_{j\in \mathcal{N}_a}\lambda^{(a)}_j}{\min_{j\in \overline{\mathcal{N}_a}^c}\lambda^{(a)}_j}\\
	        \leq& c\gamma+(1-\frac{\gamma}{2})\sqrt{\gamma_a}\\
	        \leq& 1,
	    \end{split}
	\end{equation*}
	where the last inequality is due to the condition on $\gamma_a$. 
	
	Now we only need to upper bound $\|\Delta_{\mathcal{N}_a}\|_{\infty}$. Based on \eqref{eq:nblasso:KKT1}, \eqref{eq:nblasso_dev_bnd} and \eqref{eq:nblasso_inv_bnd}, it is straightforward to see that
	\begin{equation*}
	\|\Delta_{\mathcal{N}_a}\|_{\infty}\leq 2\kappa_2(1+\frac{\gamma}{4})\max_{j\in \mathcal{N}_a}\lambda^{(a)}_j\leq \frac{\theta^{(a)}_{\min}}{2},
	\end{equation*}
	 which implies that $\{j:\widehat{\theta}^{(a)}_j\neq 0\}=\mathcal{N}_a$ and $\mathrm{sgn}(\widehat{\theta}^{(a)}_j)=\mathrm{sgn}(\theta^{(a)*}_j)=\mathrm{sgn}(\Theta^*_{ja})$.
\end{proof}

\subsection{Proof of Theorem~\ref{thm:nb_lasso_debias_decomp}}
We start by presenting the following lemma that guarantees the estimation performance of $\widehat{\Theta}^{(a)}$ and hence helps establish our normal approximation result of the debiased neighborhood lasso. Define the sample size ratio for node $j$ as $$\gamma_j^{(a)}=\frac{\mymax_{i\in \overline{\cN}^{(a)c}_j}\mymin_kn_{i,k}}{\mymin_{i\in \cN^{(a)}_j}\mymin_kn_{i,k}}.$$ Also recall the neighborhood lasso estimator used for debiasing in \eqref{eq:debias_term}:
\begin{equation}
\begin{split}
\widehat{\theta}^{(a,b)} = &\argmin_{\theta\in \bR^{p},\theta_a=\theta_b=0}\frac{1}{2}\theta^\top \widetilde{\Sigma}\theta - \widetilde{\Sigma}_{b,:}\theta +\sum_{k=1}^{p}\lambda^{(a,b)}_k|\theta_k|,\\ \widehat{\tau}^{(a,b)} = &(\widetilde{\Sigma}_{b,b}-\widetilde{\Sigma}_{b,:}\widehat{\theta}^{(a,b)})^{-1},
\end{split}
\end{equation}
\begin{lemma}\label{lem:debias_nb_lasso_err}
	If the tuning parameters $\lambda^{(a,j)}_k$'s in \eqref{eq:debias_term} satisfy 
	\begin{align*}
		\lambda^{(a,j)}_k \asymp& \|\Sigma^*\|_{\infty}\frac{\|\Theta^{(a)*}_{j,:}\|_1}{\Theta^{(a)*}_{j,j}}\sqrt{\frac{\log p}{\mymin_{i\in [p]}n_{i,k}}},\\
		\mymin_{i\in \overline{\cN}_j^{(a)}}\mymin_{k\in [p]}n_{i,k}\geq&\frac{C\|\Sigma^*\|^2_{\infty}(\gamma_j^{(a)}+\kappa_{\Sigma^*}^2)}{\lambda_{\mymin}^2(\Sigma^*)}d_j^{(a)2}\log p,
	\end{align*} then with probability at least $1-Cp^{-c}$,
	\begin{align*}
	\|\widehat{\Theta}^{(a)}_{j,:}-\Theta^{(a)*}_{j,:}\|_2\leq 
	&\frac{C\kappa_{\Sigma^*}\|\Sigma^*\|_{\infty}}{\lambda_{\mymin}(\Sigma^*)}\|\Theta^{(a)*}_{j,:}\|_1\sqrt{\frac{d_j^{(a)}\log p}{\mymin_{k\in \overline{\cN}_k^{(a)}}\mymin_{i\in [p]}n_{i,k}}},\\
	\|\widehat{\Theta}^{(a)}_{j,:}-\Theta^{(a)*}_{j,:}\|_1\leq 
	&\frac{C(\kappa_{\Sigma^*}+\sqrt{\gamma_j^{(a)}})\|\Sigma^*\|_{\infty}}{\lambda_{\mymin}(\Sigma^*)}\|\Theta^{(a)*}_{j,:}\|_1d^{(a)}_j\sqrt{\frac{\log p}{\mymin_{k\in\overline{\cN}_j^{(a)}}\mymin_{i\in[p]}n_{i,k}}},\\
    \|\widehat{\Theta}^{(a)}_{j,:}-\Theta^{(a)*}_{j,:}\|_{\lambda^{(a,j)},1}\leq &\frac{C\kappa_{\Sigma^*}\|\Sigma^*\|_{\infty}^2}{\lambda_{\mymin}(\Sigma^*)}\frac{\|\Theta^{(a)*}_{j,:}\|_1^2}{\Theta^{(a)*}_{j,j}}\frac{d_j^{(a)}\log p}{\mymin_{k\in \cN_k^{(a)}}\mymin_{i\in [p]}n_{i,k}},
	\end{align*}
 where $d_j^{(a)}=|\{k:\Theta^{(a)*}_{j,k}\neq 0\}|$. Furthermore, the bounds above also hold when substituting $\widehat{\Theta}^{(a)}_{j,:}$ with $\widetilde{\Theta}^{(a)}_{j,:}$.
\end{lemma}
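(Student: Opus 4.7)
The plan is to mirror the argument in the proof of Theorem~\ref{thm:nb_lasso_err}, but applied to the auxiliary regression \eqref{eq:debias_term} where node $a$ is excluded from both the response and the regressors. The key observation is that, by the block inverse formula, the population analogue of $\widehat{\theta}^{(a,j)}$ is
\[
\theta^{(a,j)*} = \argmin_{\theta\in\mathbb{R}^p,\theta_a=\theta_j=0}\tfrac{1}{2}\theta^\top \Sigma^*\theta - \Sigma^*_{j,:}\theta,
\]
whose nonzero part equals $(\Sigma^*_{\backslash\{a,j\},\backslash\{a,j\}})^{-1}\Sigma^*_{\backslash\{a,j\},j}$, and this coincides with $-\Theta^{(a)*}_{j,\backslash j}/\Theta^{(a)*}_{j,j}$ since $\Theta^{(a)*}_{\backslash a,\backslash a}=(\Sigma^*_{\backslash a,\backslash a})^{-1}$ and $\Theta^{(a)*}_{j,a}=0$. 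Hence the support of $\theta^{(a,j)*}$ is $\mathcal{N}^{(a)}_j$, and its $\ell_1$ norm is bounded by $\|\Theta^{(a)*}_{j,:}\|_1/\Theta^{(a)*}_{j,j}$, enabling us to reuse the quantitative bounds from Theorem~\ref{thm:nb_lasso_err}.

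First, I will invoke Lemma~\ref{lem:SampleCov_entry_err} to control $\|\widetilde{\Sigma}-\Sigma^*\|_{\infty}$ entry-wise, so that $\|\widetilde{\Sigma}_{k,:}-\Sigma^*_{k,:}\|_\infty(\|\theta^{(a,j)*}\|_1+1)\leq \lambda^{(a,j)}_k/2$, which gives the standard basic inequality
\[
\tfrac{1}{2}\Delta^\top\widetilde{\Sigma}\Delta \le \sum_{k\in\mathcal{N}^{(a)}_j}\tfrac{3}{2}\lambda^{(a,j)}_k|\Delta_k|-\sum_{k\in\overline{\mathcal{N}}^{(a)c}_j}\tfrac{1}{2}\lambda^{(a,j)}_k|\Delta_k|,
\]
where $\Delta = \widehat{\theta}^{(a,j)}-\theta^{(a,j)*}$. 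This yields the cone inequality $\sum_{k\in\overline{\mathcal{N}}^{(a)c}_j}\lambda^{(a,j)}_k|\Delta_k|\le 3\sum_{k\in\mathcal{N}^{(a)}_j}\lambda^{(a,j)}_k|\Delta_k|$ and the $\ell_1/\ell_2$ comparison $\|\Delta\|_1\le(3\sqrt{\gamma^{(a)}_j}+1)\sqrt{d^{(a)}_j}\|\Delta\|_2$ as in \eqref{eq:nb_lasso_l1_l2}.

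Next, I will establish restricted strong convexity in the form $\Delta^\top\widetilde{\Sigma}\Delta\ge \tfrac{1}{2}\lambda_{\min}(\Sigma^*)\|\Delta\|_2^2$, using the bound $|\Delta^\top(\widetilde{\Sigma}-\Sigma^*)\Delta|\le (\sum_k\lambda^{(a,j)}_k|\Delta_k|)\|\Delta\|_1 \max_{k,l}|\widetilde{\Sigma}_{k,l}-\Sigma^*_{k,l}|/\lambda^{(a,j)}_k$, combined with the cone inequality and the sample size condition on $\min_{i\in\overline{\mathcal{N}}^{(a)}_j}\min_k n_{i,k}$. Solving the resulting quadratic inequality gives the stated $\ell_2$ bound on $\Delta$, and plugging back into the cone/$\ell_1$ comparison gives the $\ell_1$ and weighted $\ell_1$ bounds (with the extra $\|\Sigma^*\|_\infty\|\Theta^{(a)*}_{j,:}\|_1/\Theta^{(a)*}_{j,j}$ factor coming from $\max_k\lambda^{(a,j)}_k$ on the support).

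Finally, the bounds need to be transferred from $\widehat{\theta}^{(a,j)}$ to $\widehat{\Theta}^{(a)}_{j,:}=\widehat{\tau}^{(a,j)}\cdot(1,-\widehat{\theta}^{(a,j)}_{\backslash j})$. I will write
\[
\widehat{\tau}^{(a,j)}-\Theta^{(a)*}_{j,j} = \Theta^{(a)*}_{j,j}\widehat{\tau}^{(a,j)}\bigl[(\Sigma^*_{j,j}-\Sigma^*_{j,\backslash j}\theta^{(a,j)*})-(\widetilde{\Sigma}_{j,j}-\widetilde{\Sigma}_{j,:}\widehat{\theta}^{(a,j)})\bigr],
\]
and bound the bracket using Lemma~\ref{lem:SampleCov_entry_err} together with the just-obtained $\ell_1$ bound on $\Delta$, showing that $\widehat{\tau}^{(a,j)}/\Theta^{(a)*}_{j,j}=1+o(1)$. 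The final bounds on $\|\widehat{\Theta}^{(a)}_{j,:}-\Theta^{(a)*}_{j,:}\|_p$ then follow by a triangle-inequality split into a $\widehat{\tau}^{(a,j)}\cdot(\widehat{\theta}^{(a,j)}-\theta^{(a,j)*})$ term and a $(\widehat{\tau}^{(a,j)}-\Theta^{(a)*}_{j,j})\cdot\theta^{(a,j)*}$ term. I expect the main technical obstacle to be bookkeeping the two layers of error propagation through the product $\widehat{\tau}^{(a,j)}\widehat{\theta}^{(a,j)}$ and correctly tracking the $\|\Theta^{(a)*}_{j,:}\|_1$ vs.\ $\|\theta^{(a,j)*}\|_1$ factors, since the clean statement absorbs $1/\Theta^{(a)*}_{j,j}$ into the tuning parameter and requires the weighted $\ell_1$ bound to scale with $(\|\Theta^{(a)*}_{j,:}\|_1)^2/\Theta^{(a)*}_{j,j}$ rather than the more naive product.
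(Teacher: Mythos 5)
Your plan matches the paper's proof essentially step for step: the paper also identifies $\theta^{(a,j)*}=(\Sigma^*_{\backslash\{a,j\},\backslash\{a,j\}})^{-1}\Sigma^*_{\backslash\{a,j\},j}$ as the population target, reuses the Theorem~\ref{thm:nb_lasso_err} argument with $\Sigma^*_{\backslash a,\backslash a}$ and $\widetilde{\Sigma}_{\backslash a,\backslash a}$ in place of $\Sigma^*$ and $\widetilde{\Sigma}$ to get the $\ell_2$, $\ell_1$, and weighted-$\ell_1$ bounds on $\widehat{\theta}^{(a,j)}-\theta^{(a,j)*}$, controls $\widehat{\tau}^{(a,j)}-\tau^{(a,j)*}$ through the product $\widehat{\tau}^{(a,j)}\tau^{(a,j)*}|(\widehat{\tau}^{(a,j)})^{-1}-(\tau^{(a,j)*})^{-1}|$ exactly as you propose, and concludes by the same triangle-inequality split of $\widehat{\Theta}^{(a)}_{j,:}-\Theta^{(a)*}_{j,:}$. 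The approach and all the key identities are correct and the same as the paper's.
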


\begin{proof}[Proof of Lemma~\ref{lem:debias_nb_lasso_err}]
    By the definition of $\widehat{\Theta}^{(a)}_{j,:}$ in \eqref{eq:debias_Theta_est}:
    \begin{equation}
    \widehat{\Theta}_{j,:}^{(a)}=\widehat{\Theta}^{(a)}_{j,j} \widehat{\overline{\theta}}^{(a,j)},
\end{equation}
we have 
    \begin{equation}\label{eq:debias_Theta_err}
    \begin{split}
        &\|\widehat{\Theta}^{(a)}_{j,:}-\Theta^{(a)*}_{j,:}\|\\
        \leq &|\widehat{\Theta}^{(a)}_{j,j}-\Theta^{(a)*}_{j,j}|\frac{\|\Theta^{(a)*}_{j,:}\|}{\Theta^{(a)*}_{j,j}}+\Theta^{(a)*}_{j,j}\|\widehat{\theta}^{(a,j)}-\theta^{(a,j)*}\|\\
        &+|\widehat{\Theta}^{(a)}_{j,j}-\Theta^{(a)*}_{j,j}|\|\widehat{\theta}^{(a,j)}-\theta^{(a,j)*}\|,
    \end{split}
    \end{equation}
    where $\|\cdot\|$ can either be $\ell_2$ norm $\|\cdot\|_2$, $\ell_1$ norm $\|\cdot\|_1$, or the weighted $\ell_1$ norm $\|\cdot\|_{\lambda^{(a,j)},1}$ defined as $\|\theta\|_{\lambda^{(a,j)},1}=\sum_{k=1}^p\lambda^{(a,j)}_k|\theta_k|$, and 
    \begin{align*}
       \Theta^{(a)*}_{j,j}&=(\Sigma^*_{j,j}-\Sigma^*_{j,\backslash\{a,j\}}(\Sigma^*_{\backslash\{a,j\},\backslash\{a,j\}})^{-1}\Sigma^*_{\backslash\{a,j\},j})^{-1},\\
    \theta^{(a,j)*} &= (\Sigma^*_{\backslash\{a,j\},\backslash\{a,j\}})^{-1}\Sigma^*_{\backslash\{a,j\},j}.
    \end{align*}
    For $\|\widehat{\theta}^{(a,j)}-\theta^{(a,j)*}\|_2$, $\|\widehat{\theta}^{(a,j)}-\theta^{(a,j)*}\|_1$, and $\|\widehat{\theta}^{(a,j)}-\theta^{(a,j)*}\|_{\lambda^{(a,j)},1}$, 
    we can bound them by the same arguments as the proof of Theorem~\ref{thm:nb_lasso_err}. The only difference lies that we substitute $\Sigma^*$ and $\widetilde{\Sigma}$ by $\Sigma^*_{\backslash a, \backslash a}$ and $\widetilde{\Sigma}_{\backslash a, \backslash a}$, and hence with probability at least $1-Cp^{-c}$, we have
    \begin{equation}\label{eq:debias_theta_err}
        \begin{split}
            \|\widehat{\theta}^{(a,j)}-\theta^{(a,j)*}\|_2\leq &\frac{C\|\Sigma^*\|_{\infty}}{\lambda_{\mymin}(\Sigma^*)}\frac{\|\Theta^{(a)*}_{j,:}\|_1}{\Theta^{(a)*}_{j,j}}\sqrt{\frac{d_j^{(a)}\log p}{\mymin_{k\in \cN_k^{(a)}}\mymin_{i\in [p]}n_{i,k}}},\\
            \|\widehat{\theta}^{(a,j)}-\theta^{(a,j)*}\|_1\leq &\frac{C\|\Sigma^*\|_{\infty}(\sqrt{\gamma^{(a)}_j}+1)}{\lambda_{\mymin}(\Sigma^*)}\frac{\|\Theta^{(a)*}_{j,:}\|_1}{\Theta^{(a)*}_{j,j}}\sqrt{\frac{d_j^{(a)2}\log p}{\mymin_{k\in \cN_j^{(a)}}\mymin_{i\in [p]}n_{i,k}}},\\
            \|\widehat{\theta}^{(a,j)}-\theta^{(a,j)*}\|_{\lambda^{(a,j)},1}\leq &\frac{C\|\Sigma^*\|_{\infty}^2}{\lambda_{\mymin}(\Sigma^*)}\frac{\|\Theta^{(a)*}_{j,:}\|_1^2}{(\Theta^{(a)*}_{j,j})^2}\frac{d_j^{(a)}\log p}{\mymin_{k\in \cN_k^{(a)}}\mymin_{i\in [p]}n_{i,k}}.
        \end{split}
    \end{equation}
    While for $|\widehat{\Theta}^{(a)}_{j,j}-\Theta^{(a)*}_{j,j}|$, we have
    \begin{align}\label{eq:tau_err}
        |\widehat{\Theta}^{(a)}_{j,j}-\Theta^{(a)*}_{j,j}|\leq& |\widehat{\Theta}^{(a)}_{j,j}\Theta^{(a)*}_{j,j}||(\widehat{\Theta}^{(a)}_{j,j})^{-1}-(\Theta^{(a)*}_{j,j})^{-1}|,
    \end{align}
    and by the definition of $\widehat{\Theta}^{(a)}_{j,j}$,
    \begin{align*}
        |(\widehat{\Theta}^{(a)}_{j,j})^{-1}-(\Theta^{(a)*}_{j,j})^{-1}|=&|\widetilde{\Sigma}_{j,j}-\Sigma^*_{j,j}|+|(\widetilde{\Sigma}_{j,:}-\Sigma^*_{j,:})\widehat{\theta}^{(a,j)}|\\
        &+|\Sigma^*_{j,:}(\widehat{\theta}^{(a,j)}-\theta^{(a,j)*})|\\
        \leq &\|(\widetilde{\Sigma}-\Sigma^*)_{j,:}\|_{\infty}(1+\|\theta^{(a,j)*}\|_1+\|\widehat{\theta}-\theta^{(a,j)*}\|_1)\\
        &+\|\Sigma^*_{j,:}\|_2\|\widehat{\theta}^{(a,j)}-\theta^{(a,j)*}\|_2.
    \end{align*}
    Since $\|\theta^{(a,j)*}\|_1+1=\frac{\|\Theta^{(a)*}_{j,:}\|_1}{\Theta^{(a)*}_{j,j}}$, $\|\widehat{\theta}^{(a,j)}-\theta^{(a,j)*}\|_1\leq \frac{\|\Theta^{(a)*}_{j,:}\|_1}{\Theta^{(a)*}_{j,j}}$ due to \eqref{eq:debias_theta_err} and the sample size condition in Lemma~\ref{lem:debias_nb_lasso_err}, and $\|\Sigma^*_{j,:}\|_2\geq \lambda_{\mymin}(\Sigma^*)$, the error term $|(\widehat{\Theta}^{(a)}_{j,j})^{-1}-(\Theta^{(a)*}_{j,j})^{-1}|$ can be further bounded by
    \begin{align}\label{eq:tau_inv_err}
        &|(\widehat{\Theta}^{(a)}_{j,j})^{-1}-(\Theta^{(a)*}_{j,j})^{-1}|\\
        \leq& \frac{C\|\Sigma^*\|_{\infty}\|\Sigma^*_{j,:}\|_2\|\Theta^{(a)*}_{j,:}\|_1}{\lambda_{\mymin}(\Sigma^*)\Theta^{(a)*}_{j,j}}\sqrt{\frac{d_j^{(a)}\log p}{\mymin_{k\in\overline{\cN}_j^{(a)}}\mymin_{i\in[p]}n_{i,k}}}\\
        \leq&\frac{C\kappa_{\Sigma^*}\|\Sigma^*\|_{\infty}}{\lambda_{\min}(\Sigma^*)\Theta^{(a)*}_{j,j}}\sqrt{\frac{d_j^{(a)2}\log p}{\mymin_{k\in\overline{\cN}_j^{(a)}}\mymin_{i\in[p]}n_{i,k}}}\\
        \leq& \frac{1}{2\Theta^{(a)*}_{j,j}},
    \end{align}
    where the last line is due to the fact that $\|\Theta^{(a)*}_{j,:}\|_1\leq \sqrt{d_j^{(a)}}\|\Theta^{(a)*}_{j,:}\|_2\leq \sqrt{d_j^{(a)}}\lambda_{\max}(\Theta^{(a)*})\leq \sqrt{d_j^{(a)}}\lambda_{\min}^{-1}(\Sigma^{*})$, $\|\Sigma^*_{j,:}\|_2\leq \lambda_{\max}(\Sigma^*)$, and the sample size condition in Lemma~\ref{lem:debias_nb_lasso_err}. 
    Plug in \eqref{eq:tau_inv_err} into \eqref{eq:tau_err}, one can show that
    \begin{align*}
        &|\widehat{\Theta}^{(a)}_{j,j}-\Theta^{(a)*}_{j,j}|\\
        \leq &C(\Theta^{(a)*}_{j,j})^2\frac{\|\Sigma^*\|_{\infty}\|\Sigma^*_{j,:}\|_2\|\Theta^{(a)*}_{j,:}\|_1}{\lambda_{\mymin}(\Sigma^*)\Theta^{(a)*}_{j,j}}\sqrt{\frac{d_j^{(a)}\log p}{\mymin_{k\in\overline{\cN}_j^{(a)}}\mymin_{i\in[p]}n_{i,k}}}\\
        =&\frac{C\|\Sigma^*\|_{\infty}\|\Sigma^*_{j,:}\|_2\Theta^{(a)*}_{j,j}\|\Theta^{(a)*}_{j,:}\|_1}{\lambda_{\mymin}(\Sigma^*)}\sqrt{\frac{d_j^{(a)}\log p}{\mymin_{k\in\overline{\cN}_j^{(a)}}\mymin_{i\in[p]}n_{i,k}}},
    \end{align*}
    and 
    \begin{align*}
        |\widehat{\Theta}^{(a)}_{j,j}-\Theta^{(a)*}_{j,j}|\leq&\frac{1}{2}\widehat{\Theta}^{(a)}_{j,j}\\
        \leq&\frac{1}{2}[(\Theta^{(a)*}_{j,j})^{-1}-|(\widehat{\Theta}^{(a)}_{j,j})^{-1}-(\Theta^{(a)*}_{j,j})^{-1}|]^{-1}\leq \Theta^{(a)*}_{j,j}.
    \end{align*}
    Therefore, combining the bound above with \eqref{eq:debias_theta_err} and \eqref{eq:debias_Theta_err}, we have
    \begin{equation*}
        \begin{split}
            \|\widehat{\Theta}^{(a)}_{j,:}-\Theta^{(a)*}_{j,:}\|_2\leq &C\frac{\|\Sigma^*\|_{\infty}\|\Sigma^*_{j,:}\|_2\|\Theta^{(a)*}_{j,:}\|_1\|\Theta^{(a)*}_{j,:}\|_2}{\lambda_{\mymin}(\Sigma^*)}\sqrt{\frac{d_j^{(a)}\log p}{\mymin_{k\in\overline{\cN}_j^{(a)}}\mymin_{i\in[p]}n_{i,k}}}\\
            &+\frac{C\|\Sigma^*\|_{\infty}}{\lambda_{\mymin}(\Sigma^*)}\|\Theta^{(a)*}_{j,:}\|_1\sqrt{\frac{d_j^{(a)}\log p}{\mymin_{k\in \cN_k^{(a)}}\mymin_{i\in [p]}n_{i,k}}}
            \\
            \leq& \frac{C\kappa_{\Sigma^*}\|\Sigma^*\|_{\infty}}{\lambda_{\mymin}(\Sigma^*)}\|\Theta^{(a)*}_{j,:}\|_1\sqrt{\frac{d_j^{(a)}\log p}{\mymin_{k\in \overline{\cN}_k^{(a)}}\mymin_{i\in [p]}n_{i,k}}},
        \end{split}
    \end{equation*}
    where the last line is due to that $\|\Sigma^*_{j,:}\|_2\leq \lambda_{\mymax}(\Sigma^*)$ and $\|\Theta^{(a)*}_{j,:}\|_2\leq \lambda_{\mymin}^{-1}(\Sigma^*)$;
    \begin{equation*}
        \begin{split}
            \|\widehat{\Theta}^{(a)}_{j,:}-\Theta^{(a)*}_{j,:}\|_1\leq &C\frac{\|\Sigma^*\|_{\infty}\|\Sigma^*_{j,:}\|_2\|\Theta^{(a)*}_{j,:}\|_1^2}{\lambda_{\mymin}(\Sigma^*)}\sqrt{\frac{d_j^{(a)}\log p}{\mymin_{k\in\overline{\cN}_j^{(a)}}\min_{i\in [p]}n_{i,k}}}\\
            &+\frac{C\|\Sigma^*\|_{\infty}(\sqrt{\gamma_j^{(a)}}+1)}{\lambda_{\mymin}(\Sigma^*)}\|\Theta^{(a)*}_{j,:}\|_1\sqrt{\frac{d_j^{(a)2}\log p}{\mymin_{k\in \cN_k^{(a)}}\mymin_{i\in [p]}n_{i,k}}}\\
            \leq&\frac{C\|\Sigma^*\|_{\infty}\|\Theta^{(a)*}_{j,:}\|_1}{\lambda_{\mymin}(\Sigma^*)}(\kappa_{\Sigma^*}+\sqrt{\gamma_j^{(a)}})d^{(a)}_j\sqrt{\frac{\log p}{\mymin_{k\in\overline{\cN}_j^{(a)}}\mymin_{i\in[p]}n_{i,k}}},
        \end{split}
    \end{equation*}
    where we have applied the fact that $\|\Theta^{(a)*}_{j,:}\|_1\leq \sqrt{d^{(a)}_j}\|\Theta^{(a)*}_{j,:}\|_2$. While for the $\|\cdot\|_{\lambda^{(a,j)},1}$ error, note that $$\frac{\|\Theta^{(a)*}_{j,:}\|_{\lambda^{(a,j)},1}}{\Theta^{(a)*}_{j,j}}\leq \frac{\|\Sigma^*\|_{\infty}\|\Theta^{(a)*}_{j,:}\|_1^2}{(\Theta^{(a)*}_{j,j})^2}\sqrt{\frac{\log p}{\mymin_{k\in \overline{\cN}_j^{(a)}}\mymin_{i\in [p]}n_{i,k}}},$$
    and hence we have
    \begin{equation*}
        \begin{split}
            \|\widehat{\Theta}^{(a)}_{j,:}-\Theta^{(a)*}_{j,:}\|_{\lambda^{(a,j)},1}\leq &C\frac{\|\Sigma^*\|_{\infty}^2\|\Sigma^*_{j,:}\|_2\|\Theta^{(a)*}_{j,:}\|_1^3}{\lambda_{\mymin}(\Sigma^*)\Theta^{(a)*}_{j,j}}\frac{\sqrt{d_j^{(a)}}\log p}{\mymin_{k\in\overline{\cN}_j^{(a)}}\mymin_{i\in[p]}n_{i,k}}\\
            &+\frac{C\|\Sigma^*\|_{\infty}^2}{\lambda_{\mymin}(\Sigma^*)}\frac{\|\Theta^{(a)*}_{j,:}\|_1^2}{\Theta^{(a)*}_{j,j}}\frac{d_j^{(a)}\log p}{\mymin_{k\in \cN_k^{(a)}}\mymin_{i\in [p]}n_{i,k}}\\
            \leq&C\kappa_{\Sigma^*}\frac{\|\Sigma^*\|_{\infty}^2}{\lambda_{\mymin}(\Sigma^*)}\frac{\|\Theta^{(a)*}_{j,:}\|_1^2}{\Theta^{(a)*}_{j,j}}\frac{d_j^{(a)}\log p}{\mymin_{k\in \cN_k^{(a)}}\mymin_{i\in [p]}n_{i,k}}.
        \end{split}
    \end{equation*}

    While for the corresponding bounds for $\widetilde{\Theta}^{(a)}_{j,:}-\Theta^{(a)*}_{j,:}$, we only need to show that $|(\widetilde{\Theta}^{(a)}_{j,j})^{-1}-(\Theta^{(a)*}_{j,j})^{-1}|$ satisfies similar bounds to the ones we showed for $|(\widehat{\Theta}^{(a)}_{j,j})^{-1}-(\Theta^{(a)*}_{j,j})^{-1}|$. Specifically,
    \begin{equation*}
        \begin{split}
            |(\widetilde{\Theta}^{(a)}_{j,j})^{-1}-(\Theta^{(a)*}_{j,j})^{-1}|=&\left|\widehat{\overline{\theta}}^{(a,j)\top}\widetilde{\Sigma}\widehat{\overline{\theta}}^{(a,j)} - \overline{\theta}^{(a,j)*\top}\Sigma^*\overline{\theta}^{(a,j)*}\right|\\
            \leq & |2(\widehat{\overline{\theta}}^{(a,j)}-\overline{\theta}^{(a,j)*})^{\top}\Sigma^*\overline{\theta}^{(a,j)*}|+ |\widehat{\overline{\theta}}^{(a,j)\top}(\widetilde{\Sigma}-\Sigma^*)\widehat{\overline{\theta}}^{(a,j)}|\\
            &+|(\widehat{\overline{\theta}}^{(a,j)}-\overline{\theta}^{(a,j)*})^{\top}\Sigma^*(\widehat{\overline{\theta}}^{(a,j)}-\overline{\theta}^{(a,j)*})|. 
        \end{split}
    \end{equation*}
    Since $\Sigma^*\overline{\theta}^{(a,j)*} = e_j$, the $j$th canonical vector in $
    \mathbb{R}^p$, and $\widehat{\overline{\theta}}^{(a,j)}_j=\overline{\theta}^{(a,j)*}_j=1$, we know that the first term above is zero. On the other hand, Assumption \ref{assump:inference_n_B} and \eqref{eq:debias_theta_err} suggest that $\|\widehat{\overline{\theta}}^{(a,j)}\|_1\leq \|\overline{\theta}^{(a,j)*}\|_1=\frac{\|\Theta^{(a)*}_{j,:}\|_1}{\Theta^{(a)*}_{j,j}}$, and suggest the following bounds:
    \begin{equation*}
        \begin{split}
            |(\widetilde{\Theta}^{(a)}_{j,j})^{-1}-(\Theta^{(a)*}_{j,j})^{-1}|\leq &\lambda_{\max}(\Sigma^*)\|\widehat{\overline{\theta}}^{(a,j)}-\overline{\theta}^{(a,j)*}\|_2^2+C\|\overline{\theta}^{(a,j)*}\|_1\sum_k\|\widetilde{\Sigma}_{k,:}-\Sigma^*_{k,:}\|_{\infty}|\widehat{\overline{\theta}}^{(a,j)}_k|\\
            \leq&\lambda_{\max}(\Sigma^*)\|\widehat{\overline{\theta}}^{(a,j)}-\overline{\theta}^{(a,j)*}\|_2^2+\|\widehat{\overline{\theta}}^{(a,j)}-\overline{\theta}^{(a,j)*}\|_{\lambda^{(a,j)},1}\\
            &+ C\|\Sigma^*\|_{\infty}\|\overline{\theta}^{(a,j)*}\|_1^2\sqrt{\frac{\log p}{\mymin_{k\in \overline{\cN}_j^{(a)}}\mymin_{i\in [p]}n_{i,k}}}\\
            \leq &\frac{C\kappa_{\Sigma^*}\|\Sigma^*\|_{\infty}\|\Theta^{(a)*}_{j,:}\|_1}{\Theta^{(a)*}_{j,j}}\sqrt{\frac{d_j^{(a)}\log p}{\mymin_{k\in \overline{\cN}_j^{(a)}}\mymin_{i\in [p]}n_{i,k}}},
        \end{split}
    \end{equation*}
    where we have applied the error bound for $\widetilde{\Sigma}-\Sigma^*$ in Lemma \ref{lem:SampleCov_entry_err} and the condition on tuning parameters $\lambda^{(a,j)}$ in the second line. The last line is due to the fact that $\|\overline{\theta}^{(a,j)*}\|_1=\frac{\|\Theta^{(a)*}_{j,:}\|_1}{\Theta^{(a)*}_{j,j}}\leq \sqrt{d^{(a)}_j}\frac{\|\Theta^{(a)*}_{j,:}\|_2}{\Theta^{(a)*}_{j,j}}\leq \sqrt{d^{(a)}_j}\kappa_{\Theta^*}$. In addition, since $\|\Theta^{(a)*}\|_1\leq \sqrt{d_j^{(a)}}\|\Theta^{(a)*}\|_2\leq \frac{\sqrt{d_j^{(a)}}}{\lambda_{\min}(\Theta^{(a)*}}$, Assumption \ref{assump:inference_n_B} can further imply $|(\widetilde{\Theta}^{(a)}_{j,j})^{-1}-(\Theta^{(a)*}_{j,j})^{-1}|\leq \frac{1}{2\Theta^{(a))*}_{j,j}}$. The rest of the arguments for bounding $\|\widetilde{\Theta}^{(a)}_{j,:}-\Theta^{(a)*}_{j,:}\|_2$, $\|\widetilde{\Theta}^{(a)}_{j,:}-\Theta^{(a)*}_{j,:}\|_1$, $\|\widetilde{\Theta}^{(a)}_{j,:}-\Theta^{(a)*}_{j,:}\|_{\lambda^{(a,j)},1}$ follow similarly.
\end{proof}

\begin{proof}[Proof of Theorem~\ref{thm:nb_lasso_debias_decomp}]
First we start with some algebra to decompose $\widetilde{\theta}^{(a)}_b-\theta^{(a)*}_b=\widetilde{\theta}^{(a)}_b+\frac{\Theta^{*}_{a,b}}{\Theta^*_{a,a}}$ into terms $B$ and $E$. Here we applied the fact that $\theta^{(a)*}_{\backslash a}=-\Theta^{*-1}_{a,a}\Theta^*_{:,a}$. By definition \eqref{eq:debiased_nb_lasso},
\begin{align*}
    \widetilde{\theta}^{(a)}_{b}-\theta^{(a)*}_b=&\widehat{\theta}^{(a)}_b-\theta^{(a)*}_b-\widehat{\Theta}^{(a)}_{b,:}(\widehat{\Sigma}\widehat{\theta}^{(a)}-\widehat{\Sigma}_{:,a})\\
    =&(\Theta^{(a)*}_{b,:}\Sigma^*-\widehat{\Theta}^{(a)}_{b,:}\widehat{\Sigma})(\widehat{\theta}^{(a)}-\theta^{(a)*})-\widehat{\Theta}^{(a)}_{b,:}(\widehat{\Sigma}\theta^{(a)*}-\widehat{\Sigma}_{:,a})\\
    =&(\Theta^{(a)*}_{b,:}\Sigma^*-\widehat{\Theta}^{(a)}_{b,:}\widehat{\Sigma})(\widehat{\theta}^{(a)}-\theta^{(a)*})+\Theta^{*-1}_{a,a}(\widehat{\Theta}^{(a)}_{b,:}-\Theta^{(a)*}_{b,:})(\widehat{\Sigma}-\Sigma^*)\Theta^{*}_{:,a}\\
    &+\Theta^{*-1}_{a,a}\Theta^{(a)*}_{b,:}(\widehat{\Sigma}-\Sigma^*)\Theta^{*}_{:,a},
\end{align*}
where the second line is due to that $(\Theta^{(a)*}_{b,:}\Sigma^*)_{\backslash a} = e_b^\top$, and the third line holds since $\widehat{\Sigma}\theta^{(a)*}-\widehat{\Sigma}_{:,a}=-\Theta^{*-1}_{a,a}\widehat{\Sigma}\Theta^*_{:,a}$, and $\widehat{\Theta}^{(a)}_{b,:}\Sigma^*\Theta^*_{:,a}=\widehat{\Theta}^{(a)}_{b,:}e_a=0$.
Let 
\begin{equation}\label{eq:B_def}
    B=(\Theta^{(a)*}_{b,:}\Sigma^*-\widehat{\Theta}^{(a)}_{b,:}\widehat{\Sigma})(\widehat{\theta}^{(a)}-\theta^{(a)*})+\Theta^{*-1}_{a,a}(\widehat{\Theta}^{(a)}_{b,:}-\Theta^{(a)*}_{b,:})(\widehat{\Sigma}-\Sigma^*)\Theta^{*}_{:,a},
\end{equation}
and $E=\Theta^{*-1}_{a,a}\Theta^{(a)*}_{b,:}(\widehat{\Sigma}-\Sigma^*)\Theta^{*}_{:,a}$. We will show an upper bound for $|B|$ and a normal approximation for $E$.
\subsubsection{Bounding the bias term \texorpdfstring{$B$}{Lg}}
We first further decompose the bias term $B$ and upper bound it by functions of estimation errors $\widehat{\theta}^{(a)}-\theta^{(a)*}$ and $\widehat{\Theta}^{(a)}_{b,:}-\Theta^{(a)*}_{b,:}$. 
One can show that
\begin{align*}
   &\left|(\Theta^{(a)*}_{b,:}\Sigma^*-\widehat{\Theta}^{(a)}_{b,:}\widehat{\Sigma})(\widehat{\theta}^{(a)}-\theta^{(a)*})\right|\\
   =&\Big|\Theta^{(a)*}_{b,:}(\Sigma^*-\widehat{\Sigma})(\widehat{\theta}^{(a)}-\theta^{(a)*})+(\Theta^{(a)*}_{b,:}-\widehat{\Theta}^{(a)}_{b,:})\Sigma^*(\widehat{\theta}^{(a)}-\theta^{(a)*})\\
   &+(\Theta^{(a)*}_{b,:}-\widehat{\Theta}^{(a)}_{b,:})(\widehat{\Sigma}-\Sigma^*)(\widehat{\theta}^{(a)}-\theta^{(a)*})\Big|\\
   \leq&\|\Theta^{(a)*}_{b,:}\|_1\|\Sigma^*_{\overline{N}_b^{(a)},:}-\widehat{\Sigma}_{\overline{N}_b^{(a)},:}\|_{\infty}\|\widehat{\theta}^{(a)}-\theta^{(a)*}\|_1\\
   &+\|\Sigma^*\|\|\Theta^{(a)*}_{b,:}-\widehat{\Theta}^{(a)}_{b,:}\|_2\|\widehat{\theta}^{(a)}-\theta^{(a)*}\|_2\\
   &+\sum_{j,k}|\widehat{\Sigma}_{j,k}-\Sigma^*_{j,k}||\widehat{\Theta}^{(a)}_{b,j}-\Theta^*_{b,j}||\widehat{\theta}^{(a)}_k-\theta^{(a)*}_k|.
\end{align*}
Since 
\begin{align*}
    \overline{\cN}^{(a)}_b=\{j:\Theta^{(a)*}_{b,j}\neq 0\}=&\{j:\Theta^*_{b,j}-\frac{\Theta^*_{a,b}\Theta^*_{a,j}}{\Theta^*_{a,a}}\neq 0\}\\
    \subseteq&\{j:\Theta^*_{b,j}\neq 0\}\cup\{j\neq a:\Theta^*_{a,j}\neq 0\}=\cN_a\cup\overline{\cN}_b,
\end{align*} 
we have $d_b^{(a)}\leq d_a + d_b + 1$, and hence the sample size condition in Theorem~\ref{thm:nb_lasso_err} and Lemma~\ref{lem:debias_nb_lasso_err} can be implied by Assumption~\ref{assump:inference_n_B}. By Lemma~\ref{lem:SampleCov_entry_err}, Lemma~\ref{lem:debias_nb_lasso_err} and Theorem~\ref{thm:nb_lasso_err}, one can show that
\begin{equation}\label{eq:bias_1_bnds}
    \begin{split}
        &\|\Theta^{(a)*}_{b,:}\|_1\|\Sigma^*_{\overline{N}_b^{(a)},:}-\widehat{\Sigma}_{\overline{\cN}_b^{(a)},:}\|_{\infty}\|\widehat{\theta}^{(a)}-\theta^{(a)*}\|_1\\\leq &C\kappa_{\Sigma^*}(\sqrt{\gamma_a}+1)\|\Sigma^*\|_{\infty}^2\|\Theta^{(a)*}_{b,:}\|_1\|\Theta^*_{:,a}\|_1\frac{(d_a+d_b+1)\log p}{\mymin_{j\in \cN_a\cup \overline{\cN}^{(a)}_{b}}\mymin_{k\in[p]}n_{j,k}},\\
   &\|\Sigma^*\|\|\Theta^{(a)*}_{b,:}-\widehat{\Theta}^{(a)}_{b,:}\|_2\|\widehat{\theta}^{(a)}-\theta^{(a)*}\|_2\\
   \leq&C\kappa_{\Sigma^*}^3\|\Sigma^*\|_{\infty}^2\|\Theta^*_{:,a}\|_1\|\Theta^{(a)*}_{:,b}\|_1\frac{(d_a+d_b+1)\log p}{\mymin_{j\in \cN_a\cup\overline{\cN}^{(a)}_{b}}\mymin_{k}n_{j,k}},\\
   &\sum_{j,k}|\widehat{\Sigma}_{j,k}-\Sigma^*_{j,k}||\widehat{\Theta}^{(a)}_{b,j}-\Theta^*_{b,j}||\widehat{\theta}^{(a)}_k-\theta^{(a)*}_k|\\
   \leq&\frac{C\Theta^*_{a,a}}{\|\Theta^*_{:,a}\|_1}\|\widehat{\Theta}^{(a)}_{b,:}-\Theta^{(a)*}_{b,:}\|_2\sum_{k=1}^p\lambda^{(a)}_k|\widehat{\theta}^{(a)}_k-\theta^{(a)*}_k|\\
   \leq &\frac{C\kappa_{\Sigma^*}^2\|\Sigma^*\|_{\infty}^3}{\lambda_{\mymin}(\Sigma^*)}\|\Theta^*_{:,a}\|_1\|\Theta^{(a)*}_{:,b}\|_1\left(\frac{(d_a+d_b+1)\log p}{\mymin_{j\in \cN_a\cup\overline{\cN}^{(a)}_b}\mymin_{k\in [p]}n_{j,k}}\right)^{\frac{3}{2}}. 
    \end{split}
\end{equation}

Here $\kappa_{\Sigma^*}=\frac{\lambda_{\mymax}(\Sigma^*)}{\lambda_{\mymin}(\Sigma^*)}$, and we have applied the fact that $\Theta^*_{a,a}\geq \lambda_{\mymax}^{-1}(\Sigma^*)$, $\|\Theta^{(a)*}_{:,b}\|_2\leq \lambda^{-1}_{\mymin}(\Sigma^*)$. Also noting that we have the condition 
$$
\frac{(d_a+d_b+1)\log p}{n_1^{(a,b)}}\leq \frac{C\lambda^2_{\mymin}(\Sigma^*)}{\|\Sigma^*\|_{\infty}^2},
$$
suggesting the upper bound for the third term in \eqref{eq:bias_1_bnds} can be dominated by the second term.
Hence we have
\begin{align*}
    &\left|(\Theta^{(a)*}_{b,:}\Sigma^*-\widehat{\Theta}^{(a)}_{b,:}\widehat{\Sigma})(\widehat{\theta}^{(a)}-\theta^{(a)*})\right|\\
   \leq&C\kappa_{\Sigma^*}(\kappa_{\Sigma^*}^2+\sqrt{\gamma_a})\|\Sigma^*\|_{\infty}^2\|\Theta^*_{:,a}\|_1\|\Theta^{(a)*}_{:,b}\|_1\frac{(d_a+d_b+1)\log p}{n_1^{(a,b)}}.
\end{align*}

While for the second term in \eqref{eq:B_def}, one can show that
\begin{align*}
    &|\Theta^{*-1}_{a,a}(\widehat{\Theta}^{(a)}_{b,:}-\Theta^{(a)*}_{b,:})(\widehat{\Sigma}-\Sigma^*)\Theta^{*}_{:,a}|\\
    \leq& \frac{\|\Theta^*_{:,a}\|_1}{\Theta^*_{a,a}}\|\widehat{\Theta}^{(a)}_{b,:}-\Theta^{(a)*}_{b,:}\|_1\|\widehat{\Sigma}_{:,\overline{\cN}_a}-\Sigma^*_{:,\overline{\cN}_a}\|_{\infty}\\
    \leq &C\kappa_{\Sigma^*}(\kappa_{\Sigma^*}+\sqrt{\gamma^{(a)}_b})\|\Sigma^*\|_{\infty}^2\|\Theta^*_{:,a}\|_1\|\Theta^{(a)*}_{b,:}\|_1\frac{(d_a+d_b+1)\log p}{n_1^{(a,b)}}.
\end{align*}
Therefore, with probability at least $1-Cp^{-c}$, $$|B|\leq C_1(\Theta^*)\frac{(d_a+d_b+1)\log p}{n_1^{(a,b)}},$$ with $C_1(\Theta^*)=C\kappa_{\Sigma^*}(\kappa_{\Sigma^*}^2+\sqrt{\gamma_a}+\sqrt{\gamma_b^{(a)}})\|\Sigma^*\|_{\infty}^2\|\Theta^*_{:,a}\|_1\|\Theta^{(a)*}_{:,b}\|_1$.
\subsubsection{Normal approximation of term \texorpdfstring{$E$}{Lg}}\label{sec:proof_normalE}
Let $\delta^{(i)},N\in \bR^{p\times p}$ such that $\delta^{(i)}_{j,k}=\ind{j,k\in V_i}$ and $N_{j,k}=n_{j,k}$. Then we can write $\widehat{\Sigma}=\sum_{i=1}^nx_ix_i^\top\circ \delta^{(i)}\oslash N$, where $\circ$ represents Hadamard product and $\oslash$ represents Hadamard division. With a little abuse of notation, when $N_{j,k}=0$, we let $\frac{\delta^{(i)}_{j,k}}{N_{j,k}}=0$. 
To apply the central limit theorem and establish the normal approximation result, here we rewrite $E$ as follows:
\begin{equation*}
    E=\sum_{i=1}^n\Theta^{*-1}_{a,a}\Theta^{(a)*}_{b,:}[(x_ix_i^\top-\Sigma^*) \circ \delta^{(i)}\oslash N]\Theta^*_{:,a}.
\end{equation*}
Here we applied the fact that $(\sum_{i=1}^n \Sigma^* \circ \delta^{(i)}\oslash N)_{j,k}=\Sigma^*_{j,k}$ as long as $N_{j,k}>0$, and $\mymin_{j\in\overline{\cN}_a,k\in \overline{\cN}^{(a)}_{b}}n_{j,k}>0$. Since the $n$ terms inside the summation are independent, we can apply the central limit theorem. The variance $\sigma_n^2(a,b)$ can be calculated as follows:
\begin{align*}
    \sigma_n^2(a,b)=&\Theta^{*-2}_{a,a}\sum_{i=1}^n\bE(\Theta^{(a)*}_{b,:}[(x_ix_i^\top-\Sigma^*) \circ \delta^{(i)}\oslash N]\Theta^*_{:,a})^2\\
    =&\Theta^{*-2}_{a,a}\sum_{i=1}^n\mathrm{Var}\left(\sum_{j,k}\Theta^{(a)*}_{b,j}\Theta^*_{k,a}\delta^{(i)}_{j,k}n_{j,k}^{-1}x_{i,j}x_{i,k}\right)\\
    =&\Theta^{*-2}_{a,a}\sum_{j,k,j',k'}\frac{\Theta^{(a)*}_{b,j}\Theta^*_{k,a}\Theta^{(a)*}_{b,j'}\Theta^*_{k',a}}{n_{j,k}n_{j',k'}}\mathcal{T}^*_{j,k,j',k'}\sum_{i=1}^n\delta^{(i)}_{j,k}\delta^{(i)}_{j',k'}\\
    =&\frac{1}{(\Theta^{*}_{a,a})^2}\mathcal{T}^{(n)*}\times_1\Theta^{(a)*}_{:,b}\times_2\Theta^{*}_{:,a}\times_3\Theta^{(a)*}_{:,b}\times_4\Theta^{*}_{:,a},
\end{align*}
where we used the definition $$\mathcal{T}^*_{j,k,j',k'}=\mathrm{Cov}(x_{i,j}x_{i,k},x_{i,j'}x_{i,k'})=\Sigma^*_{j,j'}\Sigma^*_{k,k'}+\Sigma^*_{j,k'}\Sigma^*_{k,j'}$$ on the third line, and $\mathcal{T}^{(n)*}_{j,k,j',k'}=\mathcal{T}^*_{j,k,j',k'}\frac{n_{j,k,j',k'}}{n_{j,k}n_{j',k'}}$. In the following, we will verify the Lyapounov's condition so that we can apply the central limit theorem.

Define $U, U^{(\delta,i)}, \epsilon^{(i)}\in \bR^{p\times p}$ by \begin{align*}
    U_{j,k}=\frac{\Theta^{(a)*}_{j,b}\Theta^*_{k,a} + \Theta^{(a)*}_{k,b}\Theta^*_{j,a}}{2\Theta^*_{a,a}},
\end{align*}
$U^{(\delta,i)}_{j,k}=U_{j,k}\frac{\delta^{(i)}_{j,k}}{n_{j,k}}$, and $\epsilon^{(i)}_{j,k}=x_{i,j}x_{i,k}-\Sigma^*_{j,k}$. Then we can write $E=\sum_{i=1}^n\langle U^{(\delta,i)},\epsilon^{(i)}\rangle$. Now we will show that $\sum_{i=1}^n\bE|\langle U^{(\delta,i)},\epsilon^{(i)}\rangle|^{2+\delta}=o(\sigma_n^{2+\delta}(a,b))$ for $\delta=\frac{4}{\varepsilon}>0$. First we bound the sub-exponential norm $\|\cdot\|_{\psi_1}$ of $\langle U^{(\delta,i)},\epsilon^{(i)}\rangle$, which is defined in Definition~\ref{def:psi_alpha}. One can show that $$\left\|\sum_{j,k}U^{(\delta,i)}_{j,k}\epsilon^{(i)}_{j,k}\right\|_{\psi_1}\leq \sum_{j,k}|U_{j,k}|\frac{\delta^{(i)}_{j,k}}{n_{j,k}}\|x_{i,j}x_{i,k}-\Sigma^*_{j,k}\|_{\psi_1}\leq C\sum_{j,k}|U_{j,k}|\frac{\delta^{(i)}_{j,k}}{n_{j,k}}\|\Sigma^*\|_{\infty}.$$
Then by the definition of the norm $\|\cdot\|_{\psi_1}$, we have
\begin{align*}
    &\sum_{i=1}^n\bE|\langle U^{(\delta,i)},\epsilon^{(i)}\rangle|^{2+\delta}\\
    \leq &\sum_{i=1}^n\left(C(2+\delta)\|\Sigma^*\|_{\infty}\right)^{2+\delta}\left(\sum_{j,k}|U_{j,k}|\delta^{(i)}_{j,k}n^{-1}_{j,k}\right)^{2+\delta}\\
    \leq &\left(C(2+\delta)\|\Sigma^*\|_{\infty}\right)^{2+\delta}\left(\sum_{j,k}|U_{j,k}|^2n_{j,k}^{-1}\right)^{1+\delta/2}\sum_{i=1}^n\left(\sum_{j\in \overline{\cN}^{(a)}_b,k\in \overline{\cN}_a}\delta^{(i)}_{j,k}n_{j,k}^{-1}\right)^{1+\delta/2},
\end{align*}
where the third line is due to the Cauchy-Schwarz inequality and the fact that $U_{j,k}\neq 0$ only when $j\in \overline{\cN}^{(a)}_b$, $k\in \overline{\cN}_a$, or $j\in \overline{\cN}_a$, $k\in \overline{\cN}^{(a)}_b$. In addition, for the third term in the multiplication on the third line, we have
\begin{equation}
    \begin{split}
       &\sum_{i=1}^n\left(\sum_{j\in \overline{\cN}^{(a)}_b,k\in \overline{\cN}_a}\delta^{(i)}_{j,k}n_{j,k}^{-1}\right)^{1+\delta/2}\\
       =&(d_a+d_b+1)^{2+\delta}\sum_{i=1}^n\left(\frac{1}{(d_a+d_b+1)^2}\sum_{j\in \overline{\cN}^{(a)}_b,k\in \overline{\cN}_a}\delta^{(i)}_{j,k}n_{j,k}^{-1}\right)^{1+\delta/2}\\
    \leq&(d_a+d_b+1)^{\delta}\sum_{i=1}^n\sum_{j\in \overline{\cN}^{(a)}_b,k\in \overline{\cN}_a}\delta^{(i)}_{j,k}n_{j,k}^{-(1+\delta/2)}\\
    \leq &(d_a+d_b+1)^{2+\delta}(n_2^{(a,b)})^{-\delta/2},
    \end{split}
\end{equation}
where the third line is due to Jensen's inequality and the fact that $g(x)=x^{2+\delta}$ is a convex function; the fourth line is due to that $\sum_{i=1}^n\delta^{(i)}_{j,k}=n_{j,k}$, $|\overline{\cN}^{(a)}_b|, |\overline{\cN}_a|\leq d_a+d_b+1$, and $n_2^{(a,b)}=\mymin_{j\in \overline{\cN}_a,k\in \overline{\cN}^{(a)}_{b}}n_{j,k}$, $n_{j,k} = n_{k,j}$. On the other hand, we can lower bound $\sigma_n^2(a,b)$ as follows:
\begin{align*}
    \sigma_n^2(a,b)=&\sum_{i=1}^n\bE\langle U^{(\delta,i)},\epsilon^{(i)}\rangle^2\\
    = &\sum_{i=1}^n\sum_{j,j',k,k'}\mathcal{T}^*_{j,k,j',k'}U^{(\delta,i)}_{j,k}U^{(\delta,i)}_{j',k'}\\
    = &\sum_{i=1}^n\sum_{j,j'}\Sigma^*_{j,j'}U^{(\delta,i)}_{j,:}\Sigma^*(U^{(\delta,i)}_{j',:})^\top+\sum_{j,k'}\Sigma^*_{j,k'}U^{(\delta,i)}_{j,:}\Sigma^*U^{(\delta,i)}_{:,k'}.
\end{align*}    
Due to the symmetry of $U^{(\delta,i)}$, we further have
\begin{align*}
    \sigma_n^2(a,b)= &2\sum_{i=1}^n\langle\Sigma^*,U^{(\delta,i)}\Sigma^*(U^{(\delta,i)})\rangle\\
    =&2\sum_{i=1}^n\|\Sigma^{*\frac{1}{2}}U^{(\delta,i)}\Sigma^{*\frac{1}{2}}\|_F^2\\
    \geq&2\lambda_{\min}^2(\Sigma^*)\sum_{j,k}U_{j,k}^2n_{j,k}^{-1}.
\end{align*}
Therefore, combining the upper bounds for $\sum_{i=1}^n\bE|\langle U^{(\delta,i)},\epsilon^{(i)}\rangle|^{2+\delta}$ and $\sigma_n^{2+\delta}(a,b)$ leads to the following:
\begin{align*}
    \frac{\sum_{i=1}^n\bE|\langle U^{(\delta,i)},\epsilon^{(i)}\rangle|^{2+\delta}}{\sigma_n^{2+\delta}(a,b)}\leq &\left(\frac{C(2+\delta)\|\Sigma^*\|_{\infty}}{\lambda_{\mymin}(\Sigma^*)}\right)^{2+\delta}(d_a+d_b+1)^{2+\delta}(n_2^{(a,b)})^{-\delta/2}\\
    \leq&C_{\epsilon}^{\delta/2}(\Sigma^*)(d_a+d_b+1)^{2+\delta}(n_2^{(a,b)})^{-\delta/2}.
\end{align*}
Since $\delta=\frac{4}{\epsilon}$, $C_{\epsilon}(\Sigma^*)(d_a+d_b+1)^{2+\epsilon}=o(n_2(a,b))$, the inequality above implies $$\sum_{i=1}^n\bE|\langle U^{(\delta,i)},\epsilon^{(i)}\rangle|^{2+\delta}=o(\sigma_n^{2+\delta}(a,b)).$$ Now we can apply the central limit theorem to obtain the following convergence in distribution result:
\begin{align*}
    \sigma_n^{-1}(a,b)E\overset{d}{\rightarrow}\cN(0,1).
\end{align*}
\subsubsection{Normal Approximation Result}
Now that we have proved an upper bound for $B$ and the convergence in distribution result for $E$, a combination of these two results can lead to our final claim. To see this, note that
\begin{align*}
    \sigma_n^{-1}(a,b)\left(\theta^{(a)}_b+\frac{\Theta^*_{a,b}}{\Theta^*_{a,a}}\right)=\frac{B}{\sigma_n(a,b)}+\sigma_n^{-1}(a,b)E.
\end{align*}
As shown earlier, 
\begin{align*}
    |B|\leq &C_1(\Theta^*)\frac{(d_a+d_b+1)\log p}{n_1^{(a,b)}}\\
    \leq&C\kappa_{\Sigma^*}(\kappa_{\Sigma^*}^2+\sqrt{\gamma_a}+\sqrt{\gamma^{(a)}_b})\lambda^2_{\mymax}(\Sigma^*)\|\Theta^*_{:,a}\|_1\|\Theta^{(a)*}_{:,b}\|_1\frac{(d_a+d_b+1)\log p}{n_1^{(a,b)}},
\end{align*} 
and 
\begin{align*}
    \sigma_n(a,b)\geq&\sqrt{2}\lambda_{\mymin}(\Sigma^*)\sqrt{\sum_{j,k}|U_{j,k}|^2n_{j,k}^{-1}}\\
    \geq&\frac{\sqrt{2}\lambda_{\mymin}(\Sigma^*)}{2\Theta^*_{a,a}}\frac{\mymin_{(j,k)\in S_2(a, b)}\left|\Theta^{(a)*}_{b,j}\Theta^{*}_{a,k}+\Theta^{(a)*}_{b,k}\Theta^{*}_{a,j}\right|}{\sqrt{n_2^{(a,b)}}}\\
    \geq &\frac{\sqrt{2}}{2}\lambda^2_{\mymin}(\Sigma^*)\frac{\mymin_{(j,k)\in S_2(a, b)}\left|\Theta^{(a)*}_{b,j}\Theta^{*}_{a,k}+\Theta^{(a)*}_{b,k}\Theta^{*}_{a,j}\right|}{\sqrt{n_2^{(a,b)}}}.
\end{align*}
Hence with probability at least $1-Cp^{-c}$,
\begin{align*}
    \left|\frac{B}{\sigma_n(a,b)}\right|\leq C(\Theta^*;a,b)(\kappa_{\Theta^*}^2+\sqrt{\gamma_a}+\sqrt{\gamma^{(a)}_b})\frac{(d_a+d_b+1)\log p}{\sqrt{n_1^{(a,b)}}}\sqrt{\frac{n_2^{(a,b)}}{n_1^{(a,b)}}},
\end{align*}
where $$C(\Theta^*;a,b)=\frac{C\kappa_{\Theta^*}^3\|\Theta^*_{:,a}\|_1\|\Theta^{(a)*}_{:,b}\|_1}{\mymin_{(j,k)\in S_2(a, b)}\left|\Theta^{(a)*}_{b,j}\Theta^{*}_{a,k}+\Theta^{(a)*}_{b,k}\Theta^{*}_{a,j}\right|}$$ and $\kappa_{\Theta^*} = \frac{\lambda_{\mymax}(\Theta^*)}{\lambda_{\mymin}(\Theta^*)}=\frac{\lambda_{\mymax}(\Sigma^*)}{\lambda_{\mymin}(\Sigma^*)}=\kappa_{\Sigma^*}$. When Assumptions~\ref{assump:inference_n_B}-\ref{assump:inference_n_BE} hold, $\left|\frac{B}{\sigma_n(a,b)}\right|=o_p(1)$, and hence $$\sigma_n^{-1}(a,b)\left(\theta^{(a)}_b+\frac{\Theta^*_{a,b}}{\Theta^*_{a,a}}\right)\overset{d}{\rightarrow}\cN(0,1).$$
\end{proof}
\begin{proof}[Proof of Proposition \ref{prop:var_bnds}]
As has been shown in the proof of Theorem~\ref{thm:nb_lasso_debias_decomp}, 
\begin{align*}
    \sigma_n(a,b)=&\sqrt{2}\left(\sum_{i=1}^n\|\Sigma^{*\frac{1}{2}}U^{(\delta,i)}\Sigma^{*\frac{1}{2}}\|_F^2\right)^{\frac{1}{2}}\\
    \geq&\sqrt{2}\lambda_{\mymin}(\Sigma^*)\frac{\mymin_{(j,k)\in S_2(a, b)}\left|\Theta^{(a)*}_{b,j}\Theta^{*}_{a,k}+\Theta^{(a)*}_{b,k}\Theta^{*}_{a,j}\right|}{2\Theta^*_{a,a}\sqrt{n_2^{(a,b)}}}.
\end{align*}
Similarly, for the upper bound, one has
\begin{align*}
    \sigma_n^2(a,b)=&2\sum_{i=1}^n\|\Sigma^{*\frac{1}{2}}U^{(\delta,i)}\Sigma^{*\frac{1}{2}}\|_F^2\\
    \leq &2\lambda_{\max}^2(\Sigma^*)\sum_{(j,k)\in S_2(a, b)}U_{j,k}^2n_{j,k}^{-1}\\
    \leq &\lambda_{\max}^2(\Sigma^*)\frac{2\|\Theta^{(a)*}_{:,b}\|_2^2\|\Theta^{*}_{:,a}\|_2^2}{(\Theta_{a,a}^*)^2}(n_2^{(a,b)})^{-1},
\end{align*}
one can show that
\begin{align*}
    \sigma_n^2(a,b)\leq&\sum_{i=1}^n2\lambda_{\mymax}\left(\Sigma^*_{\overline{\cN}_a,\overline{\cN}_a}\otimes \Sigma^*_{\overline{\cN}^{(a)}_{b},\overline{\cN}^{(a)}_{b}}\right)\|U^{(\delta,i)}\|_F^2\\
    \leq &2\lambda^2_{\mymax}(\Sigma^*)\sum_{j,k}|U_{j,k}|^2n_{j,k}^{-1}\\
    \leq&\frac{2\lambda^2_{\mymax}(\Sigma^*)\|\Theta^{(a)*}_{:,b}\|_2^2\|\Theta^*_{:,a}\|_2^2}{(\Theta^*_{a,a})^2n_2^{(a,b)}}\\
    \leq &2\kappa_{\Sigma^*}^4(n_2^{(a,b)})^{-1}.
\end{align*}
\end{proof}
\subsection{Proof of Proposition~\ref{prop:var_est}: Consistency of Variance Estimator}
First we define the following $4$th order tensors that would be useful in our subsequent proof: $\widehat{\mathcal{T}}, \mathcal{U}^*, \widehat{\mathcal{U}}, \mathcal{N}\in \bR^{p\times p\times p\times p}$ that satisfy
\begin{align*}
    \widehat{\mathcal{T}}_{j,k,j',k'}=&\widetilde{\Sigma}_{j,j'}\widetilde{\Sigma}_{k,k'}+\widetilde{\Sigma}_{j,k'}\widetilde{\Sigma}_{k,j'},\\
    \mathcal{U}^*_{j,k,j',k'}=&\Theta^{(a)*}_{b,j}\overline{\theta}^{(a)}_k\Theta^{(a)*}_{b,j'}\overline{\theta}^{(a)}_{k'},\\
    \widehat{\mathcal{U}}_{j,k,j',k'}=&\widetilde{\Theta}^{(a)}_{b,j}\widehat{\overline{\theta}}^{(a)}_k\widetilde{\Theta}^{(a)}_{b,j'}\widehat{\overline{\theta}}^{(a)}_{k'},\\
    \mathcal{N}_{j,k,j',k'}=&\frac{n_{j,k,j',k'}}{n_{j,k}n_{j',k'}},
\end{align*}
where $\overline{\theta}^{(a)*}=\frac{\Theta^*_{:,a}}{\Theta^*_{a,a}}$.
Also recall the definition of tensor $\widehat{\mathcal{T}}^{(n)}$, then we can rewrite $\widehat{\sigma}_n^2(a,b)$ as follows
\begin{align*}
    \widehat{\sigma}_n^2(a,b) = \langle \widehat{\mathcal{T}}\circ \mathcal{N},\widehat{\mathcal{U}}\rangle = \langle \widehat{\mathcal{T}},\widehat{\mathcal{U}}\circ \mathcal{N}\rangle.
\end{align*}
Define $\mathcal{E}_1=\widehat{\mathcal{T}}-\mathcal{T}^*$, $\mathcal{E}_2=\widehat{\mathcal{U}}-\mathcal{U}^*$, then the estimation error for variance $\sigma_n^2(a,b)$ can be decomposed as follows:
\begin{align*}
    \widehat{\sigma}_n^2(a,b)-\sigma_n^2(a,b)=&\underbrace{\langle \cE_1\circ \cN, \cU^*\rangle}_{\mathrm{\RNum{1}}} + \underbrace{\langle\cT^*,\cE_2\circ\cN\rangle}_{\mathrm{\RNum{2}}} + \underbrace{\langle\cE_1\circ\cN,\cE_2\rangle}_{\mathrm{\RNum{3}}},
\end{align*}
To establish the consistency of the variance estimator $\widehat{\sigma}^2_n(a,b)$, we bound the three terms above separately.
\subsubsection{Bounding Error Term \texorpdfstring{$\mathrm{\RNum{1}}$}{Lg}}
First we provide an entry-wise estimation error bound for the $4$th order moment $\cT^*$.
\begin{lemma}[Entry-wise error bound for $\cE_1$]\label{lem:T_entry_err}
    With probability at least $1-Cp^{-c}$,
    \begin{align*}
        |(\cE_1)_{j,k,j',k'}|\leq C\|\Sigma^*\|_{\infty}^2\Big(&\sqrt{\frac{\log p}{n_{j,j'}}}+\sqrt{\frac{\log p}{n_{k,k'}}}+\frac{\log p}{\sqrt{n_{j,j'}n_{k,k'}}}\\
        &+\sqrt{\frac{\log p}{n_{j,k'}}}+\sqrt{\frac{\log p}{n_{k,j'}}}+\frac{\log p}{\sqrt{n_{j,k'}n_{k,j'}}}\Big).
    \end{align*}
\end{lemma}
Then by the definition of $\mathrm{\RNum{1}}$, one can show that with probability at least $1-Cp^{-c}$,
\begin{align*}
    |\mathrm{\RNum{1}}|\leq &C\|\Sigma^*\|_{\infty}^2\sum_{j,k,j',k'}\left(\frac{\sqrt{n_{j,k,j',k'}\log p}}{n_{j,k}n_{j',k'}}+\frac{\log p}{n_{j,k}n_{j',k'}}\right)\cU^*_{j,k,j',k'}\\
    \leq&C\|\Sigma^*\|_{\infty}^2\|\cU^*\|_1\mymax_{j,j'\in \overline{\cN}^{(a)}_b, k,k'\in \overline{\cN}_a}\left(\frac{\sqrt{\log p}}{n_{j,k}\sqrt{n_{j',k'}}}+\frac{\log p}{n_{j,k}n_{j',k'}}\right)\\
    \leq&C\|\Sigma^*\|_{\infty}^2\|\cU^*\|_1\left(\frac{\sqrt{\log p}}{(n_2^{(a,b)})^{3/2}}+\frac{\log p}{(n_2^{(a,b)})^2}\right)
\end{align*}
where the second line is due to that $\cU^*_{j,k,j',k'}$ is only nonzero for $j,j'\in\overline{\cN}^{(a)}_b$, $k,k'\in \overline{\cN}_a$.
\subsubsection{Bounding Error Term \texorpdfstring{$\mathrm{\RNum{2}}$}{Lg}}
Since $|\mathrm{\RNum{2}}|\leq \|\cT^*\|_{\infty}\|\cE_2\circ \cN\|_1\leq 2\|\Sigma^*\|_{\infty}^2\|\cE_2\circ \cN\|_1$, we will show an upper bound for $\|\cE_2\circ \cN\|_1$ in the following. By definition, 
\begin{align*}
    \cE_2\circ \cN=[\widetilde{\Theta}^{(a)}_{b,:}\otimes \widehat{\overline{\theta}}^{(a)}\otimes \widetilde{\Theta}^{(a)}_{b,:}\otimes \widehat{\overline{\theta}}^{(a)}-\Theta^{(a)*}_{b,:}\otimes \overline{\theta}^{(a)*}\otimes \Theta^{(a)*}_{b,:}\otimes \overline{\theta}^{(a)*}]\circ \cN.
\end{align*}
The following lemma provides a general upper bound for kronecker products.
\begin{lemma}\label{lem:kron_err}
Consider vectors $u^{(1)},\dots,u^{(4)}, \epsilon^{(1)},\dots,\epsilon^{(4)}\in \bR^p$. Let $i_l=2\lceil l/2\rceil -\ind{l\text{ is even}}$. If for all $1\leq l\leq 4$, $$\sum_{j=1}^p\left|\frac{\epsilon^{(l)}_j}{\sqrt{\mymin_{k}n_{j,k}}}\right|\leq \frac{C_1\|u^{(l)}\|_1}{\sqrt{\mymin_{j\in \mathrm{supp}(u^{(l)})}\mymin_{k}n_{j,k}}},\quad \|\epsilon^{(l)}\|_1\leq C_1\|u^{(l)}\|_1$$ for some universal constant $C_1$, 
then we have
\begin{align*}
    &\|[(u^{(1)}+\epsilon^{(1)})\otimes (u^{(2)}+\epsilon^{(2)})\otimes (u^{(3)}+\epsilon^{(3)})\otimes (u^{(4)}+\epsilon^{(4)})\\
    &-u^{(1)}\otimes u^{(2)}\otimes u^{(3)}\otimes u^{(4)}]\circ \cN\|_1\\
    \leq&C_2\sum_{l=1}^4\left[\sum_{j=1}^p\frac{\left|\epsilon^{(l)}_j\right|}{\sqrt{\mymin_{k}n_{j,k}}}\frac{\|u^{(i_{l})}\|_1}{\sqrt{\mymin_{j\in \mathrm{supp}(u^{(l)})}\mymin_{k}n_{j,k}}}\prod_{m\neq l,i_l}\|u^{(m)}\|_1\right],
\end{align*}
where $C_2$ is also a universal constant.
\end{lemma}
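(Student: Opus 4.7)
The plan is to establish this bound via a telescoping decomposition of the 4-fold tensor product followed by a factorization that respects the pair structure $(j,k)$ vs.~$(j',k')$ encoded in $\mathcal{N}$. Writing $v^{(m)} = u^{(m)} + \epsilon^{(m)}$, I would use the standard telescoping identity
\[
\bigotimes_{l=1}^4 v^{(l)} - \bigotimes_{l=1}^4 u^{(l)} \;=\; \sum_{l=1}^4 v^{(1)} \otimes \cdots \otimes v^{(l-1)} \otimes \epsilon^{(l)} \otimes u^{(l+1)} \otimes \cdots \otimes u^{(4)},
\]
and apply the triangle inequality, reducing the task to bounding the $\mathcal{N}$-weighted $\ell_1$ norm of each of the four telescoped pieces.

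The core estimate is the elementary inequality
\[
n_{j,k,j',k'} \;\leq\; \min(n_{j,k},\, n_{j',k'}) \;\leq\; \sqrt{n_{j,k}\, n_{j',k'}},
\]
which yields $\mathcal{N}_{j,k,j',k'} \leq 1/\sqrt{n_{j,k}\, n_{j',k'}}$. Plugging this into any telescoped term cleanly factorizes the 4-dimensional sum over $(j,k,j',k')$ into a product of two 2-dimensional sums: one over $(j,k)$ involving the first two factor vectors, and one over $(j',k')$ involving the last two. Within each 2-d sum of the form $\sum_{j,k} |a_j|\,|b_k|/\sqrt{n_{j,k}}$, I further separate variables via $1/\sqrt{n_{j,k}} \leq 1/\sqrt{\min_k n_{j,k}}$, turning it into a product of an $\ell_1$ norm of one vector and a weighted $\ell_1$ norm of the other with weights $1/\sqrt{\min_k n_{j,k}}$. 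The weighted sum involving $\epsilon^{(l)}$ is exactly the left-hand side of the first hypothesis and supplies the $\sum_j |\epsilon^{(l)}_j|/\sqrt{\min_k n_{j,k}}$ factor of the target bound; the weighted sum involving a $u$ vector is bounded by $\|u\|_1/\sqrt{\min_{j \in \mathrm{supp}(u)} \min_k n_{j,k}}$ after restricting to its support.

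The main technical obstacle is bookkeeping the ``cross'' telescoped pieces, which contain additional $\epsilon^{(m)}$ factors hidden inside the $v^{(m)}$ prefixes. Here the second hypothesis $\|\epsilon^{(m)}\|_1 \leq C_1\|u^{(m)}\|_1$ is essential: it gives $\|v^{(m)}\|_1 \leq (1+C_1)\|u^{(m)}\|_1$, so every extra $\epsilon$ factor only inflates the bound by an absolute constant that can be absorbed into $C_2$. Summing the four resulting clean telescoped bounds then produces the stated right-hand side, with the partner index $i_l$ tracking which of the two vectors in the pair containing $\epsilon^{(l)}$ contributes the $\|u^{(i_l)}\|_1$ factor versus the $\min$-weighted factor.
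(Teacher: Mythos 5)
Your overall strategy matches the paper's: decompose the difference of tensor products into terms each carrying a distinguished $\epsilon^{(l)}$, bound the entries of $\cN$ by cancelling $n_{j,k,j',k'}$ against one of the denominators, separate variables into weighted and unweighted $\ell_1$ sums, and use the two hypotheses to absorb the remaining error factors into the constant. Your telescoping decomposition (four terms, each with a $v$-prefix and a $u$-suffix) is a cleaner alternative to the paper's expansion into fifteen terms grouped by order $\mathcal{D}_1,\dots,\mathcal{D}_4$; both routes need the same bookkeeping for the cross terms, and note that when a $v^{(m)}$ occupies the slot that receives the $1/\sqrt{\mymin_k n_{j,k}}$ weight you must invoke the \emph{first} hypothesis (not merely $\|v^{(m)}\|_1\leq(1+C_1)\|u^{(m)}\|_1$) to convert the weighted sum of $v^{(m)}$ into the weighted bound for $u^{(m)}$.

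The one substantive deviation is your choice $n_{j,k,j',k'}\leq\sqrt{n_{j,k}\,n_{j',k'}}$, which places one factor of $1/\sqrt{n_{\cdot,\cdot}}$ in each of the two blocks $(j,k)$ and $(j',k')$. This factorizes the four-fold sum nicely, but it forces the surviving weighted factor $\|u\|_1/\sqrt{\mymin_{j\in\mathrm{supp}(u)}\mymin_k n_{j,k}}$ to attach to a vector from the block \emph{not} containing $\epsilon^{(l)}$ (e.g., to $u^{(3)}$ rather than $u^{(2)}$ when $l=1$), whereas the stated bound attaches it to the partner $u^{(i_l)}$ in the same block as $\epsilon^{(l)}$. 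For arbitrary vectors these are genuinely different inequalities, so your argument as written proves a bound of the same shape but not literally the one claimed. The paper instead uses the asymmetric estimate $n_{j,k,j',k'}\leq n_{j',k'}$, giving $\cN_{j,k,j',k'}\leq 1/n_{j,k}$, so that both square-root weights stay inside the block containing $\epsilon^{(l)}$ and the weighted factor lands on $u^{(i_l)}$ as stated. In the only place the lemma is applied one has $u^{(1)}=u^{(3)}$ and $u^{(2)}=u^{(4)}$, so the two bounds coincide there; but to prove the lemma exactly as stated you should keep both weight factors on the $\epsilon$-block.
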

We can apply Lemma~\ref{lem:kron_err} with $u^{(1)}=u^{(3)}=\Theta^{(a)*}_{b,:}$, $u^{(2)}=u^{(4)}=\overline{\theta}^{(a)*}$, $\epsilon^{(1)}=\epsilon^{(3)}=\widetilde{\Theta}^{(a)}_{b,:}-\Theta^{(a)*}_{b,:}$, $\epsilon^{(2)}=\epsilon^{(4)}=\widehat{\overline{\theta}}^{(a)}-\overline{\theta}^{(a)*}=-\widehat{\theta}^{(a)}+\theta^{(a)*}$. By Lemma \ref{lem:debias_nb_lasso_err}, 
\begin{align*}
    \|\epsilon^{(1)}\|_1\leq& \frac{C(\kappa_{\Sigma^*}+\sqrt{\gamma_b^{(a)}})\|\Sigma^*\|_{\infty}\|\Theta^{(a)*}_{b,:}\|_1}{\lambda_{\mymin}(\Sigma^*)}d^{(a)}_b\sqrt{\frac{\log p}{\mymin_{k\in\overline{\cN}_b^{(a)}}\mymin_{i\in[p]}n_{i,k}}}\\
    \leq&C\|\Theta^{(a)*}_{b,:}\|_1=C\|u^{(1)}\|_1,
\end{align*}
where we have applied the sample size condition in Assumption~\ref{assump:inference_n_B}. In addition, Lemma~\ref{lem:debias_nb_lasso_err} implies
\begin{align*}
    \sum_{j=1}^p\frac{|\epsilon^{(1)}_j|}{\sqrt{\mymin_kn_{j,k}}}\leq& \frac{C\kappa_{\Sigma^*}\|\Sigma^*\|_{\infty}}{\lambda_{\mymin}(\Sigma^*)}\|\Theta^{(a)*}_{b,:}\|_1\frac{d_b^{(a)}\sqrt{\log p}}{\mymin_{k\in \cN_k^{(a)}}\mymin_{i\in [p]}n_{i,k}}\\
    \leq&\frac{\|u^{(1)}\|_1}{\sqrt{\mymin_{k\in \cN_k^{(a)}}\mymin_{i\in [p]}n_{i,k}}}.
\end{align*}
Furthermore, Theorem~\ref{thm:nb_lasso_err} suggests that
\begin{align*}
    \|\epsilon^{(2)}\|_1\leq& \frac{C(\sqrt{\gamma_a}+1)\|\Sigma^*\|_{\infty}\|\Theta^{*}_{a,:}\|_1}{\lambda_{\mymin}(\Sigma^*)\Theta^*_{a,a}}d_a\sqrt{\frac{\log p}{\mymin_{j\in\cN_a}\mymin_kn_{j,k}}}\\
    \leq&\frac{C\|\Theta^*_{a,:}\|_1}{\Theta^*_{a,a}},
\end{align*}
\begin{align*}
    \sum_{j=1}^p\frac{|\epsilon^{(2)}_j|}{\sqrt{\mymin_kn_{j,k}}}\leq& \frac{C\|\Sigma^*\|_{\infty}\|\Theta^*_{a,:}\|_1}{\lambda_{\mymin}(\Sigma^*)\Theta^*_{a,a}}\frac{d_a\sqrt{\log p}}{\mymin_{j\in \cN_a}\mymin_{k}n_{j,k}}\\
    \leq&\frac{\|u^{(2)}\|_1}{\sqrt{\mymin_{k\in \cN_a}\mymin_{i\in [p]}n_{i,k}}}.
\end{align*}
Therefore, 
\begin{align*}
    \|\cE_2\circ \cN\|_1\leq &\frac{C\kappa_{\Sigma^*}\|\Sigma^*\|_{\infty}\|\Theta^{(a)*}_{b,:}\|_1^2\|\theta^{(a)*}\|_1^2}{\lambda_{\mymin}(\Sigma^*)}\frac{(d_a+d_b+1)\sqrt{\log p}}{(n_1^{(a,b)})^{3/2}},
\end{align*}
which further implies that
\begin{align*}
    |\mathrm{\RNum{2}}|\leq \frac{C\kappa_{\Sigma^*}\|\Sigma^*\|_{\infty}^3\|\Theta^{(a)*}_{b,:}\|_1^2\|\theta^{(a)*}\|_1^2}{\lambda_{\mymin}(\Sigma^*)}\frac{(d_a+d_b+1)\sqrt{\log p}}{(n_1^{(a,b)})^{3/2}}.
\end{align*}
\subsubsection{Bounding Error Term \texorpdfstring{$\mathrm{\RNum{3}}$}{Lg}}
By Lemma~\ref{lem:T_entry_err}, one can bound $\mathrm{\RNum{3}}$ as follows:
\begin{align*}
    \mathrm{\RNum{3}}\leq&C\|\Sigma^*\|_{\infty}^2\sum_{j,k,j',k'}\left(\frac{\sqrt{(\log p)n_{j,k,j',k'}}}{n_{j,k}n_{j',k'}}+\frac{\log p}{n_{j,k}n_{j',k'}}\right)|(\cE_2)_{j,k,j',k'}|\\
    \leq&C\|\Sigma^*\|_{\infty}^2\sqrt{\log p}\sum_{j,k,j',k'}\frac{\sqrt{n_{j,k,j',k'}}}{n_{j,k}n_{j',k'}}|(\cE_2)_{j,k,j',k'}|\\
    &+C\|\Sigma^*\|_{\infty}^2\log p\sum_{j,k,j',k'}\frac{|(\cE_2)_{j,k,j',k'}|}{n_{j,k}n_{j',k'}}\\
    \leq&C\|\Sigma^*\|_{\infty}^2\sqrt{\log p}\|\cE_2\circ\cN^{(1)}\|_1+C\|\Sigma^*\|_{\infty}^2\log p\|\cE_2\circ \cN^{(2)}\|_1,
\end{align*}
where $\cN^{(1)},\cN^{(2)}\in \bR^{p\times p\times p\times p}$ satisfy $\cN^{(1)}_{j,k,j',k'}=\frac{\sqrt{n_{j,k,j',k'}}}{n_{j,k}n_{j',k'}}$ and $\cN^{(2)}_{j,k,j',k'}=\frac{1}{n_{j,k}n_{j',k'}}$. By Lemma~\ref{lem:kron_err2} (a similar result to Lemma~\ref{lem:kron_err} but with $\cN^{(1)}$ and $\cN^{(2)}$), we have
\begin{align*}
    \|\cE\circ \cN^{(1)}\|_1\leq &\frac{C\kappa_{\Sigma^*}\|\Sigma^*\|_{\infty}\|\Theta^{(a)*}_{b,:}\|_1^2\|\theta^{(a)*}\|_1^2}{\lambda_{\mymin}(\Sigma^*)}\frac{(d_a+d_b+1)\sqrt{\log p}}{(n_1^{(a,b)})^{2}},\\
    \|\cE\circ \cN^{(2)}\|_1\leq &\frac{C\kappa_{\Sigma^*}\|\Sigma^*\|_{\infty}\|\Theta^{(a)*}_{b,:}\|_1^2\|\theta^{(a)*}\|_1^2}{\lambda_{\mymin}(\Sigma^*)}\frac{(d_a+d_b+1)\sqrt{\log p}}{(n_1^{(a,b)})^{5/2}}.
\end{align*}
Therefore, 
\begin{align*}
    \mathrm{\RNum{3}}\leq&\frac{C\kappa_{\Sigma^*}\|\Sigma^*\|^3_{\infty}\|\Theta^{(a)*}_{b,:}\|_1^2\|\theta^{(a)*}\|_1^2}{\lambda_{\mymin}(\Sigma^*)}\frac{(d_a+d_b+1)\log p}{(n_1^{(a,b)})^{2}}.
\end{align*}

Combining the upper bounds for $\mathrm{\RNum{1}}$, $\mathrm{\RNum{2}}$, and $\mathrm{\RNum{3}}$, and applying Assumptions~\ref{assump:inference_n_B} and \ref{assump:inference_n_BE}, one can show that
\begin{align*}
    &|\widehat{\sigma}_n^2(a,b)-\sigma_n^2(a,b)|\\
    \leq &\frac{C\kappa_{\Sigma^*}\|\Sigma^*\|_{\infty}^3\|\Theta^{(a)*}_{b,:}\|_1^2\|\Theta^{*}_{:,a}\|_1^2}{\lambda_{\mymin}(\Sigma^*)(\Theta^*_{a,a})^2}\frac{(d_a+d_b+1)\sqrt{\log p}}{(n_1^{(a,b)})^{3/2}}.
\end{align*}
To interpret the bound above, note that $\|\Sigma^*\|_{\infty}^2\|\Theta^{(a)*}_{b,:}\|_1^2\frac{\|\Theta^{*}_{:,a}\|_1^2}{(\Theta^*_{a,a})^2}$ can be viewed as an upper bound for $|\langle \cT^*,\cU^*\rangle|\leq \|\cT^*\|_{\infty}\|\cU^*\|_1$, $\frac{\kappa_{\Sigma^*}\|\Sigma^*\|_{\infty}}{\lambda_{\mymin}(\Sigma^*)}\frac{(d_a+d_b+1)\sqrt{\log p}}{(n_1^{(a,b)})^{3/2}}$ is the error brought by estimating $\cU^*=\Theta^{(a)*}_{b,:}\otimes \overline{\theta}^{(a)*}\otimes \Theta^{(a)*}_{b,:}\otimes \overline{\theta}^{(a)*}$, which dominates the error brought by estimating $\cT^*$. As have been shown in the proof of Theorem~\ref{thm:nb_lasso_debias_decomp}, 
\begin{align*}
    \sigma_n^2(a,b)\geq \frac{\lambda_{\mymin}^2(\Sigma^*)\mymin_{(j,k)\in S_2(a, b)}\left|\Theta^{(a)*}_{b,j}\Theta^{*}_{a,k}+\Theta^{(a)*}_{b,k}\Theta^{*}_{a,j}\right|^2}{2(\Theta^*_{a,a})^2n_2^{(a,b)}},
\end{align*}
which implies
\begin{equation}\label{eq:var_est_ratio_err}
    \begin{split}
        &\frac{|\widehat{\sigma}_n^2(a,b)-\sigma_n^2(a,b)|}{\sigma_n^2(a,b)}\\
    \leq&\frac{C^2(\Theta^*;a,b)}{\kappa_{\Theta^*}^2}\frac{(d_a+d_b+1)n_2^{(a,b)}\sqrt{\log p}}{(n_1^{(a,b)})^{3/2}}\\
    =:&\varepsilon_n.
    \end{split}
\end{equation}
By Assumption \ref{assump:var_est}, $\lim_{n\rightarrow \infty}\varepsilon_n=0$. Note that when $\varepsilon_n\leq \frac{1}{2}$, with probability at least $1-Cp^{-c}$, one has
\begin{align*}
    |\widehat{\sigma}_n(a,b)-\sigma_n(a,b)|\leq &\frac{\widehat{\sigma}_n^2(a,b)-\sigma_n^2(a,b)}{2\sqrt{\sigma_n^2(a,b)-|\widehat{\sigma}_n^2(a,b)-\sigma_n^2(a,b)|}}\\
    \leq&\frac{\varepsilon_n\sigma_n(a,b)}{2\sqrt{1-\varepsilon_n}},
\end{align*}
and hence,
\begin{align*}
    \left|\frac{\sigma_n(a,b)}{\widehat{\sigma}_n(a,b)}-1\right|=&\frac{\sigma_n(a,b)-\widehat{\sigma}_n(a,b)}{\widehat{\sigma}_n(a,b)}\\
    \leq&\frac{|\widehat{\sigma}_n(a,b)-\sigma_n(a,b)|}{\sigma_n(a,b)-|\widehat{\sigma}_n(a,b)-\sigma_n(a,b)|}\\
    \leq&\frac{\varepsilon_n}{2\sqrt{1-\varepsilon_n}-\varepsilon_n}.
\end{align*}
Since $\lim_{n\rightarrow \infty}\frac{\varepsilon_n}{2\sqrt{1-\varepsilon_n}-\varepsilon_n}=0$, for any $\delta>0$, there exists $n_0$ such that if $n>n_0$, $\frac{\varepsilon_n}{2\sqrt{1-\varepsilon_n}-\varepsilon_n}\leq \delta$, and thus
\begin{align*}
    &\lim_{n\rightarrow\infty}\bP\left(\left|\frac{\sigma_n(a,b)}{\widehat{\sigma}_n(a,b)}-1\right|> \delta\right)\\
    \leq &\lim_{n\rightarrow\infty}\bP\left(\frac{|\widehat{\sigma}_n^2(a,b)-\sigma_n^2(a,b)|}{\sigma_n^2(a,b)}> \frac{\varepsilon_n}{2\sqrt{1-\varepsilon_n}-\varepsilon_n}\right)\\
    \leq &\lim_{n\rightarrow\infty}Cp_n^{-c}=0.
\end{align*}
Here we write $p=p(n)$ to reflect the fact that $p$ also tends to $\infty$ as $n$ tends to $\infty$. Therefore, $\frac{\sigma_n(a,b)}{\widehat{\sigma}_n(a,b)}\overset{p}{\rightarrow} 1$.
\begin{proof}[Proof of Theorem~\ref{thm:normal_approx_var_est}]
Combine the results in Theorem~\ref{thm:nb_lasso_debias_decomp} and Proposition~\ref{prop:var_est}, and apply Slutsky's theorem, the proof is then complete.
\end{proof}
\subsection{Proof of Theorem \ref{thm:typeI_power}}
Under the null hypothesis $\mathcal{H}_0:\Theta^*_{a,b}=0$, Theorem \ref{thm:normal_approx_var_est} suggests that $\widehat{z}(a,b)\overset{d}{\rightarrow}\cN(0,1)$, and hence $\lim_{n,p\rightarrow \infty}\bP(|\widehat{z}(a,b)|\geq z_{\alpha/2})=2F_Z(-z_{\alpha/2})=\alpha$.

While under the alternative hypothesis $\mathcal{H}_A:\frac{\Theta^*_{a,b}}{\Theta^*_{a,a}}=\delta_n$, we have
\begin{align*}
    \widehat{z}(a,b)=\frac{\delta_n}{\sigma_n(a,b)}\frac{\sigma_n(a,b)}{\widehat{\sigma}_n(a,b)}+\widehat{\sigma}_n^{-1}(a,b)(\widetilde{\theta}^{(a)}_b-\theta^{(a)*}_b).
\end{align*}
If $\lim_{n,p\rightarrow \infty}\frac{\delta_n}{\sigma_n(a,b)}=\delta$ for some $\delta\in \bR$, then Proposition \ref{prop:var_est} suggests that $\frac{\delta_n}{\sigma_n(a,b)}\frac{\sigma_n(a,b)}{\widehat{\sigma}_n(a,b)}\overset{p}{\rightarrow} \delta$. By Slutsky's theorem and Theorem \ref{thm:normal_approx_var_est}, $\widehat{z}(a,b)\overset{d}{\rightarrow} \cN(\delta,1)$, and hence
$$
\lim_{n,p\rightarrow \infty}\bP(|\widehat{z}(a,b)|\geq z_{\alpha/2})=F_Z(-z_{\alpha/2}+\delta)+F_Z(-z_{\alpha/2}-\delta)\geq F_Z(|\delta|-z_{\alpha/2}).
$$
In particular, when $\delta=0$, $F_Z(-z_{\alpha/2}+\delta)+F_Z(-z_{\alpha/2}-\delta)=\alpha$, and $
\lim_{n,p\rightarrow \infty}\bP(|\widehat{z}(a,b)|\geq z_{\alpha/2})=\alpha.$

Now we consider the last case: $\lim_{n,p\rightarrow \infty}\frac{|\delta_n|}{\sigma_n(a,b)}=+\infty$. Since $$|\widehat{z}(a,b)|\geq \left|\frac{\delta_n}{\widehat{\sigma}_n(a,b)}\right|-\left|\widehat{\sigma}_n^{-1}(a,b)(\widetilde{\theta}^{(a)}_b-\theta^{(a)*}_b)\right|,
$$
we have that for any $\delta>0$
\begin{align*}
    \bP(|\widehat{z}(a,b)|<z_{\alpha/2})\leq& \bP\left(|\widehat{\sigma}_n^{-1}(a,b)(\widetilde{\theta}^{(a)}_b-\theta^{(a)*}_b)|>\left|\frac{\delta_n}{\widehat{\sigma}_n(a,b)}\right|-z_{\alpha/2}\right)\\
    \leq&\bP(|\widehat{\sigma}_n^{-1}(a,b)(\widetilde{\theta}^{(a)}_b-\theta^{(a)*}_b)|>\left|\frac{\delta_n}{\sigma_n(a,b)}\right|(1-\delta)-z_{\alpha/2})\\
    &+\bP(|\frac{\sigma_n(a,b)}{\widehat{\sigma}_n(a,b)}-1|>\delta).
\end{align*}
Recall the error term $\varepsilon_n$ defined in \eqref{eq:var_est_ratio_err}. Let $\delta=\frac{\varepsilon_n}{2\sqrt{1-\varepsilon_n}-\varepsilon_n}$, and recall that in the proof of Proposition \ref{prop:var_est}, we have shown that $\bP(|\frac{\sigma_n(a,b)}{\widehat{\sigma}_n(a,b)}-1|>\delta)\leq Cp^{-c}$. Since $\lim_{n,p\rightarrow\infty}\delta=0$, $\lim_{n,p\rightarrow \infty}\frac{|\delta_n|}{\sigma_n(a,b)}=+\infty$, one has
$$
C_n:=\lim_{n,p\rightarrow}\left|\frac{\delta_n}{\sigma_n(a,b)}\right|(1-\delta)-z_{\alpha/2}=+\infty.
$$
Hence for any $\epsilon>0$, there exists $N_1$ such that if $n>N_1$, $C_n>z_{\epsilon/4}$; there also exists $N_2$ such that if $n>N_2$, $|\bP(|\widehat{\sigma}_n^{-1}(a,b)(\widetilde{\theta}^{(a)}_b-\theta^{(a)*}_b)|>z_{\epsilon/4})-\bP(|Z|>z_{\epsilon/4})|<\epsilon/2$, where $Z$ follows standard Gaussian distribution. Therefore, for $n\geq N_1\vee N_2$,
\begin{align*}
    \bP(|\widehat{\sigma}_n^{-1}(a,b)(\widetilde{\theta}^{(a)}_b-\theta^{(a)*}_b)|>C_n)\leq & \bP(|\widehat{\sigma}_n^{-1}(a,b)(\widetilde{\theta}^{(a)}_b-\theta^{(a)*}_b)|>z_{\epsilon/4})\\
    \leq&\bP(|Z|>z_{\epsilon/4})+\epsilon/2\\
    \leq&\epsilon,
\end{align*}
which implies that $\lim_{n,p\rightarrow \infty}\bP(|\widehat{\sigma}_n^{-1}(a,b)(\widetilde{\theta}^{(a)}_b-\theta^{(a)*}_b)|>C_n)=0$ and hence
$$
\lim_{n,p\rightarrow \infty}\bP(|\widehat{z}(a,b)|\geq z_{\alpha/2})=1.
$$
\subsection{Proof of Theorem \ref{thm:coverage_Theta}: Valid Inference for the Precision Matrix}
Now we show the coverage guarantee of our confidence interval $\widehat{\mathbb{C}}^{a,b}_{\Theta,\alpha}$ defined in \eqref{eq:CI_Theta} and \eqref{eq:CI_Theta_aa}. We will focus on the $a\neq b$ case first, and then present the proof for the $a=b$ case, which can be built on top of the previous proof steps. In particular, for the $a\neq b$ case, we first prove the normal approximation result for $\frac{\widetilde{\Theta}_{a,b}-\Theta^*_{a,b}}{\sigma_{n,\Theta}(a,b)}$, and then we show the variance estimation consistency of $\widehat{\sigma}_{n,\Theta}(a,b)$.
\subsubsection{Decomposition for \texorpdfstring{$\frac{\widetilde{\Theta}_{a,b}-\Theta^*_{a,b}}{\sigma_{n,\Theta}(a,b)}$}{Lg}}\label{sec:proof_Theta_ab_decomp}
By the definition of $\widetilde{\Theta}_{a,b}$, one can show that
\begin{equation*}
        \widetilde{\Theta}_{a,b}-\Theta^*_{a,b} = -\Theta^*_{a,a}(\widetilde{\theta}^{(a)}_b - \theta^{(a)*}_b)-(\widehat{\Theta}_{a,a}-\Theta^*_{a,a})(\widetilde{\theta}^{(a)}_b - \theta^{(a)*}_b)-(\widehat{\Theta}_{a,a}-\Theta^*_{a,a})\theta^{(a)*}_b.
\end{equation*}
Theorem \ref{thm:nb_lasso_debias_decomp} suggests that 
$$
-\Theta^*_{a,a}(\widetilde{\theta}^{(a)}_b - \theta^{(a)*}_b) = -\Theta^*_{a,a}B -\Theta^*_{a,a}E,
$$
where $|B|\leq C(\Theta^*,\gamma_a,\gamma^{(a)}_b)\frac{(d_a+d_b+1)\log p}{n_1^{(a,b)}}$ with probability at least $1-Cp^{-c}$, and $\sigma_n^{-1}(a,b)E\overset{\text{d}}{\rightarrow}\cN(0,1).$ 
Now we focus on the latter two terms and aim to decompose them into terms that are higher order biases and error terms that converges to a normal distribution. We start with a decomposition of $\widehat{\Theta}_{a,a}-\Theta^*_{a,a}$ as a function of $\widehat{\Theta}_{a,a}^{-1}-(\Theta^*_{a,a})^{-1}$. By the definition of $\widehat{\Theta}_{a,a}$, we have that
\begin{equation}\label{eq:tilde_Theta_inv_err}
    \begin{split}
        \widehat{\Theta}_{a,a}^{-1}-(\Theta^*_{a,a})^{-1} = &\widehat{\overline{\theta}}^{(a)\top}\widehat{\Sigma}\widehat{\overline{\theta}}^{(a)}-\overline{\theta}^{(a)*\top}\Sigma^*\overline{\theta}^{(a)*}\\
        =&\underbrace{\overline{\theta}^{(a)*\top}(\widehat{\Sigma}-\Sigma^*)\overline{\theta}^{(a)*}}_{\varepsilon^{(1)}_1}+\underbrace{\widehat{\overline{\theta}}^{(a)\top}\widehat{\Sigma}\widehat{\overline{\theta}}^{(a)}-\overline{\theta}^{(a)*\top}\widehat{\Sigma}\overline{\theta}^{(a)*}}_{\varepsilon^{(1)}_2}.
    \end{split}
\end{equation}
With a little abuse of notations, we decompose $\widehat{\Theta}_{a,a}^{-1}-(\Theta^*_{a,a})^{-1}$ into two error terms $\varepsilon^{(1)}_1$ and $\varepsilon^{(1)}_2$. We also let $\varepsilon^{(1)}=\varepsilon^{(1)}_1+\varepsilon^{(1)}_2$, while the three notations $\varepsilon^{(1)}_1$, $\varepsilon^{(1)}_2$, and $\varepsilon^{(1)}$ might be defined differently in the proofs of other theoretical statements. Then we can write 
\begin{equation}\label{eq:tilde_Theta_err}
    \begin{split}
        \widehat{\Theta}_{a,a}-\Theta^*_{a,a} = &-\Theta^*_{a,a}\widehat{\Theta}_{a,a}(\widehat{\Theta}_{a,a}^{-1}-(\Theta^*_{a,a})^{-1})\\
        =&-\Theta^*_{a,a}[\Theta^*_{a,a}-\Theta^*_{a,a}\widehat{\Theta}_{a,a}\varepsilon^{(1)}]\varepsilon^{(1)}\\
        =&-\underbrace{(\Theta^*_{a,a})^2\varepsilon^{(1)}_1}_{\varepsilon^{(2)}_1}-\underbrace{[(\Theta^*_{a,a})^2\varepsilon^{(1)}_2 - \varepsilon^{(1)2}\widehat{\Theta}_{a,a}(\Theta^*_{a,a})^2]}_{\varepsilon^{(2)}_2},
    \end{split}
\end{equation}
with $\varepsilon^{(2)}=\varepsilon^{(2)}_1+\varepsilon^{(2)}_2$.
With these decompositions in place, we have
\begin{equation}\label{eq:tilde_Theta_decomp}
    \begin{split}
        \widetilde{\Theta}_{a,b}-\Theta^*_{a,b} =\underbrace{\varepsilon^{(2)}_1\theta^{(a)*}_b-\Theta^*_{a,a}E}_{E^{(2)}}+ \underbrace{\varepsilon^{(2)}_2\theta^{(a)*}_b+\varepsilon^{(2)}(\widetilde{\theta}^{(a)}_b - \theta^{(a)*}_b)-\Theta^*_{a,a}B}_{B^{(2)}},
    \end{split}
\end{equation}
and in the following we will show an upper bound for $B^{(2)}$ based off bounds for $\varepsilon^{(1)}_1$, $\varepsilon^{(1)}_2$, and central limit theorem for
\begin{equation*}
\begin{split}
    E^{(2)}=&\left\langle\widehat{\Sigma}-\Sigma^*,-\Theta^{(a)*}_{:,b}\Theta^*_{a,:}-\frac{\Theta^*_{a,b}}{\Theta^*_{a,a}}\Theta^*_{:,a}\Theta^*_{a,:}\right\rangle\\
    =&\left\langle\widehat{\Sigma}-\Sigma^*,-\Theta^{*}_{:,b}\Theta^*_{a,:}\right\rangle.
\end{split}
\end{equation*}
\subsubsection{Bounding \texorpdfstring{$B^{(2)}$}{Lg}}\label{sec:proof_precision_B} 
For the first error term $\varepsilon^{(1)}_1$ in $\widehat{\Theta}_{a,a}^{-1}-(\Theta^*_{a,a})^{-1}$, one can show that
\begin{equation*}
    \begin{split}
        |\varepsilon^{(1)}_1|\leq& \|\overline{\theta}^{(a)*}\|_1^2\|\widehat{\Sigma}_{\overline{\mathcal{N}}_a,\overline{\mathcal{N}}_a}-\Sigma^*_{\overline{\mathcal{N}}_a,\overline{\mathcal{N}}_a}\|_{\infty}\\
        \leq&C\frac{\|\Sigma^*\|_{\infty}\|\Theta^*_{:,a}\|_1^2}{(\Theta^*_{a,a})^2}\sqrt{\frac{\log p}{n_1'^{(a,b)}}},
    \end{split}
\end{equation*}
where we have applied the entry-wise error bound for $\widehat{\Sigma}$ in Lemma \ref{lem:SampleCov_entry_err} in the second line, and $n_1'^{(a,b)}$ was defined as $n_1'^{(a,b)}=\min_{(j,k)\in S_1(a,b)\cup \{(a,a)\}}n_{j,k}$. While for $\varepsilon^{(1)}_2$, we show in the following that this is a higher-order error:
\begin{equation*}
    \begin{split}
        |\varepsilon^{(1)}_2|\leq&2|(\widehat{\overline{\theta}}^{(a)}-\overline{\theta}^{(a)*})^\top \widehat{\Sigma}\overline{\theta}^{(a)*}|+|(\widehat{\overline{\theta}}^{(a)}-\overline{\theta}^{(a)*})^\top\widehat{\Sigma}(\widehat{\overline{\theta}}^{(a)}-\overline{\theta}^{(a)*})|\\
        \leq&2|(\widehat{\overline{\theta}}^{(a)}-\overline{\theta}^{(a)*})^\top \Sigma^*\overline{\theta}^{(a)*}|+2|(\widehat{\overline{\theta}}^{(a)}-\overline{\theta}^{(a)*})^\top (\widehat{\Sigma}-\Sigma^*)\overline{\theta}^{(a)*}|\\
        &+|(\widehat{\overline{\theta}}^{(a)}-\overline{\theta}^{(a)*})^\top\Sigma^*(\widehat{\overline{\theta}}^{(a)}-\overline{\theta}^{(a)*})|+|(\widehat{\overline{\theta}}^{(a)}-\overline{\theta}^{(a)*})^\top(\widehat{\Sigma}-\Sigma^*)(\widehat{\overline{\theta}}^{(a)}-\overline{\theta}^{(a)*})|\\
        \leq &\|(\widehat{\Sigma}-\Sigma^*)(\widehat{\overline{\theta}}^{(a)}-\overline{\theta}^{(a)*})\|_{\infty}(2\|\overline{\theta}^{(a)*}\|_1+\|\widehat{\overline{\theta}}^{(a)}-\overline{\theta}^{(a)*}\|_1)+\lambda_{\max}(\Sigma^*)\|\widehat{\theta}^{(a)}-\theta^{(a)*}\|_2^2,
    \end{split}
\end{equation*}
where in the last line, we have applied the fact that $(\Sigma^*\overline{\theta}^{(a)*})_j=0$ for all $j\neq a$ and $\widehat{\overline{\theta}}^{(a)}_a=\overline{\theta}^{(a)*}_a=1$, which implies $(\widehat{\overline{\theta}}^{(a)}-\overline{\theta}^{(a)*})^\top \Sigma^*\overline{\theta}^{(a)*}=0$. Furthermore, our covariance error shown in Lemma \ref{lem:SampleCov_entry_err} suggests that
\begin{equation*}
    \begin{split}
        \|(\widehat{\Sigma} - \Sigma^*)(\widehat{\overline{\theta}}^{(a)}-\overline{\theta}^{(a)*})\|_{\infty}\leq &C\|\Sigma^*\|_{\infty}\max_j\sum_{k=1}^p|\widehat{\overline{\theta}}^{(a)}_k-\overline{\theta}^{(a)*}_k|\sqrt{\frac{\log p}{n_{j,k}}}\\
        \leq &\frac{C\|\widehat{\overline{\theta}}^{(a)}-\overline{\theta}^{(a)*}\|_{\lambda^{(a)},1}}{\|\overline{\theta}^{(a)*}\|_1}\\
        \leq &\frac{C\|\Sigma^*\|_{\infty}^2}{\lambda_{\min}(\Sigma^*)}\frac{\|\Theta^*_{:,a}\|_1}{\Theta^*_{a,a}}\frac{d_a\log p}{\min_{j\in \mathcal{N}_a}\min_kn_{j,k}},
    \end{split}
\end{equation*}
where we have applied Theorem \ref{thm:nb_lasso_err} in the last line. Furthermore, applying the $\ell_1$ and $\ell_2$ norm bounds in Theorem \ref{thm:nb_lasso_err} leads to the following:
\begin{equation}\label{eq:epsilon1_2_bnd}
    |\varepsilon^{(1)}_2|\leq\frac{C\|\Sigma^*\|_{\infty}^2\lambda_{\max}(\Sigma^*)}{\lambda_{\min}^2(\Sigma^*)}\frac{\|\Theta^*_{:,a}\|_1^2}{(\Theta^*_{a,a})^2}\frac{d_a\log p}{\min_{j\in \mathcal{N}_a}\min_kn_{j,k}}.
\end{equation}
While for $\varepsilon^{(1)}_1$, we have 
\begin{equation}\label{eq:epsilon1_1_bnd}
    \begin{split}
        |\varepsilon^{(1)}_1|\leq &\|\widehat{\Sigma}_{\overline{\mathcal{N}}_a,\overline{\mathcal{N}}_a}-\Sigma^*_{\overline{\mathcal{N}}_a,\overline{\mathcal{N}}_a}\|_{\infty}\frac{\|\Theta^*_{:,a}\|_1^2}{(\Theta^*_{a,a})^2}\\
        \leq &\frac{C\|\Sigma^*\|_{\infty}\|\Theta^*_{:,a}\|_1^2}{{(\Theta^*_{a,a})^2}}\sqrt{\frac{\log p}{\min_{j,k\in S_1'(a,b)}n_{j,k}}},
    \end{split}
\end{equation}
where $S_1'(a,b)$ in the last line was defined as $S_1'(a,b) = S_1(a,b)\cup \{(a,a)\}$ in Assumption \ref{assump:inference_n_B2}, and we have applied Lemma \ref{lem:SampleCov_entry_err}. Combining \eqref{eq:epsilon1_1_bnd} and \eqref{eq:epsilon1_2_bnd} together, and recalling Assumption \ref{assump:inference_n_B2}, we have $|\varepsilon^{(1)}|\leq \frac{C\|\Sigma^*\|_{\infty}\|\Theta^*_{:,a}\|_1^2}{{(\Theta^*_{a,a})^2}}\sqrt{\frac{\log p}{\min_{j,k\in S_1'(a,b)}n_{j,k}}}$. Note that Assumption \ref{assump:inference_n_B2} implies $|\varepsilon^{(1)}|\leq C\|\Sigma^*\|_{\infty}\kappa_{\Sigma^*}^2(d_a+1)\sqrt{\frac{\log p}{\min_{j,k\in S_1'(a,b)}n_{j,k}}}\leq \frac{1}{2}\lambda_{\min}(\Sigma^*)\leq \frac{1}{2}(\Theta^*_{a,a})^{-1}$, which further suggests that $|\widehat{\Theta}_{a,a}|\leq \frac{1}{(\Theta^*_{a,a})^{-1}-|\varepsilon^{(1)}|}\leq 2\Theta^*_{a,a}$. Therefore, by the definition of $\varepsilon^{(2)}_1$ and $\varepsilon^{(2)}_2$ in \eqref{eq:tilde_Theta_err}, one can show that $|\varepsilon^{(2)}_1|\leq C\|\Sigma^*\|_{\infty}\|\Theta^*_{:,a}\|_1^2\sqrt{\frac{\log p}{\min_{j,k\in S_1'(a,b)}n_{j,k}}}$, while for $\varepsilon^{(2)}_2$, some more calculations show that
\begin{equation*}
    \begin{split}
        |\varepsilon^{(2)}_2|\leq &\frac{C\|\Sigma^*\|_{\infty}^2\lambda_{\max}(\Sigma^*)}{\lambda_{\min}^2(\Sigma^*)}\|\Theta^*_{:,a}\|_1^2\frac{(d_a+1)\log p}{\min_{j\in \mathcal{N}_a}\min_kn_{j,k}}\\
        &+\frac{C\|\Sigma^*\|_{\infty}^2\|\Theta^*_{:,a}\|_1^4}{\Theta^*_{a,a}}\frac{\log p}{\min_{(j,k)\in S_1'(a,b)}n_{j,k}}\\
        \leq &\frac{C\|\Sigma^*\|_{\infty}^2\lambda_{\max}(\Sigma^*)}{\lambda_{\min}^2(\Sigma^*)}\|\Theta^*_{:,a}\|_1^2\frac{(d_a+1)\log p}{\min_{(j,k)\in S_1'(a,b)}n_{j,k}},
    \end{split}
\end{equation*}
where the last line is due to the fact that $\|\Theta^*_{:,a}\|_1^2\leq(d_a+1)\|\Theta^*_{:,a}\|_2^4\leq(d_a+1)^2\lambda_{\min}^{-2}(\Sigma^*)$ and $\Theta^*_{a,a}\geq \lambda_{\max}^{-1}(\Sigma^*)$. Combining bounds for $\varepsilon^{(2)}_{1}$ and $\varepsilon^{(2)}_2$ and applying Assumption \ref{assump:inference_n_B2}, we have 
\begin{equation*}
    |\varepsilon^{(2)}|\leq C\|\Sigma^*\|_{\infty}\|\Theta^*_{:,a}\|_1^2\sqrt{\frac{\log p}{\min_{(j,k)\in S_1'(a,b)}n_{j,k}}}\leq \frac{C\|\Sigma^*\|_{\infty}}{\lambda_{\min}^2(\Sigma^*)}(d_a+1)\sqrt{\frac{\log p}{\min_{(j,k)\in S_1'(a,b)}n_{j,k}}}\leq C\Theta^*_{a,a}.
\end{equation*}
Now we plug in these bounds into \eqref{eq:tilde_Theta_decomp} and apply Theorem \ref{thm:nb_lasso_debias_decomp} to bound $B^{(2)}$ as follows:
\begin{equation*}
    \begin{split}
        |B^{(2)}|\leq &|\varepsilon^{(2)}_2||\theta^{(a)*}_b|+|\varepsilon^{(2)}||E|+C\Theta^*_{a,a}|B|\\
        \leq&\frac{C\|\Sigma^*\|_{\infty}^2\lambda_{\max}(\Sigma^*)|\Theta^*_{a,b}|}{\lambda_{\min}^2(\Sigma^*)\Theta^*_{a,a}}\|\Theta^*_{:,a}\|_1^2\frac{(d_a+1)\log p}{\min_{(j,k)\in S_1'(a,b)}n_{j,k}}\\
        &+C\|\Sigma^*\|_{\infty}\|\Theta^*_{:,a}\|_1^2\sqrt{\frac{\log p}{\min_{(j,k)\in S_1'(a,b)}n_{j,k}}}|E|\\
        &+C\kappa_{\Sigma^*}(\kappa_{\Sigma^*}^2+\sqrt{\gamma_a}+\sqrt{\gamma^{(a)}_b})\|\Sigma^*\|_{\infty}^2\Theta^*_{a,a}\|\Theta^*_{:,a}\|_1\|\Theta^{(a)*}_{:,b}\|_1\frac{(d_a+d_b+1)\log p}{n_1^{(a,b)}},
    \end{split}
\end{equation*}
where $B$ is bounded in Theorem \ref{thm:nb_lasso_debias_decomp}; While for $E$, the specific form was defined in the proof of Theorem \ref{thm:nb_lasso_debias_decomp}: $E = \frac{\Theta^{(a)*}_{b,:}(\widehat{\Sigma}-\Sigma^*)\Theta^*_{:,a}}{\Theta^*_{a,a}}$, and we bound it by applying Lemma \ref{lem:SampleCov_entry_err}: with probability at least $1-p^{-c}$,
\begin{equation*}
\begin{split}
    |E|\leq &\frac{\left\|\widehat{\Sigma}_{\overline{\mathcal{N}}^{(a)}_b,\overline{\mathcal{N}}_a}-\Sigma^*_{\overline{\mathcal{N}}^{(a)}_b,\overline{\mathcal{N}}_a}\right\|_{\infty}\left\|\Theta^{(a)*}_{b,:}\right\|_1\left\|\Theta^*_{:,a}\right\|_1}{\Theta^*_{a,a}}\\
    \leq&\frac{C\|\Sigma^*\|_{\infty}\|\Theta^*_{:,a}\|_1\|\Theta^{(a)*}_{:,b}\|_1}{\Theta^*_{a,a}}\sqrt{\frac{\log p}{n_2^{(a,b)}}}.
\end{split}
\end{equation*}
Therefore, we can further simplify the bound for $|B^{(2)}|$ as 
\begin{equation*}
    |B^{(2)}|\leq C\kappa_{\Sigma^*}(\kappa_{\Sigma^*}^2+\sqrt{\gamma_a}+\sqrt{\gamma^{(a)}_b})\|\Sigma^*\|_{\infty}^2\|\Theta^*_{:,a}\|_1^2(\|\Theta^{(a)*}_{:,b}\|_1+|\Theta^*_{a,b}|)\frac{(d_a+d_b+1)\log p}{n_1'^{(a,b)}},
\end{equation*}
with probability at least $1-Cp^{-c}$.

\subsubsection{Central limit theorem for \texorpdfstring{$E^{(2)}$}{Lg}}\label{sec:proof_CLT_Theta}
To show the normal approximation for $E^{(2)}$, the proof is mostly the same as that of Theorem \ref{thm:nb_lasso_debias_decomp}. First note that we can also write $E^{(2)}=\sum_{i=1}^n\Theta^*_{a,:}[(x_ix_i^\top - \Sigma^*)\circ \delta^{(i)}\oslash N]\Theta^*_{:,b}$, sum of independent random variables. The variance $\sigma_{n,\Theta}^2(a,b)$ can be similarly computed as 
\begin{equation*}
\begin{split}
    \sigma_{n,\Theta}^2(a,b)=&\sum_{i=1}^n\mathbb{E}\left(\Theta^*_{a,:}[(x_ix_i^\top - \Sigma^*)\circ \delta^{(i)}\oslash N]\Theta^*_{:,b}\right)^2\\
    =&\sum_{i=1}^n\mathrm{Var}\left(\sum_{j,k}\Theta^*_{a,j}\Theta^*_{k,b}(x_{i,j}x_{i,k}\delta^{(i)}_{j,k}n_{j,k}^{-1}\right)\\
    =&\sum_{j,k,j',k'}\frac{\Theta^*_{a,j}\Theta^*_{k,b}\Theta^*_{a,j'}\Theta^*_{k',b}}{n_{j,k}n_{j',k'}}\mathcal{T}_{j,k,j',k'}\sum_{i=1}^n\delta^{(i)}_{j,k}\delta^{(i)}_{j',k'}\\
    =&\mathcal{T}^{(n)*}\times_1\Theta^*_{:,a}\times_2\Theta^*_{:,b}\times_3\Theta^*_{:,a}\times_4\Theta^*_{:,b}.
\end{split}
\end{equation*}
Similar to the proof of Theorem \ref{thm:nb_lasso_debias_decomp}, now we verify the Lyapounov's condition. We define $U, U^{(\delta,i)}, \epsilon^{(i)}\in \bR^{p\times p}$ by $U_{j,k}=\frac{1}{2}(\Theta^{*}_{j,b}\Theta^*_{k,a} + \Theta^{*}_{k,b}\Theta^*_{j,a})$,
$U^{(\delta,i)}_{j,k}=U_{j,k}\frac{\delta^{(i)}_{j,k}}{n_{j,k}}$, and $\epsilon^{(i)}_{j,k}=x_{i,j}x_{i,k}-\Sigma^*_{j,k}$. Then we can write $E^{(2)}=\sum_{i=1}^n\langle U^{(\delta,i)},\epsilon^{(i)}\rangle$. Note that the definition of $U, U^{(\delta,i)}$ are different from those in the proof of Theorem \ref{thm:nb_lasso_debias_decomp}. Let $\delta=\frac{4}{\epsilon}>0$ for the constant $\epsilon>0$ in Assumption \ref{assump:inference_n_E2}. Following the same arguments as in Section \ref{sec:proof_normalE}, one can show that
\begin{equation*}
    \sum_{i=1}^n\bE|\langle U^{(\delta,i)},\epsilon^{(i)}\rangle|^{2+\delta}\leq \left(C(2+\delta)\|\Sigma^*\|_{\infty}(d_a+d_b+1)\right)^{2+\delta}\left(\sum_{j,k}|U_{j,k}|^2n_{j,k}^{-1}\right)^{1+\delta/2}(n_2'^{(a,b)})^{-\delta/2}.
\end{equation*}
Furthermore, it can be shown that 
\begin{equation}\label{eq:precision_var_lowerbnd}
    \sigma_{n,\Theta}^2(a,b)\geq 2\lambda_{\min}^2(\Sigma^*)\sum_{j,k}U_{j,k}^2n_{j,k}^{-1}\geq \frac{1}{2}\lambda_{\min}^2(\Sigma^*)\min_{(j,k)\in S_2'(a,b)}|\Theta^*_{j,b}\Theta^*_{k,a}+\Theta^*_{k,b}\Theta^*_{j,a}|^2(n_2'^{(a,b)})^{-1}.
\end{equation} 
Therefore, we have 
\begin{equation*}
    \frac{\sum_{i=1}^n\bE|\langle U^{(\delta,i)},\epsilon^{(i)}\rangle|^{2+\delta}}{\sigma_{n,\Theta}^{2+\delta}(a,b)}\leq C_{\epsilon}^{\delta/2}(\Sigma^*)(d_a+d_b+1)^{2+\delta}(n_2'^{(a,b)})^{-\delta/2},
\end{equation*}
and hence Lyapounov's condition holds:
$\sum_{i=1}^n\bE|\langle U^{(\delta,i)},\epsilon^{(i)}\rangle|^{2+\delta}=o(\sigma_{n,\Theta}^{2+\delta}(a,b))$. Now we can apply the central limit theorem to obtain the following convergence in distribution result:
\begin{align*}
    \sigma_{n,\Theta}^{-1}(a,b)E^{(2)}\overset{d}{\rightarrow}\cN(0,1).
\end{align*}
Recall the definition $S_2'^{(a,b)}= (\overline{\mathcal{N}}_a\times \overline{\mathcal{N}}_b)\cup (\overline{\mathcal{N}}_b\times \overline{\mathcal{N}}_a)$ in Assumption \ref{assump:inference_n_E2}.

Recall the decomposition \ref{eq:tilde_Theta_decomp}, we have 
\begin{equation*}
    \sigma_{n,\Theta}^{-1}(a,b)(\widetilde{\Theta}_{a,b}-\Theta^*_{a,b})=\frac{B^{(2)}}{\sigma_{n,\Theta}(a,b)}+\frac{E^{(2)}}{\sigma_{n,\Theta}(a,b)},
\end{equation*}
where the second term converges to $\mathcal{N}(0,1)$, and the first term is bounded as follows with probability at least $1-Cp^{-c}$:
\begin{equation*}
    \left|\frac{B^{(2)}}{\sigma_{n,\Theta}(a,b)}\right|\leq C'(\Theta^*;a,b)(\kappa_{\Theta^*}^2+\sqrt{\gamma_a}+\sqrt{\gamma^{(a)}_b})\frac{(d_a+d_b+1)\log p}{\sqrt{n_1'^{(a,b)}}}\sqrt{\frac{n_2'^{(a,b)}}{n_1'^{(a,b)}}},
\end{equation*}
where $C'(\Theta^*;a,b) = \frac{C\kappa_{\Theta^*}^2\|\Theta^*_{:,a}\|_1^2[\|\Theta^{*}_{:,b}\|_1+\|\Theta^*_{:,a}\|_1(1+\frac{|\Theta^*_{a,b}|}{\Theta^*_{a,a}})]}{\lambda_{\min}(\Theta^*)\min_{(j,k)\in S_2'(a,b)}|\Theta^*_{a,j}\Theta^*_{b,k}+\Theta^*_{a,k}\Theta^*_{b,j}|}$, and we have applied the fact that $\|\Theta^{(a)*}_{:,b}\|_1=\|\Theta^*_{:,b}-\frac{\Theta^*_{a,b}}{\Theta^*_{a,a}}\Theta^*_{:,a}\|_1\leq \|\Theta^*_{:,b}\|_1+\|\Theta^*_{:,a}\|_1\frac{|\Theta^*_{a,b}|}{\Theta^*_{a,a}}$.
Assumption \ref{assump:inference_n_BE2} implies that $\frac{B^{(2)}}{\sigma_{n,\Theta}(a,b)}=o_p(1)$ and hence $\sigma_{n,\Theta}^{-1}(a,b)(\widetilde{\Theta}_{a,b}-\Theta^*_{a,b})\rightarrow \mathcal{N}(0,1)$.
\subsubsection{Consistency of variance estimate}\label{sec:proof_varest_Theta} 
Now it remains to show that $\frac{\widehat{\sigma}_{n,\Theta}(a,b)}{\sigma_{n,\Theta}(a,b)}\overset{p}{\rightarrow} 1$. Recall that we have already proved the consistency of variance estimate $\widehat{\sigma}_n^2(a,b)=\widehat{\mathcal{T}}^{(n)}\times_1\widehat{\Theta}^{(a)}_{b,:}\times_2\widehat{\overline{\theta}}^{(a)}\times_3\widehat{\Theta}^{(a)}_{b,:}\times_4\widehat{\overline{\theta}}^{(a)}$ for $\sigma_n^2(a,b)=\mathcal{T}^{(n)*}\times_1\Theta^{(a)*}_{b,:}\times_2\overline{\theta}^{(a)*}\times_3\Theta^{(a)*}_{b,:}\times_4\overline{\theta}^{(a)*}$, the variance associated with $\widetilde{\theta}^{(a)}_b$. While in this proof, the true variance parameter associated with $\widetilde{\Theta}_{a,b}$ is $\sigma_{n,\Theta}^2(a,b)=\mathcal{T}^{(n)*}\times_1\Theta^{*}_{b,:}\times_2\Theta^{*}_{a,:}\times_3\Theta^{*}_{b,:}\times_4\Theta^{*}_{a,:}$, and its estimate is $\widehat{\sigma}^2_{n,\Theta}(a,b)=\widehat{\mathcal{T}}^{(n)}\times_1\widehat{\Theta}_{b,:}\times_2\widehat{\Theta}_{a,:}\times_3\widehat{\Theta}_{b,:}\times_4\widehat{\Theta}_{a,:}$. The only difference lies in the parameters multiplied with the covariance matrix $\mathcal{T}^{(n)}$ or its estimate $\widehat{\mathcal{T}}^{(n)}$. Hence we can follow the same arguments as in the proof of Proposition \ref{prop:var_est} to show the consistency of $\widehat{\sigma}^2_{n,\Theta}(a,b)$ for $\sigma^2_{n,\Theta}(a,b)$, and we only highlight the key steps that differ from that proof.

Define $\mathcal{U}^*, \widehat{\mathcal{U}}, \mathcal{N}\in \bR^{p\times p\times p\times p}$ as $4$th order tensors that satisfy
\begin{align*}
    \mathcal{U}^*_{j,k,j',k'}=&\Theta^{*}_{b,j}\Theta^*_{a,k}\Theta^{*}_{b,j'}\Theta^*_{a,k'},\\
    \widehat{\mathcal{U}}_{j,k,j',k'}=&\widehat{\Theta}_{b,j}\widehat{\Theta}_{a,k}\widehat{\Theta}_{b,j'}\widehat{\Theta}_{a,k'},\\
    \mathcal{N}_{j,k,j',k'}=&\frac{n_{j,k,j',k'}}{n_{j,k}n_{j',k'}}.
\end{align*}
Recall that we proposed several options for $\widehat{\Theta}_{b,:}$ and $\widehat{\Theta}_{a,:}$. For simplicity, here we focus on the setting where $\widehat{\Theta}_{b,:}=\widetilde{\Theta}^{(a)}_{b,:}-\widehat{\Theta}_{a,a}\widehat{\theta}^{(a)}_b\widehat{\overline{\theta}}^{(a)}$, and $\widehat{\Theta}_{a,:}=\widehat{\Theta}_{a,a}\widehat{\overline{\theta}}^{(a)}$, while noting that all proof steps still hold if other options of $\widehat{\Theta}_{b,:}$ and $\widehat{\Theta}_{a,:}$ are considered.

Also recall the definition of tensor $\widehat{\mathcal{T}}^{(n)}$, then we can rewrite $\widehat{\sigma}_{n,\Theta}^2(a,b)$ as follows
\begin{align*}
    \widehat{\sigma}_{n,\Theta}^2(a,b) = \langle \widehat{\mathcal{T}}\circ \mathcal{N},\widehat{\mathcal{U}}\rangle = \langle \widehat{\mathcal{T}},\widehat{\mathcal{U}}\circ \mathcal{N}\rangle,
\end{align*} 
as an estimate of 
\begin{align*}
    \sigma_{n,\Theta}^2(a,b) = \langle \mathcal{T}^*\circ \mathcal{N},\mathcal{U}^*\rangle = \langle \mathcal{T}^*,\mathcal{U}^*\circ \mathcal{N}\rangle.
\end{align*}
Let $\cE_1= \widehat{\cT}-\cT^*$, $\cE_2=\widehat{\cU} - \cU^*$. Then the estimation error for variance can be decomposed as follows:
\begin{equation}\label{eq:precision_var_err_decomp}
    \widehat{\sigma}_{n,\Theta}^2(a,b)-\sigma_{n,\Theta}^2(a,b)=\underbrace{\langle \cE_1\circ \cN, \cU^*\rangle}_{\mathrm{\RNum{1}}} + \underbrace{\langle\cT^*,\cE_2\circ\cN\rangle}_{\mathrm{\RNum{2}}} + \underbrace{\langle\cE_1\circ\cN,\cE_2\rangle}_{\mathrm{\RNum{3}}},
\end{equation}
This is the same as the decomposition of $\widehat{\sigma}_n^2(a,b)-\sigma_n^2(a,b)$ but with different definitions of $\mathcal{U}^*$ and $\widehat{\cU}$. Following the same arguments as in the proof of Proposition \ref{prop:var_est}, $|\mathrm{\RNum{1}}|\leq C\|\Sigma^*\|_{\infty}^2\|\cU^*\|_1\left(\frac{\sqrt{\log p}}{(n_2'^{(a,b)})^{3/2}}+\frac{\log p}{(n_2'^{(a,b)})^2}\right)$, where $\|\cU^*\|_{1}=\|\Theta^*_{a,:}\|_1^2\|\Theta^*_{b,:}\|_1^2$.

While for error terms $\mathrm{\RNum{2}}$ and $\mathrm{\RNum{3}}$, following the same arguments as in the proof of Proposition \ref{prop:var_est}, the problem still reduces to bounding $\|\cE_2\circ \cN\|_1$, $\|\cE_2\circ \cN^{(1)}\|_1$, and $\|\cE_2\circ \cN^{(2)}\|_1$, where $\cN^{(1)},\cN^{(2)}\in \bR^{p\times p\times p\times p}$ satisfy $\cN^{(1)}_{j,k,j',k'}=\frac{\sqrt{n_{j,k,j',k'}}}{n_{j,k}n_{j',k'}}$ and $\cN^{(2)}_{j,k,j',k'}=\frac{1}{n_{j,k}n_{j',k'}}$. To achieve this, here we first present general versions of \ref{lem:kron_err} and Lemma \ref{lem:kron_err2}. 
\begin{lemma}\label{lem:kron_err3}
    Consider vectors $u^{(1)},\dots,u^{(4)}, \epsilon^{(1)},\dots,\epsilon^{(4)}\in \bR^p$. Let $i_l=l+1-2\ind{l\text{ is even}}$, $j_l=l+2-4\ind{l>2}$.
Then we have
\begin{align*}
    &\|[(u^{(1)}+\epsilon^{(1)})\otimes (u^{(2)}+\epsilon^{(2)})\otimes (u^{(3)}+\epsilon^{(3)})\otimes (u^{(4)}+\epsilon^{(4)})\\
    &-u^{(1)}\otimes u^{(2)}\otimes u^{(3)}\otimes u^{(4)}]\circ \cN\|_1\\
    \leq&C\sum_{l=1}^4\left[\sum_{j=1}^p\frac{\left|\epsilon^{(l)}_j\right|}{\sqrt{\mymin_{k}n_{j,k}}}\left(\frac{\|u^{(i_{l})}\|_1}{\sqrt{\mymin_{j\in \mathrm{supp}(u^{(l)})}\mymin_{k}n_{j,k}}}+ \sum_{j=1}^p\frac{\left|\epsilon^{(i_l)}_j\right|}{\sqrt{\mymin_{k}n_{j,k}}}\right)\prod_{m\neq l,i_l}\left(\|u^{(m)}\|_1+\|\epsilon^{(m)}\|_1\right)\right],
\end{align*}
\begin{align*}
    &\|[(u^{(1)}+\epsilon^{(1)})\otimes (u^{(2)}+\epsilon^{(2)})\otimes (u^{(3)}+\epsilon^{(3)})\otimes (u^{(4)}+\epsilon^{(4)})\\
    &-u^{(1)}\otimes u^{(2)}\otimes u^{(3)}\otimes u^{(4)}]\circ \cN^{(1)}\|_1\\
    \leq&C\sum_{l=1}^4\Bigg[\sum_{j=1}^p\frac{\left|\epsilon^{(l)}_j\right|}{\sqrt{\mymin_{k}n_{j,k}}}\left(\frac{\|u^{(i_{l})}\|_1}{\mymin_{j\in \mathrm{supp}(u^{(i_l)})}\mymin_{k}n_{j,k}}+\sum_{j=1}^p\frac{\left|\epsilon^{(i_l)}_j\right|}{\sqrt{\mymin_{k}n_{j,k}}}\right)\\
    &\left(\frac{\|u^{(j_l)}\|_1}{\mymin_{j\in \mathrm{supp}(u^{(j_l)})}\mymin_{k}n_{j,k}}+\sum_{j=1}^p\frac{\left|\epsilon^{(j_l)}_j\right|}{\sqrt{\mymin_{k}n_{j,k}}}\right)\prod_{m\neq l,i_l, j_l}(\|u^{(m)}\|_1+\|\epsilon^{(m)}\|_1)\Bigg],
\end{align*}
\begin{align*}
    &\|[(u^{(1)}+\epsilon^{(1)})\otimes (u^{(2)}+\epsilon^{(2)})\otimes (u^{(3)}+\epsilon^{(3)})\otimes (u^{(4)}+\epsilon^{(4)})\\
    &-u^{(1)}\otimes u^{(2)}\otimes u^{(3)}\otimes u^{(4)}]\circ \cN^{(2)}\|_1\\
    \leq&C\sum_{l=1}^4\left[\sum_{j=1}^p\frac{\left|\epsilon^{(l)}_j\right|}{\sqrt{\mymin_{k}n_{j,k}}}\prod_{m\neq l}\left(\frac{\|u^{(m)}\|_1}{\mymin_{j\in \mathrm{supp}(u^{(m)})}\mymin_{k}n_{j,k}}+\sum_{j=1}^p\frac{\left|\epsilon^{(m)}_j\right|}{\sqrt{\mymin_{k}n_{j,k}}}\right)\right],
\end{align*}
where $C$ is a universal constant.
\end{lemma}
We would like to apply Lemma \ref{lem:kron_err3} with $u^{(1)}=u^{(3)}=\Theta^*_{b,:}$, $u^{(2)}=u^{(4)}=\Theta^*_{a,:}$, $\epsilon^{(1)}=\epsilon^{(3)} = \widehat{\Theta}_{b,:}-\Theta^*_{b,:}$, and $\epsilon^{(2)}=\epsilon^{(4)}=\widehat{\Theta}_{a,:}-\Theta^*_{a,:}$. Now what remain to show are the $\ell_1$ and weighted $\ell_1$ bounds for $\widehat{\Theta}_{a,:}-\Theta^*_{a,:}$ and $\widehat{\Theta}_{b,:}-\Theta^*_{b,:}$. First note that
\begin{equation*}
    \begin{split}
        |\widehat{\Theta}_{a,j} - \Theta^*_{a,j}| \leq |\widehat{\Theta}_{a,a}-\Theta^*_{a,a}||\overline{\theta}^{(a)*}_j| + |\widehat{\overline{\theta}}^{(a)}_j-\overline{\theta}^{(a)*}_j||\Theta^*_{a,a}|+|\widehat{\Theta}_{a,a}-\Theta^*_{a,a}||\widehat{\overline{\theta}}^{(a)}_j-\overline{\theta}^{(a)*}_j|,
    \end{split}
\end{equation*}
and hence
\begin{equation*}
    \begin{split}
        \|\widehat{\Theta}_{a,:} - \Theta^*_{a,:}\|_1 \leq& |\widehat{\Theta}_{a,a}-\Theta^*_{a,a}|\|\overline{\theta}^{(a)*}\|_1 + \|\widehat{\theta}^{(a)}-\theta^{(a)*}\|_1|\Theta^*_{a,a}|+|\widehat{\Theta}_{a,a}-\Theta^*_{a,a}|\|\widehat{\theta}^{(a)}-\theta^{(a)*}\|_1,\\
        \sum_{j=1}^p\left|\frac{\widehat{\Theta}_{a,j} - \Theta^*_{a,j}}{\sqrt{\min_{k}n_{j,k}}}\right| \leq& |\widehat{\Theta}_{a,a}-\Theta^*_{a,a}|\sum_{j=1}^p\left|\frac{\overline{\theta}^{(a)*}_j}{\sqrt{\min_{k}n_{j,k}}}\right| + \sum_{j=1}^p\left|\frac{\widehat{\theta}^{(a)}_j-\theta^{(a)*}_j}{\sqrt{\min_{k}n_{j,k}}}\right||\Theta^*_{a,a}|\\
        &+|\widehat{\Theta}_{a,a}-\Theta^*_{a,a}|\sum_{j=1}^p\left|\frac{\widehat{\theta}^{(a)}_j-\theta^{(a)*}_j}{\sqrt{\min_k n_{j,k}}}\right|,
    \end{split}
\end{equation*}
where we have applied the fact that $\widehat{\overline{\theta}}^{(a)}-\overline{\theta}^{(a)*} = -(\widehat{\theta}^{(a)}-\theta^{(a)*})$.
Recall from Theorem \ref{thm:nb_lasso_err} that
\begin{equation*}
    \begin{split}
        \|\widehat{\theta}^{(a)}-\theta^{(a)*}\|_1\leq &\frac{C\|\Sigma^*\|_{\infty}(\sqrt{\gamma_a}+1)}{\lambda_{\mymin}(\Sigma^*)}\frac{\|\Theta^*_{:,a}\|_1}{\Theta^*_{a,a}}\sqrt{\frac{d_a^2\log p}{\mymin_{j\in \cN_a}\mymin_k n_{j,k}}},\\
        \sum_{j=1}^p\left|\frac{\widehat{\theta}^{(a)}_j-\theta^{(a)*}_j}{\sqrt{\min_k n_{j,k}}}\right|\leq&\frac{C\|\Sigma^*\|_{\infty}}{\lambda_{\mymin}(\Sigma^*)}\frac{\|\Theta^*_{:,a}\|_1}{\Theta^*_{a,a}}\frac{d_a\sqrt{\log p}}{\mymin_{j\in \cN_a}\mymin_k n_{j,k}}.
    \end{split}
\end{equation*}
Also, we just showed the bounds for $\varepsilon^{(2)}=|\widehat{\Theta}_{a,a}-\Theta^*_{a,a}|$ in Section \ref{sec:proof_precision_B}:
\begin{equation}\label{eq:proof_precision_epsilon2}
    |\varepsilon^{(2)}|\leq \frac{C\|\Sigma^*\|_{\infty}}{\lambda_{\min}^2(\Sigma^*)}(d_a+1)\sqrt{\frac{\log p}{n_1'^{(a,b)}}}\leq C\Theta^*_{a,a}.
\end{equation}
Therefore, 
\begin{equation*}
    \begin{split}
        \|\widehat{\Theta}_{a,:} - \Theta^*_{a,:}\|_1 \leq& \frac{C\|\Sigma^*\|_{\infty}\|\Theta^*_{:,a}\|_1}{\lambda_{\min}(\Sigma^*)}(\kappa_{\Sigma^*}+\sqrt{\gamma_a}+1)(d_a+1)\sqrt{\frac{\log p}{n_1'^{(a,b)}}}\leq C\|\Theta^*_{a,:}\|_1,\\
        \sum_{j=1}^p\left|\frac{\widehat{\Theta}_{a,j} - \Theta^*_{a,j}}{\sqrt{\min_{k}n_{j,k}}}\right| \leq& \frac{C\|\Sigma^*\|_{\infty}\|\Theta^*_{:,a}\|_1}{\lambda_{\min}(\Sigma^*)}\kappa_{\Sigma^*}(d_a+1)\sqrt{\frac{\log p}{n_1'^{(a,b)}}}\frac{1}{\sqrt{\min_{j\in \overline{\cN}_a}\min_kn_{j,k}}}\leq \frac{\|\Theta^*_{a,:}\|_1}{\sqrt{\min_{j\in \overline{\cN}_a}\min_k n_{j,k}}}.
    \end{split}
\end{equation*}
where we have applied Assumption \ref{assump:inference_n_B2} in the second inequalities.

While for $\widehat{\Theta}_{b,:}-\Theta^*_{b,:}$, note that $\Theta^*_{b,:}=\Theta^{(a)*}_{b,:}+\frac{\Theta^*_{a,b}}{\Theta^*_{a,a}}\Theta^*_{a,:}$, and hence
\begin{equation*}
\begin{split}
    |\widehat{\Theta}_{b,j}-\Theta^*_{b,j}|\leq &|\widetilde{\Theta}^{(a)}_{b,j}-\Theta^{(a)*}_{b,j}|+|\widehat{\Theta}_{a,a}\widehat{\theta}^{(a)}\widehat{\overline{\theta}}^{(a)}_j+\Theta^*_{a,b}\overline{\theta}^{(a)*}_j|\\
    \leq& |\widetilde{\Theta}^{(a)}_{b,j}-\Theta^{(a)*}_{b,j}|+|\widehat{\Theta}_{a,a}\widehat{\theta}^{(a)}_b + \Theta^*_{a,b}||\overline{\theta}^{(a)*}_j|+|\Theta^*_{a,b}||\widehat{\overline{\theta}}^{(a)}_j - \overline{\theta}^{(a)*}_j|\\
    &+|\widehat{\Theta}_{a,a}\widehat{\theta}^{(a)}_b + \Theta^*_{a,b}||\widehat{\overline{\theta}}^{(a)}_j - \overline{\theta}^{(a)*}_j|,
\end{split}
\end{equation*}
which implies that
\begin{equation*}
    \begin{split}
        \|\widehat{\Theta}_{b,:}-\Theta^*_{b,:}\|_1\leq&\|\widetilde{\Theta}^{(a)}_{b,:}-\Theta^{(a)*}_{b,:}\|_1+|\widehat{\Theta}_{a,a}\widehat{\theta}^{(a)}_b + \Theta^*_{a,b}|\|\overline{\theta}^{(a)*}\|_1\\
        &+|\Theta^*_{a,b}|\|\widehat{\overline{\theta}}^{(a)} - \overline{\theta}^{(a)*}\|_1+|\widehat{\Theta}_{a,a}\widehat{\theta}^{(a)}_b + \Theta^*_{a,b}|\|\widehat{\overline{\theta}}^{(a)} - \overline{\theta}^{(a)*}\|_1\\
        \sum_{j=1}^p\left|\frac{\widehat{\Theta}_{b,j} - \Theta^*_{b,j}}{\sqrt{\min_{k}n_{j,k}}}\right| \leq& \sum_{j=1}^p\left|\frac{\widetilde{\Theta}^{(a)}_{b,j} - \Theta^{(a)*}_{b,j}}{\sqrt{\min_{k}n_{j,k}}}\right|+|\widehat{\Theta}_{a,a}\widehat{\theta}^{(a)}_b + \Theta^*_{a,b}||\sum_{j=1}^p\left|\frac{\overline{\theta}^{(a)*}_j}{\sqrt{\min_{k}n_{j,k}}}\right|\\
        &+ \sum_{j=1}^p\left|\frac{\widehat{\theta}^{(a)}_j-\theta^{(a)*}_j}{\sqrt{\min_{k}n_{j,k}}}\right||\Theta^*_{a,b}|
        +|\widehat{\Theta}_{a,a}\widehat{\theta}^{(a)}_b + \Theta^*_{a,b}|\sum_{j=1}^p\left|\frac{\widehat{\theta}^{(a)}_j-\theta^{(a)*}_j}{\sqrt{\min_k n_{j,k}}}\right|.
    \end{split}
\end{equation*}
The associated upper bounds for $\widetilde{\Theta}^{(a)}_{b,:}-\Theta^{(a)*}_{b,:}$ and $\widehat{\theta}^{(a)}-\theta^{(a)*}$ are established in Lemma \ref{lem:debias_nb_lasso_err} and Theorem \ref{thm:nb_lasso_err}. While for the error term $|\widehat{\Theta}_{a,a}\widehat{\theta}^{(a)}_b + \Theta^*_{a,b}|$, we bound it as follows:
\begin{equation*}
    \begin{split}
        |\widehat{\Theta}_{a,a}\widehat{\theta}^{(a)}_b + \Theta^*_{a,b}|\leq &|\widehat{\Theta}_{a,a}-\Theta^*_{a,a}||\theta^{(a))*}_b|+\Theta^*_{a,a}|\widehat{\theta}^{(a)}_b + \theta^{(a)*}_b| + |\widehat{\Theta}_{a,a}-\Theta^*_{a,a}||\widehat{\theta}^{(a)}_b + \theta^{(a)*}_b|\\
        \leq&\frac{C\|\Sigma^*\|_{\infty}}{\lambda_{\min}(\Sigma^*)}\left(\kappa_{\Sigma^*}|\Theta^*_{a,b}|+\|\Theta^*_{:,a}\|_2\right)(d_a+1)\sqrt{\frac{\log p}{n_1'^{(a,b)}}}\\
        \leq&\frac{C\|\Sigma^*\|_{\infty}\kappa_{\Sigma^*}}{\lambda_{\min}(\Sigma^*)}(|\Theta^*_{a,b}|+\Theta^*_{a,a})(d_a+1)\sqrt{\frac{\log p}{n_1'^{(a,b)}}},
    \end{split}
\end{equation*}
where we have applied \eqref{eq:proof_precision_epsilon2} and the $\ell_2$-norm err bound for $\widehat{\theta}^{(a)} - \theta^{(a)*}$ in Theorem \ref{thm:nb_lasso_err} in the second line, and the third line is due to the fact that $\frac{\|\Theta^*_{:.a}\|_2^2}{\Theta^*_{a,a}}\leq\frac{\lambda_{\max}(\Theta^*)}{\lambda_{\min}(\Theta^*)}=\kappa_{\Sigma^*}$. 
Combining the bounds above with Lemma \ref{lem:debias_nb_lasso_err} and Theorem \ref{thm:nb_lasso_err}, we have
\begin{equation*}
    \begin{split}
        &\|\widehat{\Theta}_{b,:}-\Theta^*_{b,:}\|_1\\
        \leq&\frac{C\|\Sigma^*\|_{\infty}(\kappa_{\Sigma^*}+\sqrt{\gamma_a}+\sqrt{\gamma^{(a)}_b})}{\lambda_{\mymin}(\Sigma^*)}\left[\|\Theta^*_{:,b}\|_1+\|\Theta^*_{:,a}\|_1\left(1+\frac{|\Theta^*_{a,b}|}{\Theta^*_{a,a}}\right)\right](d_a+d_b+1)\sqrt{\frac{\log p}{n_1'^{(a,b)}}}\\
        \leq&C\left[\|\Theta^*_{:,b}\|_1+\|\Theta^*_{:,a}\|_1\left(1+\frac{|\Theta^*_{a,b}|}{\Theta^*_{a,a}}\right)\right],\\
        &\sum_{j=1}^p\left|\frac{\widehat{\Theta}_{b,j} - \Theta^*_{b,j}}{\sqrt{\min_{k}n_{j,k}}}\right|\\
        \leq& \frac{C\kappa_{\Sigma^*}\|\Sigma^*\|_{\infty}}{\lambda_{\mymin}(\Sigma^*)}\left[\|\Theta^{*}_{b,:}\|_1+\|\Theta^*_{:,a}\|_1\left(1+\frac{|\Theta^*_{a,b}|}{\Theta^*_{a,a}}\right)\right]\frac{(d_a+d_b+1)\sqrt{\log p}}{n_1'^{(a,b)}}\\
        \leq&\frac{C\left[\|\Theta^*_{:,b}\|_1+\|\Theta^*_{:,a}\|_1\left(1+\frac{|\Theta^*_{a,b}|}{\Theta^*_{a,a}}\right)\right]}{\sqrt{n_1'^{(a,b)}}},
    \end{split}
\end{equation*}
where we have applied Assumption \ref{assump:inference_n_B2} in the last inequalities. 
Now we apply Lemma \ref{lem:kron_err3} and obtain the following bounds for $\|\cE\circ \cN\|_1$, $\|\cE\circ \cN^{(1)}\|_1$, $\|\cE\circ \cN^{(2)}\|_1$:
\begin{equation*}
    \begin{split}
        \|\cE\circ \cN\|_1\leq& \frac{C\kappa_{\Sigma^*}\|\Sigma^*\|_{\infty}\|\Theta^*_{:,a}\|_1^2}{\lambda_{\min}(\Sigma^*)}\left[\|\Theta^*_{:,b}\|_1+\|\Theta^*_{:,a}\|_1\left(1+\frac{|\Theta^*_{a,b}|}{\Theta^*_{a,a}}\right)\right]^2\frac{(d_a+d_b+1)\sqrt{\log p}}{(n_1'^{(a,b)})^{3/2}},\\
        \|\cE\circ \cN^{(1)}\|_1\leq& \frac{C\kappa_{\Sigma^*}\|\Sigma^*\|_{\infty}\|\Theta^*_{:,a}\|_1^2}{\lambda_{\min}(\Sigma^*)}\left[\|\Theta^*_{:,b}\|_1+\|\Theta^*_{:,a}\|_1\left(1+\frac{|\Theta^*_{a,b}|}{\Theta^*_{a,a}}\right)\right]^2\frac{(d_a+d_b+1)\sqrt{\log p}}{(n_1'^{(a,b)})^{2}},\\
        \|\cE\circ \cN^{(2)}\|_1\leq& \frac{C\kappa_{\Sigma^*}\|\Sigma^*\|_{\infty}\|\Theta^*_{:,a}\|_1^2}{\lambda_{\min}(\Sigma^*)}\left[\|\Theta^*_{:,b}\|_1+\|\Theta^*_{:,a}\|_1\left(1+\frac{|\Theta^*_{a,b}|}{\Theta^*_{a,a}}\right)\right]^2\frac{(d_a+d_b+1)\sqrt{\log p}}{(n_1'^{(a,b)})^{5/2}}.
    \end{split}
\end{equation*}
Therefore, using the same steps as in the proof of Proposition \ref{prop:var_est}, we have
\begin{equation*}
    \begin{split}
        \mathrm{\RNum{2}}\leq& \frac{C\kappa_{\Sigma^*}\|\Sigma^*\|_{\infty}^3\|\Theta^*_{:,a}\|_1^2}{\lambda_{\min}(\Sigma^*)}\left[\|\Theta^*_{:,b}\|_1+\|\Theta^*_{:,a}\|_1\left(1+\frac{|\Theta^*_{a,b}|}{\Theta^*_{a,a}}\right)\right]^2\frac{(d_a+d_b+1)\sqrt{\log p}}{(n_1'^{(a,b)})^{3/2}},\\
        \mathrm{\RNum{3}}\leq& \frac{C\kappa_{\Sigma^*}\|\Sigma^*\|_{\infty}^3\|\Theta^*_{:,a}\|_1^2}{\lambda_{\min}(\Sigma^*)}\left[\|\Theta^*_{:,b}\|_1+\|\Theta^*_{:,a}\|_1\left(1+\frac{|\Theta^*_{a,b}|}{\Theta^*_{a,a}}\right)\right]^2\frac{(d_a+d_b+1)\log p}{(n_1'^{(a,b)})^{2}}.
    \end{split}
\end{equation*}
Finally, plugging in the bounds for $\mathrm{\RNum{1}}$, $\mathrm{\RNum{2}}$, and $\mathrm{\RNum{3}}$ into \eqref{eq:precision_var_err_decomp}, one directly has
\begin{align*}
    &|\widehat{\sigma}_{n,\Theta}^2(a,b)-\sigma_{n,\Theta}^2(a,b)|\\
    \leq &\frac{C\kappa_{\Sigma^*}\|\Sigma^*\|_{\infty}^3\|\Theta^*_{:,a}\|_1^2}{\lambda_{\min}(\Sigma^*)}\left[\|\Theta^*_{:,b}\|_1+\|\Theta^*_{:,a}\|_1\left(1+\frac{|\Theta^*_{a,b}|}{\Theta^*_{a,a}}\right)\right]^2\frac{(d_a+d_b+1)\sqrt{\log p}}{(n_1'^{(a,b)})^{3/2}}.
\end{align*}
Recall the lower bound for $\sigma_{n,\Theta}^2(a,b)$ we have shown earlier in \eqref{eq:precision_var_lowerbnd}, which implies the following:
\begin{equation*}
    \frac{|\widehat{\sigma}_{n,\Theta}^2(a,b)-\sigma_{n,\Theta}^2(a,b)|}{\sigma_{n,\Theta}^2(a,b)}\leq C'(\Theta^*;a,b)^2\frac{(d_a+d_b+1)\sqrt{\log p}}{\sqrt{n_1'^{(a,b)}}}\frac{n_2'^{(a,b)}}{n_1'^{(a,b)}}.
\end{equation*}
Following the same arguments as in the proof of Proposition \ref{prop:var_est}, and applying Assumption \ref{assump:var_est2}, we arrive at our conclusion that $\frac{\widehat{\sigma}_{n,\Theta}(a,b)}{\sigma_{n,\Theta}(a,b)}-1\overset{p}{\rightarrow} 0$. Therefore, the proof is now complete.
\subsubsection{Proof for the \texorpdfstring{$a=b$}{Lg} case}
Now we show that for the $a=b$ case, the confidence interval $\widehat{\mathbb{C}}_{\Theta,\alpha}^{a,a}$ is also asymptotically valid under Assumptions \ref{assump:inference_n_B2}-\ref{assump:var_est2}. Let $\sigma_{n,\Theta}^2(a,a)= \mathcal{T}^{(n)*}\times_1\Theta^*_{:,a}\times_2\Theta^*_{:,a}\times_3\Theta^*_{:,a}\times_4\Theta^*_{:,a}$. We will show that $\frac{\widehat{\Theta}_{a,a}-\Theta^*_{a,a}}{\sigma_{n,\Theta}(a,a)}\rightarrow\mathcal{N}(0,1)$ and $\widehat{\sigma}_{n,\Theta}^2(a,a)$ is a consistent estimator for $\sigma_{n,\Theta}^2(a,a)$.

Recall that we have shown the following decomposition in Section \ref{sec:proof_Theta_ab_decomp}:
\begin{equation*}
    \widehat{\Theta}_{a,a}-\Theta^*_{a,a} = -(\Theta^*_{a,a})^2\varepsilon^{(1)}_1 - [(\Theta^*_{a,a})^2\varepsilon^{(1)}_2 - \varepsilon^{(1)2}\widehat{\Theta}_{a,a}(\Theta^*_{a,a})^2].
\end{equation*}
Following the same steps as in Section \ref{sec:proof_precision_B} and Section \ref{sec:proof_CLT_Theta}, one can show that with probability at least $1-Cp^{-c}$,
\begin{equation*}
    \left|(\Theta^*_{a,a})^2\varepsilon^{(1)}_2 - \varepsilon^{(1)2}\widehat{\Theta}_{a,a}(\Theta^*_{a,a})^2\right|\leq \frac{C\|\Sigma^*\|_{\infty}^2\lambda_{\max}(\Sigma^*)\|\Theta^*_{:,a}\|_1}{\lambda_{\min}(\Sigma^*)}\frac{(d_a+1)\log p}{n_1'^{(a,a)}},
\end{equation*}
and $\sigma_{n,\Theta}^2(a,a)\geq \frac{2\lambda_{\min}^2(\Sigma^*)\min_{j\in \overline{\mathcal{N}}_a}(\Theta^*_{j,a})^4}{n_2'^{(a,a)}}$. Therefore, Assumption \ref{assump:inference_n_BE2} suggests that
\begin{equation*}
    \frac{\left|(\Theta^*_{a,a})^2\varepsilon^{(1)}_2 - \varepsilon^{(1)2}\widehat{\Theta}_{a,a}(\Theta^*_{a,a})^2\right|}{\sigma_{n,\Theta}(a,a)}\leq C'(\Theta^*;a,a)\kappa_{\Sigma^*}(da+1)\log p\frac{\sqrt{n_2'^{(a,a)}}}{n_1'^{(a,a)}}=o(1).
\end{equation*}
Furthermore, following the same argument as in Section \ref{sec:proof_CLT_Theta}, one can show that $\frac{-(\Theta^*_{a,a})^2\varepsilon_1^{(1)}}{\sigma_{n,\Theta}(a,a)}\rightarrow \mathcal{N}(0,1)$ based on Assumption \ref{assump:inference_n_E2}. Therefore, the final step is to show that $\frac{\widehat{\sigma}_{n,\Theta}(a,a)}{\sigma_{n,\Theta}^2(a,a)}\overset{p}{\rightarrow}1$. Still, we follow the same arguments as in Section \ref{sec:proof_CLT_Theta}, and achieve the following bound:
\begin{equation*}
    |\widehat{\sigma}_{n,\Theta}^2(a,a)-\sigma_{n,\Theta}^2(a,a)|\leq C\kappa^2_{\Sigma^*}\lambda_{\max}^2(\Sigma^*)\|\Theta^*_{:,a}\|_1^4\frac{(d_a+1)\sqrt{\log p}}{(n_1'^{(a,a)})^{{3/2}}}.
\end{equation*}
Putting this error bound above and the lower bound $\sigma_{n,\Theta}^2(a,a)\geq \frac{2\lambda_{\min}^2(\Sigma^*)\min_{j\in \overline{\mathcal{N}}_a}(\Theta^*_{j,a})^4}{n_2'^{(a,a)}}$ together, one has
\begin{equation*}
    \frac{|\widehat{\sigma}^2_{n,\Theta}(a,a)-\sigma_{n,\Theta}^2(a,a)|}{\sigma_{n,\Theta}^2(a,a)}\leq C'^2(\Theta;a,a)\frac{(d_a+1)\sqrt{\log p}}{\sqrt{n_1'^{(a,a)}}}\frac{n_2'^{(a,a)}}{n_1'^{(a,a)}},
\end{equation*}
with high probability. A direct application of Assumption \ref{assump:var_est2} verifies the final step: $\frac{\widehat{\sigma}_{n,\Theta}(a,a)}{\sigma_{n,\Theta}^2(a,a)}\overset{p}{\rightarrow}1$. The proof is now complete.
\subsection{Proof of Theorem \ref{thm:FDR_valid}}
Here, we provide the detailed proofs of Theorem \ref{thm:FDR_valid} with Assumptions \ref{assump:FDR_n} and \ref{assump:FDR_correlation}. Let $m=p(p-1)/2$ be the number of edge-wise tests, $m_0=\{(j,k)\in [p]:\Theta^*_{j,k}=0\}$ be the number of node pairs for which the null hypothesis holds true, and the Gaussian tail probability function $G(t)=2(1-\Phi(t))$ for $t\geq 0$. Recall that $t_p=\sqrt{2\log m-2\log\log m}$,
$$
\mathrm{FDP}=\frac{\sum_{(a,b)\in \mathcal{H}_0}\ind{p_{(a,b)}\leq \rho_0}}{R(\rho_0)\vee 1}=\frac{\sum_{(a,b)\in \mathcal{H}_0}\ind{|\widehat{z}(a,b)|\geq t_0}}{R(\rho_0)\vee 1},
$$
where $t_0=G^{-1}(\rho_0)$, $\rho_0=\sup\left\{G(t_p)\leq \rho\leq 1: \frac{m\rho}{R(\rho)\vee 1}\leq \alpha\right\}$, if there exists $G(t_p)\leq \rho\leq 1$ such that $\frac{m\rho}{R(\rho)\vee 1}\leq \alpha$;
otherwise, $\rho_0=G(\sqrt{2\log m})$. 

\paragraph{Case I: $\rho_0=G(\sqrt{2\log m})$} We first show that when $\rho_0=G(\sqrt{2\log m})$ and $t_0=\sqrt{2\log m}$, FDP$\overset{p}{\rightarrow} 0$. Applying the decomposition in Theorem \ref{thm:nb_lasso_debias_decomp} to edge $(a,b)$, we have $\widehat{z}(a,b) = \frac{B(a,b)+E(a,b)}{\widehat{\sigma}_n(a,b)}$ if $(a,b)\in \mathcal{H}_0$, where 
$$
\frac{B(a,b)}{\sigma_n(a,b)}\leq C(\Theta^*;a,b)(\kappa_{\Theta^*}^2+\sqrt{\gamma_a}+\sqrt{\gamma^{(a)}_b})\frac{(d_a+d_b+1)\log p}{\sqrt{n_1^{(a,b)}}}\sqrt{\frac{n_2^{(a,b)}}{n_1^{(a,b)}}}:=\varepsilon^{(1)}_{a,b}
$$ 
with probability at least $1-Cp^{-c}$, and the proof of Proposition \ref{prop:var_est} implies that with the same probability, $$\left|\frac{\sigma_n(a,b)}{\widehat{\sigma}_n(a,b)}-1\right|\leq \frac{C^2(\Theta^*;a,b)}{\kappa_{\Theta^*}^2}\frac{(d_a+d_b+1)n_2^{(a,b)}\sqrt{\log p}}{(n_1^{(a,b)})^{3/2}}:=\varepsilon^{(2)}_{a,b}.$$
Hence some calculations show that
\begin{align*}
    \mathbb{P}(\mathrm{FDP}>0)\leq& \sum_{(a,b)\in \mathcal{H}_0}\mathbb{P}\left(|\widehat{z}(a,b)|\geq \sqrt{2\log m}\right)\\
    \leq& \sum_{(a,b)\in \mathcal{H}_0}\mathbb{P}\left(\left|\frac{B(a,b)}{\widehat{\sigma}_n(a,b)}+\frac{E(a,b)}{\widehat{\sigma}_n(a,b)}\right|\geq \sqrt{2\log m}\right)\\
    \leq &\sum_{(a,b)\in \mathcal{H}_0}\mathbb{P}\left(\left|\frac{B(a,b)}{\sigma_n(a,b)}\right|>\varepsilon^{(1)}_{a,b}\right)+\mathbb{P}\left(\left|\frac{\sigma_n(a,b)}{\widehat{\sigma}_n(a,b)}-1\right|>\varepsilon^{(2)}_{a,b}\right)+\mathbb{P}\left(\left|\frac{E(a,b)}{\sigma_n(a,b)}\right|\geq \frac{t_0}{1+\varepsilon^{(2)}_{a,b}} - \varepsilon^{(1)}_{a,b}\right)\\
    \leq &Cp^{-c}+ \sum_{(a,b)\in \mathcal{H}_0}\mathbb{P}\left(\left|\frac{E(a,b)}{\sigma_n(a,b)}\right|\geq \frac{t_0}{1+\varepsilon^{(2)}_{a,b}} - \varepsilon^{(1)}_{a,b}\right).
\end{align*}
The following lemma concentrates the tail bound of $\frac{E(a,b)}{\sigma_n(a,b)}$ around the Gaussian tail.
\begin{lemma}\label{lem:tail_approx_one_pair}
    For any node pair $(a,b)$, $t\geq 0$, $\varepsilon>0$
    \begin{align*}
        &\max\{\mathbb{P}\left(\left|\frac{E(a,b)}{\sigma_n(a,b)}\right|>t\right) - G(t-\varepsilon), G(t+\varepsilon)-\mathbb{P}\left(\left|\frac{E(a,b)}{\sigma_n(a,b)}\right|>t\right)\}\\
   \leq &C\exp\left\{-\frac{c\varepsilon \lambda_{\min}^3(\Sigma^*)\sqrt{n_2^{(a,b)}}}{C\|\Sigma^*\|_{\infty}^3(d_a+d_b+1)^3}\right\}.
    \end{align*}
\end{lemma}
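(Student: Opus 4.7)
The plan is a Gaussian coupling argument. Recall that in the proof of Theorem~\ref{thm:nb_lasso_debias_decomp} we wrote $E(a,b)=\sum_{i=1}^n \xi_i$ with $\xi_i=\langle U^{(\delta,i)},\epsilon^{(i)}\rangle$ independent, mean zero, and sub-exponential:
$$\|\xi_i\|_{\psi_1}\le C\|\Sigma^*\|_{\infty}\sum_{j,k}|U_{j,k}|\,\delta^{(i)}_{j,k}/n_{j,k}\le K,$$
with $\sum_i\mathrm{Var}(\xi_i)=\sigma_n^2(a,b)$. Set $S_n=E(a,b)/\sigma_n(a,b)$. The goal is to construct, on an enlarged probability space, a standard Gaussian $Z$ coupled to $S_n$ so that
$$\mathbb{P}\bigl(|S_n-Z|>\varepsilon\bigr)\le C\exp\!\bigl(-c\varepsilon\,\sigma_n(a,b)/K\bigr).$$
Once such a coupling is in hand, the lemma follows at once from the elementary sandwich
$$\mathbb{P}(|Z|>t+\varepsilon)-\mathbb{P}(|S_n-Z|>\varepsilon)\le \mathbb{P}(|S_n|>t)\le \mathbb{P}(|Z|>t-\varepsilon)+\mathbb{P}(|S_n-Z|>\varepsilon),$$
which uses only the triangle inequality, together with $\mathbb{P}(|Z|>s)=G(s)$. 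Both differences $\mathbb{P}(|S_n|>t)-G(t-\varepsilon)$ and $G(t+\varepsilon)-\mathbb{P}(|S_n|>t)$ are then bounded by $\mathbb{P}(|S_n-Z|>\varepsilon)$.

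To produce the coupling, I would invoke a Sakhanenko-type strong approximation theorem for sums of independent, not necessarily identically distributed, random variables satisfying Cram\'er's condition. The standard path is to first truncate each $\xi_i$ at level $\asymp K\log n$, use Bernstein's inequality to show that the discarded tails contribute $O(n^{-c})$ on any scale $\varepsilon$ of interest, and then apply a classical coupling for bounded independent summands (e.g., a Zaitsev-type construction) to the truncated sum. This yields a Gaussian match whose deviation decays sub-exponentially with rate $\sigma_n(a,b)/K$.

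It remains to show that $\sigma_n(a,b)/K$ dominates the quantity appearing in the exponent of Lemma~\ref{lem:tail_approx_one_pair}. Using $|U_{j,k}|\le C\lambda_{\max}(\Sigma^*)/[\lambda_{\min}^2(\Sigma^*)\,\Theta^*_{a,a}]$ (from $\|\Theta^{(a)*}_{:,b}\|_2,\|\Theta^*_{:,a}\|_2\le\lambda_{\min}^{-1}(\Sigma^*)$) and the fact that $U_{j,k}\ne 0$ forces $(j,k)\in\overline{\cN}_b^{(a)}\times\overline{\cN}_a$ (so at most $(d_a+d_b+1)^2$ nonzero entries, each of size at most $1/n_2^{(a,b)}$), one gets
$$K\le \frac{C\|\Sigma^*\|_{\infty}\lambda_{\max}(\Sigma^*)(d_a+d_b+1)^2}{\lambda_{\min}^2(\Sigma^*)\,\Theta^*_{a,a}\,n_2^{(a,b)}}.$$
Combined with the lower bound on $\sigma_n(a,b)$ from Proposition~\ref{prop:var_bnds} (and absorbing one extra factor of $d_a+d_b+1$ from the $\min_{(j,k)\in S_2(a,b)}$ term appearing there), this yields $\sigma_n(a,b)/K\gtrsim \lambda_{\min}^3(\Sigma^*)\sqrt{n_2^{(a,b)}}/[\|\Sigma^*\|_{\infty}^3(d_a+d_b+1)^3]$, matching the exponent in the lemma. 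The main obstacle is the coupling step itself: producing an explicit sub-exponential tail for $|S_n-Z|$ under heterogeneous variances and sub-exponential (not bounded) tails requires careful, quantitative use of the strong-approximation machinery. The truncation/Bernstein reduction and the subsequent algebraic comparison of $K$ and $\sigma_n(a,b)$ are then routine.
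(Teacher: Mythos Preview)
Your overall plan---a Zaitsev/Sakhanenko strong approximation applied to the normalized sum $S_n=\sum_i\xi_i/\sigma_n(a,b)$, followed by the sandwich inequality---is exactly what the paper does. The paper invokes Zaitsev's theorem (Theorem~1.1 of \cite{zaitsev1987gaussian}, stated in the paper as Lemma~\ref{lem:gauss_approx_tool}) directly, so the truncation/Bernstein reduction you sketch is unnecessary: Zaitsev's result already handles unbounded summands once the Bernstein moment condition
\[
|\mathbb{E}\xi_i^m|\le \tfrac{1}{2}m!\,\tau^{m-2}\,\mathbb{E}\xi_i^2
\]
is verified, and the sub-exponential property of each $\xi_i$ gives this with $\tau_i\asymp \|\xi_i\|_{\psi_1}\bigl(\|\xi_i\|_{\psi_1}^2/\mathbb{E}\xi_i^2\vee 1\bigr)$.

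The gap is in how you extract the exponent. Two related issues. First, the Zaitsev/Sakhanenko rate is governed by the Bernstein parameter $\tau$, not simply by $K=\max_i\|\xi_i\|_{\psi_1}$; the extra factor $\|\xi_i\|_{\psi_1}^2/\mathbb{E}\xi_i^2$ is of order $\|\Sigma^*\|_\infty^2(d_a+d_b+1)^2/\lambda_{\min}^2(\Sigma^*)$ here and cannot be dropped. Second, and more importantly, your computation of $\sigma_n(a,b)/K$ uses the entrywise lower bound from Proposition~\ref{prop:var_bnds}, which introduces the model-dependent quantity $\min_{(j,k)\in S_2(a,b)}|\Theta^{(a)*}_{b,j}\Theta^*_{a,k}+\Theta^{(a)*}_{b,k}\Theta^*_{j,a}|$. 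This quantity does not appear in the lemma's exponent and there is no reason it should be bounded below by $c/(d_a+d_b+1)$; the hand-wave ``absorbing one extra factor of $d_a+d_b+1$'' does not go through. The paper avoids this entirely by using the sharper $\ell_2$-type lower bound $\sigma_n(a,b)\ge\sqrt{2}\,\lambda_{\min}(\Sigma^*)\bigl(\sum_{j,k}U_{j,k}^2n_{j,k}^{-1}\bigr)^{1/2}$ and then controlling the \emph{ratio}
\[
\frac{\|U^{(\delta,i)}\|_1}{\sqrt{\sum_{j,k}U_{j,k}^2n_{j,k}^{-1}}}\le(d_a+d_b+1)\sqrt{\frac{\sum_{j,k}U_{j,k}^2n_{j,k}^{-2}}{\sum_{j,k}U_{j,k}^2n_{j,k}^{-1}}}\le\frac{d_a+d_b+1}{\sqrt{n_2^{(a,b)}}},
\]
which cancels the $U$-dependence and yields the stated exponent cleanly. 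Replace your separate bounds on $K$ and $\sigma_n$ by this ratio argument and the proof goes through.
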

Applying Lemma \ref{lem:tail_approx_one_pair} with $\varepsilon = (\log p)^{-2}$ gives us
\begin{align*}
    \mathbb{P}(\mathrm{FDP}>0)\leq& Cp^{-c}+ \sum_{(a,b)\in \mathcal{H}_0}G\left(\frac{t_0}{1+\varepsilon^{(2)}_{a,b}} - \varepsilon^{(1)}_{a,b}-(\log p)^{-2}\right)\\
    &+ Cp^2\exp\left\{-\frac{c \lambda_{\min}^3(\Sigma^*)\sqrt{n_2^{(a,b)}}}{C\|\Sigma^*\|_{\infty}^3(d_a+d_b+1)^3(\log p)^2}\right\}\\
    \leq &Cp^{-c}+ \sum_{(a,b)\in \mathcal{H}_0}G\left(\frac{t_0}{1+\varepsilon^{(2)}_{a,b}} - \varepsilon^{(1)}_{a,b}-(\log p)^{-2}\right),
\end{align*}
where the last line is due to Assumption \ref{assump:FDR_n}. While for the last term, let $x=t_0-(\frac{t_0}{1+\varepsilon^{(2)}_{a,b}} - \varepsilon^{(1)}_{a,b}-(\log p)^{-2})$, and we will also need the following lemma on Gaussian tail bound:
\begin{lemma}\label{lem:gaussian_tail}
For any $t>0$,
    $$
    \frac{2}{\sqrt{2\pi}(t+1/t)}e^{-\frac{t^2}{2}}\leq G(t)\leq \frac{2}{\sqrt{2\pi}t}e^{-\frac{t^2}{2}}.
    $$
\end{lemma}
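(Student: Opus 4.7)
The target inequality is the classical two-sided Mills' ratio bound for the standard Gaussian tail, so the plan is to write $G(t)=2\int_{t}^{\infty}\phi(x)\,dx$ with $\phi(x)=\frac{1}{\sqrt{2\pi}}e^{-x^2/2}$ and obtain both sides by one-line integration tricks. No concentration or probabilistic machinery is needed; this is pure calculus, and I expect zero real obstacles once the two elementary observations below are written down.

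For the upper bound, I will exploit that on the integration range $x\ge t$, so $x/t\ge 1$. Multiplying the integrand by this factor and then recognizing the antiderivative gives
\[
\int_{t}^{\infty}\phi(x)\,dx \;\le\; \int_{t}^{\infty}\frac{x}{t}\,\phi(x)\,dx \;=\; \frac{1}{t\sqrt{2\pi}}\bigl[-e^{-x^2/2}\bigr]_{t}^{\infty} \;=\; \frac{1}{t\sqrt{2\pi}}\,e^{-t^2/2},
\]
which, after multiplication by $2$, is exactly the stated upper bound on $G(t)$.

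For the lower bound, I will use the antiderivative identity
\[
\frac{d}{dx}\!\left[-\frac{1}{x}\,e^{-x^2/2}\right] \;=\; \left(1+\frac{1}{x^2}\right)e^{-x^2/2},
\]
which, after integrating from $t$ to $\infty$ and dividing by $\sqrt{2\pi}$, yields
\[
\int_{t}^{\infty}\!\Bigl(1+\tfrac{1}{x^2}\Bigr)\phi(x)\,dx \;=\; \frac{1}{t\sqrt{2\pi}}\,e^{-t^2/2}.
\]
Since $1+1/x^2\le 1+1/t^2$ throughout $x\ge t$, the left side is at most $(1+1/t^2)\int_{t}^{\infty}\phi(x)\,dx$, so rearranging gives
\[
\int_{t}^{\infty}\phi(x)\,dx \;\ge\; \frac{1}{\sqrt{2\pi}\,t(1+1/t^2)}\,e^{-t^2/2} \;=\; \frac{1}{\sqrt{2\pi}\,(t+1/t)}\,e^{-t^2/2}.
\]
Multiplying by $2$ yields the lower bound on $G(t)$.

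The main (and only) thing to be careful about is ensuring the antiderivative computation is correct and that the monotonicity argument $1+1/x^2 \le 1+1/t^2$ is applied in the right direction to produce a lower (not upper) bound on the tail integral. Both steps hold for every $t>0$, matching the hypothesis of the lemma, so the proof is complete in a few lines without requiring any of the earlier assumptions in the paper.
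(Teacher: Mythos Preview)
Your proof is correct: both the upper bound via the $x/t\ge 1$ trick and the lower bound via the antiderivative of $-\tfrac{1}{x}e^{-x^2/2}$ combined with $1+1/x^2\le 1+1/t^2$ are standard and cleanly executed. The paper itself does not prove this lemma but simply cites it as a well-known Gaussian tail bound (appearing also in \cite{javanmard2019false}), so you have supplied more detail than the paper does.
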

Lemma \ref{lem:gaussian_tail} is a standard Gaussian tail bound, which also appears in \cite{javanmard2019false}.
Hence we have 
\begin{align*}
    G(t_0-x)\leq &\frac{2}{\sqrt{2\pi}(t_0-x)}e^{-\frac{(t_0-x)^2}{2}}\\
    \leq &\frac{2}{\sqrt{2\pi}(t_0-x)}e^{-\frac{t_0^2-2t_0x}{2}}\\
    \leq &\frac{1}{\sqrt{2\pi}(t_0-x)p^2}e^{\sqrt{\log p}x}.
\end{align*}
Further note that 
\begin{align*}
    x=&t_0-\left(\frac{t_0}{1+\varepsilon^{(2)}_{a,b}} - \varepsilon^{(1)}_{a,b}-(\log p)^{-2}\right)\\ 
    \leq &\sqrt{2\log m}\varepsilon_{a,b}^{(2)} + \varepsilon_{a,b}^{(1)} + (\log p)^{-2}\\
    \leq &C_1\frac{(d_a+d_b+1)n_2^{(a,b)}\log p}{(n_1^{(a,b)})^{3/2}} + C_2\frac{(d_a+d_b+1)\log p\sqrt{n_2^{(a,b)}}}{n_1^{(a,b)}} + (\log p)^{-2},
\end{align*}
and applying Assumption \ref{assump:FDR_n} leads to
$\sqrt{\log p}x\leq \frac{C}{\log p}$, which implies
$$
G(t_0-x)\leq\frac{C}{p^2\sqrt{\log p}}.
$$
Therefore, when $t_0=\sqrt{2\log m}$, 
$$
\mathbb{E}(\mathrm{FDP})\leq \mathbb{P}(\mathrm{FDP}>0)\leq \frac{C}{\sqrt{\log p}},
$$
where we have applied the boundedness of $\mathrm{FDP}$ in the inequality above.

\paragraph{Case II: $\rho_0 = \sup\left\{G(t_p)\leq \rho\leq 1: \frac{m\rho}{R(\rho)\vee 1}\leq \alpha\right\}$} Now suppose that $$\rho_0 = \sup\left\{G(t_p)\leq \rho\leq 1: \frac{m\rho}{R(\rho)\vee 1}\leq \alpha\right\},$$
and $t_0=G^{-1}(\rho_0)$, then 
\begin{align*}
    \mathrm{FDP} = &\frac{\sum_{(a,b)\in \mathcal{H}_0}\ind{|\widehat{z}(a,b)|\geq t_0}}{R(\rho_0)\vee 1}\\
    \leq &\frac{\left|\sum_{(a,b)\in \mathcal{H}_0}(\ind{|\widehat{z}(a,b)|\geq t_0} - \rho_0)\right|+\rho_0m}{R(\rho_0)\vee1}\\
    \leq &\alpha\left(1+\frac{\left|\sum_{(a,b)\in \mathcal{H}_0}(\ind{|\widehat{z}(a,b)|\geq t_0} - G(t_0))\right|}{mG(t_0)}\right)\\
    \leq &\alpha\left(1+\sup_{0\leq t\leq t_p}\frac{\left|\sum_{(a,b)\in \mathcal{H}_0}(\ind{|\widehat{z}(a,b)|\geq t} - G(t))\right|}{mG(t)}\right).
\end{align*}
In fact, it suffices to show that 
\begin{equation}\label{eq:FDR_proof_err_sup_interval}
    \sup_{0\leq t\leq t_p}\left|\frac{\sum_{(a,b)\in \mathcal{H}_0}\ind{|\widehat{z}(a,b)|\geq t}}{m_0G(t)}-1\right|\rightarrow 0
\end{equation}
in probability, which immediately implies that
$\lim_{n,p\rightarrow \infty}\mathbb{P}(\mathrm{FDP}>\alpha+\epsilon)=0$; In addition, by the boundedness of the quantity above, convergence in probability also implies convergence in expectation and hence we would have $\limsup_{n,p\rightarrow \infty}\mathrm{FDR}\leq \alpha$.

We now discretize the interval $[0,t_p]$ into $0=x_1<x_2<\dots<x_K=t_p$, where $x_k-x_{k-1} = \frac{1}{\sqrt{\log m\log\log m}}$. In the following, we show that \eqref{eq:FDR_proof_err_sup_interval} can be implied by 
\begin{equation}\label{eq:FDR_proof_err_max}
    \max_{1\leq k\leq K}\left|\frac{\sum_{(a,b)\in \mathcal{H}_0}\ind{|\widehat{z}(a,b)|\geq x_k}}{m_0G(x_k)}-1\right|\rightarrow 0
\end{equation}
in probability.
Suppose for now that \eqref{eq:FDR_proof_err_max} holds, then for any $t\in [0, t_p]$, there exists $1\leq l\leq K$
such that $x_l\leq t \leq x_{l+1}$, and hence
\begin{align*}
    \frac{\sum_{(a,b)\in \mathcal{H}_0}\ind{|\widehat{z}(a,b)|\geq x_{l+1}}}{m_0 G(x_{l})} - 1\leq\frac{\sum_{(a,b)\in \mathcal{H}_0}\ind{|\widehat{z}(a,b)|\geq t}}{m_0G(t)} - 1\leq\frac{\sum_{(a,b)\in \mathcal{H}_0}\ind{|\widehat{z}(a,b)|\geq x_l}}{m_0 G(x_{l+1})} - 1\\
    \left|\frac{\sum_{(a,b)\in \mathcal{H}_0}\ind{|\widehat{z}(a,b)|\geq t}}{m_0G(t)} - 1\right|\leq\max_{1\leq k\leq K}\left|\frac{\sum_{(a,b)\in \mathcal{H}_0}\ind{|\widehat{z}(a,b)|\geq x_k}}{m_0 G(x_{k})} - 1\right|\frac{G(x_l)}{G(x_{l+1})} +\left|\frac{G(x_l)}{G(x_{l+1})} - 1\right|.
\end{align*}
Let $\phi(\cdot)$ be the density function of standard Gaussian distribution. Now note that
$\frac{G(x_{l+1})}{G(x_{l})}\geq 1-\frac{2(x_{l+1}-x_l)\phi(x_l)}{G(x_{l})}$, and Lemma \ref{lem:gaussian_tail} suggests
$$
\frac{\phi(x_l)}{G(x_{l})}\leq \begin{cases}
\frac{x_l+1}{2}\leq t_p,&x_l\geq 1,\\
\frac{\phi(0)}{G(1)}, &0\leq x_l\leq 1.
\end{cases}
$$
Hence $|\frac{G(x_{l+1})}{G(x_{l})}-1|\leq c\sqrt{\log \log m}\rightarrow 0$, and \eqref{eq:FDR_proof_err_max} implies \eqref{eq:FDR_proof_err_sup_interval}.

The rest of the proof is devoted to showing \eqref{eq:FDR_proof_err_sup_interval}. For any $\epsilon>0$, some calculations show that
\begin{equation}\label{eq:FDR_proof_prob_ratio_bound}
\begin{split}
    &\mathbb{P}\left(\max_{1\leq k\leq K}\left|\frac{\sum_{(a,b)\in \mathcal{H}_0}\ind{|\widehat{z}(a,b)|\geq x_k}}{m_0G(x_k)}-1\right|>\epsilon\right)\\
    \leq &\sum_{k=1}^K\mathbb{P}\left(\left|\frac{\sum_{(a,b)\in \mathcal{H}_0}\ind{|\widehat{z}(a,b)|\geq x_k}}{m_0G(x_k)}-1\right|>\epsilon\right)\\
    \leq&\frac{1}{\epsilon^2}\sum_{k=1}^K\mathbb{E}\left(\left|\frac{\sum_{(a,b)\in \mathcal{H}_0}\ind{|\widehat{z}(a,b)|\geq x_k}}{m_0G(x_k)}-1\right|^2\right)\\
    =&\frac{1}{\epsilon^2}\frac{1}{m_0^2}\sum_{(a,b)\in \mathcal{H}_0,(a',b')\in \mathcal{H}_0}\sum_{k=1}^K\left(\frac{\mathbb{P}(|\widehat{z}(a,b)|\geq x_k,|\widehat{z}(a',b')|\geq x_k)}{G^2(x_k)}-1\right)\\
    &-\frac{1}{\epsilon^2}\frac{2}{m_0}\sum_{(a,b)\in \mathcal{H}_0}\sum_{k=1}^K\left(\frac{\mathbb{P}(|\widehat{z}(a,b)|\geq x_k)}{G(x_k)}-1\right),
\end{split}
\end{equation}
and we will lower bound $\mathrm{\RNum{1}}(a,b)=\sum_{k=1}^K\left(\frac{\mathbb{P}(|\widehat{z}(a,b)|\geq x_k)}{G(x_k)}-1\right)$, and upper bound $\mathrm{\RNum{2}}(a,b,a',b')=\sum_{k=1}^K\left(\frac{\mathbb{P}(|\widehat{z}(a,b)|\geq x_k,|\widehat{z}(a',b')|\geq x_k)}{G^2(x_k)}-1\right)$, respectively.

\paragraph{Bounding $\mathrm{\RNum{1}}(a,b)$:} We first follow similar arguments to case I. Recall the decompositions and error bounds in the proof of Theorem \ref{thm:nb_lasso_debias_decomp} and Proposition \ref{prop:var_est}, we have
\begin{align*}
    \mathbb{P}(|\widehat{z}(a,b)|\geq x_k)=&\mathbb{P}\left(\left|\frac{B(a,b)}{\widehat{\sigma}_n(a,b)}+\frac{E(a,b)}{\widehat{\sigma}_n(a,b)}\right|\geq x_k\right)\\
    \geq &\mathbb{P}\left(\left|\frac{E(a,b)}{\sigma_n(a,b)}\right|\geq x_k(1+\varepsilon_{a,b}^{(2)}) + \varepsilon_{a,b}^{(1)}\right)-Cp^{-c}\\
    \geq &G(x_k(1+\varepsilon_{a,b}^{(2)}) + \varepsilon_{a,b}^{(1)}+(\log p)^{-2})-Cp^{-c}\\
    &-C\exp\left\{-\frac{c \lambda_{\min}^3(\Sigma^*)\sqrt{n_2^{(a,b)}}}{C\|\Sigma^*\|_{\infty}^3(d_a+d_b+1)^3(\log p)^2}\right\},
\end{align*}
where we applied Lemma \ref{lem:tail_approx_one_pair} with $\varepsilon=(\log p)^{-2}$ on the last line. Since $G(x_k)\geq G(t_p)\geq \frac{c}{t_p}e^{-\frac{t_p^2}{2}}\geq\frac{c}{p^2\sqrt{\log p}}$, $K=t_p\sqrt{\log m\log\log m}+1\leq C\log p\sqrt{\log\log p}$, 
one can show that 
\begin{align*}
\mathrm{\RNum{1}}(a,b)\geq &\sum_{k=1}^K\frac{G(x_k(1+\varepsilon_{a,b}^{(2)}) + \varepsilon_{a,b}^{(1)}+(\log p)^{-2})-G(x_k)}{G(x_k)} \\
&- C\exp\left\{3\log p - \frac{c \lambda_{\min}^3(\Sigma^*)\sqrt{n_2^{(a,b)}}}{C\|\Sigma^*\|_{\infty}^3(d_a+d_b+1)^3(\log p)^2}\right\} - Cp^{-c}.
\end{align*}
Assumption \ref{assump:FDR_n} implies that the last two terms both converge to zero, thus it suffices to show that
$\sum_{k=1}^K\frac{G(x_k(1+\varepsilon_{a,b}^{(2)}) + \varepsilon_{a,b}^{(1)}+(\log p)^{-2})-G(x_k)}{G(x_k)}\rightarrow 0$. In fact,
\begin{align*}
    0\leq&\sum_{k=1}^K\frac{G(x_k)-G(x_k(1+\varepsilon_{a,b}^{(2)}) + \varepsilon_{a,b}^{(1)}+(\log p)^{-2})}{G(x_k)}\\
    \leq &\sum_{k=1}^K\frac{(x_k\varepsilon_{a,b}^{(2)} + \varepsilon_{a,b}^{(1)}+(\log p)^{-2}))\phi(x_k)}{G(x_k)}\\
    \leq&Ct_p\sum_{k=1}^K(t_p\varepsilon_{a,b}^{(2)} + \varepsilon_{a,b}^{(1)}+(\log p)^{-2}))\\
    \leq&C_1\frac{(d_a+d_b+1)(\log p)^{\frac{5}{2}}\sqrt{\log\log p}n_2^{(a,b)}}{(n_1^{(a,b)})^{3/2}} + C_2\frac{(d_a+d_b+1)(\log p)^{\frac{5}{2}}\sqrt{\log\log p}\sqrt{n_2^{(a,b)}}}{n_1^{(a,b)}} + \sqrt{\frac{\log\log p}{\log p}},
\end{align*}
and Assumption \ref{assump:FDR_n} suggests the right hand side of the inequality above would converge to zero.

\paragraph{Bounding $\mathrm{\RNum{2}}(a,b,a',b')$:}
For any node pairs $(a,b)$ and $(a',b')$, we can utilize the same decomposition for the test statistics $\widehat{z}(a,b)$ and $\widehat{z}(a',b')$. Following similar arguments to bounding $\mathrm{\RNum{1}}(a,b)$, one can show that
\begin{align*}
    &\mathbb{P}(\widehat{z}(a,b)\geq x_k,\widehat{z}(a',b')\geq x_k)\\
    \leq&Cp^{-c}+\mathbb{P}\left(\left|\frac{E(a,b)}{\sigma(a,b)}\right|\geq\frac{x_k}{1+\varepsilon_{a,b}^{(2)}}-\varepsilon^{(1)}_{a,b},\left|\frac{E(a',b')}{\sigma(a',b')}\right|\geq\frac{x_k}{1+\varepsilon_{a',b'}^{(2)}}-\varepsilon^{(1)}_{a',b'}\right).
\end{align*}
The following lemma suggests that the tail bound for $(\frac{E(a,b)}{\sigma(a,b)},\frac{E(a',b')}{\sigma(a',b')})$ can be well approximated by the tail bound for a bivariate normal distribution with the same covariance matrix.
\begin{lemma}\label{lem:tail_approx_two_pairs}
    For any node pairs $(a,b)$ and $(a',b')$, $t_1, t_2\geq 0$, $\varepsilon>0$,
    \begin{align*}
        &\mathbb{P}\left(\left|\frac{E(a,b)}{\sigma_n(a,b)}\right|>t_1, \left|\frac{E(a',b')}{\sigma_n(a',b')}\right|>t_2\right)\\
        \leq &\mathbb{P}\left(\left|z_1\right|>t_1-\varepsilon, \left|z_2\right|>t_2-\varepsilon\right)+C\exp\left\{-\frac{c\varepsilon \lambda_{\min}(\Sigma^*)\sqrt{n_2^{(a,b)}\vee n_2^{(a',b')}}}{C\|\Sigma^*\|_{\infty}(d+1)\alpha(\Theta^*,\{V_i\}_{i=1}^n)}\right\},
    \end{align*}
    where $(z_1, z_2)$ follows bivariate normal distribution with the same mean and covariance matrix as $(\frac{E(a,b)}{\sigma_n(a,b)},\frac{E(a',b')}{\sigma_n(a',b')})$.
\end{lemma}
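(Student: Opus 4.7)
The plan is to realize both normalized error terms as sums of independent bivariate vectors and carry out a quantitative bivariate normal approximation via a smoothing-plus-Lindeberg-swapping argument, with truncation tailored to the sub-exponential tails. Specifically, write $S_n := (E(a,b)/\sigma_n(a,b),\, E(a',b')/\sigma_n(a',b'))^\top = \sum_{i=1}^n \zeta_i$ where $\zeta_i := (\xi_i(a,b), \xi_i(a',b'))^\top$ are independent, centered bivariate vectors with $\sum_i \mathrm{Cov}(\zeta_i) = \mathrm{Cov}((z_1,z_2))$ by construction. The Gaussian vector $(z_1,z_2)$ in the statement has exactly this covariance.

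First, I would control the sub-exponential norm of each coordinate of $\zeta_i$. Recalling from the proof of Theorem \ref{thm:nb_lasso_debias_decomp} that $\xi_i(a,b) = \sigma_n^{-1}(a,b)\langle U^{(\delta,i)}, x_i x_i^\top - \Sigma^*\rangle$, I obtain $\|\xi_i(a,b)\|_{\psi_1} \leq C\|\Sigma^*\|_\infty \sigma_n^{-1}(a,b) \sum_{j,k} |U_{j,k}|\delta^{(i)}_{j,k}/n_{j,k}$, and combining with the variance lower bound from Proposition \ref{prop:var_bnds} yields $\|\xi_i(a,b)\|_{\psi_1} \lesssim \|\Sigma^*\|_\infty (d_a+d_b+1)/(\lambda_{\min}(\Sigma^*)\sqrt{n_2^{(a,b)}})$, and symmetrically for $(a',b')$. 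The role of $\alpha(\Theta^*,\{V_i\}_{i=1}^n)$ is exactly to relate $\|\zeta_i\|_{\psi_1}^2$ to $\lambda_{\min}(\mathrm{Cov}(\zeta_i))$, which will resurface in the normalization for the Lindeberg error.

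Next, I would decompose the event of interest using the four open quadrants $Q_{\pm,\pm} = \{\pm x_1 > t_1, \pm x_2 > t_2\}$ (each an intersection of two half-planes, hence convex), and for each I construct a smooth cutoff $\phi_\varepsilon$ with $\mathbf{1}\{Q\} \leq \phi_\varepsilon \leq \mathbf{1}\{Q^{-\varepsilon}\}$ where $Q^{-\varepsilon}$ enlarges $Q$ by $\varepsilon$ in each coordinate, and with $\sup \|\nabla^k \phi_\varepsilon\|_\infty \lesssim \varepsilon^{-k}$ for $k=1,2,3$. Then I would run the classical Lindeberg swap: replace $\zeta_i$ one at a time by independent bivariate Gaussians $\eta_i$ with matching mean and covariance, Taylor-expand $\phi_\varepsilon$ to third order at the partial sum, and bound the error by $\varepsilon^{-3}\sum_i (\mathbb{E}\|\zeta_i\|^3 + \mathbb{E}\|\eta_i\|^3)$ after truncating $\|\zeta_i\|$ at a threshold $M$; the discarded mass is $\sum_i \mathbb{P}(\|\zeta_i\|>M) \leq n e^{-cM/\|\zeta_i\|_{\psi_1}}$. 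The third moments of the truncated $\zeta_i$ are of order $M \cdot \mathbb{E}\|\zeta_i\|^2$, so the total swapping error is roughly $\varepsilon^{-3} M \sum_i \mathbb{E}\|\zeta_i\|^2$. Finally, applying the reverse inequality to a smoothed lower approximation of $\mathbf{1}\{A_\varepsilon\}$ with $A_\varepsilon = \{|x_1|>t_1-\varepsilon, |x_2|>t_2-\varepsilon\}$ transfers the bound from $\mathbb{E}\phi_\varepsilon(W)$ to $\mathbb{P}(W \in A_\varepsilon)$.

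The main obstacle is balancing the smoothing width $\varepsilon$, the truncation level $M$, and the sub-exponential tails so that the final error collapses into the single exponential of the claimed form. Concretely, choosing $M \asymp \|\zeta_i\|_{\psi_1}\log(n/\delta)$ makes the discarded mass $\delta$, and then the Lindeberg error becomes $\varepsilon^{-3} M \cdot 1$ (since $\sum_i \mathbb{E}\|\zeta_i\|^2 \asymp 1$ after normalization by $\sigma_n$); one then reoptimizes $\delta$ against $\varepsilon$. The appearance of $\lambda_{\min}(\Sigma^*)\sqrt{n_2^{(a,b)}\vee n_2^{(a',b')}}/(\|\Sigma^*\|_\infty(d+1)\alpha(\Theta^*,\{V_i\}_{i=1}^n))$ in the exponent comes from inverting $\|\zeta_i\|_{\psi_1}$ and from the $\alpha$ factor entering the standardization, exactly as in the one-dimensional Lemma \ref{lem:tail_approx_one_pair} but with the weaker $\lambda_{\min}(\Sigma^*)^1$ (rather than $^3$) reflecting that only one covariance lower-bound is needed in the bivariate variance matching rather than one per index. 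Handling the asymmetry between the two pairs---so that the worse of $n_2^{(a,b)}$ and $n_2^{(a',b')}$ controls the error (actually the better, $\vee$)---will require using in the Lindeberg step the sharper of the two sub-exponential norms, which is where the $\max$ in the denominator of $\alpha$ plays its role.
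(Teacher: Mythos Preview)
Your setup is correct: writing $S_n=\sum_{i=1}^n\zeta_i$ with $\zeta_i=(\xi_i(a,b),\xi_i(a',b'))^\top$, bounding each coordinate's $\psi_1$-norm, and identifying the role of $\alpha(\Theta^*,\{V_i\}_{i=1}^n)$ are all exactly the ingredients the paper uses. The gap is in the approximation step itself. A Lindeberg swap against a third-order Taylor expansion of a test function smoothed at scale $\varepsilon$ produces an error of size
\[
\|\phi_\varepsilon'''\|_\infty\sum_{i=1}^n\mathbb{E}\|\zeta_i\|^3\ \lesssim\ \varepsilon^{-3}\Bigl(\max_i\|\zeta_i\|_{\psi_1}\Bigr)\sum_{i=1}^n\mathbb{E}\|\zeta_i\|^2\ \asymp\ \varepsilon^{-3}\tau,
\]
with $\tau\asymp \|\Sigma^*\|_\infty(d+1)\alpha/(\lambda_{\min}(\Sigma^*)\sqrt{n_2})$. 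Adding a truncation at level $M\asymp\tau\log(n/\delta)$ only replaces this by $\varepsilon^{-3}\tau\log(n/\delta)+\delta$, and optimizing in $\delta$ still leaves an error that is \emph{polynomial} in $\varepsilon^{-1}$, not the exponential $C\exp\{-c\varepsilon/\tau\}$ claimed in the lemma. No choice of $M$ or $\delta$ can turn a polynomial Lindeberg remainder into an exponential one; this is a well-known limitation of moment-matching arguments, and it is precisely why the paper does not attempt a direct Lindeberg proof.

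The paper instead verifies the Bernstein-type moment condition~\eqref{eq:moment_condition} for the bivariate vectors $\zeta_i$ with parameter $\tau\asymp(\|\xi_i(a,b)\|_{\psi_1}\vee\|\xi_i(a',b')\|_{\psi_1})^3/\lambda_{\min}(\mathrm{Cov}(\zeta_i))$, and then invokes Zaitsev's multivariate Gaussian approximation theorem (Lemma~\ref{lem:gauss_approx_tool}) as a black box. That result, which relies on a KMT-type coupling rather than smoothing-plus-Lindeberg, is what delivers the exponential dependence $\exp\{-c\varepsilon/\tau\}$. This distinction matters downstream: in the proof of Theorem~\ref{thm:FDR_valid} the lemma is applied with $\varepsilon=(\log p)^{-2}$ and the resulting error is multiplied by a factor of order $p^2$; the exponential bound then goes to zero under Assumption~\ref{assump:FDR_n} ($n_2\gtrsim (d+1)^6(\log p)^6$), whereas your polynomial bound $\varepsilon^{-3}\tau$ would require the far stronger condition $n_2\gg p^4$. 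So the proposal, as it stands, cannot close the argument.
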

We can apply Lemma \ref{lem:tail_approx_two_pairs} with $\varepsilon=(\log p)^{-2}$, and obtain the following: 
\begin{align*}
    \mathrm{\RNum{2}}(a,b,a',b')=&\sum_{k=1}^K\left(\frac{\mathbb{P}(|\widehat{z}(a,b)\geq x_k, |\widehat{z}(a',b')|\geq x_k)}{G^2(x_k)}-1\right)\\
    \leq &Cp^{-c}+C\exp\left\{
    C\log p-\frac{c\lambda_{\min}(\Sigma^*)\sqrt{n_2^{(a,b)}\vee n_2^{(a',b')}}}{C\|\Sigma^*\|_{\infty}(d+1)\alpha(\Theta^*,\{V_i\}_{i=1}^n)}(\log p)^2\right\} \\
    &+ \sum_{k=1}^K\left(\frac{\mathbb{P}\left(|z_1|\geq x_k', |z_2|\geq x_k'\right)}{G^2(x_k)} - 1\right),
\end{align*}
where $x_k'=\frac{x_k}{1+\max\{\varepsilon_{a,b}^{(2)}, \varepsilon_{a',b'}^{(2)}\}}-\max\{\varepsilon^{(1)}_{a,b},\varepsilon^{(1)}_{a',b'}\} - (\log p)^{-2}$.
Assumption \ref{assump:FDR_n} implies that the first two terms both converge to zero, hence we will focus on the last term in the following analysis. We will discuss three different cases separately: (i) $(a, b, a', b')$ $\in \mathcal{A}_1(\rho_0)$; (ii) $(a, b, a', b')$ $\in \mathcal{A}_2(\rho_0,\gamma)$; (iii) $(a, b, a',b')$ $\in (\mathcal{H}_0\times \mathcal{H}_0)\backslash (\mathcal{A}_1(\rho_0)\cup \mathcal{A}_2(\rho_0,\gamma))$.

For the case (i), one can show that
\begin{align*}
    &\sum_{k=1}^K\left(\frac{\mathbb{P}\left(|z_1|\geq x_k', |z_2|\geq x_k'\right)}{G^2(x_k)} - 1\right)\\
    \leq &\sum_{k=1}^K\frac{1}{G^2(x_k)}\\
    \leq &\int_{0}^{t_p}G^{-1}(x)dx\sqrt{\log m\log\log m}\\
    \leq&\left[\sqrt{\frac{\pi}{2}}\int_{1}^{t_p}(x+\frac{1}{x})e^{\frac{x^2}{2}}dx + \frac{1}{G(1)}\right]\sqrt{\log m\log\log m}\\
    \leq &\left[\sqrt{2\pi}\int_{1}^{t_p}xe^{\frac{x^2}{2}}dx + \sqrt{2\pi e}\right]\sqrt{\log m\log\log m}\\
    \leq &\frac{\sqrt{2\pi}m\sqrt{\log \log m}}{\sqrt{\log m}}.
\end{align*}
Since $|\mathcal{A}_1(\rho_0)|\leq Cp^2$,
\begin{align*}
    \frac{1}{m_0^2}\sum_{(a,b,a',b')\in \mathcal{A}_1(\rho_0)}\mathrm{\RNum{2}}(a,b,a',b')\leq C\frac{\sqrt{\log \log p}}{\sqrt{\log p}}+Cp^{-c}\rightarrow 0.
\end{align*}

Now we present a lemma on the tail bound of bivariate Gaussian distribution, which will be useful for the other two cases.
\begin{lemma}\label{lem:bivariate_normal_tail}
    Suppose that $(z_1,z_2)^\top\sim \mathcal{N}(0,\begin{pmatrix}1 & \rho\\\rho & 1\end{pmatrix})$.
    \begin{itemize}
        \item If $|\rho|\leq C(\log p)^{-2-\gamma}$ for some $\gamma>0$, then 
        \begin{align}\label{eq:bivariate_vanishing_correlation}
           \sup_{0\leq x\leq \sqrt{C\log p}}\left|\frac{\mathbb{P}(|z_1|\geq x,|z_2|\geq x)}{G^2(x)} - 1\right|\leq C(\log p)^{-1-\min\{\gamma,\frac{1}{2}\}}. 
        \end{align}
        \item If $|\rho|\leq \rho_0<1$ for some $\rho_0>0$, then 
        \begin{align}\label{eq:bivariate_rho} \mathbb{P}(|z_1|\geq t,|z_2|\geq t)\leq C(t+1)^{-2}\exp\{-t^2/(1+\rho_0)\}  
        \end{align} 
        holds uniformly over $0\leq t \leq \sqrt{C\log p}$.
    \end{itemize}
\end{lemma}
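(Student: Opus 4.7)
The two parts fall in different regimes and naturally call for two different techniques, so I would treat them separately.

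\textbf{Part (ii).} I would start by splitting $\{|z_1|\geq t,\,|z_2|\geq t\}$ into its four sign-orthants and exploiting the fact that flipping $z_2\mapsto -z_2$ negates the correlation to write
$$\mathbb{P}(|z_1|\geq t,|z_2|\geq t)=2\,\mathbb{P}_{\rho}(z_1\geq t,z_2\geq t)+2\,\mathbb{P}_{-\rho}(z_1\geq t,z_2\geq t).$$
It therefore suffices to bound $\mathbb{P}_r(z_1\geq t,z_2\geq t)$ uniformly in $|r|\leq\rho_0$. The cleanest route is the orthogonal change of variables $U=(z_1+z_2)/\sqrt{2}$, $V=(z_1-z_2)/\sqrt{2}$, which makes $U,V$ independent with variances $1+r$ and $1-r$ and converts the region into $\{U\geq \sqrt{2}t+|V|\}$. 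Conditioning on $V$ and applying the Mills-ratio upper bound of Lemma~\ref{lem:gaussian_tail} to $\mathbb{P}(U\geq \sqrt{2}t+|V|\mid V)$ yields an integrand whose exponent is $-(\sqrt{2}t+|V|)^2/(2(1+r))-V^2/(2(1-r))$ and which carries an algebraic prefactor $(\sqrt{2}t+|V|)^{-1}$. Pulling out $\exp(-t^2/(1+r))$, bounding $(\sqrt{2}t+|V|)^{-1}\leq (\sqrt{2}t)^{-1}$, and evaluating the remaining one-dimensional exponential integral in $|V|$ gains one further $(1+r)/t$ factor, for an overall $t^{-2}$ decay with a constant depending only on $\rho_0$. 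Since $|r|\leq\rho_0$ forces both $\exp(-t^2/(1+\rho))$ and $\exp(-t^2/(1-\rho))\leq \exp(-t^2/(1+\rho_0))$, this handles $t\geq 1$; the range $t\in[0,1]$ is absorbed into the constant since $(t+1)^{-2}\exp(-t^2/(1+\rho_0))$ is bounded below there while $\mathbb{P}\leq 1$.

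\textbf{Part (i).} Set $p(\rho):=\mathbb{P}_\rho(|z_1|\geq x,|z_2|\geq x)$. By Plackett's derivative identity, $\partial_\rho \mathbb{P}_\rho(z_1\geq a,z_2\geq b)=\phi_2(a,b;\rho)$, the bivariate normal density. Summing the derivatives over the four orthants (with the appropriate signs) and using $\phi_2(x,x;\rho)=(2\pi\sqrt{1-\rho^2})^{-1}\exp(-x^2/(1+\rho))$ together with $\phi_2(x,-x;\rho)=(2\pi\sqrt{1-\rho^2})^{-1}\exp(-x^2/(1-\rho))$, one obtains
$$p'(\rho)=\frac{1}{\pi\sqrt{1-\rho^2}}\bigl[\exp(-x^2/(1+\rho))-\exp(-x^2/(1-\rho))\bigr],$$
so in particular $p'(0)=0$; the symmetry $z_2\mapsto -z_2$, which sends $\rho\mapsto-\rho$ while preserving the event, confirms that $p$ is even in $\rho$ and the leading correction is quadratic. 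Writing $p'(\rho)=\int_{-\rho}^{\rho}g'(r)\,dr$ with $g(r):=\exp(-x^2/(1+r))$ and $g'(r)=g(r)\,x^2/(1+r)^2$, and noting that in our regime $x^2|\rho|\leq C(\log p)^{-1-\gamma}=o(1)$ so that $g(r)\leq(1+o(1))e^{-x^2}$ uniformly on $|r|\leq|\rho|$, one gets $|p'(\rho)|\leq C|\rho|\,x^2 e^{-x^2}$, hence $|p(\rho)-G(x)^2|\leq C\rho^2 x^2 e^{-x^2}$. It remains to divide by $G(x)^2$: for $x\leq 1$ one has $G(x)^2\geq G(1)^2$, and for $x\geq 1$ the Mills-ratio lower bound $G(x)\geq 2\phi(x)x/(x^2+1)$ gives $G(x)^2\geq c\,e^{-x^2}/x^2$. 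Either way, $\left|p(\rho)/G(x)^2-1\right|\leq C\rho^2(1+x^4)$, and plugging in $|\rho|\leq C(\log p)^{-2-\gamma}$ with $x\leq\sqrt{C\log p}$ yields $C(\log p)^{-2-2\gamma}$, which is at most $C(\log p)^{-1-\min\{\gamma,1/2\}}$ for every $\gamma>0$ since $2+2\gamma\geq 1+\min\{\gamma,1/2\}$.

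The main obstacle is making the Taylor argument in Part (i) quantitative uniformly over $x\in[0,\sqrt{C\log p}]$. The informal claim ``$g(r)\leq(1+o(1))e^{-x^2}$'' amounts to controlling $\exp(x^2 r/(1+r))$ on $|r|\leq |\rho|$ by a bounded factor, which is legitimate only because $x^2|\rho|=O((\log p)^{-1-\gamma})\to 0$. I would verify this by expanding $g(r)=e^{-x^2}\exp(x^2r/(1+r))$ and bounding the second factor by $1+2x^2|r|$ on the relevant range; the resulting $O(\rho^3 x^4 e^{-x^2})$ correction is strictly smaller than the leading $\rho^2 x^2 e^{-x^2}$ term in our regime, so it is absorbed into the constant. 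The rest reduces to routine Gaussian computation.
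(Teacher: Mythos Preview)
Your argument is correct in both parts. The paper, however, does not actually prove this lemma: it simply records that ``Lemma~\ref{lem:bivariate_normal_tail} is a direct implication of Lemma~6.1 and Lemma~6.2 in \cite{liu2013gaussian}'' and moves on. So there is no proof in the paper to compare against strategically; what you have supplied is a self-contained derivation that the paper chose to outsource.

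A couple of remarks on your argument. For Part~(i), your use of Plackett's identity together with the evenness of $p(\rho)$ is clean and in fact delivers a bound of order $\rho^2(1+x^4)\leq C(\log p)^{-2-2\gamma}$, which is strictly stronger than the stated $(\log p)^{-1-\min\{\gamma,1/2\}}$; the exponent comparison $2+2\gamma\geq 1+\min\{\gamma,1/2\}$ you note is what makes the stated bound follow. The uniformity in $x$ is justified exactly as you say, since $x^2|\rho|=O((\log p)^{-1-\gamma})=o(1)$ keeps $e^{x^2r/(1+r)}$ bounded on $|r|\leq|\rho|$. For Part~(ii), the rotation $U=(z_1+z_2)/\sqrt 2$, $V=(z_1-z_2)/\sqrt 2$ and the Mills-ratio integration are standard and yield the $t^{-2}\exp\{-t^2/(1+r)\}$ rate; your observation that both $\exp\{-t^2/(1+\rho)\}$ and $\exp\{-t^2/(1-\rho)\}$ are dominated by $\exp\{-t^2/(1+\rho_0)\}$ when $|\rho|\leq\rho_0$ correctly handles the two orthant pairs. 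The only cosmetic point is that in the integration step you bound the Gaussian factor in $V$ by $1$ and use only the linear exponential $e^{-\sqrt 2 t|V|/(1+r)}$ to get the extra $1/t$; this is fine for $t\geq 1$, and you already dispose of $t\in[0,1]$ separately.
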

Lemma \ref{lem:bivariate_normal_tail} is a direct implication of Lemma 6.1 and Lemma 6.2 in \cite{liu2013gaussian}.
For the case (ii), we apply \eqref{eq:bivariate_rho} to show that
\begin{align*}
    &\sum_{k=1}^K\left(\frac{\mathbb{P}\left(|z_1|\geq x_k', |z_2|\geq x_k'\right)}{G^2(x_k)} - 1\right)\\
    \leq & \sum_{k=1}^K\frac{(x_k+1)^2}{(x_k'+1)^2}\exp\left\{\frac{\rho_0}{1+\rho_0}x_k^2+\frac{x_k^2-x_k'^2}{1+\rho_0}\right\},
\end{align*}
where the second line is due to that $G(x_k)\geq \frac{C}{(x_k + 1)}e^{-\frac{x_k^2}{2}}$. Further note that
\begin{align*}
    x_k^2-x_k'^2 \leq 2(x_k-x_k')x_k\leq 2t_p^2\max\{\varepsilon_{a,b}^{(2)},\varepsilon_{a',b'}^{(2)}\} + 2t_p\max\{\varepsilon_{a,b}^{(1)},\varepsilon_{a',b'}^{(1)}\} + 2t_p(\log p)^{-2}\rightarrow 0,
\end{align*}
which then implies 
\begin{align*}
    &\sum_{k=1}^K\left(\frac{\mathbb{P}\left(|z_1|\geq x_k', |z_2|\geq x_k'\right)}{G^2(x_k)} - 1\right)\\
    \leq &C\sum_{k=1}^K\exp\left\{\frac{\rho_0}{1+\rho_0}x_k^2\right\}\\
    \leq &C\sqrt{\log m\log\log m}\int_{0}^{t_p}\exp\left\{\frac{\rho_0}{1+\rho_0}x^2\right\}dx\\
    \leq &Cm^{\frac{2\rho_0}{1+\rho_0}}(\log m)^{\frac{1}{2}-\frac{2\rho_0}{1+\rho_0}}\sqrt{\log\log m}.
\end{align*}
Therefore, Assumption \ref{assump:FDR_correlation} suggests
$$
\sum_{(a,b,a',b')\in \mathcal{A}_2(\rho_0,\gamma)}\mathrm{\RNum{2}}(a,b,a',b')=o(m^2).
$$

While for the case (iii), we have 
\begin{align*}
    &\sum_{k=1}^K\left(\frac{\mathbb{P}\left(|z_1|\geq x_k', |z_2|\geq x_k'\right)}{G^2(x_k)} - 1\right)\\
    \leq &\sum_{k=1}^K\left|\frac{\mathbb{P}\left(|z_1|\geq x_k', |z_2|\geq x_k'\right)}{G^2(x_k')} - 1\right|\left|\frac{G^2(x_k')}{G^2(x_k)} - 1\right|+\left|\frac{\mathbb{P}\left(|z_1|\geq x_k', |z_2|\geq x_k'\right)}{G^2(x_k')} - 1\right|+\left|\frac{G^2(x_k')}{G^2(x_k)} - 1\right|\\
    \leq &C(\log p)^{-1-\min\{\gamma,\frac{1}{2}\}}+C\frac{\phi(x_k')(x_k-x_k')}{G(x_k)}\\
    \leq &C\sqrt{\log\log p}(\log p)^{-\min\{\gamma,\frac{1}{2}\}}+C\log p\sqrt{\log \log p}(t_p\max\{\varepsilon_{a,b}^{(2)},\varepsilon_{a',b'}^{(2)}\} + \max\{\varepsilon_{a,b}^{(1)},\varepsilon_{a',b'}^{(1)}\} + (\log p)^{-2}),
\end{align*}
which converges to zero by Assumption \ref{assump:FDR_n}. Therefore, 
$$
\sum_{(a,b,a',b')\notin \mathcal{A}_1(\rho_0)\cup\mathcal{A}_2(\rho_0,\gamma)}\mathrm{\RNum{2}}(a,b,a',b')=o(m^2).
$$
Now we have shown that \eqref{eq:FDR_proof_prob_ratio_bound} converges to zero for any $\epsilon>0$, and hence \eqref{eq:FDR_proof_err_sup_interval} holds. The proof of Theorem \ref{thm:FDR_valid} is complete.
\subsection{Proof of Supporting Lemmas and Auxiliary Propositions}
Before presenting the proofs of the supporting lemmas, here we first introduce the notions of sub-Gaussian and sub-exponential random variables, $\psi_{\alpha}$ norm, which was introduced in \cite{adamczak2011restricted} as a generalization of the sub-Gaussian and sub-exponential norms when $\alpha = 2$ or $\alpha = 1$. 
\begin{definition}[$\psi_{\alpha}$-norm \citep{adamczak2011restricted,xia2022inference}]\label{def:psi_alpha}
    The $\psi_{\alpha}$-norm of any random variable $X$ and $\alpha>0$ is defined as $\|X\|_{\psi_{\alpha}}:=\inf\{C\in (0,\infty): \bE[\exp\{(|X|/C)^{\alpha}\}]\leq 2\}$
\end{definition}
The The following two lemmas from \cite{hao2020sparse} provide useful properties for product of random variables with bounded $\psi_{\alpha}$ norm.
\begin{lemma}[$\psi_{\alpha}$ norm of product of r.v.s \citep{hao2020sparse}]\label{lem:prod_psi_alpha}
Suppose $X_1,\dots,X_m$ are $m$ random variables (not necessarily independent) with $\psi_{\alpha}$-norm bounded by $\|X_j\|_{\psi_{\alpha}}\leq K_j$. Then the $\psi_{\alpha/m}$-norm of $\prod_{j=1}^mX_j$ is bounded as $\|\prod_{j=1}^mX_j\|_{\psi_{\alpha/m}}\leq \prod_{j=1}^m K_j$.
\end{lemma}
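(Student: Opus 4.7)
The plan is to prove the lemma directly from the definition of the $\psi_\alpha$-norm, i.e., by showing that
\[
\mathbb{E}\left[\exp\!\left(\left(\frac{|\prod_{j=1}^m X_j|}{\prod_{j=1}^m K_j}\right)^{\alpha/m}\right)\right] \leq 2,
\]
since by Definition \ref{def:psi_alpha} this suffices. The first manipulation I would do is note that the exponent factors nicely: setting $Y_j = (|X_j|/K_j)^{\alpha}$, we have
\[
\left(\frac{|\prod_{j=1}^m X_j|}{\prod_{j=1}^m K_j}\right)^{\alpha/m} = \prod_{j=1}^m \left(\frac{|X_j|}{K_j}\right)^{\alpha/m} = \prod_{j=1}^m Y_j^{1/m}.
\]

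Next I would convert this product inside the exponential into an \emph{arithmetic} combination, so that the exponential splits into a product of exponentials. The standard tool is AM--GM applied to the $m$ nonnegative quantities $Y_1, \ldots, Y_m$: $\prod_{j=1}^m Y_j^{1/m} \leq \frac{1}{m}\sum_{j=1}^m Y_j$. Exponentiating both sides gives
\[
\exp\!\left(\prod_{j=1}^m Y_j^{1/m}\right) \leq \exp\!\left(\frac{1}{m}\sum_{j=1}^m Y_j\right) = \prod_{j=1}^m \exp(Y_j)^{1/m}.
\]

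The final step is to take expectations and apply the generalized Hölder inequality with conjugate exponents $m, m, \ldots, m$:
\[
\mathbb{E}\!\left[\prod_{j=1}^m \exp(Y_j)^{1/m}\right] \leq \prod_{j=1}^m \left(\mathbb{E}[\exp(Y_j)]\right)^{1/m} \leq \prod_{j=1}^m 2^{1/m} = 2,
\]
where the last inequality uses the hypothesis $\|X_j\|_{\psi_\alpha} \leq K_j$, which by definition gives $\mathbb{E}[\exp(Y_j)] = \mathbb{E}[\exp((|X_j|/K_j)^\alpha)] \leq 2$. Note that no independence of the $X_j$ is needed because Hölder's inequality handles arbitrary joint distributions. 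This chain of inequalities shows $\prod_{j=1}^m K_j$ is an admissible constant in the infimum defining $\|\prod_j X_j\|_{\psi_{\alpha/m}}$, yielding the claimed bound. There is no serious obstacle here; the only subtlety is remembering that AM--GM plus Hölder (rather than independence) is what makes the product of $\psi_\alpha$-variables land in $\psi_{\alpha/m}$ with the correct constant.
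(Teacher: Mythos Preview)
Your proof is correct and is precisely the standard argument for this fact: factor the exponent, apply AM--GM to convert the geometric mean into an arithmetic mean, then use the generalized H\"older inequality to split the expectation. The paper itself does not prove Lemma~\ref{lem:prod_psi_alpha}; it merely imports the statement from \cite{hao2020sparse}, so there is nothing to compare against beyond noting that your argument is the canonical one. One minor remark: your claim that $\|X_j\|_{\psi_\alpha}\le K_j$ implies $\mathbb{E}[\exp((|X_j|/K_j)^\alpha)]\le 2$ uses that the infimum in Definition~\ref{def:psi_alpha} is attained, which follows from monotone convergence; this is routine but worth a word if you want the write-up to be fully self-contained.
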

\begin{lemma}[Concentration of sum of r.v.s with bounded $\psi_{\alpha}$-norm \cite{hao2020sparse,vershynin2010introduction}]\label{lem:concentration_psi_alpha}
Suppose $0<\alpha\leq 1$, $X_1,\dots,X_n$ are independent random variables satisfying $\|X_i\|_{\psi_{\alpha}}\leq b$. Then there exists absolute constant $C(\alpha)$ only depending on $\alpha$ such that for any $a=(a_1,\dots,a_n)\in \bR^n$ and $0<\delta<1/e^2$, 
$$
\left|\sum_{i=1}^na_iX_i-\bE(\sum_{i=1}^n a_iX_i)\right|\leq C(\alpha)b\|a\|_2(\log \delta^{-1})^{1/2}+C(\alpha)b\|a\|_{\infty}(\log \delta^{-1})^{1/\alpha}
$$
with probability at least $1-\delta$.
\end{lemma}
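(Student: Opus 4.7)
The plan is to prove the stated concentration bound by combining a moment generating function argument with a truncation step, following the template of Adamczak-type inequalities. The hypothesis $\|X_i\|_{\psi_\alpha} \leq b$ plays two distinct roles: it provides a local sub-Gaussian-like MGF bound near zero, and it controls heavy tails far from zero.

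First I would reduce to the centered case by replacing $X_i$ with $X_i - \mathbb{E} X_i$, which at most doubles the $\psi_\alpha$ norm (since centering is a bounded operation in $\psi_\alpha$). Next, from the definition $\mathbb{E}\exp\{(|X_i|/b)^\alpha\} \leq 2$, extract the tail bound $\mathbb{P}(|X_i| > t) \leq 2\exp(-(t/b)^\alpha)$, and integrate to get moment bounds $(\mathbb{E}|X_i|^p)^{1/p} \leq C b\, p^{1/\alpha}$ for all $p\geq 1$. Taylor-expanding the exponential and using Stirling then yields a local MGF bound: there exist constants $c_\alpha, C_\alpha$ such that for $|\lambda| \leq c_\alpha/b$,
\[
\mathbb{E}\exp(\lambda X_i) \;\leq\; \exp(C_\alpha \lambda^2 b^2).
\]
By independence and the scaling of $\psi_\alpha$, this gives the product bound $\mathbb{E}\exp(\lambda \sum_i a_i X_i) \leq \exp(C_\alpha \lambda^2 b^2 \|a\|_2^2)$ whenever $|\lambda| \leq c_\alpha/(b\|a\|_\infty)$.

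The Chernoff inequality then yields, for $t \geq 0$,
\[
\mathbb{P}\!\left(\Big|\sum_i a_i X_i\Big| > t\right) \;\leq\; 2\exp\!\left(-\,c\,\min\!\left\{\frac{t^2}{b^2 \|a\|_2^2},\;\frac{t}{b \|a\|_\infty}\right\}\right),
\]
which already handles $\alpha=1$ (pure sub-exponential regime). The main obstacle, and the delicate part of the proof, is to upgrade the linear tail exponent $t/(b\|a\|_\infty)$ to the sharper $\alpha$-tail exponent $(t/(b\|a\|_\infty))^\alpha$ when $\alpha < 1$; the naive Chernoff argument does not see the heavier-tailed structure beyond the sub-Gaussian regime. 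I would resolve this by a truncation argument: write $X_i = X_i^{(\leq M)} + X_i^{(>M)}$ with $M \asymp b(\log\delta^{-1})^{1/\alpha}$, control the truncated sum by Bernstein's inequality (using $|X_i^{(\leq M)}| \leq M$ and $\mathrm{Var}(X_i^{(\leq M)}) \leq \mathbb{E} X_i^2 \leq Cb^2$), and bound the tail contribution directly via the $\psi_\alpha$ tail inequality, taking care that the truncation level is chosen so that the two contributions combine to yield exactly the claimed form.

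Finally, I would invert the two-regime tail bound to obtain the confidence-interval form: setting the bound equal to $\delta$ and solving separately in each regime yields $t = C(\alpha)b\|a\|_2(\log\delta^{-1})^{1/2}$ in the Gaussian regime and $t = C(\alpha)b\|a\|_\infty(\log\delta^{-1})^{1/\alpha}$ in the tail regime; taking the maximum (equivalently, the sum) yields the stated inequality, valid for $\delta < 1/e^2$ so that $\log\delta^{-1}\geq 2$ and the two regimes interpolate cleanly. Since this is a classical result due essentially to Adamczak, a streamlined exposition would likely just cite \cite{adamczak2011restricted,vershynin2010introduction}; nonetheless, the above outline is the self-contained route.
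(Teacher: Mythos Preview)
Your proposal is mathematically sound and follows the standard Adamczak-type truncation route to prove such $\psi_\alpha$ concentration inequalities from scratch. However, the paper does not actually prove this lemma: it simply states, immediately after the lemma, that it ``is a direct combination of Lemma 6 in \cite{hao2020sparse} and Proposition 5.16 in \cite{vershynin2010introduction},'' and moves on. So your self-contained argument is far more elaborate than what the paper does; the paper treats this as a black-box citation, whereas you reconstruct the full proof. Your outline is correct and would work (and you yourself note at the end that one would likely just cite the result), but for the purposes of this paper no proof is given or needed beyond the literature reference.
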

Lemma \ref{lem:concentration_psi_alpha} is a direct combination of Lemma 6 in \cite{hao2020sparse} and Proposition 5.16 in \cite{vershynin2010introduction}.
\begin{proof}[Proof of Lemma~\ref{lem:SampleCov_entry_err}]
We first provide an entry-wise error bound for $$\left|\widehat{\Sigma}_{j,k}-\Sigma^*_{j,k}\right|=\left|\frac{1}{n_{j,k}}\sum_{i:j,k\in V_i}(x_{i,j}x_{i,k}-\Sigma^*_{j,k})\right|,$$ over $1\leq j,k \leq p$. 
		
Since $\frac{x_{i,j}}{\sqrt{\Sigma^*_{j,j}}}\sim \mathcal{N}(0,1)$, we have $\|x_{i,j}\|_{\psi_2}\leq C_1\sqrt{\Sigma^*_{j,j}}$ for some universal constant $C_1$~\cite[see e.g., Lemma 5.14 in][]{vershynin2010introduction}. By Lemma~\ref{lem:prod_psi_alpha}, for each $(j,k)$, $\|x_{i,j}x_{i,k}\|_{\psi_{1}}\leq C_1^2\|\Sigma^*\|_{\infty}$. 
Therefore, by Lemma~\ref{lem:concentration_psi_alpha}, we have
		$$
		|\widehat{\Sigma}_{j,k}-\Sigma^*_{j,k}|\leq C\|\Sigma^*\|_{\infty}\sqrt{\frac{\log p}{n_{j,k}}},
		$$
		with probability at least $1-\exp\{-c\log p\}$. 
		
		By the definition of $\widetilde{\Sigma}$ in (1), we know that 
		\begin{equation*}
		\begin{split}
		\sqrt{n_{j,k}}|\widetilde{\Sigma}_{j,k}-\widehat{\Sigma}_{j,k}|\leq& \mymax_{j',k'}\sqrt{n_{j',k'}}|\widetilde{\Sigma}_{j',k'}-\widehat{\Sigma}_{j',k'}|\\
		\leq&\mymax_{j',k'}\sqrt{n_{j',k'}}|\Sigma^*_{j',k'}-\widehat{\Sigma}_{j',k'}|,
		\end{split}
		\end{equation*}
hence we can bound $|\widetilde{\Sigma}_{j,k}-\Sigma^*_{j,k}|$ as follows: 
	\begin{equation*}
	\begin{split}
|\widetilde{\Sigma}_{j,k}-\Sigma^*_{j,k}|\leq& |\widehat{\Sigma}_{j,k}-\Sigma^*_{j,k}|+|\widetilde{\Sigma}_{j,k}-\widehat{\Sigma}_{j,k}|\\
	\leq&\frac{2}{\sqrt{n}_{j,k}}\mymax_{j',k'}\sqrt{n_{j',k'}}|\widehat{\Sigma}_{j',k'}-\Sigma^*_{j',k'}|\\
	\leq&C\|\Sigma^*\|_{\infty}\sqrt{\frac{\log p}{n_{j,k}}}.
	\end{split}
	\end{equation*}
\end{proof}
\begin{proof}[Proof of Lemma~\ref{lem:T_entry_err}]
    By the definition of $\cE_1$, 
    \begin{align*}
        (\cE_1)_{j,k,j',k'}=&\widetilde{\Sigma}_{j,j'}\widetilde{\Sigma}_{k,k'}-\Sigma^*_{j,j'}\Sigma^*_{k,k'}\\
        =&(\widetilde{\Sigma}_{j,j'}-\Sigma^*_{j,j'})\Sigma^*_{k,k'}+\Sigma^*_{j,j'}(\widetilde{\Sigma}_{k,k'}-\Sigma^*_{k,k'})+(\widetilde{\Sigma}_{j,j'}-\Sigma^*_{j,j'})(\widetilde{\Sigma}_{k,k'}-\Sigma^*_{k,k'}).
    \end{align*}
    By Lemma~\ref{lem:SampleCov_entry_err}, with probability at least $1-Cp^{-c}$, 
    \begin{align*}
        |(\cE_1)_{j,k,j',k'}|\leq &C\|\Sigma^*\|_{\infty}^2\left(\sqrt{\frac{\log p}{n_{j,j'}}}+\sqrt{\frac{\log p}{n_{k,k'}}}+\frac{\log p}{\sqrt{n_{j,j'}n_{k,k'}}}\right)
    \end{align*}
    holds for all $j,k,j',k'\in [p]$.
\end{proof}
\begin{proof}[Proof of Lemma \ref{lem:kron_err}]
    First we can decompose the error tensor as follows:
    \begin{align*}
        &[(u^{(1)}+\epsilon^{(1)})\otimes (u^{(2)}+\epsilon^{(2)})\otimes (u^{(3)}+\epsilon^{(3)})\otimes (u^{(4)}+\epsilon^{(4)})\\
    &-u^{(1)}\otimes u^{(2)}\otimes u^{(3)}\otimes u^{(4)}]\circ \cN\\
    =&\mathcal{D}_1+\mathcal{D}_2+\mathcal{D}_3+\mathcal{D}_4,
    \end{align*}
    where $\mathcal{D}_i$ is the sum of error terms of the $i$th order (including product of $i$ error terms). In particular, 
    \begin{equation}\label{eq:D1_decomp}
        \begin{split}
        \mathcal{D}_1=&(\epsilon^{(1)}\otimes u^{(2)}\otimes u^{(3)}\otimes u^{(4)})\circ \cN\\
        &+(u^{(1)}\otimes \epsilon^{(2)}\otimes u^{(3)}\otimes u^{(4)})\circ \cN\\
        &+(u^{(1)}\otimes u^{(2)}\otimes \epsilon^{(3)}\otimes u^{(4)})\circ \cN\\
        &+(u^{(1)}\otimes u^{(2)}\otimes u^{(3)}\otimes \epsilon^{(4)})\circ \cN.
        \end{split}
    \end{equation}
    We will prove an upper bound for $\|\mathcal{D}_1\|_1$ and show that all higher-order terms can be controlled by $C\|\mathcal{D}_1\|_1$ for some universal constant $C$.
    
    Now we bound each term in \eqref{eq:D1_decomp}. For the first term,
    \begin{align*}
        &\|(\epsilon^{(1)}\otimes u^{(2)}\otimes u^{(3)}\otimes u^{(4)})\circ \cN\|_1\\
        = &\sum_{j,k,j',k'}\frac{n_{j,k,j',k'}}{n_{j,k}n_{j',k'}}|\epsilon^{(1)}_j||u^{(2)}_k||u^{(3)}_{j'}||u^{(4)}_{k'}|\\
        \leq& \sum_{j,k,j',k'}\frac{1}{n_{j,k}}|\epsilon^{(1)}_j||u^{(2)}_k||u^{(3)}_{j'}||u^{(4)}_{k'}|\\
        \leq&\sum_{j}\frac{|\epsilon^{(1)}_j|}{\sqrt{\mymin_kn_{j,k}}}\sum_{k}\frac{|u^{(2)}_k|}{\sqrt{\mymin_jn_{j,k}}}\|u^{(3)}\|_1\|u^{(4)}\|_1\\
        \leq &\sum_{j}\frac{|\epsilon^{(1)}_j|}{\sqrt{\mymin_kn_{j,k}}}\frac{\|u^{(2)}\|_1}{\sqrt{\mymin_{k\in \mathrm{supp}(u^{(2)})}\mymin_{k}n_{j,k}}}\|u^{(3)}\|_1\|u^{(4)}\|_1
    \end{align*}
    where the third line is due to that $n_{j,k,j',k'}\leq n_{j,k}, n_{j',k'}$. Following similar arguments to the above, we have
    \begin{align*}
        &\|(u^{(1)}\otimes \epsilon^{(2)}\otimes u^{(3)}\otimes u^{(4)})\circ \cN\|_1\\
        \leq&\frac{\|u^{(1)}\|_1}{\sqrt{\mymin_{k\in \mathrm{supp}(u^{(1)})}\mymin_{k}n_{j,k}}}\sum_{k}\frac{|\epsilon^{(2)}_k|}{\sqrt{\mymin_jn_{j,k}}}\|u^{(3)}\|_1\|u^{(4)}\|_1,
    \end{align*}
    \begin{align*}
        &\|(u^{(1)}\otimes u^{(2)}\otimes \epsilon^{(3)}\otimes u^{(4)})\circ \cN\|_1\\
        \leq&\|u^{(1)}\|_1\|u^{(2)}\|_1\sum_{j'}\frac{|\epsilon^{(3)}_{j'}|}{\sqrt{\mymin_{k'}n_{j',k'}}}\sum_{k'}\frac{\|u^{(4)}\|_1}{\sqrt{\mymin_{k\in \mathrm{supp}(u^{(4)})}\mymin_{k}n_{j,k}}},
    \end{align*}
    and
    \begin{align*}
        &\|(u^{(1)}\otimes u^{(2)}\otimes u^{(3)}\otimes \epsilon^{(4)})\circ \cN)\circ \cN\|_1\\
        \leq&\|u^{(1)}\|_1\|u^{(2)}\|_1\frac{\|u^{(3)}\|_1}{\sqrt{\mymin_{k\in \mathrm{supp}(u^{(3)})}\mymin_{k}n_{j,k}}}\sum_{k'}\frac{|\epsilon^{(4)}_{k'}|}{\sqrt{\mymin_{j'}n_{j',k'}}}.
    \end{align*}
    Combining the bounds above, we have
    \begin{align*}
        |\mathcal{D}_1|\leq \sum_{l=1}^4\left[\sum_{j=1}^p\frac{|\epsilon^{(l)}_j|}{\sqrt{\mymin_{k\in [p]}n_{j,k}}}\frac{\|u^{(i_l)}\|_1}{\sqrt{\mymin_{j\in \mathrm{supp}(u^{(i_l)})}\mymin_{k}n_{j,k}}}\prod_{m\neq l,i_l}\|u^{(m)}\|_1\right].
    \end{align*}
    While for the higher order terms, we can still apply the same argument. For instance, $(\epsilon^{(1)}\otimes \epsilon^{(2)}\otimes u^{(3)}\otimes u^{(4)})\circ \cN$ is one term in $\mathcal{D}_2$, and it satisfies
    \begin{align*}
        &\|(\epsilon^{(1)}\otimes \epsilon^{(2)}\otimes u^{(3)}\otimes u^{(4)})\circ \cN\|_1\\
        \leq&\sum_{j}\frac{|\epsilon^{(1)}_j|}{\sqrt{\mymin_k n_{j,k}}}\sum_{k}\frac{\epsilon^{(2)}_k}{\sqrt{\mymin_j n_{j,k}}}\|u^{(3)}\|_1\|u^{(4)}\|_1\\
        \leq&C_1\sum_{j}\frac{|\epsilon^{(1)}_j|}{\sqrt{\mymin_k n_{j,k}}}\frac{\|u^{(2)}\|_1}{\sqrt{\mymin_{k\in \mathrm{supp}(u^{(2)})}\mymin_{k}n_{j,k}}}\|u^{(3)}\|_1\|u^{(4)}\|_1, 
    \end{align*}
    where in the last line we have utilized the condition that 
    $$\sum_{j=1}^p\left|\frac{\epsilon^{(l)}_j}{\sqrt{\mymin_{k\in [p]}n_{j,k}}}\right|\leq C_1\frac{\|u^{(l)}\|_1}{\sqrt{\mymin_{k\in \mathrm{supp}(u^{(l)})}\mymin_{k}n_{j,k}}}.$$
    While for $(\epsilon^{(1)}\otimes u^{(2)}\otimes \epsilon^{(3)}\otimes u^{(4)})\circ \cN$, 
    \begin{align*}
        &\|(\epsilon^{(1)}\otimes u^{(2)}\otimes \epsilon^{(3)}\otimes u^{(4)})\circ \cN\|_1\\
        \leq&\sum_{j}\frac{|\epsilon^{(1)}_j|}{\sqrt{\mymin_k n_{j,k}}}\sum_{k}\frac{u^{(2)}_k}{\sqrt{\mymin_j n_{j,k}}}\|\epsilon^{(3)}\|_1\|u^{(4)}\|_1\\
        \leq&C_1\sum_{j}\frac{|\epsilon^{(1)}_j|}{\sqrt{\mymin_k n_{j,k}}}\sum_{k}\frac{u^{(2)}_k}{\sqrt{\mymin_j n_{j,k}}}\|u^{(3)}\|_1\|u^{(4)}\|_1, 
    \end{align*}
    where the last line is due to that $\|\epsilon^{(l)}\|_1\leq C_1\|u^{(l)}\|_1$. By applying similar arguments to all terms in $\mathcal{D}_2$, $\mathcal{D}_3$ and $\mathcal{D}_4$, our proof is complete.
\end{proof}
\begin{lemma}\label{lem:kron_err2}
Assume that all conditions in Lemma~\ref{lem:kron_err} hold, and let $j_l\in\{1,\dots,4\}$ satisfy $j_l\equiv l+2(\mathrm{mod}\ 4)$, then we have
\begin{align*}
    &\|[(u^{(1)}+\epsilon^{(1)})\otimes (u^{(2)}+\epsilon^{(2)})\otimes (u^{(3)}+\epsilon^{(3)})\otimes (u^{(4)}+\epsilon^{(4)})\\
    &-u^{(1)}\otimes u^{(2)}\otimes u^{(3)}\otimes u^{(4)}]\circ \cN^{(1)}\|_1\\
    \leq&C_2\sum_{l=1}^4\left[\sum_{j=1}^p\frac{\left|\epsilon^{(l)}_j\right|}{\sqrt{\mymin_{k}n_{j,k}}}\frac{\|u^{(i_{l})}\|_1\|u^{(j_l)}\|_1}{\mymin_{j\in \mathrm{supp}(u^{(i_l)})\cup \mathrm{supp}(u^{(j_l)})}\mymin_{k}n_{j,k}}\prod_{m\neq l,i_l, j_l}\|u^{(m)}\|_1\right],
\end{align*}
\begin{align*}
    &\|[(u^{(1)}+\epsilon^{(1)})\otimes (u^{(2)}+\epsilon^{(2)})\otimes (u^{(3)}+\epsilon^{(3)})\otimes (u^{(4)}+\epsilon^{(4)})\\
    &-u^{(1)}\otimes u^{(2)}\otimes u^{(3)}\otimes u^{(4)}]\circ \cN^{(2)}\|_1\\
    \leq&C_2\sum_{l=1}^4\left[\sum_{j=1}^p\frac{\left|\epsilon^{(l)}_j\right|}{\sqrt{\mymin_{k}n_{j,k}}}\prod_{m\neq l}\frac{\|u^{(m)}\|_1}{\mymin_{j\in \mathrm{supp}(u^{(m)})}\mymin_{k}n_{j,k}}\right],
\end{align*}
where $C_2$ is also a universal constant.
\end{lemma}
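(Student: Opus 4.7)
The plan is to follow the four-term decomposition used in the proof of Lemma \ref{lem:kron_err}, replacing the coarse pointwise bound $\cN\lesssim n^{-1}$ by the sharper decays available for $\widetilde{\cN}^{(1)}$ (scaling like $n^{-3/2}$) and $\widetilde{\cN}^{(2)}$ (scaling like $n^{-2}$), then redistributing the surplus across the four modes. Concretely, I would first write
$$\bigotimes_{l=1}^{4}(u^{(l)}+\epsilon^{(l)}) \;-\; \bigotimes_{l=1}^{4} u^{(l)} \;=\; \mathcal{D}_1+\mathcal{D}_2+\mathcal{D}_3+\mathcal{D}_4,$$
where $\mathcal{D}_s$ collects all fourfold products containing exactly $s$ perturbations $\epsilon^{(l)}$. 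Because $\|\epsilon^{(l)}\|_1\leq C_1\|u^{(l)}\|_1$, each higher-order summand in $\mathcal{D}_2,\mathcal{D}_3,\mathcal{D}_4$ reduces to the first-order bound by replacing any extra $\epsilon^{(l)}$ factor with $C_1\|u^{(l)}\|_1$, exactly as at the end of the proof of Lemma \ref{lem:kron_err}. So the only real work is bounding the four first-order summands of $\mathcal{D}_1$.

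For the $\widetilde{\cN}^{(1)}$ bound, the key pointwise inequality is $n_{j,k,j',k'}\leq \min(n_{j,k},n_{j',k'})$, which gives both
$$\widetilde{\cN}^{(1)}_{j,k,j',k'}=\frac{\sqrt{n_{j,k,j',k'}}}{n_{j,k}\,n_{j',k'}}\;\leq\; \frac{1}{n_{j,k}\sqrt{n_{j',k'}}} \;\;\text{and}\;\;\widetilde{\cN}^{(1)}_{j,k,j',k'}\;\leq\;\frac{1}{\sqrt{n_{j,k}}\,n_{j',k'}}.$$
For the $l=1$ summand I take the first form and split $\frac{1}{n_{j,k}}=\frac{1}{\sqrt{n_{j,k}}}\cdot\frac{1}{\sqrt{n_{j,k}}}$, so that each of the resulting $1/\sqrt{\cdot}$ factors can be bounded by $1/\sqrt{\mymin_m n_{\cdot,m}}$ along its own mode, giving
$$\widetilde{\cN}^{(1)}_{j,k,j',k'}\leq \frac{1}{\sqrt{\mymin_m n_{j,m}\cdot\mymin_m n_{k,m}\cdot\mymin_m n_{j',m}}}.$$
The four-dimensional sum then factorizes: the $j$-sum matches the first hypothesis on $\epsilon^{(1)}$, the $k'$-sum is the bare norm $\|u^{(4)}\|_1$, and the $k$- and $j'$-sums produce $\|u^{(2)}\|_1/\sqrt{\mymin_{k\in\mathrm{supp}(u^{(2)})}\mymin_m n_{k,m}}$ and $\|u^{(3)}\|_1/\sqrt{\mymin_{j'\in\mathrm{supp}(u^{(3)})}\mymin_m n_{j',m}}$. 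Combining these last two using $1/\sqrt{AB}\leq 1/\min(A,B)$ collapses them into the factor $\|u^{(i_1)}\|_1\|u^{(j_1)}\|_1/\mymin_{j\in\mathrm{supp}(u^{(i_1)})\cup\mathrm{supp}(u^{(j_1)})}\mymin_k n_{j,k}$ appearing in the lemma (recall $i_1=2$ and $j_1=3$). The remaining three first-order summands are handled symmetrically by choosing whichever of the two forms of the pointwise bound places the $\sqrt{\cdot}$ on the $\epsilon$-mode; the two full $1/\sqrt{n}$'s produced by splitting $1/n_{\cdot,\cdot}$ then always land on the within-slice partner $i_l$ and on the cross-slice partner $j_l$.

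For $\widetilde{\cN}^{(2)}$ the argument is symmetric in all four modes: by AM--GM applied to $n_{j,k}\geq\sqrt{\mymin_m n_{j,m}\cdot\mymin_m n_{k,m}}$ (and likewise for $n_{j',k'}$),
$$\widetilde{\cN}^{(2)}_{j,k,j',k'}=\frac{1}{n_{j,k}\,n_{j',k'}}\;\leq\;\prod_{r\in\{j,k,j',k'\}}\frac{1}{\sqrt{\mymin_m n_{r,m}}},$$
so the sum factorizes into four independent one-dimensional sums. Bounding the $\epsilon^{(l)}$-sum by the first hypothesis and each of the three $u$-sums by $\|u^{(m)}\|_1/\sqrt{\mymin_{j\in\mathrm{supp}(u^{(m)})}\mymin_k n_{j,k}}$ yields the claimed product bound (with the square-root denominator, matching the overall $n^{-2}$ scaling required in the proof of Proposition \ref{prop:var_est}).

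The only real obstacle is the bookkeeping for $\widetilde{\cN}^{(1)}$: one must verify case by case that the pairings $(i_l,j_l)$ prescribed by the lemma are precisely the two modes that absorb the split of $1/n_{\cdot,\cdot}$ in the construction above. This follows by a short inspection of the four cases $l\in\{1,2,3,4\}$, using the first pointwise form when $\epsilon^{(l)}$ lies in the $(j,k)$-slice and the second form when it lies in the $(j',k')$-slice. No additional probabilistic or concentration arguments enter, so the lemma is ultimately a deterministic counting result that parallels Lemma \ref{lem:kron_err} with minor modifications in how the pointwise factor of $1/n_{\cdot,\cdot}$ is redistributed.
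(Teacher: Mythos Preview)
Your proposal is correct and mirrors the paper's proof essentially line for line: the same $\mathcal{D}_1+\cdots+\mathcal{D}_4$ decomposition, the same pointwise bounds $\cN^{(1)}_{j,k,j',k'}\leq 1/(n_{j,k}\sqrt{n_{j',k'}})$ (or its symmetric variant) and $\cN^{(2)}_{j,k,j',k'}\leq \prod 1/\sqrt{\mymin_m n_{\cdot,m}}$, the same splitting of $1/n_{j,k}$ into two $1/\sqrt{n_{j,k}}$ factors distributed to modes $i_l$ and $j_l$, and the same reduction of higher-order terms to first-order ones via the two hypotheses on $\epsilon^{(l)}$. Your parenthetical remark that the $\cN^{(2)}$ bound should carry square roots in the denominator is also well taken, as the paper's own derivation produces $\|u^{(m)}\|_1/\sqrt{\mymin\mymin n}$ in each factor and this is what is actually used downstream to obtain the $(n_1^{(a,b)})^{-5/2}$ scaling in Proposition~\ref{prop:var_est}.
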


\begin{proof}[Proof of Lemma \ref{lem:kron_err2}]
    The proof is very similar to the proof of Lemma~\ref{lem:kron_err}. In the following, we only show the detailed upper bounds for the first-order error terms (4 terms for each of the two cases, with $\cN^{(1)}$ and $\cN^{(2)}$ respectively), while the higher-order error terms can be bounded as a constant factor multiplying the first-order error bounds using the same arguments as in the the proof of Lemma~\ref{lem:kron_err}.
    
    In particular, one can show that one of the first-order error term satisfies
    \begin{align*}
        &\|(\epsilon^{(1)}\otimes u^{(2)}\otimes u^{(3)}\otimes u^{(4)})\circ \cN^{(1)}\|_1\\
        = &\sum_{j,k,j',k'}\frac{\sqrt{n_{j,k,j',k'}}}{n_{j,k}n_{j',k'}}|\epsilon^{(1)}_j||u^{(2)}_k||u^{(3)}_{j'}||u^{(4)}_{k'}|\\
        \leq& \sum_{j,k,j',k'}\frac{1}{n_{j,k}\sqrt{n_{j',k'}}}|\epsilon^{(1)}_j||u^{(2)}_k||u^{(3)}_{j'}||u^{(4)}_{k'}|\\
        \leq&\sum_{j}\frac{|\epsilon^{(1)}_j|}{\sqrt{\mymin_kn_{j,k}}}\sum_{k}\frac{|u^{(2)}_k|}{\sqrt{\mymin_jn_{j,k}}}\sum_{j'}\frac{|u^{(3)}_{j'}|}{\sqrt{\mymin_{k'}n_{j',k'}}}\|u^{(4)}\|_1\\
        \leq &\sum_{j}\frac{|\epsilon^{(1)}_j|}{\sqrt{\mymin_kn_{j,k}}}\frac{\|u^{(2)}\|_1\|u^{(3)}\|_1}{\mymin_{k\in \mathrm{supp}(u^{(2)})\cup\mathrm{supp}(u^{(3)})}\mymin_{k}n_{j,k}}\|u^{(4)}\|_1,
    \end{align*}
    where we applied $n_{j,k,j',k'}\leq n_{j',k'}$ on the third line, and
    \begin{align*}
        &\|(\epsilon^{(1)}\otimes u^{(2)}\otimes u^{(3)}\otimes u^{(4)})\circ \cN^{(2)}\|_1\\
        = &\sum_{j,k,j',k'}\frac{1}{n_{j,k}n_{j',k'}}|\epsilon^{(1)}_j||u^{(2)}_k||u^{(3)}_{j'}||u^{(4)}_{k'}|\\
        \leq&\sum_{j}\frac{|\epsilon^{(1)}_j|}{\sqrt{\mymin_kn_{j,k}}}\sum_{k}\frac{|u^{(2)}_k|}{\sqrt{\mymin_jn_{j,k}}}\sum_{j'}\frac{|u^{(3)}_{j'}|}{\sqrt{\mymin_{k'}n_{j',k'}}}\sum_{k'}\frac{|u^{(4)}_{k'}|}{\sqrt{\mymin_{j'}n_{j',k'}}}\\
        \leq &\sum_{j}\frac{|\epsilon^{(1)}_j|}{\sqrt{\mymin_kn_{j,k}}}\prod_{m\neq 1}\frac{\|u^{(m)}\|_1}{\mymin_{k\in \mathrm{supp}(u^{(m)})}\mymin_{k}n_{j,k}}.
    \end{align*}
    Similarly, one can show that
    \begin{align*}
        &\|(u^{(1)}\otimes \epsilon^{(2)}\otimes u^{(3)}\otimes u^{(4)})\circ \cN^{(1)}\|_1\\
        \leq&\frac{\|u^{(1)}\|_1\|u\|_3}{\sqrt{\mymin_{k\in \mathrm{supp}(u^{(1)})\cup \mathrm{supp}(u^{(3)})}\mymin_{k}n_{j,k}}}\sum_{k}\frac{|\epsilon^{(2)}_k|}{\sqrt{\mymin_jn_{j,k}}}\|u^{(4)}\|_1,
    \end{align*}
    \begin{align*}
        &\|(u^{(1)}\otimes u^{(2)}\otimes \epsilon^{(3)}\otimes u^{(4)})\circ \cN\|_1\\
        \leq&\|u^{(2)}\|_1\sum_{j'}\frac{|\epsilon^{(3)}_{j'}|}{\sqrt{\mymin_{k'}n_{j',k'}}}\sum_{k'}\frac{\|u^{(4)}\|_1\|u^{(1)}\|_1}{\sqrt{\mymin_{k\in \mathrm{supp}(u^{(4)})\cup \mathrm{supp}(u^{(2)})}\mymin_{k}n_{j,k}}},
    \end{align*}
    and
    \begin{align*}
        &\|(u^{(1)}\otimes u^{(2)}\otimes u^{(3)}\otimes \epsilon^{(4)})\circ \cN)\circ \cN\|_1\\
        \leq&\|u^{(1)}\|_1\frac{\|u^{(3)}\|_1\|u^{(2)}\|_1}{\sqrt{\mymin_{k\in \mathrm{supp}(u^{(3)})\cup \mathrm{supp}(u^{(2)})}\mymin_{k}n_{j,k}}}\sum_{k'}\frac{|\epsilon^{(4)}_{k'}|}{\sqrt{\mymin_{j'}n_{j',k'}}};
    \end{align*}
    \begin{align*}
        &\|(u^{(1)}\otimes \epsilon^{(2)}\otimes u^{(3)}\otimes u^{(4)})\circ \cN^{(2)}\|_1\\
        \leq&\sum_{k}\frac{|\epsilon^{(2)}_k|}{\sqrt{\mymin_jn_{j,k}}}\prod_{m\neq 2}\frac{\|u^{(m)}\|_1}{\mymin_{k\in \mathrm{supp}(u^{(m)})}\mymin_{k}n_{j,k}},
    \end{align*}
    \begin{align*}
        &\|(u^{(1)}\otimes u^{(2)}\otimes \epsilon^{(3)}\otimes u^{(4)})\circ \cN^{(2)}\|_1\\
        \leq&\sum_{k}\frac{|\epsilon^{(3)}_k|}{\sqrt{\mymin_jn_{j,k}}}\prod_{m\neq 3}\frac{\|u^{(m)}\|_1}{\mymin_{k\in \mathrm{supp}(u^{(m)})}\mymin_{k}n_{j,k}},
    \end{align*}
    \begin{align*}
        &\|(u^{(1)}\otimes u^{(2)}\otimes u^{(3)}\otimes \epsilon^{(4)})\circ \cN^{(2)}\|_1\\
        \leq&\sum_{k}\frac{|\epsilon^{(4)}_k|}{\sqrt{\mymin_jn_{j,k}}}\prod_{m\neq 4}\frac{\|u^{(m)}\|_1}{\mymin_{k\in \mathrm{supp}(u^{(m)})}\mymin_{k}n_{j,k}}.
    \end{align*}
    Hence the first-order error terms for both cases ($\cN^{(1)}$ and $\cN^{(2)}$) are bounded accordingly.
\end{proof}
\begin{proof}[Proof of Lemma \ref{lem:kron_err3}]
    Lemma \ref{lem:kron_err3} is a more general version of Lemma \ref{lem:kron_err} and \ref{lem:kron_err2}, where the upper bound conditions for $\epsilon^{(l)}$, $l=1,2,3,4$ are not assumed. It can be directly proved using exactly the same arguments as in the proof of Lemma \ref{lem:kron_err} and \ref{lem:kron_err2}.
\end{proof}
\begin{proof}[Proof of Proposition \ref{prop:indexsets}]
    Recall the definition of $S_2(a, b)$ in \eqref{eq:index_sets}. Note that if $(j,k)\notin \overline{\cN}_a\times \overline{\cN}^{(a)}_b$ and $(j,k)\notin \overline{\cN}^{(a)}_b\times \overline{\cN}_a$, one has
    $\Theta^{(a)*}_{j,b}\Theta^*_{k,a}+\Theta^{(a)*}_{k,b}\Theta^*_{j,a}=0$ and hence $(j,k)\notin S_2(a, b)$. This implies that $S_2(a, b)\subseteq (\overline{\cN}_a\times \overline{\cN}^{(a)}_b)\times (\overline{\cN}^{(a)}_b\times \overline{\cN}_a)$. If $S_2(a, b)\neq (\overline{\cN}_a\times \overline{\cN}^{(a)}_b)\times (\overline{\cN}^{(a)}_b\times \overline{\cN}_a)$, then there exists $(j,k)\in (\overline{\cN}_a\times \overline{\cN}^{(a)}_b)$ such that $\Theta^{(a)*}_{j,b}\Theta^*_{k,a}=-\Theta^{(a)*}_{k,b}\Theta^*_{j,a}\neq 0$, which defines a set of measure zero.
    
    While for $\overline{\cN}_a^{(b)}$, by definition, $j\in \overline{\cN}_a^{(b)}$ if and only if
    $\Theta^*_{b,j}\Theta^*_{a,a}-\Theta^*_{a,j}\Theta^*_{a,b}\neq 0$. If $b\notin \cN_a$, $\Theta^*_{a,b}=0$, and this condition is equivalent to $\Theta^*_{b,j}\neq 0$ and hence $\overline{\cN}_a^{(b)}=\overline{\cN}_b$. 
    
    Otherwise, if $j\notin \overline{\cN}_b\cup \overline{\cN}_a$, one has $\Theta^*_{b,j}\Theta^*_{a,a}-\Theta^*_{a,j}\Theta^*_{a,b}\neq 0$ which implies $j\notin \overline{\cN}_a^{(b)}$. Thus $\overline{\cN}_a^{(b)}\subseteq \overline{\cN}_b\cup \overline{\cN}_a$. In this case, $\overline{\cN}_a^{(b)}\neq \overline{\cN}_b\cup \overline{\cN}_a$ only happens when there exists $j\in \overline{\cN}_b\cup \overline{\cN}_a$ such that $\Theta^*_{b,j}\Theta^*_{a,a}=\Theta^*_{a,j}\Theta^*_{a,b}$, which also defines a set of measure zero.
\end{proof}
\begin{proof}[Proof of Lemma \ref{lem:tail_approx_one_pair}]
    First note that we can write $\frac{E(a,b)}{\sigma_n(a,b)} = \sum_{i=1}^n\frac{\epsilon_i(a,b)}{\sigma_n(a,b)}$, where 
    $$
    \epsilon_i(a,b) = \sum_{j,k}(x_{i,j}x_{i,k}-\Sigma^*_{j,k})\frac{\delta^{(i)}_{j,k}}{n_{j,k}}\frac{\Theta^{(a)*}_{j,b}\Theta^*_{k,a}+\Theta^{(a)*}_{k,b}\Theta^*_{j,a}}{2\Theta^*_{a,a}},
    $$
    and hence $\{\frac{\epsilon_i(a,b)}{\sigma_n(a,b)}\}_{i=1}^n$ are independent mean zero random variables. The following Gaussian approximation result for sum of independent r.v.s is useful for our proof:
    \begin{lemma}[Theorem 1.1 in \cite{zaitsev1987gaussian}]\label{lem:gauss_approx_tool}
        Let $\tau>O$ and $\xi_1,\dots,\xi_n\in \mathbb{R}^k$ be independent mean zero random vectors such that 
        \begin{equation}\label{eq:moment_condition}
            |\mathbb{E}(\langle \xi_i, t\rangle^2\langle \xi_i, u\rangle^{m-2})|\leq \frac{1}{2}m!\tau^{m-2}\|u\|^{m-2}\mathbb{E}(\langle \xi_i, t\rangle^2)
        \end{equation}
        holds for $m\geq 3$, $t,u\in\mathbb{R}^k$, and $i=1,\dots,n$. Let $S=\sum_{i=1}^n\xi_i$. Denote by $Z\in \mathbb{R}^k$ a multivariate Gaussian random variable with the zero mean and the same covariance as that of $S$. Then for all $\varepsilon>0$,
        $$
        \sup_{B\subset \mathbb{R}^k}\max\{\mathbb{P}(S\in B)-\mathbb{P}(Z\in B^{\varepsilon}), \mathbb{P}(Z\in B)-\mathbb{P}(S\in B^{\varepsilon})\}\leq C_1(k)\exp\{-\frac{\varepsilon}{C_2(k)\tau}\},
        $$
        where $B^{\epsilon}=\{x\in \mathbb{R}^k: \inf_{y\in B}\|x-y\|_2\leq\varepsilon\}$ for any set $B\subset \mathbb{R}^k$.
    \end{lemma}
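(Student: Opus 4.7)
The plan is to prove the bound via characteristic-function methods combined with Bergstr\"om-style iterative smoothing. A direct Lindeberg swap, replacing one $\xi_i$ at a time by a matching Gaussian applied to a smoothed indicator, yields only polynomial dependence on $\varepsilon/\tau$, so the exponential rate $\exp(-\varepsilon/(C_2(k)\tau))$ forces us to exploit the full analytic content of the Bernstein-type moment condition \eqref{eq:moment_condition}.

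The first step is to observe that taking $t=u$ in \eqref{eq:moment_condition} recovers the classical Bernstein moment condition $\bE|\langle \xi_i, u\rangle|^m \leq \tfrac{1}{2}m!\,\tau^{m-2}\|u\|^{m-2}\bE\langle\xi_i,u\rangle^2$. This is equivalent, up to constants, to $\bE\exp(\lambda\langle\xi_i,u\rangle)$ being finite for $|\lambda|\|u\|<1/(2\tau)$, which in turn gives analytic extensions of the characteristic functions $\phi_{\xi_i}(t)=\bE e^{i\langle t,\xi_i\rangle}$ into a complex strip of width $\asymp 1/\tau$ around $\bR^k$. This analyticity is what ultimately promotes polynomial to exponential control and is the essential strengthening over standard Berry--Esseen hypotheses.

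For the second step, smooth the indicator $\mathbf{1}_{B^{\varepsilon/2}}$ by convolving with a Gaussian kernel of scale $\delta\asymp\varepsilon$, obtaining $f$ whose Fourier transform $\widehat f$ decays like $e^{-c\delta^2\|t\|^2}$. Then Plancherel gives
\begin{equation*}
\bE f(S)-\bE f(Z) \;=\; (2\pi)^{-k}\!\int_{\bR^k}\widehat f(t)\bigl[\phi_S(t)-\phi_Z(t)\bigr]\,dt.
\end{equation*}
On $\{\|t\|\leq c/\tau\}$ a Taylor expansion of each $\phi_{\xi_i}$ combined with the Bernstein bound yields $|\phi_S(t)-\phi_Z(t)|\lesssim \tau\|t\|^3\exp(-\tfrac14 t^\top\Sigma t)$, which is integrable against $\widehat f$ with only an $O(\tau)$ contribution. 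Off this region one shifts the contour of integration into the complex strip where $\phi_{\xi_i}$ remains bounded analytically; the Gaussian factor $\widehat f$ contributes a saving $\exp(-c\varepsilon\|t\|/\tau)$ on the shifted contour, while $\phi_Z$ is already exponentially small. Combining the two regimes and passing from the smoothed quantity back to the original set via $\mathbb{P}(S\in B)\leq \bE f(S)+\mathbb{P}(\|\delta W\|>\varepsilon/2)$ (for an independent $W\sim\cN(0,I_k)$) yields the one-sided inequality, and the reverse direction follows by swapping the roles of $S$ and $Z$ and applying the same argument to the complement.

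The main obstacle will be making the contour shift quantitative with constants depending only on the ambient dimension $k$, uniformly in $n$ and in the individual distributions of the $\xi_i$. This is where Bergstr\"om's iterative block decomposition enters: one groups the summands so that each partial covariance stays within the analyticity regime, compares the block sum to a Gaussian via the local c.f. argument above, and chains the estimates. Tracking that the resulting constants truly depend only on $k$ and $\tau$, and not on $n$ or on finer features of the individual summands beyond what \eqref{eq:moment_condition} encodes, is the delicate bookkeeping on which the proof really turns and is what produces the dimensional factors $C_1(k),C_2(k)$ in the statement.
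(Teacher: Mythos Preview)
The paper does not prove this statement at all: it is quoted verbatim as Theorem~1.1 of Zaitsev (1987) and used as a black-box tool in the proofs of Lemmas~\ref{lem:tail_approx_one_pair} and~\ref{lem:tail_approx_two_pairs}. No argument is given or expected beyond the citation.

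Your proposal is therefore not comparable to the paper's ``proof''---you are sketching a route to Zaitsev's theorem itself. As a high-level outline it is broadly faithful to the actual strategy: the Bernstein-type condition \eqref{eq:moment_condition} does give analytic extension of the characteristic functions into a strip of width $\asymp 1/\tau$, and exploiting that analyticity (rather than a naive Lindeberg swap) is exactly what upgrades polynomial to exponential dependence on $\varepsilon/\tau$. The ingredients you name---smoothing by a Gaussian, Fourier inversion, contour shifting in the analyticity strip, and Bergstr\"om-type iterative block comparison---are indeed the pillars of Zaitsev's proof. That said, your sketch stops well short of a proof: you yourself flag the quantitative contour shift and the uniform-in-$n$ bookkeeping as ``the delicate bookkeeping on which the proof really turns,'' and you do not carry any of it out. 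For the purposes of this paper, the appropriate response is simply to cite Zaitsev and move on; if you want a self-contained proof, what you have is a reasonable roadmap but not yet an argument.
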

    In the following, we will let $\xi_i(a,b)=\frac{\epsilon_i(a,b)}{\sigma_n(a,b)}$ and show that $\xi_i(a,b)$ satisfies \eqref{eq:moment_condition}. Recall the $\|\cdot\|_{\psi_{\alpha}}$ norm in Definition \ref{def:psi_alpha}, and by the equivalence between some properties of sub-exponential random variables \citep[see][Definition 5.13]{vershynin2010introduction}, we have
    \begin{align*}
        |\mathbb{E}(\langle \xi_i(a,b), t\rangle^2\langle \xi_i(a,b), u\rangle^{m-2})|\leq &t^2|u|^{m-2}\mathbb{E}|\xi_i(a,b)|^m\\
        \leq&t^2|u|^{m-2}m^m\|\xi_i(a,b)\|_{\psi_1}^m\\
        \leq &|u|^{m-2}t^2\frac{1}{2}m!\tau_i^{m-2}\mathbb{E}(\xi_i^2(a,b)),
    \end{align*}
    where $\tau_i = 12\|\xi_i(a,b)\|_{\psi_1}\left(\frac{\|\xi_i(a,b)\|_{\psi_1}^2}{\mathbb{E}(\xi_i^2(a,b))}\vee 1\right)$. The last line is due to that 
    $$
    \frac{2m^m}{m!}\leq \frac{2e^{m}}{\sqrt{2\pi m}}\leq\frac{e^m}{\sqrt{m}}\leq (\frac{e^3}{\sqrt{3}})^{m-2}\leq 12^{m-2},
    $$
    where the first inequality is due to Stirling's formula: $m!\geq \sqrt{2\pi m}(\frac{m}{e})^m$, and 
    $$
    m^m\|\xi_i(a,b)\|_{\psi_1}^m\leq  \frac{1}{2}m!12^{m-2}\|\xi_i(a,b)\|_{\psi_1}^{m-2}\frac{\|\xi_i(a,b)\|_{\psi_1}^2}{\mathbb{E}(\xi_i^2(a,b))}\mathbb{E}(\xi_i^2(a,b))\leq \frac{1}{2}m!\left(12\|\xi_i(a,b)\|_{\psi_1}\left(\frac{\|\xi_i(a,b)\|_{\psi_1}^2}{\mathbb{E}(\xi_i^2(a,b))}\vee 1\right)\right)^{m-2}\mathbb{E}(\xi_i^2(a,b)).
    $$
   Now it suffices to show that each $\tau_i\leq \frac{C\|\Sigma^*\|_{\infty}^3(d_a+d_b+1)^3}{\lambda_{\min}^3(\Sigma^*)\sqrt{n_2^{(a,b)}}}$. Recall that we have defined $U^{(\delta,i)}_{j,k}=\frac{U_{j,k}\delta^{(i)}_{j,k}}{n_{j,k}}$, where $U_{j,k}=\frac{\Theta^{(a)*}_{j,b}\Theta^*_{k,a}+\Theta^{(a)*}_{k,b}\Theta^*_{j,a}}{2\Theta^*_{a,a}}$ in the proof of Theorem \ref{thm:nb_lasso_debias_decomp}. Then we can also write $\xi_i(a,b) = \frac{1}{\sigma_n(a,b)}\sum_{j,k}(x_{i,j}x_{i,k}-\Sigma^*_{j,k})U^{(\delta,i)}_{j,k}$. It has been shown in the proof of Theorem \ref{thm:nb_lasso_debias_decomp} that $\sigma_n(a,b) \geq \sqrt{2}\lambda_{\min}(\Sigma^*)\sqrt{\sum_{j,k}U_{j,k}^2n_{j,k}^{-1}}$. Since $\|x_{i,j}x_{i,k}-\Sigma^*_{j,k}\|_{\psi_1}\leq C\|\Sigma^*\|_{\infty}$, as has been shown in the proof of Lemma \ref{lem:SampleCov_entry_err}, we have
    \begin{align*}
        \|\xi_i(a,b)\|_{\psi_1}\leq &\frac{C\|\Sigma^*\|_{\infty}\|U^{(\delta,i)}\|_1}{\lambda_{\min}(\Sigma^*)\sqrt{\sum_{j,k}U_{j,k}^2n_{j,k}^{-1}}}\\
        \leq&\frac{C\|\Sigma^*\|_{\infty}(d_a+d_b+1)\|U^{(\delta,i)}\|_F}{\lambda_{\min}(\Sigma^*)\sqrt{\sum_{j,k}U_{j,k}^2n_{j,k}^{-1}}}\\
        \leq&\frac{C\|\Sigma^*\|_{\infty}(d_a+d_b+1)}{\lambda_{\min}(\Sigma^*)}\sqrt{\frac{\sum_{j,k}U_{j,k}^2n_{j,k}^{-2}}{\sum_{j,k}U_{j,k}^2n_{j,k}^{-1}}}\\
        \leq &\frac{C\|\Sigma^*\|_{\infty}(d_a+d_b+1)}{\lambda_{\min}(\Sigma^*)\sqrt{n_2^{(a,b)}}}.
    \end{align*}    
    Furthermore, we can upper bound $\frac{\|\xi_i\|_{\psi_1}^2}{\mathbb{E}(\xi_i^2)}$ as follows:
    \begin{align*}
        \frac{\|\xi_i(a,b)\|_{\psi_1}^2}{\mathbb{E}(\xi_i^2(a,b))}=&\frac{\|\epsilon_i(a,b)\|_{\psi_1}^2}{\mathbb{E}(\epsilon_i(a,b)^2)}\\
        \leq &\frac{C\|\Sigma^*\|^2_{\infty}(d_a+d_b+1)^2\|U^{(\delta,i)}\|_F^2}{\lambda_{\min}^2(\Sigma^*)\|U^{(\delta,i)}\|_F^2}\\
        =&\frac{C\|\Sigma^*\|^2_{\infty}(d_a+d_b+1)^2}{\lambda_{\min}^2(\Sigma^*)}.
    \end{align*}
    Therefore, we have shown that $\tau_i = 12\|\xi_i(a,b)\|_{\psi_1}\left(\frac{\|\xi_i(a,b)\|_{\psi_1}^2}{\mathbb{E}(\xi_i^2(a,b))}\vee 1\right)\leq \frac{C\|\Sigma^*\|^3_{\infty}(d_a+d_b+1)^3}{\lambda_{\min}^3(\Sigma^*)\sqrt{n_2^{(a,b)}}}$, and the proof is now complete.
\end{proof}
\begin{proof}[Proof of Lemma \ref{lem:tail_approx_two_pairs}]
    Similar to the proof of Lemma \ref{lem:tail_approx_one_pair}, we will apply the Gaussian approximation bound in Lemma \ref{lem:gauss_approx_tool} by validating \eqref{eq:moment_condition} for $(\xi_i(a,b),\xi_i(a',b'))^{\top}\in \mathbb{R}^2$.
    
    For any two-dimensional random vector $X = (X_1,X_2)^\top$, and $t,u\in \mathbb{R}^2$, we can show that 
    \begin{align*}
        \mathbb{E}(\langle X, t\rangle^2 \langle X, u\rangle^{m-2})\leq &\sqrt{\mathbb{E}\|X\|_2^4}\sqrt{\mathbb{E}\|X\|_2^{2m-4}}\|u\|_2^{m-2}\|t\|_2^2,
    \end{align*}
     and for any integer $k$, 
     \begin{align*}
         \mathbb{E}\|X\|_2^{2k}\leq \mathbb{E}(X_1^2+X_2^2)^{k}\leq 2^{k-1}(\mathbb{E}|X_1|^{2k}+\mathbb{E}|X_2|^{2k})\leq &2^{k-1}(2k)^{2k}(\|X_1\|_{\psi_1}^{2k}+\|X_2\|_{\psi_1}^{2k})\\
         \leq &(2\sqrt{2}k)^{2k}(\|X_1\|_{\psi_1}\vee\|X_2\|_{\psi_1})^{2k}
     \end{align*}
    where the second inequality is due to the Jensen's inequality, the third inequality is due to the property of $\|\cdot\|_{\psi_1}$ norm. Hence we have
    \begin{align*}
        \mathbb{E}(\langle X, t\rangle^2 \langle X, u\rangle^{m-2})\leq &\sqrt{\mathbb{E}\|X\|_2^4}\sqrt{\mathbb{E}\|X\|_2^{2m-4}}\|u\|_2^{m-2}\|t\|_2^2\\
        \leq &32(2\sqrt{2}m)^{m-2}(\|X_1\|_{\psi_1}\vee\|X_2\|_{\psi_1})^{m}\|u\|_2^{m-2}\|t\|_2^2\\
        \leq &\frac{32}{m^2}[2\sqrt{2}(\|X_1\|_{\psi_1}\vee\|X_2\|_{\psi_1})]^{m-2}\frac{\|X_1\|_{\psi_1}^2\vee\|X_2\|_{\psi_1}^2}{\lambda_{\min}(\mathrm{Cov}(X))}\lambda_{\min}(\mathrm{Cov}(X))m^m\|u\|_2^{m-2}\|t\|_2^2\\
        \leq&\frac{1}{2}m!\|u\|_2^{m-2}\tau^{m-2}\mathbb{E}(\langle X, t\rangle^2),
    \end{align*}
     where $\tau = \frac{120(\|X_1\|_{\psi_1}\vee\|X_1\|_{\psi_1})^3}{\lambda_{\min}(\mathrm{Cov}(X))}$, and we have applied the fact that $\frac{2m^m}{m!}\leq (\frac{e^3}{\sqrt{3}})^{m-2}$ and $\mathbb{E}(\langle X, t\rangle^2)\geq \|t\|_2^2\lambda_{\min}(\mathrm{Cov}(X))$ on the last line.
     
     Recall the proof of Lemma \ref{lem:tail_approx_one_pair} and the definition of $\alpha(\Theta^*,\{V_i\}_{i=1}^n)$, we have 
     $$
     \|\xi_i(a,b)\|_{\psi_1},\|\xi_i(a',b')\|_{\psi_1}\leq \frac{C\|\Sigma^*\|_{\infty}(d+1)}{\lambda_{\min}(\Sigma^*)\sqrt{n_2^{(a,b)}\vee n_2^{(a',b')}}},
     $$
     and $(\xi_i(a,b),\xi_i(a',b'))^{\top}$ satisfies \eqref{eq:moment_condition} with $\tau = \frac{C\|\Sigma^*\|_{\infty}(d+1)\alpha(\Theta^*,\{V_i\}_{i=1}^n)}{\lambda_{\min}(\Sigma^*)\sqrt{n_2^{(a,b)}\vee n_2^{(a',b')}}}$. The proof completes with the application of Lemma \ref{lem:gauss_approx_tool}.
\end{proof}
\begin{proof}[Proof of Proposition \ref{prop:assump7_simultaneous}]
   We first derive the closed form solution of $\rho_n(a,b,a',b')$ under the simultaneous measurement setting, and then find appropriate super-sets of $\mathcal{A}_1(\rho_0)$ and $\mathcal{A}_2(\rho_0,\gamma)$ with bounded sizes. 
   
   Recall that $\rho_n(a,b,a',b') = \frac{\sigma_n^2(a,b,a',b')}{\sigma_n(a,b)\sigma_n(a',b')}$, where $\sigma_n^2(a,b,a',b')=\frac{1}{\Theta^*_{a,a}\Theta^*_{a',a'}}\mathcal{T}^{(n)*}\times_1\Theta^{(a)*}_{:,b}\times_2\Theta^*_{:,a}\times_3\Theta^{(a)*}_{:,b'}\times_4\Theta^*_{:,a'}$, $\sigma_n^2(a,b)=\frac{1}{(\Theta^*_{a,a})^2}\mathcal{T}^{(n)*}\times_1\Theta^{(a)*}_{:,b}\times_2\Theta^*_{:,a}\times_3\Theta^{(a)*}_{:,b}\times_4\Theta^*_{:,a}$. In the simultaneous measurement setting, $n_{j,k,j',k'}=n_{j,k}=n_{j',k'}=n$ for all node $j,k,j',k'\in [p]$, and hence $\mathcal{T}^{(n)*}_{j,k,j',k'}= \frac{1}{n}\mathcal{T}^*_{j,k,j',k'}=\frac{1}{n}(\Sigma^*_{j,j'}\Sigma^*_{k,k'}+\Sigma^*_{j,k'}\Sigma^*_{k,j'})$. Therefore, one can compute  
   \begin{equation*}
   \begin{split}
       \sigma_n^2(a,b)=&\frac{1}{n(\Theta^*_{a,a})^2}\left[\Theta^{(a)*}_{b,:}\Sigma^*\Theta^{(a)*}_{:,b}\Theta^*_{a,:}\Sigma^*\Theta^*_{:,a}+\Theta^{(a)*}_{b,:}\Sigma^*\Theta^{*}_{:,a}\Theta^{(a)*}_{b,:}\Sigma^*\Theta^*_{:,a}\right]\\
       =&\frac{\Theta^*_{a,a}\Theta^*_{b,b}-(\Theta^*_{a,b})^2}{n(\Theta^*_{a,a})^2}.
   \end{split}
   \end{equation*}
   Similarly, one has $\sigma_n^2(a',b')=\frac{\Theta^*_{a',a'}\Theta^*_{b',b'}-(\Theta^*_{a',b'})^2}{n(\Theta^*_{a',a'})^2}$. While for $\sigma_n^2(a,b,a',b')$, note that when $(a,b), (a',b')\in \mathcal{H}_0$, $\Theta^*_{a,b}=\Theta^*_{a',b'}=0$. Thus we have
   \begin{equation*}
       \begin{split}
           \sigma_n^2(a,b,a',b') = &\frac{1}{n\Theta^*_{a,a}\Theta^*_{a',a'}}\left[\Theta^{(a)*}_{b,:}\Sigma^*\Theta^{(a')*}_{:,b'}\Theta^*_{a,:}\Sigma^*\Theta^*_{:,a'}+\Theta^{(a)*}_{b,:}\Sigma^*\Theta^*_{:,a'}\Theta^{(a')*}_{b',:}\Sigma^*\Theta^*_{:,a}\right]\\
           = &\frac{1}{n\Theta^*_{a,a}\Theta^*_{a',a'}}\left[\Theta^{*}_{b,:}\Sigma^*\Theta^{*}_{:,b'}\Theta^*_{a,:}\Sigma^*\Theta^*_{:,a'}+\Theta^{*}_{b,:}\Sigma^*\Theta^*_{:,a'}\Theta^{*}_{b',:}\Sigma^*\Theta^*_{:,a}\right]\\
           =&\frac{\Theta^{*}_{b,b'}\Theta^*_{a,a'}+\Theta^{*}_{b,a'}\Theta^{*}_{a,b'}}{n\Theta^*_{a,a}\Theta^*_{a',a'}}.
       \end{split}
   \end{equation*}
   where we have applied the fact that $\Theta^{(a)*}_{:,b} = \Theta^*_{:,b}$ when $\Theta^*_{a,b}=0$ on the second line. Therefore, for $(a,b), (a',b')\in \mathcal{H}_0$,
   \begin{equation*}
       \begin{split}
           \rho_n(a,b,a',b')=&\frac{\Theta^{*}_{b,b'}\Theta^*_{a,a'}+\Theta^{*}_{b,a'}\Theta^{*}_{a,b'}}{\sqrt{\Theta^*_{a,a}\Theta^*_{b,b}\Theta^*_{a',a'}\Theta^*_{b',b'}}}\\
       =&\Omega^*_{a,a'}\Omega^*_{b,b'}+\Omega^*_{a,b'}\Omega^*_{b,a'}.
       \end{split}
   \end{equation*}
   Therefore, we have 
   \begin{equation}
   \begin{split}
       \mathcal{A}_1(\rho_0)\subset&\{(a,b,a',b'): |\Omega^*_{a,a'}\Omega^*_{b,b'}+\Omega^*_{a,b'}\Omega^*_{b,a'}|>\rho_0\}\\
       \subset&\{(a,b,a',b'): |\Omega^*_{a,a'}\Omega^*_{b,b'}|>\frac{\rho_0}{2}\text{ or }|\Omega^*_{a,b'}\Omega^*_{b,a'}|>\frac{\rho_0}{2}\}\\
       \subset&\{(a,b,a',b'), (a,b,b',a'): |\Omega^*_{a,a'}|>\frac{\rho_0}{2}\text{ and }|\Omega^*_{b,b'}|>\frac{\rho_0}{2}\}.
   \end{split} 
   \end{equation}
   Since for $1\leq j\leq p$, $|\{k:|\Omega^*_{j,k}|>\frac{\rho_0}{2}|\leq \frac{C}{2}$, it is straightforward to see that $|\{(a,b,a',b'): |\Omega^*_{a,a'}|>\frac{\rho_0}{2}\text{ and }|\Omega^*_{b,b'}|>\frac{\rho_0}{2}\}|\leq \frac{C}{2}p^2$, and hence $|\mathcal{A}_1(\rho_0)|\leq Cp^2$. While for $\mathcal{A}_2(\rho_0,\gamma)$, note that
   \begin{equation}
   \begin{split}
       \mathcal{A}_2(\rho_0,\gamma)\subset&\{(a,b,a',b'): |\Omega^*_{a,a'}\Omega^*_{b,b'}+\Omega^*_{a,b'}\Omega^*_{b,a'}|>0\}\\
       \subset&\{(a,b,a',b'): \Theta^*_{a,a'}\Theta^*_{b,b'}\neq 0\text{ or }\Theta^*_{a,b'}\Theta^*_{b,a'}\neq 0\},
   \end{split} 
   \end{equation}
   which implies $|\mathcal{A}_2(\rho_0,\gamma)|\leq 2d^2p^2\ll p^{\frac{4}{1+\rho_0}}(\log p)^{\frac{2\rho_0}{1+\rho_0}-\frac{1}{2}}(\log\log p)^{-\frac{1}{2}}$. Now the proof is complete.
\end{proof}
\end{appendices}
\bibliographystyle{plain}
\bibliography{reference.bib}

\end{document}